\newtheorem{theorem}{Theorem}[section]
\newtheorem{lemma}[theorem]{Lemma}
\newtheorem{proposition}[theorem]{Proposition}
\newtheorem{corollary}[theorem]{Corollary}
\newtheorem{remark}[theorem]{Remark}
\tikzstyle directed=[postaction={decorate,decoration={markings,
		mark=at position .65 with {\arrow{latex}}}}]
\def\restriction#1#2{\mathchoice
	{\setbox1\hbox{${\displaystyle #1}_{\scriptstyle #2}$}
		\restrictionaux{#1}{#2}}
	{\setbox1\hbox{${\textstyle #1}_{\scriptstyle #2}$}
		\restrictionaux{#1}{#2}}
	{\setbox1\hbox{${\scriptstyle #1}_{\scriptscriptstyle #2}$}
		\restrictionaux{#1}{#2}}
	{\setbox1\hbox{${\scriptscriptstyle #1}_{\scriptscriptstyle #2}$}
		\restrictionaux{#1}{#2}}}
\def\restrictionaux#1#2{{#1\,\smash{\vrule height .6\ht1 depth .6\dp1}}_{\,#2}} 
\def\restrictionaux#1#2{{#1\,\smash{\vrule height .6\ht1 depth .6\dp1}}_{\,#2}}
\def\@fnsymbol#1{\ensuremath{\ifcase#1\or \dagger\or \ddagger\or
		\mathsection\or \mathparagraph\or \|\or **\or \dagger\dagger
		\or \ddagger\ddagger \else\@ctrerr\fi}}
\date{}
\title{\textbf{Decay of the Local Energy for the Charged Klein-Gordon Equation in the Exterior \\De Sitter-Reissner-Nordstr\"{o}m Spacetime}}
\author{\textbf{Nicolas BESSET}\thanks{Institut Fourier - UMR 5582, Universit\'{e} Grenoble Alpes, CS 40700, France; e-mail: nicolas.besset@univ-grenoble-alpes.fr}}	
\begin{document}
\maketitle						


\begin{center}
\textbf{Abstract}
\end{center}
\begin{center}
\small
We show decay of the local energy of solutions of the charged Klein-Gordon equation in the exterior De Sitter-Reissner-Nordstr\"{o}m spacetime by means of a resonance expansion of the local propagator.
\end{center}
\section{Introduction}
%
There has been enormous progress in our understanding of scattering properties of solutions of hyperbolic equations on black hole type backgrounds over the last years. The aim of such studies is multifold. First of all these equations and thus their scattering properties are very important in their own right. Secondly, the understanding of dispersive properties of the solutions of these equations is a first step in the understanding of stability properties of the underlying spacetime. Eventually understanding the classical equation is also a first step in understanding the quantization of the field. The most important spacetime in this context is the (De Sitter-)Kerr spacetime, which is conjectured to be the unique solution of the Einstein equations describing a rotating black hole (for uniqueness results see \cite{AlIoKl2010} and references therein, and also \cite{Hi17} for the charged case). In the case of positive cosmological constant, nonlinear stability of the De Sitter-Kerr spacetime is now known for small angular velocity $|a|$ thanks to the seminal result of Hintz and Vasy \cite{HiVa16}. The case of zero cosmological constant is still open, but see \cite{KlSz17} for recent progress in this direction. Scattering theories for classical equations are also at the origin of many results in quantum field theory, see \textit{e.g.} the mathematically rigorous description of the Hawking effect in \cite{Ha09}. 

When studying linear waves on a black hole type spacetime, one encounters several difficulties. The first is linked to trapping and it is already present in the case of the (De Sitter-)Schwarzschild spacetime, which describes spherically symmetric black holes. The second is superradiance, which means that there is no positive conserved quantity for spin 1 equations  on the (De Sitter-)Kerr metric. Whereas this difficulty is not present for the wave equation on the (De Sitter-)Schwarzschild metric, it also appears when one considers a charged Klein-Gordon field on the (De Sitter-)Reissner-Nordstr\"om metric which describes a spherically symmetric charged black hole.  
In this context the phenomenon is linked to the charge of the black hole and the test particle and thus different from the Kerr case, where it is linked to the geometry of the spacetime. Superradiance already appears in flat spacetime when one considers a charged Klein-Gordon field which evolves in a strong electric field. In this context the natural setting seems to be the one of Krein spaces, see \textit{e.g.} \cite{GeoGerHa13}. This setting however is not available in the context of black holes, see \cite{GeoGerHA17}.   

In the present paper we show a resonance expansion for the solutions of the charged Klein-Gordon equation on the De Sitter-Reissner-Nordström metric. As a corollary we obtain exponential decay of the local energy for these solutions. We restrict our study to the case where the product of black hole charge and the scalar field charge is small. Such a resonance expansion for the solutions of the wave equation has been obtained first by Bony-H\"afner for the wave equation on the De Sitter-Schwarzschild metric \cite{BoHa08}. This result has been generalized to much more complicated situations which include perturbations of the De Sitter-Kerr metric by Vasy \cite{Va13}.  This last paper has developed new methods including a Fredholm theory for non elliptic problems. These methods could probably also be applied to the present case. In the present paper however we use the more elementary methods of Bony-H\"afner \cite{BoHa08} and Georgescu-G\'erard-H\"afner \cite{GeoGerHA17}.

The smallness of the charge product is non-quantitative as it is determined from a compactness
argument. It allows at many places of the paper to use perturbation arguments with respect to the non charged case. As far as we are aware the present system has not been studied for general charge products. We shall however stress out that exponentially finite energy growing modes appear when the charge product becomes large with respect to the mass of the field, see \cite{BeHa20}. The origin of this phenomenon is the existence of a resonance at 0 for the wave equation which moves to the upper half plane in the latter case; in contrast, the present paper shows that if the charge product is small enough with respect to the mass of the field then all resonances (including the resonance 0) for the wave equation lie in the lower half plane, yielding an exponential decay of the local energy as explained in the introductory example of Section \ref{Meromorphic extension and resonances} below. Let us mention that the resonance 0 still exists for the wave equation on the De Sitter-Kerr metric for small angular momentum of the black hole, see \cite{DyQNM}. In absence of cosmological constant, absence of growing modes is known for the wave equation on the Kerr metric for general angular momentum of the black hole, see \cite{Wh89}, but growing modes appear for the Klein-Gordon equation on the Kerr metric, see \cite{SR14}. The question of the existence or not of growing modes is therefore a very subtle question.

Let us also mention that it is crucial for our results that the cosmological constant is strictly positive in order to define resonances as the poles of the meromorphic extension of the weighted resolvent. The low frequency behaviour is more complicated in the zero cosmological constant case and only polynomial decay of the local energy is expected then. Using different techniques, Giorgi has recently shown a linear stability result for the sub-extremal Reissner-Nordstr\"{o}m spacetime, see \cite{Gi19}.

\paragraph{Organization of the paper.} The paper is organized as follows. In Section \ref{Functional framework} we give an introduction to the De Sitter-Reissner-Nordstr\"om metric and the charged Klein-Gordon equation on it. In Section \ref{Meromorphic extension and resonances}, a meromorphic extension result is shown for the cut-off resolvent and resonances are introduced. The resonance expansion is presented in Section \ref{Resonance expansion for the charged Klein-Gordon equation}. Suitable resolvent-type estimates are obtained in Section \ref{Estimates for the cut-off inverse of the quadratic pencil}. In section \ref{Proof of Theorem {Theoreme principal}} we prove the main theorem by a suitable contour deformation and using the resolvent-type estimates of Section \ref{Estimates for the cut-off inverse of the quadratic pencil}. The appendix contains a semiclassical limiting absorption principle for a class of generalized resolvents which might have some independent interest. 
\paragraph{Notations.} The set $\big\{z\in\mathbb{C}\ \big\vert\ \Im z\gtrless0\big\}$ will be denoted by $\mathbb{C}^{\pm}$. For any complex number $\lambda\in\mathbb{C}$, we will write $\left\langle\lambda\right\rangle:=\big(1+\left|\lambda\right|^{2}\big)^{1/2}$, $D(\lambda,R)$ will be the disc centered at $\lambda\in\mathbb{C}$ of radius $R>0$ and $D(\lambda,R)^{\complement}$ its complementary set. For all $\omega=|\omega|\mathrm{e}^{\mathrm{i}\theta}\in\mathbb{C}\setminus]-\infty,0]$, $\theta\in\mathbb{R}$, we will use the branch of the square root defined by $\sqrt{\omega}:=\sqrt{|\omega|}\mathrm{e}^{\mathrm{i}\theta/2}$.

The notation $\mathcal{C}^k_{\mathrm{c}}$ will be used to denote the space of compactly supported $\mathcal{C}^k$ functions. Also, the  Schwartz space on $\mathbb{R}$ will be denoted by $\mathscr{S}$. If $V,W$ are complex vector spaces, then $\mathcal{L}(V,W)$ will be the space of bounded linear operators $V\to W$. All the scalar products $\langle \cdot\,,\cdot\rangle$ will be antilinear with respect to their first component and linear with respect to their second component. For any function $f$, the support of $f$ will be denoted by $\mathrm{Supp\,}f$. If $A$ is an operator, we will denote by $\mathscr{D}\left(A\right)$ its domain, $\sigma\left(A\right)$ its spectrum and $\rho\left(A\right)$ its resolvent set. $A\geq 0$ will mean that $\langle Au,u\rangle\geq 0$ for all $u\in\mathscr{D}(A)$, and $A>0$ will mean that $A\geq 0$ and $\ker(A)=\{0\}$.

Now we define the symbol classes on $\mathbb{R}^{2d}$
\begin{align*}
S^{m,n}&:=\left\{a\in\mathcal{C}^{\infty}(\mathbb{R}^{2d},\mathbb{C})\ \bigg\vert\ \forall(\alpha,\beta)\in\mathbb{N}^{2d},\exists\, C_{\alpha,\beta}>0,|\partial_\xi^\alpha\partial_x^\beta a(x,\xi)|\leq C_{\alpha,\beta}\langle\xi\rangle^{m-|\alpha|}\langle x\rangle^{n-|\beta|}\right\}
\end{align*}
for any $(m,n)\in\mathbb{Z}^{2d}$ (here $\mathbb{N}$ and $\mathbb{Z}$ both include 0). We then define the semiclassical pseudodifferential operators classes
\begin{align*}
\Psi^{m,n}&:=\big\{a^{\mathrm{w}}(x,h\mathrm{D})\ \big\vert\ a\in S^{m,n}\big\},&\Psi^{-\infty,n}:=\bigcap_{m\in\mathbb{Z}}\Psi^{m,n}
\end{align*}
with $a^{\mathrm{w}}(x,h\mathrm{D})$ the Weyl quantization of the symbol $a$. For any $c>0$, the notation $P\in c\Psi^{m,n}$ means that $P\in\Psi^{m,n}$ and the norm of $P$ is bounded by a positive multiple of $c$. 
\paragraph{Acknowledgements.} I am very grateful to the anonymous referee for many suggestions to improve the manuscript, especially concerning the introduction of Section \ref{Meromorphic extension and resonances}.
%
%
%
%
\section{Functional framework}
\label{Functional framework}
%
%
%
%
\subsection{The charged Klein-Gordon equation on the De Sitter-Reissner-Nordstr\"{o}m metric}
\label{The charged Klein-Gordon equation on the De Sitter-Reissner-Nordstrom metric}
%
%
%
%
Let
\begin{align*}
	F\left(r\right):=1-\frac{2M}{r}+\frac{Q^{2}}{r^{2}}-\frac{\Lambda r^{2}}{3}
\end{align*}
with $M>0$ the mass of the black hole, $Q\in\mathbb{R}$ its electric charge and $\Lambda>0$ the cosmological constant. We assume that the parameters
\begin{align*}
	\Delta&:=1-4\Lambda Q^2,&m_k&:=\sqrt{\frac{1+(-1)^k\sqrt{\Delta}}{2\Lambda}},&&M_k:=m_k-\frac{2}{3}\Lambda m_k^3
\end{align*}
satisfy for any $k\in\{1,2\}$ the relations
\begin{align}
\label{Assumption on the main parameters of the problem}
	4\Lambda Q^2&<1,&M_1&<M<M_2
\end{align}
so that $F$ has four distinct zeros $-\infty<r_{n}<0<r_{c}<r_{-}<r_{+}<+\infty$ and is positive for all $r\in\left]r_{-},r_{+}\right[$ (see \cite[Prop. 1]{Mo17} with $\Lambda$ replaced by $\Lambda/3$ in our setting; see also \cite[Prop. 3.2]{HinlsKN} for another statement). We also assume that $9\Lambda M^2<1$ so that we can use the work of Bony-H\"{a}fner \cite{BoHa08} (the condition \eqref{Assumption on the main parameters of the problem} only ensures that $9\Lambda M^2<2$). The exterior De Sitter-Reissner-Nordstr\"{o}m spacetime is the Lorentzian manifold $(\mathcal{M},g)$ with
\begin{align*}
	&\mathcal{M}=\mathbb{R}_t\times\left]r_-,r_+\right[_r\times\mathbb{S}^2_\omega,&g=F\left(r\right)\mathrm{d}t^{2}-F\left(r\right)^{-1}\mathrm{d}r^{2}-r^{2}\mathrm{d}\omega^{2} 
\end{align*}
where $\mathrm{d}\omega^2$ is the standard metric on the unit sphere $\mathbb{S}^2$. 

Let $A:=\frac{Q}{r}\mathrm{d}t$. Then the charged wave operator on $(\mathcal{M},g)$ is
\begin{align*}
\label{Charged box operator}
	\cancel{\Box}_{g}&=\left(\nabla_\mu-\mathrm{i}qA_\mu\right)\left(\nabla^\mu-\mathrm{i}qA^\mu\right)
	=\frac{1}{F(r)}\left(\left(\partial_{t}-\mathrm{i}\frac{qQ}{r}\right)^{2}-\frac{F(r)}{r^{2}}\partial_{r}r^{2}F(r)\partial_{r}-\frac{F(r)}{r^{2}}\Delta_{\mathbb{S}^{2}}\right)
\end{align*}
with $q\in\mathbb{R}$ and the corresponding charged Klein-Gordon equation reads
\begin{align*}
	\cancel{\Box}_{g}u+m^{2}u&=0,\qquad\qquad m>0.
\end{align*}
We set $s:=qQ\in\mathbb{R}$ the charge product (which appears in the perturbation term of the standard wave operator), $X:=\left]r_-,r_+\right[_r\times\mathbb{S}^2_\omega$ and $V\left(r\right):=r^{-1}$ so that the above equation reads
\begin{equation}
\label{operators charged KG equation}
	\left(\partial_{t}-\mathrm{i}sV\right)^{2}u+\hat{P}u=0
\end{equation}
with
\begin{align}
\label{hat P}
	\hat{P}&=-\frac{F(r)}{r^{2}}\partial_{r}\left(r^{2}F(r)\partial_{r}\right)-\frac{F(r)}{r^{2}}\Delta_{\mathbb{S}{^2}}+m^{2}F(r)\nonumber\\
	&=-F(r)^{2}\partial_{r}^{2}-F\left(\frac{2F(r)}{r}+\frac{\partial F}{\partial r}(r)\right)\partial_{r}-\frac{F(r)}{r^{2}}\Delta_{\mathbb{S}^{2}}+m^{2}F(r)
\end{align}
defined on $\mathscr{D}(\hat{P}):=\big\{u\in L^{2}\left(X,F(r)^{-1}r^{2}\mathrm{d}r\mathrm{d}\omega\right)\ \big\vert\ \hat{P}u\in L^{2}\left(X,F(r)^{-1}r^{2}\mathrm{d}r\mathrm{d}\omega\right)\big\}$ (this is the spatial operator in \cite{BoHa08} with the additional mass term $m^2F(r)$). In the sequel, we will use the following notations:
\begin{align*}
	V_{\pm}&:=\lim_{r\to r_\pm}V(r)=r_{\pm}^{-1}.
\end{align*}
It turns out that the positive mass makes the study of the equation easier. Besides the fact that massless charged particles do not exist in physics, it is not excluded that the resonance 0 for the case $s=0$ (see \cite{BoHa08}) can move to $\mathbb{C}^+$ in the case $s\neq0$ and $m=0$.
%
%
%
\subsection{The Regge-Wheeler coordinate}
\label{The Regge-Wheeler coordinate}
%
%
%
%
We introduce the Regge-Wheeler coordinate $x\equiv x\left(r\right)$ defined by the differential relation
\begin{align}
\label{Regge-Wheeler coordinate}
	\frac{\mathrm{d}x}{\mathrm{d}r}&:=\frac{1}{F\left(r\right)}.
\end{align}
Using the four roots $r_{\alpha}$ of $F$, $\alpha\in I:=\left\{n,c,-,+\right\}$, we can write
\begin{align*}
	\frac{1}{F\left(r\right)}&=-\frac{3r^{2}}{\Lambda}\sum_{\alpha\in I}\frac{A_{\alpha}}{r-r_{\alpha}}
\end{align*}
where $A_{\alpha}=\prod_{\beta\in I\setminus\{\alpha\}}(r_\alpha-r_\beta)^{-1}$ for all $\alpha\in I$, and $\pm A_\pm>0$. Integrating \eqref{Regge-Wheeler coordinate} then yields
\begin{align}
\label{Regge-Wheeler x(r)}
	x\left(r\right)&=-\frac{3}{\Lambda}\sum_{\alpha\in I}A_{\alpha}r_{\alpha}^{2}\ln\left|\frac{r-r_{\alpha}}{\mathfrak{r}-r_{\alpha}}\right|
\end{align}
with $\mathfrak{r}:=\frac{1}{2}\left(3M+\sqrt{9M^{2}-8Q^{2}}\right)$ (we will explain this choice below); observe that $|Q|<\frac{3}{\sqrt{8}}M$ if \eqref{Assumption on the main parameters of the problem} holds (see the discussion below (17) in \cite{Mo17}). Therefore, we have
\begin{align*}
	\left|r-r_{\alpha}\right|&=\left|\mathfrak{r}-r_{\alpha}\right|\prod_{\beta\in I\setminus\left\{\alpha\right\}}\left|\frac{r-r_{\beta}}{\mathfrak{r}-r_{\beta}}\right|^{-A_{\beta}r_{\beta}^{2}/\left(A_{\alpha}r_{\alpha}^{2}\right)}\exp\left(-\frac{\Lambda}{3A_{\alpha}r_{\alpha}^{2}}x\right)&\forall\alpha\in I
\end{align*}
which entails the asymptotic behaviours
\begin{align}
\label{Exp convergence of r}
	F\left(r\left(x\right)\right)+\lvert r\left(x\right)-r_{\pm}\rvert&\lesssim\exp\left(-\frac{\Lambda}{3A_{\pm}r_{\pm}^{2}}x\right)&x\to\pm\infty.
\end{align}
Note here that
\begin{align}
\label{Surface gravities}
	-\frac{\Lambda}{3A_\pm r_\pm^2}&=F'(r_\pm)=2\kappa_\pm
\end{align}
where $\kappa_{-}>0$ is the surface gravity at the event horizon and $\kappa_{+}<0$ is the surface gravity at the cosmological horizon. Recall that $\kappa_\pm$ is defined by the relation
\begin{align*}
	X^\mu\nabla_\mu X^\nu&=-2\kappa_\pm X^\nu&X=\partial_t
\end{align*}
where the above equation is to be considered at the corresponding horizon.

In the appendix \ref{Proof of analytic extension}, we follow \cite[Prop. IV.2]{BaMo93} to show the extension result:
\begin{proposition}
	\label{Analytic extension of r}
	There exists a constant $\mathscr{A}>0$ such that the function $x\mapsto r(x)$ extends analytically to $\left\{\lambda\in\mathbb{C}\mid\left|\Re\lambda\right|>\mathscr{A}\right\}$.
\end{proposition}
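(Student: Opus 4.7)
The plan is to invert the Regge-Wheeler relation separately near each of the horizons $r_-$ and $r_+$, producing a local holomorphic inverse that extends $r(x)$ to a complex half-plane at each end. Since $F$ is rational in $r$ with only simple zeros at $r_-$ and $r_+$ inside the relevant range, we have a partial-fraction decomposition of the form
\begin{align*}
\frac{1}{F(r)} = \frac{1}{2\kappa_\pm(r-r_\pm)} + h_\pm(r)
\end{align*}
for $r$ in a small complex neighbourhood of $r_\pm$, where $h_\pm$ is holomorphic on that neighbourhood and the residue is read off from \eqref{Surface gravities}. Integrating from the reference point $\mathfrak r$ yields a local primitive
\begin{align*}
x(r) = \frac{1}{2\kappa_\pm}\ln(r-r_\pm) + H_\pm(r) + C_\pm,
\end{align*}
with $H_\pm$ holomorphic near $r_\pm$ and $C_\pm\in\mathbb C$ a constant. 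The logarithmic branch is harmless because we only use this formula locally and we may match it to the real-valued expression \eqref{Regge-Wheeler x(r)} on the real interval $]r_-,r_+[$.

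Next, exponentiating the above identity gives the crucial relation
\begin{align*}
\mathrm e^{2\kappa_\pm x} = (r-r_\pm)\, G_\pm(r), \qquad G_\pm(r) := \mathrm e^{2\kappa_\pm(H_\pm(r)+C_\pm)},
\end{align*}
where $G_\pm$ is holomorphic and $G_\pm(r_\pm)\neq 0$. Hence the map $r\mapsto (r-r_\pm)G_\pm(r)$ has non-vanishing derivative at $r=r_\pm$, and by the holomorphic inverse function theorem there exist $\delta_\pm>0$ and holomorphic functions $\psi_\pm\colon D(0,\delta_\pm)\to\mathbb C$ with $\psi_\pm(0)=r_\pm$ such that locally $r=\psi_\pm\bigl(\mathrm e^{2\kappa_\pm x}\bigr)$. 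Because $\mathrm e^{2\kappa_\pm x}$ is an entire function of $x$, the composition $x\mapsto\psi_\pm(\mathrm e^{2\kappa_\pm x})$ is holomorphic on every domain where $|\mathrm e^{2\kappa_\pm x}|<\delta_\pm$, that is, on
\begin{align*}
\Re x > \frac{\ln\delta_+}{2\kappa_+} \quad (\text{since } \kappa_+<0) \qquad\text{and}\qquad \Re x < \frac{\ln\delta_-}{2\kappa_-} \quad (\text{since } \kappa_->0).
\end{align*}
Choosing $\mathscr A>0$ larger than both $|\ln\delta_+|/|2\kappa_+|$ and $|\ln\delta_-|/|2\kappa_-|$, these two holomorphic extensions cover $\{|\Re\lambda|>\mathscr A\}$. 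That they agree with the real-analytic function $x\mapsto r(x)$ on the overlap with $\mathbb R$ follows from uniqueness of analytic continuation together with the asymptotic relations \eqref{Exp convergence of r}, which identify the two sides on each ray of large $|\Re x|$.

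The routine bookkeeping in steps (i)--(iii) is straightforward, so the only real item to check carefully is that $G_\pm(r_\pm)\neq 0$, which is automatic from the explicit exponential formula, and that the branch of $\ln$ picked when constructing $H_\pm$ matches the one used on the real axis; this is arranged by fixing the additive constant $C_\pm$ via the value of $x$ at a real point in $]r_-,r_+[$ close to $r_\pm$. No delicate analysis is involved beyond the holomorphic inverse function theorem, as expected given that the argument is modelled on \cite[Prop. IV.2]{BaMo93}.
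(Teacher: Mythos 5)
Your argument is correct but takes a genuinely different route from the paper's Appendix~\ref{Proof of analytic extension}. The paper applies Lagrange's inversion theorem to the relation $z=g_\pm(r)$ with $z=\mathrm{e}^{2\kappa_\pm x}$, obtaining the explicit power series \eqref{Lagrange inversion series}, and then bounds the $\ell$-th coefficient by $\tilde K^\ell\ell^\ell$ (via the Leibniz rule applied to the product of powers $(r-r_\alpha)^{-\ell B_{+,\alpha}}$) and invokes Stirling's formula to produce an explicit radius of convergence, hence an explicit threshold $\mathscr A$. You instead isolate the simple pole of $1/F$ at $r_\pm$ (with residue $1/(2\kappa_\pm)$, consistent with \eqref{Surface gravities}), integrate, exponentiate, and apply the holomorphic inverse function theorem to $r\mapsto(r-r_\pm)G_\pm(r)$, whose derivative at $r_\pm$ is $G_\pm(r_\pm)\neq0$. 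This is cleaner and avoids the coefficient bookkeeping entirely; what it does not hand you directly is the explicit series representation \eqref{Lagrange inversion series}, which the paper quietly re-uses in Subsection~\ref{Estimates in the zone II} to bound $|r(x+\mathrm if^\pm_\theta(x))-r_\pm|$ along the scaled contour $\Gamma_\theta$. That downstream use survives your softer argument, though: holomorphy of $\psi_\pm$ at $0$ with $\psi_\pm(0)=r_\pm$ gives $|\psi_\pm(w)-r_\pm|\lesssim|w|$ for small $|w|$, which is the same estimate. Your branch caveat is also real: for $r\in\left]r_-,r_+\right[$ one has $r-r_+<0$, so $G_+$ must absorb a sign (equivalently write $\mathrm{e}^{2\kappa_+x}=(r_+-r)\widetilde G_+(r)$ with $\widetilde G_+$ positive on the real interval); fixing $C_\pm$ by matching at a real reference point near $r_\pm$, as you indicate, resolves it.
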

On $L^{2}\left(X,\mathrm{d}x\mathrm{d}\omega\right)$, define the operator $P:=r\hat{P}r^{-1}$, given in the coordinates $(x,\omega)$ by the expression
\begin{align*}
	P&=-r^{-1}\partial_{x}r^{2}\partial_{x}r^{-1}-\frac{F\left(r\right)}{r^{2}}\Delta_{\mathbb{S}^{2}}+m^{2}F\left(r\right)=-\partial_{x}^{2}-W_{0}\Delta_{\mathbb{S}^{2}}+W_{1}
\end{align*}
where
\begin{align}
\label{Potentiels W0 et W1}
	W_{0}\left(x\right):=\frac{F\left(r\left(x\right)\right)}{r\left(x\right)^{2}},\quad W_{1}\left(x\right):=\frac{F\left(r\left(x\right)\right)}{r\left(x\right)}\frac{\partial F}{\partial r}\left(r\left(x\right)\right)+m^{2}F\left(r\left(x\right)\right).
\end{align}
It will happen in the sequel that we write $F\left(x\right)$ for $F\left(r\left(x\right)\right)$ and also $V\left(x\right)$ for $V\left(r\left(x\right)\right)$. Observe that the potentials $W_0$ and $W_1$ satisfy the same estimate as in \eqref{Exp convergence of r}.

As
\begin{align*}
	\frac{\mathrm{d}W_{0}}{\mathrm{d}x}&=F(r)\frac{\mathrm{d}W_{0}}{\mathrm{d}r}=\frac{2F(r)}{r^{5}}\left(3Mr-2Q^{2}-r^{2}\right),
\end{align*}
we see that the (unstable) maximum of $W_{0}$ occurs when $x=0$, \textit{i.e.} $r=\mathfrak{r}=\frac{1}{2}\left(3M+\sqrt{9M^{2}-8Q^{2}}\right)$\nolinebreak: this is the radius of the photon sphere. It is the only trapping set outside the black hole for null geodesics (see \cite{Mo17}). The trapping will have a consequence on some resolvent type estimates, see the paragraph \ref{Estimates in the zone III}.
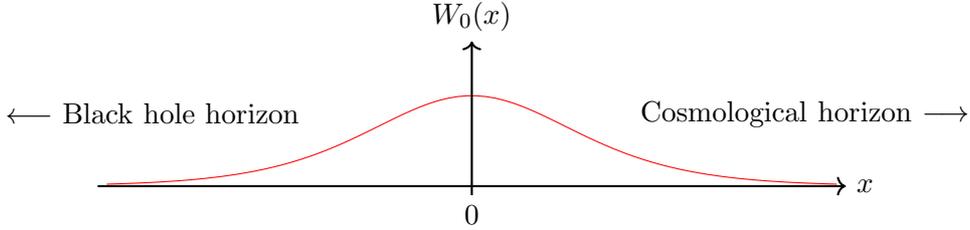
\begin{figure}[!h]
	\centering
	\captionsetup{justification=centering,margin=1.8cm}
	\begin{center}
		\begin{tikzpicture}[scale=1.2]
		\draw[red][domain=-4.0:4.0,samples=80]plot(\x,{4/(exp(0.65*\x)+exp(-0.65*\x))^2});
		\draw[->][thick](0,-0.1)--(0,1.6)node[above]{$W_{0}(x)$};
		\draw[->][thick](-4.1,0)--(4.1,0)node[right]{$x$};
		\draw(-3.5,0.8)node[]{$\longleftarrow$ Black hole horizon};
		\draw(3.65,0.8)node[]{Cosmological horizon $\longrightarrow$};
		\draw(0,-0.1)node[anchor=north]{$0$};
		\end{tikzpicture}
		\caption{\label{The potential W0}The potential $W_{0}$ in the Regge-Wheeler coordinates.}
	\end{center}
\end{figure}
%
%
%
%
\subsection{The charge Klein-Gordon operator}
\label{The charge Klein-Gordon operator}
%
%
%
%
Taking advantage of the spherical symmetry, we write
\begin{align*}
	&L^{2}\left(\mathbb{R}\times\mathbb{S}^{2},\mathrm{d}x\mathrm{d}\omega\right)\simeq\bigoplus_{\ell\in\mathbb{N}}\left(L^{2}\left(\mathbb{R},\mathrm{d}x\right)\otimes Y_{\ell}\right)=:\bigoplus_{\ell\in\mathbb{N}}\mathcal{V}_{\ell}
\end{align*}
where for all $\ell\in\mathbb{N}$, $Y_{\ell}$ is the $(2\ell+1)$-dimensional eigenspace of the operator $\left(-\Delta_{\mathbb{S}^{2}},H^{2}(\mathbb{S}^{2},\mathrm{d}\omega)\right)$ associated to the eigenvalue $\ell\left(\ell+1\right)$. On each $\mathcal{V}_{\ell}$, we define $P_{\ell}$ as the restriction of $P$ onto $\mathcal{V}_{\ell}$ which will be identified with an operator acting on $L^2(\mathbb{R},\mathrm{d}x)$, i.e.
\begin{equation}
\label{Expression of h ell}
	P_{\ell}=-\partial_{x}^{2}+\ell\left(\ell+1\right)W_{0}+W_{1}
\end{equation}
and we set $\mathscr{D}\left(P_{\ell}\right):=H^{2}(\mathbb{R},\mathrm{d}x)$ so that $P_{\ell}$ is self-adjoint. In the sequel, we will use the following (self-adjoint) realization of the total operator $P$:
\begin{align*}
	P&:=\bigoplus_{\ell\in\mathbb{N}}P_{\ell},&\mathscr{D}\left(P\right)&:=\Big\{u=(u_{\ell})_{\ell\in\mathbb{N}}\in\bigoplus_{\ell\in\mathbb{N}}\mathcal{V}_{\ell}\mid\forall\ell\in\mathbb{N},\ u_\ell\in\mathscr{D}(P_\ell)\Big\}.
\end{align*}
Now the charged Klein-Gordon equation reads
\begin{equation}
\label{operators charged KG equation with h}
	\left(\partial_{t}-\mathrm{i}sV\right)^{2}u+Pu=0.
\end{equation}
The point is to see that if $u$ is a solution of \eqref{operators charged KG equation with h}, then $v:=\left(u,-\mathrm{i}\partial_{t}u-sVu\right)$ solves the first order equation
\begin{align}
\label{First order equation with the charge Klein-Gordon operator}
	-\mathrm{i}\partial_{t}v=\hat{K}(s)v
\end{align}
where
\begin{align}
\label{Hamiltonien H}
	\hat{K}\left(s\right)&:=
	\begin{pmatrix}
	sV&\mathrm{Id} \\
	P&sV
	\end{pmatrix}
\end{align}
is the charge Klein-Gordon operator. Conversely, if $v=\left(v_{0},v_{1}\right)$ solves \eqref{First order equation with the charge Klein-Gordon operator}, then $v_0$ solves \eqref{operators charged KG equation with h}. We also define $\hat{K}_{\ell}\equiv\hat{K}_\ell(s)$\footnote{We will often drop the dependence in $s$.} with $P_{\ell}$ in place of $P$ for any $\ell\in\mathbb{N}$. Following \cite{GeoGerHA17}, we realize $\hat{K}_{\ell}$ with the domain
\begin{align*}
	\mathscr{D}(\hat{K}_{\ell})&:=\left\{u\in P_{\ell}^{-1/2}L^2(\mathbb{R},\mathrm{d}x)\oplus L^2(\mathbb{R},\mathrm{d}x)\mid\hat{K}_{\ell}u\in P_{\ell}^{-1/2}L^2(\mathbb{R},\mathrm{d}x)\oplus L^2(\mathbb{R},\mathrm{d}x)\right\}
\end{align*}
and realize the operator $\hat{K}$ as the direct sum on $\mathbb{N}\ni\ell$ of the $\hat{K}_{\ell}$.

Let $\dot{\mathcal{E}}_{\ell}$ be the completion of $P_{\ell}^{-1/2}L^2(\mathbb{R},\mathrm{d}x)\oplus L^2(\mathbb{R},\mathrm{d}x)$ for the norm\footnote{Note that the norm $\|.\|^2_{_{\dot{\mathcal{E}}_\ell}}$ is conserved if $[P_{\ell},sV]=0$; it is the case if $s=0$.}
\begin{align*}
	&\|u\|^2_{_{\dot{\mathcal{E}}_\ell}}:=\langle u_0,P_\ell u_0\rangle_{_{L^2(\mathbb{R},\mathrm{d}x)}}+\|u_1-sVu_0\|^2_{_{L^2(\mathbb{R},\mathrm{d}x)}}&u=(u_0,u_1)\in\dot{\mathcal{E}_\ell}
\end{align*}
and define $\big(\dot{\mathcal{E}},\|.\|_{\dot{\mathcal{E}}}\big)$ as the direct sum of the spaces $\dot{\mathcal{E}}_{\ell}$. \cite[Lem. 3.19]{GeoGerHA17} shows that $\hat{K}_\ell$ generates a continuous one-parameter group $(\mathrm{e}^{-\mathrm{i}t\hat{K}_\ell})_{t\in\mathbb{R}}$ on $(\dot{\mathcal{E}}_\ell, \|.\|_{_{\dot{\mathcal{E}}_\ell}})$. We similarly construct the spaces $\big(\mathcal{E}_\ell,\|.\|_{\mathcal{E}_\ell}\big)$ and $\big(\mathcal{E},\|.\|_{\mathcal{E}}\big)$ with $\langle P_\ell\rangle$ instead of $P_\ell$. Let us mention here that for any $n\in\mathbb{R}$ the quantity
\begin{align}
\label{Natural energies}
	\langle v\!\mid\!v\rangle_n:=\langle v_1-nv_0,v_1-nv_0\rangle_{_{L^2(\mathbb{R},\mathrm{d}x)}}+\langle(P-(sV-n)^2\,)v_0,v_0\rangle_{_{L^2(\mathbb{R},\mathrm{d}x)}}
\end{align}
is formally conserved if $v=(u,-\mathrm{i}\partial_t u)$ with $u$ solution of \eqref{operators charged KG equation with h} and is continuous with respect to the norm $\|.\|_{\mathcal{E}}$. However, it is in general  not positive nor continuous with respect to the norm $\|.\|_{\dot{\mathcal{E}}}$ (see \cite[§3.4.3]{GeoGerHA17} for more details): this is superradiance. When $\Lambda=0$ (that is, when the cosmological horizon is at infinity), the natural energy $\langle.\!\mid\!.\rangle_{sV_-}$ is positive for $s$ small enough and it can be used to define a Hilbert space framework.

An important observation is the fact that the norms $\|.\|_{_{\dot{\mathcal{E}}_\ell}}$ and $\|.\|_{_{\mathcal{E}_\ell}}$  are locally equivalent, meaning that for any $v\in\dot{\mathcal{E}}$ and any cut-off $\chi\in\mathcal{C}^{\infty}_{\mathrm{c}}\left(\mathbb{R},\mathbb{R}\right)$, we have
\begin{align}
\label{Local equivalence of norms}
	\|\chi v\|_{_{\dot{\mathcal{E}}}}&\lesssim\|\chi v\|_{_{\mathcal{E}}}\lesssim\|\chi v\|_{_{\dot{\mathcal{E}}}}.
\end{align}
The first inequality is obvious, and the second one is established with the Hardy type estimate $\|\chi v\|_{_{L^2}}\lesssim\|P^{1/2}v\|_{_{L^2}}$ (see \cite[Lem. 9.5]{GeoGerHA17}; the validity of this result in our setting is discussed in the subsection \ref{Meromorphic extension} below).
%
%
%
%
%
%
%
%
%
%
%
%
%
\subsection{The quadratic pencil}
\label{The quadratic pencil}
Let $u$ be a solution of \eqref{operators charged KG equation with h}. If we look for $u$ of the form $u=\mathrm{e}^{\mathrm{i}zt}v$ with $z\in\mathbb{C}$ for some $v$, then $v$ satisfies the equation $(P-(z-sV)^2)v=0$. We define the harmonic quadratic pencil
\begin{align*}
	p_{\ell}\left(z,s\right)&:=P_{\ell}-\left(z-sV\right)^{2},&\mathscr{D}(p_\ell(z,s))&:=\langle P_\ell\rangle^{-1}L^{2}(\mathbb{R},\mathrm{d}x)=H^{2}(\mathbb{R},\mathrm{d}x)
\end{align*}
and realize the total quadratic pencil as
\begin{align*}
	&p\left(z,s\right):=\bigoplus_{\ell\in\mathbb{N}}p_{\ell}\left(z,s\right),\\
	&\mathscr{D}(p(z,s)):=\Big\{u=(u_{\ell})_{\ell\in\mathbb{N}}\in\bigoplus_{\ell\in\mathbb{N}}\mathcal{V}_{\ell}\;\Big\vert\;\forall\ell\in\mathbb{N},\,u_\ell\in\mathscr{D}(p_\ell(z,s)),\,\sum_{\ell\in\mathbb{N}}\|p_{\ell}(z,s)u_\ell\|_{_{\mathcal{V}_{\ell}}}^{2}<+\infty\Big\}.
\end{align*}
\cite[Prop. 3.15]{GeoGerHA17} sets the useful relations
\begin{align}
\label{Relation between the domains of the quadratic pencil and the corresponding Hamiltonian}
	\rho(\hat{K}_\ell)\cap\mathbb{C}\setminus\mathbb{R}&=\big\{z\in\mathbb{C}\setminus\mathbb{R}\;\big\vert\;p_{\ell}\left(z,s\right):H^{2}(\mathbb{R},\mathrm{d}x)\to L^{2}(\mathbb{R},\mathrm{d}x)\text{ is bijective}\big\}
\end{align}
and
\begin{equation}
\label{Resolvent of H}
	\hat{R}_\ell\left(z,s\right):=(\hat{K}_\ell(s)-z)^{-1}=\begin{pmatrix}
	p_\ell\left(z,s\right)^{-1}(z-sV)&p_\ell\left(z,s\right)^{-1}\\\mathrm{Id}+(z-sV)\,p_\ell\left(z,s\right)^{-1}(z-sV)&(z-sV)\,p_\ell\left(z,s\right)^{-1}
	\end{pmatrix}
\end{equation}
for all $z\in\rho(\hat{K}_\ell)\cap\mathbb{C}\setminus\mathbb{R}$. In comparison, the relation (1.7) in \cite{BoHa08} involves the resolvent of $P_{\ell}$, which corresponds to the case $s=0$ for us. \cite[Prop. 3.12]{GeoGerHA17} shows that \eqref{Resolvent of H} is also valid for $z\in\rho(\hat{K}_{\ell})\cap\mathbb{R}$ when we work on $(\mathcal{E}_\ell,\|.\|_{_{\mathcal{E}_\ell}})$; by using the local equivalence \eqref{Local equivalence of norms} of the norms $\|.\|_{_{\dot{\mathcal{E}}_\ell}}$ and $\|.\|_{_{\mathcal{E}_\ell}}$, we can use \eqref{Resolvent of H} for $z\in\rho(\hat{K}_{\ell})\cap\mathbb{R}$ if we consider the cut-off resolvent $\chi\hat{R}_\ell\left(z,s\right)\chi$ with $\chi\in\mathcal{C}^{\infty}_{\mathrm{c}}\left(\mathbb{R},\mathbb{R}\right)$. In the sequel, we will simply call $p_{\ell}(z,s)$ the quadratic pencil when $\ell\in\mathbb{N}$ will be fixed.
%
%
%
%
%
%
%
%
\section{Meromorphic extension and resonances}
\label{Meromorphic extension and resonances}
We construct in this Section a meromorphic extension for the weighted resolvent of $\hat{K}(s)$. The main Theorem \ref{Theoreme principal}, which provides asymptotic decay (in time) for solutions of the charged Klein-Gordon equation \eqref{operators charged KG equation}, relies on such a construction. It will be established by means of the \textit{theory of resonances}. Let us briefly introduce the basic idea of this theory.
\paragraph{An introductory example to the theory of resonances.} An explicit and comprehensive example is given by the one-dimensional wave equation on $\mathbb{R}$:
\begin{align}
\label{Wave eq 1d}
	(\partial_{t}^{2}-\partial_{x}^{2})u&=f\in L^{2}(\mathbb{R},\mathrm{d}x).
\end{align}
Here $-\partial_{x}^{2}$ is the self-adjoint realization of the one-dimensional Laplacian on $H^{2}(\mathbb{R},\mathrm{d}x)$. Taking the time-dependent Fourier transform (denoted by the symbol $\hat{\ }\,$), we get
\begin{align*}
	(-\partial_{x}^{2}-z^{2})\hat{u}&=\hat{f}
\end{align*}
which is the one-dimensional Helmholtz equation. Call $R(z)$ the resolvent $(-\partial_{x}^{2}-z^{2})^{-1}$. Then $R(z)$ is well-defined for $z\in\mathbb{C}^{+}$ and we have for such spectral parameters
\begin{align*}
	\hat{u}&=R(z)\hat{f}.
\end{align*}
In this very simple example, we have an explicit representation formula:
\begin{align*}
	(R(z)\hat{f})(x)&=\frac{\mathrm{i}}{2z}\int_{-\infty}^{+\infty}\mathrm{e}^{\mathrm{i}z|x-y|}\hat{f}(y)\mathrm{d}y\qquad\qquad\forall z\in\mathbb{C}^{+},\forall x\in\mathbb{R}.
\end{align*}
Let us set $R(z;x,y):=\frac{\mathrm{i}}{2z}\mathrm{e}^{\mathrm{i}z|x-y|}$ the kernel of $R(z)$. By Schur's estimate,
\begin{align*}
	\|R(z)\|_{_{L^{2}(\mathbb{R},\mathrm{d}x)\to L^{2}(\mathbb{R},\mathrm{d}x)}}&\leq\left(\sup_{x\in\mathbb{R}}\int_{-\infty}^{+\infty}|R(z;x,y)|\mathrm{d}y\right)^{1/2}\left(\sup_{y\in\mathbb{R}}\int_{-\infty}^{+\infty}|R(z;x,y)|\mathrm{d}x\right)^{1/2}=\frac{1}{|z||\Im z|}.
\end{align*}
Now take an \textit{exponential} weight $w(x)=\mathcal{O}_{|x|\to+\infty}\big(\mathrm{e}^{-\kappa|x|}\big)$ for some $\kappa>0$. Then Schur's estimate gives
\begin{align*}
	\|w^{\delta}R(z)w^{\delta}\|_{_{L^{2}(\mathbb{R},\mathrm{d}x)\to L^{2}(\mathbb{R},\mathrm{d}x)}}&\lesssim\frac{1}{|z|(\kappa-|\Im z|)}
\end{align*}
for all $\delta>0$ and $z\in\overline{\mathbb{C}^{-}}\setminus\{0\}$ such that $\Im z>-\kappa\delta$. We can thus extend the weighted resolvent $w^{\delta}R(z)w^{\delta}$ from $\mathbb{C}^{+}$ to $\{\lambda\in\mathbb{C}\setminus\{0\}\mid\Im\lambda>-\kappa\delta\}$ as an analytic family of bounded operators acting on $L^{2}(\mathbb{R},\mathrm{d}x)$. We can easily check that
\begin{align*}
	\lim_{z\to 0}\|zw^{\delta}R(z)w^{\delta}\|_{_{L^{2}(\mathbb{R},\mathrm{d}x)\to L^{2}(\mathbb{R},\mathrm{d}x)}}&<+\infty
\end{align*}
for any $\delta>0$. Therefore $z=0$ is a pole of the extension: it is called \textit{resonance}. The operator $w^{\delta}R(z)w^{\delta}:L^{2}(\mathbb{R},\mathrm{d}x)\to L^{2}(\mathbb{R},\mathrm{d}x)$ is then the \textit{meromorphic extension} of the original weighted resolvent. Notice that we can replace any exponential weight by a cut-off $\chi\in\mathcal{C}^{\infty}_{\mathrm{c}}(\mathbb{R},\mathbb{R})$, for
\begin{align*}
	\|\chi R(z)\chi\|_{_{L^{2}(\mathbb{R},\mathrm{d}x)\to L^{2}(\mathbb{R},\mathrm{d}x)}}&\leq\|\chi w^{-\delta}\|_{_{L^{2}(\mathbb{R},\mathrm{d}x)\to L^{2}(\mathbb{R},\mathrm{d}x)}}^{2}\|w^{\delta}R(z)w^{\delta}\|_{_{L^{2}(\mathbb{R},\mathrm{d}x)\to L^{2}(\mathbb{R},\mathrm{d}x)}}.
\end{align*}
Collecting estimate for the cut-off resolvent is generally easier. Putting aside the regularity issue, we have the inversion formula
\begin{align*}
	u(t,x)&=\frac{1}{2\pi}\int_{-\infty+\mathrm{i}\nu}^{+\infty+\mathrm{i}\nu}\mathrm{e}^{-\mathrm{i}zt}R(z)\hat{f}(z,x)\mathrm{d}z
\end{align*}
for some $\nu>0$. If $f$ is compactly supported in $x\in\mathbb{R}$, then
\begin{align}
\label{Representation formula}
	\chi u(t,x)&=\frac{1}{2\pi}\int_{-\infty+\mathrm{i}\nu}^{+\infty+\mathrm{i}\nu}\mathrm{e}^{-\mathrm{i}zt}\chi R(z)\chi\hat{f}(z,x)\mathrm{d}z
\end{align}
for any cut-off $\chi$ such that $\chi\equiv 1$ on $\mathrm{Supp\,}f(t,\cdot)$. Provided that one has some nice estimates on $\chi R(z)\chi$, we can use a contour deformation to obtain integrals in $\mathbb{C}^{-}$ which provides exponentially decaying terms. In the meanwhile, the residue theorem makes appear the resonances; in the present example, this yields the following local energy estimate: for $t\gg 0$,
\begin{align}
\label{Decay example}
	\|\chi u(t,\cdot)\|_{_{H^{2}(\mathbb{R},\mathrm{d}x)}}&=\frac{1}{2}\chi\langle 1,\chi\hat{f}(0)\rangle_{_{L^{2}(\mathbb{R},\mathrm{d}x)}}+E(t),\qquad\qquad E(t)\lesssim\mathrm{e}^{-\nu t}\|f\|_{_{H^{2}(\mathbb{R},\mathrm{d}x)}}
\end{align}
for any $\nu>0$. For a more detailed and rigorous derivation of the above result in more general setting, we refer to the book of Dyatlov-Zworski \cite{DyZw}. Let us make some comments on \eqref{Decay example}:
\begin{enumerate}
	\item The non-vanishing term in \eqref{Decay example} is the projection on the \textit{resonant state} $x\mapsto 1$. This function indeed solves \eqref{Wave eq 1d} but is not integrable so it is not an eigenvalue of $-\partial_{x}^{2}$ in $H^{2}(\mathbb{R},\mathrm{d}x)$ (it is in some weighted Sobolev spaces). The resonance $0$ and the resonant state $1$ for the wave equation exist also on the De Sitter-Schwarzschild metric (see Bony-H\"{a}fner \cite{BoHa08}, formula (1.9); observe that the resonant state is $r$ therein because of the transformation $r\hat{P}r^{-1}$ below equation (1.3)) as well as on De Sitter-Kerr metric (see Dyatlov \cite{DyQNM}, formula (1.5)).
	\item One may notice that there is no loss of derivatives in \eqref{Decay example}, that is the norms are the same both on the left-hand and right-hand sides. We know that decay estimates without loss of derivatives become false in presence of an obstacle by the work of Ralston \cite{Ra69}. And indeed, for the wave equation on the De Sitter-Schwarzschild and De Sitter-Kerr metrics, where there exist trapping sets (the so-called photon sphere in the spherically symmetric case, as the one we have discussed about in Subsection \ref{The Regge-Wheeler coordinate}), there is a loss of angular derivatives in the estimates in \cite{BoHa08} and \cite{DyQNM}.
	\item We see in \eqref{Representation formula} the importance of the meromorphic extension of the cut-off resolvent as well as the localization of resonances in the complex plane. Any resonance in $\mathbb{C}^{+}$ gives exponentially growing terms (the corresponding resonant state is called \textit{growing mode}); conversely, any resonance in $\mathbb{C}^{-}$ gives exponentially decaying terms. The existence of real resonances has more subtle consequences: it leads to polynomially growing terms if the multiplicity of the resonance (as a pole of the meromorphic extension) is greater than 1, or to a stationary term with no growth or decay in time for a resonance of multiplicity equal to 1 (this happens in \cite{BoHa08} and \cite{DyQNM}).
	\item In the non-superradiant case \cite{BoHa08}, the meromorphic extension of the weighted resolvent $w^{\delta}(P-z^{2})^{-1}w^{\delta}$ with $w:=\sqrt{(r-r_-)(r_+-r)}$ is a direct consequence of the work of Mazzeo-Melrose \cite{MaMe87}. In the superradiant case however, as in \cite{DyQNM} or in the present paper, the resolvent $(P-z^{2})$ is replaced by the quadratic pencil $p(z,s)$, and \cite{MaMe87} can not be directly applied anymore.
\end{enumerate}
\paragraph{The charged Klein-Gordon equation on the De Sitter-Reissner-Nordstr\"{o}m metric.} In our setting, we know that the presence of the photon sphere will eventually lead to a loss of derivative in the estimates. Moreover, because of the superradiance, the homogeneous norm $\|\cdot\|_{_{\dot{\mathcal{E}}}}$ is not conserved and may grow (possibly exponentially fast) in time. Yet, it will turn out that no pole lies on and above the real axis provided that $s$ remains small, entailing an exponential decay of the solutions \linebreak%
of equation \eqref{operators charged KG equation}.

As motivated above, we first try to construct a meromorphic extension of the weighted resolvent of $\hat{K}(s)$. The presence of the mixed term $sV\partial_{t}$ in \eqref{operators charged KG equation} prevents us to directly use Mazzeo-Melrose result \cite{MaMe87}. However, $p(z,s)^{-1}$ formally tends as $s\to 0$ to $(P-z^{2})^{-1}$ for which \cite{MaMe87} applies; moreover, the case $s=0$ is very similar (even easier) to the case treated in \cite{BoHa08}. We will therefore try to obtain results for small $s$ using perturbation arguments. Our strategy is the following one:
\begin{enumerate}
	\item [$(i)$] Define first suitable "asymptotic" energy spaces by removing the troublesome negative contributions from the electromagnetic potential $sV$ near $r_\pm$ and define "asymptotic" selfadjoint Hamiltonians $\hat{H}_\pm(s)$ (see the paragraph \ref{Notations} below).
	\item [$(ii)$] For $s=0$, the situation is really similar to the Klein-Gordon equation on De Sitter-Schwarzschild metric: using the standard results \cite{BoHa08} and \cite{MaMe87}, we can meromorphically extend the weighted resolvent of $\hat{H}_\pm(0)$ from $\mathbb{C}^+$ to $\mathbb{C}$ with no poles on and above the real axis (see Lemma \ref{Meromorphic extension of asymptotic Hamiltonians} and Lemma \ref{Resonance 0 for P-z^2}).
	\item [$(iii)$] If $s$ remains small, we can use analytic Fredholm theory to get a meromorphic extension for the weighted resolvents of the asymptotic Hamiltonians $\hat{H}_\pm(s)$ into a strip in $\mathbb{C}^-$ (the perturbation argument entails a bound on the width of this strip which is directly linked to the rate of decay of the potentials $W_0$ and $W_1$ in $P$ near $r_\pm$). We will also get the absence of poles near the real axis (see Lemma \ref{Meromorphic extension of asymptotic Hamiltonians} and Lemma \ref{Extension of asymptotic Hamiltonians}).
	\item [$(iv)$] Finally, we construct a parametrix for the resolvent of an equivalent operator to $\hat{K}(s)$ by gluing together the resolvent of $\hat{H}_\pm(s)$ (see \eqref{Parametrix}). Using again the analytic Fredholm theory for $s$ sufficiently small, we show the existence of the weighted resolvent and also that the poles can only lie below the real axis (see Theorem \ref{No resonances in a strip near the real axis for s small}).
\end{enumerate}
The sequel of this Section is organized as follows: Subsection \ref{Notations} introduces notations and tools (operators, functional spaces) from \cite{GeoGerHA17} which will be used for the construction of the meromorphic extension of the weighted resolvent of $\hat{K}(s)$. Subsection \ref{Verification of the abstract assumptions (G)} aims to show that results obtained in \cite{GeoGerHA17} are available for us. Then Subsection \ref{Study of the asymptotic Hamiltonians} establishes the announced results for the asymptotic Hamiltonians $\hat{H}_\pm(s)$. Subsection \ref{Meromorphic extension} eventually gives the proof of the existence of the meromorphic extension of the weighted resolvent of $\hat{K}(s)$ and also shows that the poles in any compact neighbourhood of 0 lie below the real axis.
\subsection{Notations}
\label{Notations}
We introduce some notations following \cite[§2.1]{GeoGerHA17}. First observe that if $u$ solves \eqref{operators charged KG equation}, then $v:=\mathrm{e}^{-\mathrm{i}sV_{+}t}u$ satisfies
\begin{align*}
	(\partial_t^2-2\mathrm{i}s(V(r)-V_{+})\partial_t-s^2(V(r)-V_{+})^2+P)v&=0.
\end{align*}
We can therefore work with the potential\footnote{From a geometrical point of view, we are changing the gauge. Namely, $\frac{Q}{r}\mathrm{d}t$ is replaced by $\left(\frac{Q}{r}-\frac{Q}{r_+}\right)\mathrm{d}t$ which does not degenerate anymore at $r=r_+$. To see this, we use the standard Eddington-Finkelstein advanced and retarded coordinates $u=t-x$, $v=t+x$ to define the horizons: we have locally near the cosmological horizon $\mathrm{d}t=\mathrm{d}u+\mathrm{d}x$, $\mathrm{d}t=\mathrm{d}v-\mathrm{d}x$ and then $\frac{\mathrm{d}t}{\mathrm{d}r}=\pm F(r)^{-1}$. We eventually use that $\left(\frac{1}{r}-\frac{1}{r_+}\right)F(r)^{-1}$ remains bounded and does not vanish at $r=r_+$.} $\tilde{V}:= V-V_+=\mathcal{O}_{r\to r_+}(r_+-r)$ in this Section. In order not to overload notations, we will still denote $\tilde{V}$ by $V$ and $\lim_{r\to r_{\pm}}\tilde{V}(r)=V_{\pm}$.

Let us define $\mathcal{H}:=L^{2}\left(X,\mathrm{d}r\mathrm{d}\omega\right)$ and
\begin{align}
\label{Operator P rund}
	\mathscr{P}&:=rF(r)^{-1/2}\hat{P}r^{-1}F(r)^{1/2}=-r^{-1}F(r)^{1/2}\partial_{r}r^{2}F(r)\partial_{r}r^{-1}F(r)^{1/2}-\frac{F(r)}{r^{2}}\Delta_{\mathbb{S}{^2}}+m^{2}F(r)
\end{align}
with $\hat{P}$ given by \eqref{hat P}. Since $u\mapsto r^{-1}F^{1/2}u$ is an unitary isomorphism from $\mathcal{H}$ to $L^{2}\left(X,F^{-1}r^{2}\mathrm{d}r\mathrm{d}\omega\right)$, the results obtained below on $\mathscr{P}$ will also apply to $\hat{P}$ (and thus to $P$). Observe that the space $\dot{\mathcal{E}}$ has been defined in our setting with the operator $P$ which is $r\hat{P}r^{-1}$ expressed with the Regge-Wheeler coordinate, and $\hat{P}$ is equivalent to $\mathscr{P}$ as explained above; in the sequel, we will denote by $\dot{\mathscr{E}}$ the completion of $\mathscr{P}^{-1/2}\mathcal{H}\oplus\mathcal{H}$ for the norm $\|(u_0,u_1)\|_{_{\dot{\mathscr{E}}}}^2:=\langle u_0,\mathscr{P}u_0\rangle_{_{\mathcal{H}}}+\|u_1-sVu_0\|^2_{_{\mathcal{H}}}$. Let $i_\pm,j_\pm\in\mathcal{C}^{\infty}(\left]r_-,r_+\right[,\mathbb{R})$ such that
\begin{align*}
&i_\pm=j_\pm=0\text{ close to $r_\mp$},\qquad i_\pm=j_\pm=1\text{ close to $r_\pm$},\\
&i_-^2+i_+^2=1,\qquad i_\pm j_\pm=j_\pm,\qquad i_-j_+=i_+j_-=0.
\end{align*}
We then define the operators
\begin{align*}
&k_\pm:=s(V\mp j_\mp^2V_{-}),\qquad \mathscr{P}_\pm:=\mathscr{P}-k_\pm^2,\qquad \tilde{\mathscr{P}}_-:=\mathscr{P}-(sV_{-}-k_-)^2.
\end{align*}
We now define the isomorphism on $\dot{\mathscr{E}}$ (see comments above Lemma 3.13 in \cite{GeoGerHA17}) 
	\begin{align*}
	\Phi(sV)&:=\begin{pmatrix}
	\mathrm{Id}&0\\sV&\mathrm{Id}
	\end{pmatrix}
	\end{align*}
and we introduce the energy Klein-Gordon operator
\begin{align*}
	\hat{H}(s)&=\Phi(sV)\hat{K}(s)\Phi^{-1}(sV)=\begin{pmatrix}
	0&\mathrm{Id}\\
	\mathscr{P}-s^2V^2&2sV
	\end{pmatrix}
\end{align*}
with domain
\begin{align*}
	\mathscr{D}(\hat{H}(s))&=\left\{u\in\mathscr{P}^{-1/2}\mathcal{H}\oplus\mathcal{H}\mid \hat{H}(s)u\in\mathscr{P}^{-1/2}\mathcal{H}\oplus\mathcal{H}\right\}
\end{align*}
and the asymptotic Hamiltonians
\begin{align*}
	\hat{H}_\pm(s)&=\begin{pmatrix}
	0&\mathrm{Id}\\
	\mathscr{P}_\pm&2k_\pm
	\end{pmatrix}
\end{align*}
with domains
\begin{align*}
	\mathscr{D}(\hat{H}_+(s))&=\left(\mathscr{P}_+^{-1/2}\mathcal{H}\cap \mathscr{P}_+^{-1}\mathcal{H}\right)\oplus\langle \mathscr{P}_+\rangle^{-1/2}\mathcal{H},\\
	\mathscr{D}(\hat{H}_-(s))&=\Phi(sV_{-})\left(\tilde{\mathscr{P}}_-^{-1/2}\mathcal{H}\cap\tilde{\mathscr{P}}_-^{-1}\mathcal{H}\right)\oplus\langle\tilde{\mathscr{P}}_-\rangle^{-1/2}\mathcal{H}.
\end{align*}
These operators are self-adjoint on the following spaces (see the beginning of the paragraph 5.2 in \cite{GeoGerHA17}):
\begin{align*}
	\dot{\mathscr{E}}_{+}&:=\mathscr{P}_+^{-1/2}\mathcal{H}\oplus\mathcal{H},\\
	\dot{\mathscr{E}}_{-}&:=\Phi(sr^{-1}_-)\left(\tilde{\mathscr{P}}_-^{-1/2}\mathcal{H}\oplus\mathcal{H}\right).
\end{align*}
In the sequel, we will also use the spaces $\mathscr{E}_\pm$ defined as above but with the operators $\langle\mathscr{P}_\pm\rangle$ instead of $\mathscr{P}_\pm$. Finally, we define the weight $w(r):=\sqrt{(r-r_-)(r_+-r)}$.
\subsection{Abstract setting}
\label{Verification of the abstract assumptions (G)}
Meromorphic extensions in our setting follow from the works of Mazzeo-Melrose \cite{MaMe87} and Guillarmou \cite{Gu04}, as stated in \cite[Prop. 5.3]{GeoGerHA17}. The abstract setting in which this result can be used is recalled in this paragraph.

We first recall for the reader convenience the Abstract assumptions (A1)-(A3), the Meromorphic Extensions assumptions (ME1)-(ME2) as well as the "Two Ends" assumptions (TE1)-(TE3) of \cite{GeoGerHA17}:
\begin{align*}
	\label{A1}&\mathscr{P}>0,\tag{A1}\\
	\label{A2}&\!\!\begin{cases}
	sV\in\mathcal{B}(\mathscr{P}^{-1/2} L^2)>0,\\
	\text{if $z\neq\mathbb{R}$ then $(z-sV)^{-1}\in\mathcal{B}(\mathscr{P}^{-1/2} L^2)$ and there exists $n>0$}\\[-1mm]
	\text{such that $\|(z-sV)^{-1}\|_{_{\mathcal{B}(\mathscr{P}^{-1/2} L^2)}}\lesssim|\Im z|^{-n}$},\\
	\text{there exists $c>0$ such that $\|(z-sV)^{-1}\|_{_{\mathcal{B}(\mathscr{P}^{-1/2} L^2)}}\lesssim\big||z|-\|sV\|_{_{L^\infty}}\big|$}\\[-1mm]
	\text{if $|z|\geq c\|sV\|_{_{L^\infty}}$}
	\end{cases},\tag{A2}\\
	\label{ME1}&\!\!\begin{cases}
	\text{(a) }wVw\in L^\infty,\\
	\text{(b) }[V,w]=0\\
	\text{(c) }\mathscr{P}^{-1/2}[\mathscr{P},w^{-\epsilon}]w^{\epsilon/2}\in\mathcal{B}(L^2)\text{ for all $0<\epsilon\leq 1$},\\
	\text{(d) }\text{if $\epsilon>0$ then $\|w^{-\epsilon}u\|_{_{L^2}}\lesssim\|\mathscr{P}^{1/2}u\|_{_{L^2}}$ for all $u\in\mathscr{P}^{-1/2}L^2$},\\
	\text{(e) } w^{-1}\langle\mathscr{P}\rangle^{-1}\in\mathcal{B}(L^2)\text{ is compact}
	\end{cases},\tag{ME1}\\
	\label{ME2}&\!\!\begin{cases}
	\text{For all $\epsilon>0$ there exists $\delta_\epsilon>0$ such that $w^{-\epsilon}(\mathscr{P}-z^2)^{-1}w^{-\epsilon}$ extends from $\mathbb{C}^+$ }\\[-1mm]
	\text{to $\{z\in\mathbb{C}\mid\Im z>-\delta_\epsilon\}$ as a finite meromorphic function with values}\\[-1mm]
	\text{in compact operators acting on $L^2$}
	\end{cases},\tag{ME2}\\
	\label{TE1}&\!\!\begin{cases}
	[x,sV]=0,\\
	x\mapsto w(x)\in\mathcal{C}^\infty(\mathbb{R},\mathbb{R}),\\
	\chi_1(x)\mathscr{P}\chi_2(x)=0\text{ for all $\chi_1,\chi_2\in\mathcal{C}^\infty(\mathbb{R},\mathbb{R})$ bounded with all their derivatives}\\[-1mm]
	\text{and such that $\mathrm{Supp\,}\chi_1\cap\mathrm{Supp\,}\chi_2=\emptyset$}
	\end{cases},\tag{TE1}\\
	\label{TE2}&\text{There exists $\ell_-\in\mathbb{R}$ such that $(\mathscr{P}_+,k_+)$ and $(\tilde{\mathscr{P}_-},(k_--\ell_-))$ satisfy \eqref{A2}},\tag{TE2}\\
\end{align*}
\begin{align*}
	\label{TE3}&\!\!\begin{cases}
	\text{(a) }wi_+sVi_+w,wi_-(sV-\ell_-)i_-w\in L^\infty,\\
	\text{(b) }[\mathscr{P}-s^2V^2,i_\pm]=\tilde{i}[\mathscr{P}-s^2V^2,i_\pm]\tilde{i}\text{ for some $\tilde{i}\in\mathcal{C}^\infty_{\mathrm{c}}(\left]-2,2\right[,\mathbb{R})$ such that $\restriction{\tilde{i}}{[-1,1]}\equiv1$}\\
	\text{(c) }\text{$(\mathscr{P}_+,k_+,w)$ and $(\mathscr{P}_-,(k_--\ell_-),w)$ fulfill \eqref{ME1} and \eqref{ME2}},\\
	\text{(d) }\mathscr{P}_\pm^{1/2}i_\pm\mathscr{P}_\pm^{-1/2}, \mathscr{P}^{1/2}i_\pm\mathscr{P}^{-1/2}\in\mathcal{B}(L^2),\\
	\text{(e) }w[(\mathscr{P}-s^2V^2),i_\pm]w\mathscr{P}_\pm^{-1/2}, w[(\mathscr{P}-s^2V^2),i_\pm]w\mathscr{P}^{-1/2}, [(\mathscr{P}-s^2V^2),i_\pm]\mathscr{P}_\pm^{-1/2},\\[-0.75mm]
	\textcolor{white}{(e) \ \ \!}[(\mathscr{P}-s^2V^2),i_\pm]\mathscr{P}^{-1/2},\mathscr{P}^{-1/2}[w^{-1},\mathscr{P}]w\text{ are bounded operators on $L^2$},\\
	\text{(e) }\text{if $\epsilon>0$ then $\|w^{-\epsilon}u\|_{_{L^2}}\lesssim\|\mathscr{P}^{1/2}u\|_{_{L^2}}$ for all $u\in\mathscr{P}^{-1/2}L^2$}
	\end{cases}.\tag{TE3}
\end{align*}

\cite[§9]{GeoGerHA17} shows that all the above hypotheses actually follow from some geometric assumptions (the assumptions (G1)-(G7) in \cite[§2.1.1]{GeoGerHA17}). We show here that the charged Klein-Gordon equation in the exterior De Sitter-Reissner-Nordström spacetime can be dealt within this geometric setting:
\begin{itemize}
	\item [(G1)] The operator $P$ in \cite{GeoGerHA17} is $-\Delta_{\mathbb{S}^{2}}$ for us, and satisfies of course $[\Delta_{\mathbb{S}^{2}},\partial_{\phi}]=0$.%
	\item [(G2)] The operator $h_{0,s}$ in \cite{GeoGerHA17} is $\mathscr{P}$ for us, that is $\alpha_1(r)=\alpha_3(r)=r^{-1}F(r)^{1/2}$, $\alpha_2(r)=rF(r)^{1/2}$ and $\alpha_4(r)=mF(r)^{1/2}$. These last coefficients are clearly smooth in $r$. Furthermore, since we can write $F(r)=g(r)w(r)^2$ with $g(r)=\frac{\Lambda}{3r^2}(r-r_n)(r-r_c)\gtrsim1$ for all $r\in\left]r_-,r_+\right[$, it comes for all $j\in\{1,2,3,4\}$ as $r\to r_\pm$
	\begin{align*}
		&\alpha_j(r)-w(r)\left(i_-(r)\,\alpha_j^-+i_+(r)\,\alpha_j^+\right)=w(r)\left(g(r)^{1/2}-\alpha_j^{\pm}\right)=\mathcal{O}_{r\to r_\pm}\big(w(r)^{2}\big),\\
		&\alpha_1^\pm=\alpha_3^\pm=\frac{1}{r_\pm^2}\sqrt{\frac{\Lambda(r_\pm-r_n)(r_\pm-r_c)}{3}},\\
		&\alpha_2^\pm=\sqrt{\frac{\Lambda(r_\pm-r_n)(r_\pm-r_c)}{3}},\\
		&\alpha_4^\pm=\frac{m}{r_\pm}\sqrt{\frac{\Lambda(r_\pm-r_n)(r_\pm-r_c)}{3}}.
	\end{align*}
	Also, we clearly have $\alpha_j(r)\gtrsim w(r)$. Direct computations show that
	\begin{align*}
		\partial_r^m\partial_{\omega}^n\left(\alpha_j-w\left(i_-\,\alpha_j^-+i_+\,\alpha_j^+\right)\right)\!(r)=\mathcal{O}_{r\to r_\pm}\big(w(r)^{2-2m}\big)
	\end{align*}
	for all $m,n\in\mathbb{N}$.
	\item [(G3)] The operator $k_s$ in \cite{GeoGerHA17} is $sV(r)$ for us, so $k_s=k_{s,v}$ and $k_{s,r}=0$. We have $V(r)-V_{\pm}=\mathcal{O}_{r\to r_\pm}(|r_\pm-r|)=\mathcal{O}_{r\to r_\pm}\big(w(r)^2\big)$ (with $V_+=0$, recall the discussion at the beginning of Subsection \ref{Notations}) and $\partial_r^m\partial_{\omega}^nV(r)$ is bounded for any $m,n\in\mathbb{N}$.
	\item [(G4)] The perturbation $k$ in \cite{GeoGerHA17} is simply $k=k_s=sV$ for us, so that this assumption is trivially verified.
	\item [(G5)] The operator $h_0$ in \cite{GeoGerHA17} is simply $h_0=h_{0,s}=\mathscr{P}$ for us, and we have
	\begin{align*}
		\mathscr{P}&=-\alpha_1(r)\partial_r w(r)^2r^2g(r)\partial_r\alpha_1(r)-\alpha_1(r)^2\Delta_{\mathbb{S}^{2}}+\alpha_1(r)^2m^2r^2\\
		&=\alpha_1(r)\left(-\partial_r w(r)^2r^2g(r)\partial_r-\Delta_{\mathbb{S}^{2}}+m^2r^2\right)\alpha_1(r)\\
		&\gtrsim\alpha_1(r)\left(-\partial_r w(r)^2\partial_r-\Delta_{\mathbb{S}^{2}}+1\right)\alpha_1(r).
	\end{align*}
	\item [(G6)]This assumption is trivial in our setting.
	\item [(G7)] We check that $(\mathscr{P}_+,k_+)$ and $(\tilde{\mathscr{P}}_-,k_--sV_{-})$ satisfy (G5). Since $\alpha_1(r),k_+(r)=\mathcal{O}_{r\to r_\pm}(|r_\pm-r|)$, we can write for $|s|<mr_-$
	\begin{align*}
		\mathscr{P}_+&=-\alpha_1(r)\partial_r w(r)^2r^2g(r)\partial_r\alpha_1(r)-\alpha_1(r)^2\Delta_{\mathbb{S}^{2}}+\alpha_1(r)^2m^2r^2-k_+(r)^2\\
		&=\alpha_1(r)\left(-\partial_r w(r)^2r^2g(r)\partial_r-\Delta_{\mathbb{S}^{2}}+m^2r^2-\frac{k_+(r)^2}{\alpha_1(r)^2}\right)\alpha_1(r)\\
		&\gtrsim\alpha_1(r)\left(-\partial_r w(r)^2\partial_r-\Delta_{\mathbb{S}^{2}}+1\right)\alpha_1(r).
	\end{align*}
	As $k_-(r)-sV_{-}=\mathcal{O}_{r\to r_\pm}(|r_\pm-r|)$ too, we get the same conclusion with $\tilde{\mathscr{P}}_-$.
\end{itemize}

To end this Subsection, we recall from \cite[§9]{GeoGerHA17} that
\begin{align*}
	\text{(G3)}\implies\text{(A1)-(A3)},\qquad\text{(G3)}\implies\text{(ME1)},\qquad\text{(G3)-(G5)}\implies\text{(TE1)-(TE3)}
\end{align*}
and (ME2) is satisfied by assumptions (G1), (G2) and (G7) on the form of the operator $\mathscr{P}$ using Mazzeo-Melrose standard result (see \cite[§9.2.2]{GeoGerHA17} and also \cite{MaMe87} for the original work of Mazzeo-Melrose).
\subsection{Study of the asymptotic Hamiltonians}
\label{Study of the asymptotic Hamiltonians}
The aim of this paragraph is to show the existence of a meromorphic continuation of the weighted resolvent $w^{\delta}(\hat{H}_{\pm}(s)-z)^{-1}w^{\delta}$ from $\mathbb{C}^+$ into a strip in $\mathbb{C}^-$ which is analytic in $z$ in a tight box near 0. We start with the meromorphic extension.
\begin{lemma}
	\label{Meromorphic extension of asymptotic Hamiltonians}
	For all $\delta>\delta'>0$ and all $s\in\mathbb{R}$, $w^{\delta}(\hat{H}_{\pm}(s)-z)^{-1}w^{\delta}$ has a meromorphic extension from $\mathbb{C}^+$ to $\{\omega\in\mathbb{C}\mid\Im\omega>-\delta'\}$ with values in compact operators acting on $\dot{\mathscr{E}}_\pm$.
\end{lemma}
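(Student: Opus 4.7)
The plan is to reduce the lemma to the meromorphic extension of a weighted quadratic pencil and then obtain that extension by perturbing the standard Mazzeo-Melrose result. Conjugating $\hat{H}_+(s)$ by the isomorphism $\Phi(k_+)$ of $\dot{\mathscr{E}}_+$ yields the Klein-Gordon-form identity $\hat{H}_+(s)=\Phi(k_+)\left(\begin{smallmatrix}k_+&\mathrm{Id}\\\mathscr{P}&k_+\end{smallmatrix}\right)\Phi(k_+)^{-1}$, so that the resolvent admits an explicit matrix formula analogous to \eqref{Resolvent of H}, with entries built from multiplication by $z-k_+$ and the quadratic pencil inverse $p_+(z)^{-1}:=(\mathscr{P}-(z-k_+)^{2})^{-1}$. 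For $\hat{H}_-(s)$, I would first conjugate by $\Phi(sV_-)$---already built into its domain---to reduce to the analogous situation with $\tilde{\mathscr{P}}_-$ and $K_-:=k_--sV_-$ in place of $\mathscr{P}_+$ and $k_+$. Since $w^\delta$ depends only on $r$, it commutes with every entry of these matrices and with the conjugating factors $\Phi(\cdot)$, so the question reduces to the meromorphic extension of $w^\delta p_\pm(z)^{-1}w^\delta$ with effective potentials $K_+:=k_+$ and $K_-$ that both tend to $0$ at the horizons as $\mathcal{O}(w^{2})$---a direct consequence of the vanishing $V(r)-V_\pm=\mathcal{O}(|r_\pm-r|)$ recorded in the verification of (G3).

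Setting $A(z):=w^\delta(\mathscr{P}_\pm-z^{2})^{-1}w^\delta$, the Mazzeo-Melrose machinery (applicable through \cite[Prop.~5.3]{GeoGerHA17}, whose hypotheses (A1)-(A3), (ME1)-(ME2) and (TE1)-(TE3) were verified in Subsection~\ref{Verification of the abstract assumptions (G)}) produces a meromorphic extension of $A(z)$ from $\mathbb{C}^+$ into a strip $\{\Im z>-\delta_\delta\}$ with values in compact operators on $\mathcal{H}$, the depth $\delta_\delta$ being determined by $\delta$ through the exponential decay of $w$ at the horizons in the Regge-Wheeler coordinate (and the surface gravities $\kappa_\pm$); taking $\delta$ large enough relative to $\delta'$ then ensures $\delta_\delta\geq\delta'$. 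To pass from $(\mathscr{P}_\pm-z^{2})^{-1}$ to $p_\pm(z)^{-1}=(\mathscr{P}_\pm-z^{2}+2zK_\pm)^{-1}$, I would apply the second resolvent identity in the form
\begin{align*}
(I+A(z)B(z))\,[w^\delta p_\pm(z)^{-1}w^\delta]&=A(z),\qquad B(z):=w^{-\delta}(2zK_\pm)w^{-\delta}.
\end{align*}
Since $K_\pm=\mathcal{O}(w^{2})$, the family $B(z)$ is bounded and holomorphic in $z$ (for $\delta\leq 1$; larger $\delta$ is reached afterwards by multiplying the obtained extension by bounded powers of $w$ on each side). With $A(z)$ compact-valued and $I+A(z)B(z)$ invertible at one point of the strip (by Neumann series for $\Im z\gg0$), analytic Fredholm theory yields the meromorphic extension of $w^\delta p_\pm(z)^{-1}w^\delta$ into the same strip with compact values. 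Reassembling the matrix resolvent formula then produces the extension on $\dot{\mathscr{E}}_\pm$, compactness on the energy space following from the Hardy estimate (ME1)(d) and the compactness of $w^{-1}\langle\mathscr{P}\rangle^{-1}$ from (ME1)(e).

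The main obstacle I anticipate is the Fredholm bookkeeping: beyond verifying that $I+A(z)B(z)$ is compact-operator-valued and meromorphic on the target strip, one has to rule out pathologies of the analytic Fredholm theorem (which requires invertibility at at least one point; this is ensured by the $\Im z\gg 0$ Neumann series) and then reconcile the natural weight range in which $B(z)$ is directly bounded ($\delta\leq 1$) with arbitrary $\delta>\delta'$ by iterating the construction. A secondary technicality is the transfer of compactness from $\mathcal{H}$-valued operators to operators on $\dot{\mathscr{E}}_\pm$, whose norm involves $\mathscr{P}_\pm^{1/2}$ and thus the potentials $K_\pm$; since the weighted matrix entries are either compositions of compact with bounded operators or conjugations of such by the bounded isomorphism $\Phi$, this should go through without additional input.
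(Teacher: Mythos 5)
Your proposal follows the same strategy as the paper -- reduce to the quadratic pencil, invoke Mazzeo--Melrose for $w^\delta(\mathscr{P}_\pm-z^2)^{-1}w^\delta$, then perturb via a second resolvent identity and analytic Fredholm theory -- but the paper delegates the last two steps entirely to \cite[Lem.~4.3 \& Prop.~4.4]{GeoGerHA17} and devotes essentially all of its proof to the step you gloss over: verifying that the Mazzeo--Melrose/Guillarmou extension region actually contains a horizontal strip of depth $\delta$. Concretely, \cite[Thm.~1.1]{Gu04} gives the extension of $w^\delta(\mathscr{P}_\pm-z^2)^{-1}w^\delta$ into the region $\mathcal{O}_\delta=\{z\in\mathbb{C} : z^2=\lambda(3-\lambda),\ \Re\lambda>\tfrac{3}{2}-\delta\}$, which is not a priori a rectangular strip, and (ME2) only asserts that \emph{some} $\delta_\epsilon$ works without quantifying it. The lemma requires depth $\delta'$ for every $\delta'<\delta$, i.e.\ one must show that $\mathcal{O}_\delta$ contains $\{0\geq\Im z>-\delta\}$; this is precisely what the paper's computation with $\alpha,\beta,a,b$ solving \eqref{Equation polynomial} accomplishes. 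Your remark that the depth is ``determined by $\delta$ through the exponential decay of $w$'' names the right heuristic but does not establish the bound.

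The second gap is the $\delta>1$ fallback. You correctly observe that $B(z)=w^{-\delta}(2zK_\pm)w^{-\delta}$ is bounded only for $\delta\leq 1$ (since $K_\pm=\mathcal{O}(w^2)$), but factoring $w^\delta p_\pm(z)^{-1}w^\delta=w^{\delta-1}\big(w\,p_\pm(z)^{-1}w\big)w^{\delta-1}$ merely inherits the strip depth obtained at $\delta=1$ -- composing a meromorphic extension with bounded multiplication operators cannot widen its domain -- whereas the lemma asserts depth $\delta'$ for arbitrary $\delta'<\delta$. This is not an idle worry: the paper's own explicit perturbation argument (Proposition~\ref{Extension of asymptotic Hamiltonians}) restricts to $0<\delta<\kappa$ for exactly this reason, and the unrestricted range in Lemma~\ref{Meromorphic extension of asymptotic Hamiltonians} is reached by citing \cite[Lem.~4.3 \& Prop.~4.4]{GeoGerHA17}, not by the naive resolvent-identity perturbation you employ. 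The purely algebraic pieces of your argument -- the $\Phi(k_\pm)$ conjugation identity, the relation $(I+A(z)B(z))\,w^\delta p_\pm(z)^{-1}w^\delta=A(z)$, and the transfer of compactness from $\mathcal{H}$ to $\dot{\mathscr{E}}_\pm$ via the bounded isomorphisms -- are all sound.
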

\begin{proof}
	Since hypotheses (G) are satisfied, we can apply \cite[Lem. 9.3]{GeoGerHA17} which shows that we can apply Mazzeo-Melrose result: the meromorphic extension of $w^{\delta}(\mathscr{P}_{\pm}-z^2)^{-1}w^{\delta}$ exists from $\mathbb{C}^+$ to a strip $\mathcal{O}_\delta$. This strip is explicitly given in the work of Guillarmou (\textit{cf.} \cite[Thm. 1.1]{Gu04}):
	\begin{align*}
		\mathcal{O}_\delta&=\left\{z\in\mathbb{C}\ \bigg\vert\ z^2=\lambda(3-\lambda),\ \Re\lambda>\frac{3}{2}-\delta\right\}.
	\end{align*}
	The absence of essential singularity is due to the fact that the metric $g$ is even (\textit{cf.} \cite[Def. 1.2 \& Thm. 1.4]{Gu04}). We have to check that the set $\mathcal{O}_\delta$ contains a strip in $\overline{\mathbb{C}^-}$. To see this, write $\lambda=\alpha+\mathrm{i}\beta$ and $z=a+\mathrm{i}b$ with $\alpha,\beta,a,b\in\mathbb{R}$, $b\leq0$ and $z^2=\lambda(3-\lambda)$. Solving for
	\begin{align}
	\label{Equation polynomial}
		\begin{cases}
			a^2-b^2~=~\alpha(3-\alpha)+\beta^2\\
			2ab~=~(3-2\alpha)\beta
		\end{cases}
	\end{align}
	we find
	\begin{align*}
		\begin{cases}
			\beta~=~\pm\sqrt{\frac{1}{2}(a^2-b^2-9/4)+\frac{1}{2}\sqrt{(a^2-b^2-9/4)^2+4a^2b^2}}\\
			\alpha~=~\frac{3}{2}-\frac{ab}{\beta}
		\end{cases}
	\end{align*}
and these expressions make sense since $\beta=0$ can happen only if $ab=0$, and
\begin{align*}
	\beta&=\pm\frac{|a||b|}{\sqrt{b^2+9/4}}+\mathcal{O}_{a\to0}(a),&\beta&=\mathcal{O}_{b\to0}(b).
\end{align*}
If $b=0$ then $\alpha=3/2$ and $\beta$ solves $a^2=9/4+\beta^2$, and conversely $\alpha=3/2$ implies $b=0$. Hence $\alpha=3/2$ allows all $z\in\mathbb{R}$. We may now assume $b<0$ (hence $\alpha\neq0$). The condition $\Re\lambda=\alpha>3/2-\delta$ reads $\frac{ab}{\beta}<\delta$, and this condition is trivially satisfied if $\alpha\geq3/2$ since \eqref{Equation polynomial} implies that $\frac{ab}{\beta}\leq0<\delta$. Otherwise, if $\alpha<3/2$ then \eqref{Equation polynomial} implies that $\frac{ab}{\beta}>0$ and
\begin{align*}
	b>-\left|\frac{\beta}{a}\right|\delta.
\end{align*}
We compute
\begin{align*}
	\left(\frac{\beta}{a}\right)'&=\frac{a\beta'-\beta}{a^2}
\end{align*}
where $'$ denotes here the derivative with respect to $a$, and
\begin{align*}
	\beta'&=\frac{a}{2\beta}\left(1+\frac{(a^2-b^2-9/4)+2b^2}{\sqrt{(a^2-b^2-9/4)^2+4a^2b^2}}\right)
\end{align*}
so that
\begin{align*}
	a\beta'-\beta=0&\iff a^2\left(1+\frac{(a^2-b^2-9/4)+2b^2}{\sqrt{(a^2-b^2-9/4)^2+4a^2b^2}}\right)=2\beta^2\\
	&\iff a^2(a^2-b^2-9/4)+2a^2b^2\\
	&\qquad\quad\!\!=-(b^2+9/4)\sqrt{(a^2-b^2-9/4)^2+4a^2b^2}+(a^2-b^2-9/4)^2+4a^2b^2\\
	&\iff(b^2+9/4)^2((a^2-b^2-9/4)^2+4a^2b^2)=(b^4+81/16+a^2b^2-9a^2/4+9b^2/2)^2.
\end{align*}
After some tedious simplifications, we obtain the very simple condition
\begin{align*}
	a\beta'-\beta=0&\iff 9a^4b^2=0.
\end{align*}
Thus $a=0$ is the only possible extremum of $\beta$ when $b<0$. One can check that $\beta\to1$ as $a\to\pm\infty$, whence
\begin{align*}
	\big\{z\in\mathbb{C}\ \big\vert\ 0\geq\Im z>-\delta\big\}&\subset\mathcal{O}_\delta.
\end{align*}

From there, we deduce the existence of the meromorphic extension of $w^{\delta}(\hat{H}_{\pm}(s)-z)^{-1}w^{\delta}$ for $z\in\{\omega\in\mathbb{C}\mid\Im\omega>-\delta'\}$ thanks to \cite[Lem. 4.3 \& Prop. 4.4]{GeoGerHA17} (the parameters $\epsilon$ and $\delta_\epsilon$ therein are identical in our situation, and $\delta_{\epsilon/2}$ can be replaced by any $\delta'<\delta_\epsilon$).
	%
	%
	%
	%
	%
	%
	%
	%
	%
	%
\end{proof}
%
%
%
Before proving the analyticity near 0 of the weighted resolvent, we need to prove the following result:
\begin{lemma}
\label{Resonance 0 for P-z^2}
	 For all $\delta>0$, $w^\delta(\mathscr{P}-z^2)^{-1}w^\delta$ has no pole in $\mathbb{R}$.
\end{lemma}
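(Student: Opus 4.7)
The strategy is to argue by contradiction: suppose $z_0\in\mathbb{R}$ is a pole of the meromorphic extension of $w^\delta(\mathscr{P}-z^2)^{-1}w^\delta$ and derive a contradiction by combining the self-adjointness of $\mathscr{P}$, its positivity $\mathscr{P}>0$ provided by \eqref{A1}, and, decisively, the strict positivity of the mass $m$. Extracting the leading coefficient of the Laurent expansion of the weighted resolvent at $z_0$ produces a non-trivial resonant state $u\in w^{-\delta}\mathcal{H}$ solving $(\mathscr{P}-z_0^2)u=0$ in the distributional sense, with outgoing-type asymptotic behaviour at $r\to r_\pm$ inherited from the analytic continuation from $\mathbb{C}^+$.

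For $z_0\neq 0$ I would project $u$ onto spherical harmonics and treat each radial component $u_\ell$ as a solution of the one-dimensional Schr\"odinger equation $P_\ell u_\ell = z_0^2 u_\ell$ with exponentially decaying potential $\ell(\ell+1)W_0+W_1$. The outgoing condition forces pure plane-wave asymptotics $\sim e^{\pm\mathrm{i}z_0 x}$ at $x\to\pm\infty$, and a Green's identity on a truncated interval $[-R,R]$ combined with the self-adjointness of $P_\ell$ and $z_0\neq 0$ yields a boundary-flux identity whose limit as $R\to\infty$ forces $u_\ell\equiv 0$. This is precisely the limiting absorption principle at a point of the absolutely continuous spectrum of the positive self-adjoint operator $\mathscr{P}$.

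For $z_0=0$ the mass term enters in a decisive way. The asymptotic solution basis of $P_\ell$ at the threshold is $\{1,x\}$, and the weighted space condition selects the bounded asymptotic behaviour, so $u$ is bounded in the Regge-Wheeler variable with $\partial_x u$ decaying exponentially. I then transport back to the original coordinate by setting $v := r^{-1}F^{1/2}u$, which satisfies $\hat{P}v=0$ thanks to the conjugation relation $\mathscr{P}=rF^{-1/2}\hat{P}r^{-1}F^{1/2}$; decomposing $v$ on spherical harmonics, each radial mode $v_\ell$ obeys
\begin{align*}
-\partial_r\big(r^2 F\partial_r v_\ell\big)+\ell(\ell+1)v_\ell+m^2 r^2 v_\ell=0.
\end{align*}
Pairing with $\bar v_\ell$ and integrating on $(r_-,r_+)$, the boundary contributions $[r^2 F\partial_r v_\ell\cdot\bar v_\ell]_{r_-}^{r_+}=[r^2\partial_x v_\ell\cdot\bar v_\ell]_{\pm\infty}$ vanish because $v_\ell\sim F^{1/2}\times\text{(bounded)}$ and $\partial_x v_\ell$ decays exponentially at both horizons, leaving
\begin{align*}
\int_{r_-}^{r_+} r^2 F|\partial_r v_\ell|^2\,\mathrm{d}r+\ell(\ell+1)\int_{r_-}^{r_+}|v_\ell|^2\,\mathrm{d}r+m^2\int_{r_-}^{r_+} r^2|v_\ell|^2\,\mathrm{d}r=0.
\end{align*}
All three integrals are non-negative, and the mass term is strictly positive unless $v_\ell\equiv 0$; this forces $v_\ell=0$ for every $\ell$, hence $u=0$, contradicting the existence of a pole at $0$.

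The main obstacle lies in the rigorous extraction of $u$ from the residue at $z_0$, together with the verification that it carries exactly the asymptotic decay required to make the boundary terms vanish in the integrations by parts. This relies on the detailed properties of the meromorphic extension already used in Lemma \ref{Meromorphic extension of asymptotic Hamiltonians} and on the Hardy-type estimates in the weighted space framework of \cite{GeoGerHA17}. Once these analytic prerequisites are secured, the core of the argument at $z_0=0$ is essentially algebraic and is exactly where the mass $m>0$ makes the conclusion qualitatively different from the massless setting of \cite{BoHa08}, in which $0$ is a genuine resonance with resonant state $r$.
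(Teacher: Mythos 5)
Your proof takes essentially the same approach as the paper. Both proceed by producing a resonant state at a putative real pole, dispose of $z_0\neq 0$ by absence of positive eigenvalues plus a limiting absorption principle (the paper via Kato--Agmon--Simon and Mourre, you via a flux identity, but these are equivalent routes), and kill $z_0=0$ by a quadratic-form computation in which the strictly positive mass term forces the state to vanish. Your integration by parts at the threshold is exactly the paper's computation (itself inherited from (2.14) of \cite{BoHa08}), just carried out for $\hat{P}$ in the radial coordinate $r$ rather than for $P=r\hat{P}r^{-1}$ in the Regge--Wheeler coordinate $x$; the two are related by the unitary identification of the $L^2$ spaces and produce the same nonnegative sum.

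Two remarks on justifications. The statement that ``the weighted space condition selects the bounded asymptotic behaviour'' at the threshold is incorrect: since $w(x)$ decays like $e^{\kappa_\pm x}$ as $x\to\pm\infty$ by \eqref{Exp convergence of r}, the space $w^{-\delta}\mathcal{H}$ contains functions growing linearly (indeed exponentially at any rate below $\delta\kappa$) in $x$, so it cannot distinguish the constant branch from the $x$-branch. What selects the bounded branch is the outgoing structure of the Green's kernel, which the paper makes explicit through the Jost solutions $e_\pm(\cdot,z,\ell)$ of \cite{Ba}: a pole of the extension is a zero of the Wronskian, at which $e_+$ and $e_-$ become collinear, so the resonant state is a Jost solution and is bounded at both ends by \eqref{Jost solutions remainders}. (Your particular form of the boundary term, $r^2\partial_x v_\ell\cdot\overline{v_\ell}$, happens to vanish even for the $x$-growing branch because of the superexponential decay of $F$, so the error is not fatal to your computation, but the stated justification is wrong.) Second, the ``main obstacle'' you identify --- rigorously extracting the resonant state together with its precise asymptotics --- is precisely what the Jost-function representation of the resolvent kernel supplies in one step; it both produces the state as $e_+(\cdot,0,\ell)$ and certifies, via \eqref{Jost solutions remainders}, the decay of its derivative needed to make the boundary terms vanish, avoiding the abstract Laurent-residue argument you gesture at.
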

\begin{proof}
	We can work with the operator $P$ expressed in the Regge-Wheeler coordinate since $P\mapsto\mathscr{P}$ is an unitary transform (as explained at the beginning of Subsection \ref{Notations}).
	
	For all $\ell\in\mathbb{N}$, $P_{\ell}$ is selfadjoint and the potential $W_0+\ell(\ell+1)W_1$ (with $W_0$ and $W_1$ as in \eqref{Potentiels W0 et W1}) is bounded on $\mathscr{D}(P_{\ell})$ and tends to 0 at infinity exponentially fast; as a result, the Kato-Agmon-Simon theorem (\textit{cf.} \cite[Thm. XIII.57]{ReSi4}) implies that $P_{\ell}$ has no positive eigenvalue. As $P_{\ell}\geq 0$, we deduce that there is no eigenvalue on $\mathbb{R}\setminus\{0\}$. Furthermore, \cite[Prop. II.1]{BaMo93} shows that $0$ is not an eigenvalue for $P_{\ell}$ thanks to the exponential decay of $W_0+\ell(\ell+1)W_1$. Finally, $P_{\ell}$ verifies the limiting absorption principle
	\begin{align*}
		\sup_{\mu>0}\big\|\langle x\rangle^{-\alpha}(P_{\ell}-(\lambda+\mathrm{i}\mu))^{-1}\langle x\rangle^{-\alpha}\big\|_{_{L^{2}}}&<+\infty\qquad\qquad\forall\lambda\in\mathbb{R}\setminus\{0\},\,\forall\alpha>1,
	\end{align*}
	see Mourre \cite{Mou80}. The only issue then is $z=0$ which could be a pole.
	
	We introduce then the Jost solutions following \cite{Ba} (as $s=0$ in this Lemma, the potential $sV$ vanishes). Fix $\ell\in\mathbb{N}$ and set $\tilde{W}_{\ell}:=\ell(\ell+1)W_0+W_{1}$. Set
	\begin{align}
	\label{kappa}
		\kappa&:=\min\{\kappa_{-},|\kappa_{+}|\}
	\end{align}
	where $\kappa_{\pm}$ are the surface gravity at the event and cosmological horizons (cf. \eqref{Surface gravities}). For any $\alpha\in\left]0,2\kappa\right[$,
	\begin{align*}
		\int_{-\infty}^{+\infty}\big|\tilde{W}_{\ell}(x)\big|\mathrm{e}^{\alpha|x|}\mathrm{d}x<+\infty.
	\end{align*}
	The convergence of the above integral comes from the exponential decay of $\tilde{W}_{\ell}$ at infinity. For all $z\in\mathbb{C}$ such that $\Im z>-\kappa$, \cite[Prop. 2.1]{Ba} shows that there exist two unique $\mathcal{C}^{2}$ functions $x\mapsto e^{\pm}\left(x,z,\ell\right)$, that we will simply write $e_\pm(x)$ or $e_\pm(x,z)$, satisfying the Schrödinger equation
	\begin{align*}
		(\partial_{x}^{2}+z^{2}-\tilde{W}_\ell(x))e_{\pm}(x)&=0&\forall x\in\mathbb{R}
	\end{align*}
	with $\partial_{x}e_{\pm}\in L^{\infty}_{\mathrm{\ell\text{oc}}}(\mathbb{R}_x,\mathbb{C})$, and such that if $\Im z>-\kappa$, then
	$\partial^{j}_{x}e_{\pm}$ is analytic in $z$ for all $0\leq j\leq1$. Moreover, they satisfy
	\begin{align}
	\label{Jost solutions remainders}
		\lim_{x\to\pm\infty}\Big(\big|e_{\pm}(x)-\mathrm{e}^{\pm\mathrm{i}zx}\big|+\big|\partial_xe_{\pm}(x)\mp\mathrm{i}z\mathrm{e}^{\pm\mathrm{i}zx}\big|\Big)&=0.
	\end{align}
	By checking the formula on $\mathcal{C}_{\mathrm{c}}^2(\mathbb{R},\mathbb{C})$ first and then extending it on $H^2(\mathbb{R},\mathrm{d}x)$ by density, one easily shows that the kernel $K$ of $(P_\ell-(z-sV)^{2})^{-1}:H^{2}_{\mathrm{c}}(\mathbb{R},\mathrm{d}x)\to L^{2}_{\ell\mathrm{oc}}(\mathbb{R},\mathrm{d}x)$ for $\Im z>-\kappa$ is given by\footnote{Observe that in the example at the beginning of Section \ref{Meromorphic extension and resonances}, the kernel $R(z;x,y)$ is also given in terms of the Jost functions $e_{\pm}(x,z)=\mathrm{e}^{\pm\mathrm{i}z x}$.}
	\begin{align*}
		K(z;x,y)&=\frac{1}{\mathscr{W}(z)}\big(e_+(x,z)e_-(y,z)\mathds{1}_{x\geq y}(x,y)+e_+(y,z)e_-(x,z)\mathds{1}_{y\geq x}(x,y)\big)
	\end{align*}
	where $\mathscr{W}(z)=e_+(x)(e_-)'(x)-(e_+)'(x)e_-(x)$ is the Wronskian between $e_+$ and $e_-$. Since $\mathscr{W}$ is independent of $x\in\mathbb{R}$ (as shown at the very beginning of the proof of \cite[Prop. 2.1]{Ba}), we see using the non trivial limits for $e_{\pm}$ in \eqref{Jost solutions remainders} that a pole $z$ of order $n>0$ for $w^{\delta}(P_{\ell}-(z-sV)^2)^{-1}w^{\delta}$ with $\Im z>-\kappa$ is a zero of order $n$ of the Wronskian $\mathscr{W}$, and $e_+(\cdot,z)$ and $e_-(\cdot,z)$ are then collinear.
	
	We now reproduce the computation (2.14) in \cite{BoHa08}. Assume that $z=0$ is a pole; for all $\ell\in\mathbb{N}$, $e_+(\cdot,0,\ell)\in L^2_{\ell\mathrm{oc}}(\mathbb{R},\mathrm{d}x)$ and satisfies $P_{\ell}e_{+}=0$, so that
	\begin{align*}
		0&=\int_{-R_0}^{R_0}\left(P_{\ell}e_{+}\right)\overline{e_{+}}\mathrm{d}x\\
		&=\left[r\overline{e_{+}}\partial_{x}\left(r^{-1}e_{+}\right)\right]_{-R_0}^{R_0}+\int_{-R_0}^{R_0}\left|r\partial_{x}\left(r^{-1}e_{+}\right)\right|^{2}\mathrm{d}x+\ell\left(\ell+1\right)\int_{-R_0}^{R_0}F\left(r\right)\left|r^{-1}e_{+}\right|^{2}\mathrm{d}x\\
		&+m^{2}\int_{-R_0}^{R_0}F\left(r\right)\left|e_{+}\right|^{2}\mathrm{d}x.
	\end{align*}
	Letting $R_0\to+\infty$ and using the decay of the derivative of $e_+$ in \eqref{Jost solutions remainders} for $z=0$ show that $e_{+}=0$, a contradiction\footnote{We may notice here that the positive mass term $m^{2}$ allowed us to conclude that $z=0$ is not a pole. For the wave equation as in \cite{BoHa08}, we do not have any positivity for $\ell=0$ and $z=0$ is shown to be a pole.}.
\end{proof}
We are now ready to prove the analyticity. In the next result, we will consider $s$ as a complex number.
\begin{proposition}
	\label{Extension of asymptotic Hamiltonians}
	Let $0<\delta<\kappa$ and $R>0$. There exist $\varepsilon_0\equiv\varepsilon_0(\delta)>0$ and $\sigma\equiv\sigma(\mathscr{P},\tilde{k}_\pm)$ such that the extension of $w^{\delta}(\hat{H}_{\pm}(s)-z)^{-1}w^{\delta}$ is holomorphic in $(s,z)$ for $s\in D(0,\sigma)$ and $z\in\left]-R,R\right[+\mathrm{i}\left]-\varepsilon_0,\varepsilon_0\right[$.
\end{proposition}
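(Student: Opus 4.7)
The plan is to combine the meromorphic extension of Lemma~\ref{Meromorphic extension of asymptotic Hamiltonians} with the absence of real poles at $s=0$ supplied by Lemma~\ref{Resonance 0 for P-z^2}, and to propagate holomorphy to small $s$ via a Neumann inversion exploiting the polynomial dependence of $\hat H_\pm(s)$ on $s$.

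\textbf{Step 1: analyticity at $s=0$.} Fix $\delta' \in (0,\delta)$ so that Lemma~\ref{Meromorphic extension of asymptotic Hamiltonians} provides a meromorphic extension of $w^\delta(\hat H_\pm(s)-z)^{-1}w^\delta$ to the strip $\{\Im z>-\delta'\}$. At $s=0$, $k_\pm = 0$ and $\hat H_\pm(0) = \begin{pmatrix}0 & \mathrm{Id}\\\mathscr{P} & 0\end{pmatrix}$, so a direct block inversion gives
\begin{align*}
(\hat H_\pm(0)-z)^{-1} &= \begin{pmatrix} z(\mathscr{P}-z^2)^{-1} & (\mathscr{P}-z^2)^{-1}\\ \mathrm{Id} + z^2(\mathscr{P}-z^2)^{-1} & z(\mathscr{P}-z^2)^{-1}\end{pmatrix}.
\end{align*}
Hence every entry of $w^\delta(\hat H_\pm(0)-z)^{-1}w^\delta$ is a polynomial in $z$ whose coefficients involve either the bounded multiplication operator $w^{2\delta}$ or $w^\delta(\mathscr{P}-z^2)^{-1}w^\delta$. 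By Lemma~\ref{Resonance 0 for P-z^2}, the latter has no real poles, so compactness of $[-R,R]$ yields $\varepsilon_1 \in (0,\delta')$ such that $G_0(z) := w^\delta(\hat H_\pm(0)-z)^{-1}w^\delta$ is holomorphic on the closed box $\overline{\Omega} := [-R,R]+\mathrm{i}[-\varepsilon_1,\varepsilon_1]$, with $M := \sup_{z\in\overline{\Omega}}\|G_0(z)\|_{\mathcal{B}(\dot{\mathscr{E}}_\pm)} < +\infty$.

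\textbf{Step 2: perturbative decomposition.} Writing $k_\pm = s\tilde k_\pm$ with $\tilde k_\pm$ independent of $s$ yields the polynomial decomposition
\begin{align*}
\hat H_\pm(s) - z &= (\hat H_\pm(0) - z) + Q(s),\\
Q(s) &:= s\begin{pmatrix}0 & 0\\ 0 & 2\tilde k_\pm\end{pmatrix} + s^2\begin{pmatrix}0 & 0\\ -\tilde k_\pm^2 & 0\end{pmatrix}.
\end{align*}
A direct computation using the definitions of $k_\pm$ and the gauge convention $V_+=0$ from Subsection~\ref{Notations} shows that $\tilde k_\pm = \mathcal{O}(w^2)$ at both horizons. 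Hence $w^{-\delta}Q(s)w^{-\delta}$ is a bounded multiplication operator with norm $\mathcal{O}(|s|)$ as $s\to 0$. (If $\delta$ is too large for $w^{2-2\delta}$ to be bounded, one first runs the whole argument with an intermediate weight $\beta\in(0,1]$ in place of $\delta$, and then recovers the $\delta$-weighted statement by multiplication with the bounded factor $w^{\delta-\beta}$ on each side.)

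\textbf{Step 3: Neumann inversion.} The resolvent identity $(\hat H_\pm(s)-z)^{-1} = [\mathrm{Id}+(\hat H_\pm(0)-z)^{-1}Q(s)]^{-1}(\hat H_\pm(0)-z)^{-1}$, sandwiched with $w^\delta$ and with an inserted $w^\delta w^{-\delta}$, gives
\begin{align*}
w^\delta(\hat H_\pm(s)-z)^{-1}w^\delta &= \bigl[\mathrm{Id} + G_0(z)\, w^{-\delta}Q(s)w^{-\delta}\bigr]^{-1} G_0(z).
\end{align*}
Choose $\sigma>0$ so small that $M\cdot\|w^{-\delta}Q(s)w^{-\delta}\| < 1$ for all $s \in D(0,\sigma)$. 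The Neumann series for the bracketed inverse then converges uniformly on $D(0,\sigma)\times\overline{\Omega}$, each term being polynomial in $s$ (through $Q$) and holomorphic in $z$ (through $G_0$). The product is therefore a holomorphic function of $(s,z)$, and taking $\varepsilon_0 := \varepsilon_1$ concludes the proof, with $\sigma$ determined by $M$ and the norm of $w^{-\delta}\tilde k_\pm w^{-\delta}$, hence by the data $(\mathscr{P},\tilde k_\pm)$.

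\textbf{Main obstacle.} The delicate point is checking that $w^{-\delta}Q(s)w^{-\delta}$ acts as a bounded operator on the correct function space with the claimed $\mathcal{O}(|s|)$ dependence: this amounts to matching the exponential horizon vanishing of $\tilde k_\pm$ (controlled by the surface gravities $\kappa_\pm$) with the polynomial weight $w^\delta$, and handling the case $\delta>1$ via the intermediate-weight trick above. A secondary subtlety is the $s$-dependence of the energy space $\dot{\mathscr{E}}_\pm$, but since we access the operator exclusively through the fixed weight $w^\delta$ and work throughout with compact operator-valued functions, the relevant topology is effectively $s$-independent for the Neumann argument to go through.
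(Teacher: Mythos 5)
There is a genuine gap in the $-$ case. You define $\tilde k_\pm := k_\pm/s$ and claim $\tilde k_\pm = \mathcal{O}(w^2)$ at both horizons, but this is false for the minus sign: near $r_-$ one has $j_+ \equiv 0$, so $k_-(r) = s\,V(r) \to sV_-\neq 0$, i.e.\ $k_-$ does \emph{not} vanish at the event horizon (the gauge convention $V_+=0$ only kills $V$ at $r_+$). Consequently $w^{-\delta}Q(s)w^{-\delta}$ is unbounded for the $-$ block, and your Neumann series never gets started. The paper deals with this precisely by first conjugating $\hat H_-(s)$ by $\Phi(sV_-)$, producing $\widetilde{\hat H}_-(s)$, after which the relevant perturbing potential is $\tilde k_- := k_- - sV_-$ (not $k_-/s$); only this $\tilde k_-$ decays at both horizons, which is why the symbol $\tilde k_\pm$ in the Proposition's statement refers to the paper's choice, not yours. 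You need to insert this conjugation step (and note that $\Phi(\pm sV_-)$ commutes with $w^\delta$, is bounded, and is entire in $s$, so it does not affect holomorphy) before Step 2.

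Once that is fixed, your route is a valid and slightly more elementary alternative to the paper's. The paper reduces explicitly to the scalar pencil $\mathscr P - (z - \tilde k_\pm)^2$ on $\mathcal H$, writes a Fredholm-type identity with a compact analytic family $\mathcal K_\pm(s,z)$, and invokes two-dimensional analytic Fredholm theory together with a compactness/contradiction argument to show the exceptional subvariety misses a small polydisk. You instead perform the block inversion directly on the Hamiltonian, land on the same object $w^\delta(\mathscr P - z^2)^{-1}w^\delta$ (which is what $G_0(z)$ is made of), and replace the soft Fredholm step by a quantitative Neumann series with an explicit smallness condition $M\,\|w^{-\delta}Q(s)w^{-\delta}\| < 1$. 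This buys an explicit $\sigma$ at the price of having to verify the norm bound on the energy space, including a Hardy-type estimate for the off-diagonal block acting on $\mathscr P^{-1/2}\mathcal H$; your remark on the $s$-dependence of $\dot{\mathscr E}_\pm$ is indeed the place where the paper's scalar reduction is cleaner, since it works throughout on the fixed space $\mathcal H$. Both reach the same conclusion, and your intermediate-weight trick for $\delta > 1$ is a legitimate workaround, though in the paper's notation $\tilde k_\pm = \mathcal O(w^{2\kappa})$ the hypothesis $\delta < \kappa$ already ensures boundedness.
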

The restriction $\delta<\kappa$ comes from the fact that the extension of $(\hat{H}_{\pm}(s)-z)^{-1}$ depends itself on $\delta$ (see formula \eqref{Restriction on delta} in the proof). One can increase the exponent of the weight $w$ but the width of the strip in $\mathbb{C}^{-}$ in which the result of Proposition \ref{Extension of asymptotic Hamiltonians} holds is bounded by $\kappa$.
\begin{proof}
	Observe that $(\hat{H}_-(s)-z)^{-1}=(\Phi(-sV_{-})\hat{H}_-(s)\Phi(sV_{-})-z)^{-1}$ on $\dot{\mathscr{E}}_-$, so it is sufficient to prove the announced results for $w^\delta(\hat{H}_+(s)-z)^{-1}w^\delta$ and $w^\delta(\widetilde{\hat{H}}_-(s)-z)^{-1}w^\delta$ with
	\begin{align*}
	\widetilde{\hat{H}}_-(s)&:=\Phi(-sV_{-})\hat{H}_-(s)\Phi(sV_{-})=\begin{pmatrix}
	sV_{-}&\mathrm{Id}\\
	\tilde{\mathscr{P}}_-&2k_--sV_{-}
	\end{pmatrix}.
	\end{align*}
	Proceeding as in the proof of \cite[Prop. 4.4]{GeoGerHA17}, we can work on the operators $\mathscr{P}-(z-\tilde{k}_\pm)^2$ with
	\begin{align*}
		\tilde{k}_+&:=k_+,&\tilde{k}_-&:=k_--sV_{-}
	\end{align*}
	so that $\tilde{k}_\pm$ are now exponentially decaying potentials at infinity (and are polynomial in $s$).
	
	We reproduce the perturbation argument of \cite[Lem. 4.3]{GeoGerHA17}. Choose $\varepsilon_0\in\left]0,\varepsilon_\delta\right[$, $\varepsilon_\delta$ as in Lemma \ref{Meromorphic extension of asymptotic Hamiltonians}, and pick $z\in\left]-R,R\right[+\mathrm{i}\left]-\varepsilon_0,\varepsilon_0\right[$ so that $w^{\delta}(\mathscr{P}-z^2)^{-1}w^{\delta}$ is holomorphic (it is possible since there is no pole in $\mathbb{R}$ by Lemma \ref{Resonance 0 for P-z^2}) and $w^{\delta}(\mathscr{P}-(z-\tilde{k}_\pm)^2)^{-1}w^{\delta}$ is meromorphic\footnote{Lemma \ref{Meromorphic extension of asymptotic Hamiltonians} of course applies if we replace $k_-$ by $\tilde{k}_-$.} in $z$. Write then in $\mathcal{H}$
	\begin{align}
	\label{Restriction on delta}
		w^{\delta}(\mathscr{P}-z^2)^{-1}w^{\delta}&=w^{\delta}(\mathscr{P}-(z-\tilde{k}_\pm)^2)^{-1}w^{\delta}\big(\mathrm{Id}+\mathcal{K}_\pm(s,z)\big)
	\end{align}
	with
	\begin{align*}
		\mathcal{K}_\pm(s,z)&:=w^{-\delta}\tilde{k}_\pm(2z-\tilde{k}_\pm)w^{-\delta}w^{\delta}(\mathscr{P}-z^2)^{-1}w^{\delta}.
	\end{align*}
	$\mathcal{K}_\pm(s,z)$ is clearly analytic in $s\in D(0,1)$ and in $z\in\left]-R,R\right[+\mathrm{i}\left]-\varepsilon_0,\varepsilon_0\right[$. %
	Since $\tilde{k}_\pm=\mathcal{O}_{r\to r_\pm}\big(w^{2\kappa}\big)$ by \eqref{Exp convergence of r} and \eqref{kappa}, $\delta<\kappa$ and $w^{\delta}(\mathscr{P}-z^2)^{-1}w^{\delta}$ is compact by Lemma \ref{Meromorphic extension of asymptotic Hamiltonians}, we see that $\mathcal{K}_\pm(s,z)$ is compact. %
	By two-dimensional analytic Fredholm theory, there exists a subvariety $S\subset D(0,1)\times\big(\left]-R,R\right[+\mathrm{i}\left]-\varepsilon_0,\varepsilon_0\right[\big)$ such that $(\mathrm{Id}+\mathcal{K}_\pm(s,z))^{-1}$ exists and is holomorphic in $(s,z)\in\Big(D(0,1)\times\big(\left]-R,R\right[+\mathrm{i}\left]-\varepsilon_0,\varepsilon_0\right[\big)\Big)\setminus S$. %
	We then get the representation formula for the extension:
	\begin{align}
	\label{Fredholm 1}
		w^{\delta}(\mathscr{P}-(z-\tilde{k}_\pm)^2)^{-1}w^{\delta}&=w^{\delta}(\mathscr{P}-z^2)^{-1}w^{\delta}\big(\mathrm{Id}+\mathcal{K}_\pm(s,z)\big)^{-1}.
	\end{align}
	We claim that for $\sigma>0$ sufficiently small, we have
	\begin{align}
	\label{Subvaritey 1}
	  \Big(D(0,\sigma)\times\big(\left]-R,R\right[+\mathrm{i}\left]-\varepsilon_0,\varepsilon_0\right[\big)\Big)\cap S&=\emptyset.
	\end{align}
	Otherwise, for every $n\in\mathbb{N}\setminus\{0\}$, there is a couple $(s_n,z_n)\in D(0,1/n)\times\big(\left]-R,R\right[+\mathrm{i}\left]-\varepsilon_0,\varepsilon_0\right[\big)$ such that $\mathrm{Id}+\mathcal{K}(s_n,z_n)$ is not invertible. By compactness, we can assume that $(s_n,z_n)\to(0,z_0)$ as $n\to+\infty$ for some $z_0\in[-R,R]+\mathrm{i}[-\varepsilon_0,\varepsilon_0]$. But $\mathrm{Id}+\mathcal{K}(0,z_0)=\mathrm{Id}$ is invertible for all $z\in\mathbb{C}$, so $\mathrm{Id}+\mathcal{K}_\pm(s,z)$ must be invertible too for all $(s,z)$ in a small neighbourhood of $(0,z_0)$, a contradiction.
	
	Assuming now that $s$ is sufficiently small so that \eqref{Subvaritey 1} is true, we deduce by \eqref{Fredholm 1} that the poles of $w^{\delta}(\mathscr{P}-(z-\tilde{k}_\pm)^2)^{-1}w^{\delta}$ are exactly the poles of $w^{\delta}(\mathscr{P}-z^2)^{-1}w^{\delta}$. Since for $z\in\left]-R,R\right[+\mathrm{i}\left]-\varepsilon_0,\varepsilon_0\right[$, $w^{\delta}(\mathscr{P}-z^2)^{-1}w^{\delta}$ has no pole, the same conclusion applies for $w^{\delta}(\mathscr{P}-(z-\tilde{k}_\pm)^2)^{-1}w^{\delta}$. 
\end{proof}
\subsection{Construction of the meromorphic extension of the weighted resolvent}
\label{Meromorphic extension}
The aim of this paragraph is to show the existence of a meromorphic extension for $w^\delta(\hat{K}(s)-z)^{-1}w^\delta$ in a strip near $0$ of width uniform in $s$. Since the operators $\hat{K}(s)$ and $\hat{H}(s)$ are equivalent modulo the isomorphism $\Phi(sV)$ (by (3.19) in \cite{GeoGerHA17}), we will work with the latter one. 

We first need some preliminary results. The starting point is the following result:
\begin{proposition}[Prop. 5.5 in \cite{GeoGerHA17}]
	\label{Spectrum of the Hamiltonian H}
	There is a finite set $Z\in\mathbb{C}\setminus\mathbb{R}$ with $\bar{Z}=Z$ such that the spectrum of $\hat{H}(s)$ is included in $\mathbb{R}\cup Z$ and the resolvent has a meromorphic extension to $\mathbb{C}\setminus\mathbb{R}$. Moreover, the set $Z$ consists of eigenvalues of finite multiplicity of $\hat{H}(s)$.
\end{proposition}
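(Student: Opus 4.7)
The plan is to build a Fredholm parametrix for $\hat{H}(s) - z$ on $\mathbb{C}\setminus\mathbb{R}$ by gluing the resolvents of the self-adjoint asymptotic Hamiltonians $\hat{H}_\pm(s)$ constructed in Subsection \ref{Study of the asymptotic Hamiltonians}, and then to apply analytic Fredholm theory. For $z \in \mathbb{C}\setminus\mathbb{R}$ I would set
\[
E(z) := j_-\,(\hat{H}_-(s) - z)^{-1}\,i_- + j_+\,(\hat{H}_+(s) - z)^{-1}\,i_+ + R_{\mathrm{int}}(z),
\]
with $R_{\mathrm{int}}(z)$ a resolvent of a Dirichlet realization of $\hat{H}(s)$ on a bounded interval of $r$ covering the complement of $\mathrm{Supp\,}j_-\cup\mathrm{Supp\,}j_+$. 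Using $i_\pm j_\pm = j_\pm$, $i_-j_+ = i_+j_- = 0$, and the fact that $\hat{H}(s) = \hat{H}_\pm(s)$ on $\mathrm{Supp\,}j_\pm$ (which is the very point of the definition of $k_\pm$ in Subsection \ref{Notations}), one gets
\[
(\hat{H}(s) - z)E(z) = \mathrm{Id} + K(z),
\]
where $K(z)$ is a finite sum of commutators of $\hat{H}(s)$ with the cutoffs $j_\pm$ and the interior cutoffs, composed with the respective resolvents. Since each commutator is a first-order operator supported in a bounded $r$-region, the assumption (ME1)(e) (compactness of $w^{-1}\langle\mathscr{P}\rangle^{-1}$) together with the local norm equivalence \eqref{Local equivalence of norms} ensures that $K(z)$ is compact on $\dot{\mathscr{E}}$ and holomorphic in $z\in\mathbb{C}\setminus\mathbb{R}$.

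Next I would apply analytic Fredholm theory. It is enough to exhibit one $z_0\in\mathbb{C}\setminus\mathbb{R}$ at which $\mathrm{Id}+K(z_0)$ is invertible, equivalently at which $\hat{H}(s)-z_0$ is bijective. Using the isomorphism $\Phi(sV)$ between $\hat{H}(s)$ and $\hat{K}(s)$ together with \eqref{Relation between the domains of the quadratic pencil and the corresponding Hamiltonian}, this reduces to bijectivity of the quadratic pencil $p(z_0,s)=\mathscr{P}-(z_0-sV)^2$ from $H^2$ to $L^2$, which holds for $|\Im z_0|$ large by a perturbation argument around the self-adjoint bound $\|(\mathscr{P}-z_0^2)^{-1}\|\lesssim |\Im z_0|^{-1}|z_0|^{-1}$ combined with (A2). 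Analytic Fredholm theory then gives that $(\mathrm{Id}+K(z))^{-1}$ is meromorphic on $\mathbb{C}\setminus\mathbb{R}$ with residues of finite rank, hence $(\hat{H}(s)-z)^{-1}=E(z)(\mathrm{Id}+K(z))^{-1}$ is meromorphic on $\mathbb{C}\setminus\mathbb{R}$ and its poles are eigenvalues of $\hat{H}(s)$ of finite multiplicity.

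The symmetry $\bar{Z}=Z$ is immediate from the reality of the coefficients of $\hat{H}(s)$: if $\hat{H}(s)u=zu$ then $\hat{H}(s)\bar{u}=\bar{z}\bar{u}$. The main obstacle is the finiteness of $Z$ itself, since the poles of $(\hat{H}(s)-z)^{-1}$ in $\mathbb{C}\setminus\mathbb{R}$ may a priori accumulate at $\mathbb{R}$ or at infinity. Accumulation at infinity is excluded by high-frequency invertibility of $p(z,s)$ when $|\Re z|$ is large with $|\Im z|$ bounded away from $0$, obtained by treating $(z-sV)^2-z^2$ as a relatively compact perturbation of $\mathscr{P}-z^2$. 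Accumulation on $\mathbb{R}$ is ruled out by propagating the meromorphic extensions of Lemma \ref{Meromorphic extension of asymptotic Hamiltonians} through the parametrix identity to the cutoff resolvent $w^\delta(\hat{H}(s)-z)^{-1}w^\delta$ across $\mathbb{R}$: since this extension is meromorphic in a neighbourhood of each point of $\mathbb{R}$, its poles there cannot have an accumulation point. Combined, these two observations confine $Z$ to a bounded subset of $\mathbb{C}\setminus\mathbb{R}$ without accumulation, forcing $Z$ to be finite.
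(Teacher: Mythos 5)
The paper does not prove this proposition: it is stated explicitly as Proposition 5.5 of \cite{GeoGerHA17} and imported as a black box, so there is no internal argument to compare against. Your parametrix/Fredholm framework is the right spirit and is broadly consistent with the construction carried out later in Subsection \ref{Meromorphic extension}, and the remarks on the symmetry $\bar Z=Z$ and on finite-rank residues via analytic Fredholm theory are sound. However, there is a genuine gap in the step that is supposed to deliver \emph{finiteness} of $Z$, namely the exclusion of accumulation of poles at infinity.

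Treating $(z-sV)^2-z^2 = -2zsV + s^2V^2$ as a ``relatively compact perturbation of $\mathscr{P}-z^2$'' does not work. First, even after the gauge change of Subsection \ref{Notations}, $V=\tilde V$ vanishes at $r_+$ but tends to a nonzero limit at $r_-$, so multiplication by $V$ is not $\mathscr{P}$-compact; and in any case relative compactness would only preserve Fredholmness, not yield invertibility. Second, the perturbation grows linearly in $|z|$: for $z=a+\mathrm{i}b$ with $|a|\to\infty$ and $|b|$ bounded away from $0$, one has $\|(\mathscr{P}-z^2)^{-1}\|\sim 1/|\Im(z^2)|\sim 1/|ab|$, so $\|(\mathscr{P}-z^2)^{-1}\cdot 2zsV\|\sim |s|\,\|V\|_\infty/|b|$ is merely bounded, not small; and since the proposition is asserted for \emph{all} real $s$, there is no smallness of $s$ to exploit. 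Thus neither relative compactness nor a Neumann series around $\mathscr{P}-z^2$ gives invertibility in the regime $|\Re z|\to\infty$ with $|\Im z|$ bounded, and your local meromorphy argument does not catch poles escaping along $|\Re z|\to\infty$, $\Im z\to 0$ either (a function meromorphic on an unbounded strip may still have infinitely many poles there). The mechanism actually needed — and which the paper quotes immediately after the statement — is the quantitative localization $Z\subset D(0,C|s|)$ from \cite[Prop.~3.6]{GeoGerHA17}. That bound rests on a quadratic-form/energy estimate, not a perturbation argument: with the gauge chosen so that $V\ge 0$ and with $z=a+\mathrm{i}b$, one has $\Im\langle p(z,s)u,u\rangle = -2ab\|u\|^2 + 2bs\langle Vu,u\rangle$, which is sign-definite once $|a|>|s|\,\|V\|_{L^\infty}$ (and an analogous computation handles $|b|$ large). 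Combined with the local meromorphy of the parametrix near $\mathbb{R}$, this confines $Z$ to a bounded region with no accumulation on $\mathbb{R}$ and hence forces $Z$ to be finite. Without such a uniform bound, your argument establishes meromorphy of the resolvent on $\mathbb{C}\setminus\mathbb{R}$ but not that $Z$ is finite.

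A secondary caveat: your interior piece $R_{\mathrm{int}}(z)$ is the resolvent of a non-self-adjoint Dirichlet realization of $\hat H(s)$, which can itself have nonreal eigenvalues, so $E(z)$ is a priori only meromorphic on $\mathbb{C}\setminus\mathbb{R}$ and the ``extra'' poles must be tracked. The gluing used in Subsection \ref{Meromorphic extension}, $Q(s,z)=\sum_\pm i_\pm^2(\hat H_\pm(s)-z)^{-1}$, avoids this entirely because $i_-^2+i_+^2=1$, so no interior patch is required.
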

An important fact is that \cite[Prop. 3.6]{GeoGerHA17} shows that $Z\equiv Z(s)$ is contained in the disc $D(0,C|s|)$ for some constant $C>0$ (we can take $C=2\|V\|_{_{L^{\infty}}}$). We show below that $Z(s)\cap\mathbb{C}^+=\emptyset$ for $s$ sufficiently small.

We henceforth use the Regge-Wheeler coordinate $x$ introduced in Subsection \ref{The Regge-Wheeler coordinate}. We will still denote by $\mathscr{P}$ the operator defined in \eqref{Operator P rund} expressed in the coordinates $(x,\omega)$:
\begin{align*}
\mathscr{P}&=-F(r(x))^{-1/2}\,\partial_x^2\, F(r(x))^{-1/2}-W_0(x)\Delta_{\mathbb{S}^{2}}+W_1(x)
\end{align*}
Let also $\mathcal{H}=L^2(\mathbb{R}\times\mathbb{S}^{2}_{\omega},F(r(x))\mathrm{d}x\mathrm{d}\omega)$.
\begin{lemma}
	\label{Weights send E into itself and into E_pm}
	For all $\delta>0$, $w^{\delta}$ sends $\dot{\mathscr{E}}$ into $\dot{\mathscr{E}}_{\pm}$ and $\dot{\mathscr{E}}$ into itself.
\end{lemma}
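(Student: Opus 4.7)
The goal is to show that scalar multiplication by $w^{\delta}$, applied to each component, defines a bounded operator from $\dot{\mathscr{E}}$ into each of $\dot{\mathscr{E}}_{+}$, $\dot{\mathscr{E}}_{-}$ and $\dot{\mathscr{E}}$. Unpacking the three norms (using the conjugation by $\Phi(-sV_{-})$ to transfer the $\dot{\mathscr{E}}_{-}$ norm onto $\tilde{\mathscr{P}}_{-}^{-1/2}\mathcal{H}\oplus\mathcal{H}$), for $u=(u_{0},u_{1})\in\dot{\mathscr{E}}$ the task reduces to controlling, by $\|u\|_{\dot{\mathscr{E}}}$, two families of quantities: a ``kinetic'' quantity $\langle w^{\delta}u_{0},\mathscr{Q}\,w^{\delta}u_{0}\rangle_{\mathcal{H}}$ with $\mathscr{Q}\in\{\mathscr{P},\,\mathscr{P}_{+},\,\tilde{\mathscr{P}}_{-}\}$, and a ``momentum'' quantity $\|w^{\delta}u_{1}-\ell\,w^{\delta}u_{0}\|_{\mathcal{H}}$ with $\ell\in\{sV,\,k_{+},\,k_{-}\}$.

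I would first dispatch the momentum term by the splitting
\begin{align*}
w^{\delta}u_{1}-\ell\,w^{\delta}u_{0}\;=\;w^{\delta}(u_{1}-sVu_{0})\;+\;w^{\delta}(sV-\ell)u_{0}.
\end{align*}
The first summand is bounded by $\|w\|_{L^{\infty}}^{\delta}\|u\|_{\dot{\mathscr{E}}}$ because $w$ is uniformly bounded. For the second, the algebraic identities $sV-k_{+}=sj_{-}^{2}V_{-}$ and $sV-k_{-}=-sj_{+}^{2}V_{-}$ (and $sV-sV=0$) show that $sV-\ell$ is a bounded multiplication operator, so this term is controlled by $\|u_{0}\|_{\mathcal{H}}$. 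The Hardy-type bound (ME1)(d), which is valid in our setting by the check of hypotheses (G) in Subsection \ref{Verification of the abstract assumptions (G)}, then gives $\|u_{0}\|_{\mathcal{H}}\lesssim\|\mathscr{P}^{1/2}u_{0}\|_{\mathcal{H}}\leq\|u\|_{\dot{\mathscr{E}}}$.

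For the kinetic term I would go through the unitary isomorphism $u_{0}\mapsto v:=F^{1/2}u_{0}$ from $\mathcal{H}$ onto $L^{2}(\mathbb{R}\times\mathbb{S}^{2},\mathrm{d}x\mathrm{d}\omega)$, which conjugates $\mathscr{P}$ to the Regge--Wheeler operator $P=-\partial_{x}^{2}-W_{0}\Delta_{\mathbb{S}^{2}}+W_{1}$ while leaving multiplication by $w^{\delta}$ intact. The task reduces to $\|P^{1/2}(w^{\delta}v)\|\lesssim\|P^{1/2}v\|$. Expanding $\langle w^{\delta}v,Pw^{\delta}v\rangle$ into a radial, an angular and a zeroth order contribution, the angular piece equals $\int w^{2\delta}W_{0}|\nabla_{\omega}v|^{2}\leq\|w\|_{L^{\infty}}^{2\delta}\langle v,-W_{0}\Delta_{\mathbb{S}^{2}}v\rangle$ since $w^{\delta}$ commutes with $\Delta_{\mathbb{S}^{2}}$, and the zeroth order piece is dominated by $\|W_{1}\|_{L^{\infty}}\|w^{\delta}v\|^{2}$ which is absorbed through Hardy. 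The only genuinely computational step is the radial term $\|\partial_{x}(w^{\delta}v)\|^{2}$; the key input there is $|\partial_{x}w^{\delta}|\lesssim 1$, which follows from the factorization $F=gw^{2}$ with $g$ smooth and bounded below on $[r_{-},r_{+}]$, so that $\partial_{x}w=F\partial_{r}w=\mathcal{O}(w)$. One then gets $\|\partial_{x}(w^{\delta}v)\|\lesssim\|\partial_{x}v\|+\|v\|\lesssim\|P^{1/2}v\|$. To pass from $\mathscr{P}$ to $\mathscr{P}_{\pm}$ (or $\tilde{\mathscr{P}}_{-}$), it remains to absorb the lower order terms $\|k_{\pm}w^{\delta}u_{0}\|^{2}$ or $\|(sV_{-}-k_{-})w^{\delta}u_{0}\|^{2}$, which are harmless since those multiplication operators are bounded.

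The case $\dot{\mathscr{E}}\to\dot{\mathscr{E}}$ is an immediate byproduct, the momentum step being even simpler because $sV-\ell=0$ and $w^{\delta}$ commutes with $V$ by (TE1). I do not expect a real obstacle here: everything rests on the Hardy estimate (ME1)(d), the boundedness of $w$ and of $\partial_{x}w^{\delta}$, and the bounded algebraic differences $sV-k_{\pm}$. The only point that requires a bit of attention is keeping track of the definitions of the energy norms on $\dot{\mathscr{E}}_{\pm}$ and verifying that their transferred forms are the two displayed families of quantities above.
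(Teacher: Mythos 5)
Your proposal is correct and follows essentially the same route as the paper: both reduce the claim to the kinetic quantity $\langle w^{\delta}u_{0},\mathscr{Q}\,w^{\delta}u_{0}\rangle_{\mathcal{H}}$ plus the second-component remainder, and both rest on the same three ingredients — the Hardy estimate (ME1)(d), the boundedness of $w^{\delta}$, and the fact that $\partial_{x}w^{\delta}$ and the algebraic differences $sV-k_{\pm}$ (resp. $sV_{-}-k_{-}$) are bounded multiplication operators. The paper only writes out the hardest case $\dot{\mathscr{E}}\to\dot{\mathscr{E}}_{-}$ and asserts the others are easier; your unified treatment via $\ell\in\{sV,k_{+},k_{-}\}$, and your explicit derivation of $\partial_{x}w^{\delta}=\mathcal{O}(w^{\delta})$ from $F=gw^{2}$, spell out points the paper leaves implicit, but the argument is the same.
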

\begin{proof}
	Let $u=(u_0,u_1)\in\dot{\mathscr{E}}$. We only show that $w^{\delta}\dot{\mathscr{E}}\subset\dot{\mathscr{E}}_{-}$, the proof of the other statements being slightly easier. We thus look for $v=(v_0,v_1)\in\tilde{\mathscr{P}}_{-}^{-1/2}\mathcal{H}\oplus\mathcal{H}$ such that $(w^{\delta}u_0,w^{\delta}u_1)=(v_0,sV_{-}v_0+v_1)$. 
	Since $w^{\delta}$ is bounded on $\mathbb{R}$, $w^{\delta}u_1\in\mathcal{H}$. Next, using the facts that $(w^{\delta})'u_0$, $V_+w^{\delta}u_0$ and $W_j^{1/2}w^{\delta}u_0$ are in $\mathcal{H}$ thanks to \eqref{ME1} (d) ($0\leq j\leq1$), we compute
	\begin{align*}
		\|\tilde{\mathscr{P}}_{-}^{1/2}w^{\delta}u_0\|^2_{_{\mathcal{H}}}	&=\big\langle\mathscr{P}w^{\delta}u_0,w^{\delta}u_0\big\rangle_{_{\mathcal{H}}}-\underbrace{\|s^2 (V_{-}-k_-)^2w^{\delta}u_0\|^2_{_{\mathcal{H}}}}_{\lesssim\,\|\mathscr{P}^{1/2}u_0\|^2_{_{\mathcal{H}}}}
	\end{align*}
	and working with the operators $\mathscr{P}_{\ell}$ defined as $P_{\ell}$ ($\ell\in\mathbb{N}$), we get
	\begin{align*}
		\big\langle\mathscr{P}_{\ell}w^{\delta}u_0,w^{\delta}u_0\big\rangle_{_{\mathcal{H}}}&=\|\partial_xw^{\delta}u_0\|^2_{_{\mathcal{H}}}+\underbrace{\big\langle(-\ell(\ell+1)W_0+W_1)w^{\delta}u_0,w^{\delta}u_0\big\rangle_{_{\mathcal{H}}}}_{\lesssim\,\|\mathscr{P}_{\ell}^{1/2}u_0\|^2_{_{\mathcal{H}}}},\\
		\|\partial_xw^{\delta}u_0\|^2_{_{\mathcal{H}}}&\lesssim\|(w^{\delta})'u_0\|^2_{_{\mathcal{H}}}+\|w^{\delta}u_0'\|^2_{_{\mathcal{H}}}\lesssim\|\mathscr{P}_{\ell}^{1/2}u_0\|^2_{_{\mathcal{H}}}.
	\end{align*}
	This proves that $w^{\delta}u_0\in \tilde{\mathscr{P}}_{-}^{-1/2}\mathcal{H}$. Hence $v_0:=w^{\delta}u_0\in \tilde{\mathscr{P}}_-^{1/2}\mathcal{H}$, and the problem boils down to show that $v_1:=w^{\delta}u_1-sV_{-}v_0=w^{\delta}u_1-sV_{-}w^{\delta}u_0$ is in $\mathcal{H}$; this is a consequence of \eqref{ME1} (d) which implies that $sV_{-}w^{\delta}u_0\in\mathcal{H}$.
\end{proof}
For all $z\in\mathbb{C}^+$ and $s\in\mathbb{R}$, we introduce the operator
\begin{align*}
	Q(s,z)&:=i_-^{2}(\hat{H}_-(s)-z)^{-1}+i_+^{2}(\hat{H}_+(s)-z)^{-1}=\sum_{\pm}i_{\pm}^{2}(\hat{H}_{\pm}(s)-z)^{-1}.
\end{align*}
In Subsection \ref{Study of the asymptotic Hamiltonians} above, we have studied the resolvents of the asymptotic Hamiltonians. In particular, we know that $Q(s,z)$ meromorphically extends into a strip in $\mathbb{C}^-$ and is analytic in a small neighbourhood of $\mathbb{R}$. We wish to show that $(\hat{H}(s)-z)^{-1}$ has the same properties. To do this, we show that $Q(s,z)$ is a parametrix for the resolvent on the energy space.

By Lemma \ref{Weights send E into itself and into E_pm}, $Q(s,z)w^{\delta}$ is well-defined in $\dot{\mathscr{E}}$. Using the potentials $k_{\pm}=s(V\mp j_{\mp}^{2}V_{-})$ introduced in Subsection \ref{Notations} as well as the relations $i_{\pm}j_{\mp}=0$, we compute:
\begin{align*}
	i_{\pm}^{2}(\hat{H}(s)-z)(\hat{H}_{\pm}(s)-z)^{-1}&=i_{\pm}^{2}\mathrm{Id}-i_{\pm}^{2}\begin{pmatrix}
	0&0\\s^{2}(V^{2}-k_{\pm}^{2})&2s(V-k_{\pm})
	\end{pmatrix}(\hat{H}_{\pm}(s)-z)^{-1}=i_{\pm}^{2}\mathrm{Id}.
\end{align*}
Since $i_+^{2}+i_{-}^2=1$, we obtain in $\dot{\mathscr{E}}$:
\begin{align}
\label{Parametrix}
	(\hat{H}(s)-z)Q(s,z)w^{\delta}&=\left(\mathrm{Id}+\sum_{\pm}[\hat{H}(s),i_{\pm}^{2}](\hat{H}_{\pm}(s)-z)^{-1}\right)w^{\delta}.
\end{align}
It follows that for all $z\notin Z$ (see Proposition \ref{Spectrum of the Hamiltonian H})
\begin{align}
\label{1+Fredholm}
  w^{\delta}Q(s,z)w^{\delta}&=w^{\delta}(\hat{H}(s)-z)^{-1}w^{\delta}\big(\mathrm{Id}+\hat{\mathcal{K}}_\pm(s,z)\big)
\end{align}
with
\begin{align*}
	\hat{\mathcal{K}}_\pm(s,z)&:=w^{-\delta}\sum_{\pm}[\hat{H}(s),i_{\pm}^{2}](\hat{H}_{\pm}(s)-z)^{-1}w^{\delta}.
\end{align*}
%
%
\begin{lemma}
	\label{Well-definedness on E}
	The operators on the left and right-hand sides of \eqref{1+Fredholm} send $\dot{\mathscr{E}}$ into itself.
\end{lemma}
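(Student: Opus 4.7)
The plan is to verify, side by side, that each operator appearing in \eqref{1+Fredholm} is a bounded endomorphism of $\dot{\mathscr{E}}$. The three basic ingredients are Lemma \ref{Weights send E into itself and into E_pm} (the weight $w^\delta$ sends $\dot{\mathscr{E}}$ into $\dot{\mathscr{E}}_\pm$ and $\dot{\mathscr{E}}$ into itself), the self-adjointness of $\hat{H}_\pm(s)$ on $\dot{\mathscr{E}}_\pm$ combined with Proposition \ref{Spectrum of the Hamiltonian H} for $\hat{H}(s)$ on $\dot{\mathscr{E}}$, and the fact that derivatives of the cutoffs $i_\pm$ have compact support strictly away from both horizons.

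For the left-hand side $w^\delta Q(s,z) w^\delta$, I would trace the chain
\begin{equation*}
\dot{\mathscr{E}} \xrightarrow{\,w^\delta\,} \dot{\mathscr{E}}_\pm \xrightarrow{\,(\hat{H}_\pm(s)-z)^{-1}\,} \dot{\mathscr{E}}_\pm \xrightarrow{\,i_\pm^{2}\,} \dot{\mathscr{E}} \xrightarrow{\,w^\delta\,} \dot{\mathscr{E}},
\end{equation*}
where the first and last arrows come from Lemma \ref{Weights send E into itself and into E_pm} and the second is the resolvent of a self-adjoint operator for $z\in\mathbb{C}\setminus\mathbb{R}$ (and its meromorphic extension of Lemma \ref{Meromorphic extension of asymptotic Hamiltonians} elsewhere). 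The step requiring genuine work is $i_\pm^{2}\colon\dot{\mathscr{E}}_\pm\to\dot{\mathscr{E}}$. Given $u=(u_0,u_1)\in\dot{\mathscr{E}}_\pm$, I would check $\mathscr{P}^{1/2}(i_\pm^{2}u_0)\in\mathcal{H}$ by writing $\mathscr{P}=\mathscr{P}_\pm+k_\pm^{2}$ and using that $k_\pm$ is bounded on $\mathrm{Supp}\,i_\pm$, so that $\|\mathscr{P}^{1/2}(i_\pm^{2}u_0)\|_{\mathcal{H}}^{2}$ is controlled by $\|\mathscr{P}_\pm^{1/2}u_0\|_{\mathcal{H}}^{2}$ plus a bounded term, modulo commutator contributions $[\mathscr{P}_\pm,i_\pm^{2}]u_0$ which are compactly supported and handled via \eqref{ME1}(d). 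Then $i_\pm^{2}u_1-sV(i_\pm^{2}u_0)\in\mathcal{H}$ follows directly, with the convention that for the minus sign one first conjugates by $\Phi(sV_-)$ in the definition of $\dot{\mathscr{E}}_-$ so as to absorb the offset of $V$ at $r_-$.

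For the right-hand side, it suffices to verify that $w^\delta(\hat{H}(s)-z)^{-1}w^\delta$ and $\hat{\mathcal{K}}_\pm(s,z)$ each send $\dot{\mathscr{E}}$ into itself. The first follows from Proposition \ref{Spectrum of the Hamiltonian H} (for $z\notin Z$) combined with Lemma \ref{Weights send E into itself and into E_pm}. For the second, the key observation is that $[\hat{H}(s),i_\pm^{2}]$ is a first-order differential operator whose coefficients are supported in a fixed compact set $K\subset\mathbb{R}_x$ lying strictly between both horizons, where $w$ and $w^{-1}$ are bounded. Starting from $u\in\dot{\mathscr{E}}$, the vector $(\hat{H}_\pm(s)-z)^{-1}w^\delta u$ lies in $\mathscr{D}(\hat{H}_\pm(s))\subset\dot{\mathscr{E}}_\pm$; applying $[\hat{H}(s),i_\pm^{2}]$ yields an element supported in $K$ whose two components lie locally in $\mathcal{H}$ and in $H^1_{\mathrm{loc}}$ respectively, hence in $\dot{\mathscr{E}}$ by the local equivalence of norms \eqref{Local equivalence of norms}; finally, multiplication by $w^{-\delta}$ preserves this property since $w^{-\delta}$ is smooth and bounded on $K$.

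The main obstacle, as I see it, lies in the commutator step: one must argue that $[\hat{H}(s),i_\pm^{2}]$ indeed gains enough local regularity so that its composition with $(\hat{H}_\pm(s)-z)^{-1}$ maps into the energy space $\dot{\mathscr{E}}$, and not merely into some weaker distributional space. This is where the ellipticity of the $(2,2)$-block of $\hat{H}_\pm(s)$ (inherited from $\mathscr{P}_\pm$) enters together with the compact support of the coefficients of the commutator, allowing one to replace the homogeneous norm $\|\cdot\|_{\dot{\mathscr{E}}}$ by the inhomogeneous $\|\cdot\|_{\mathscr{E}}$ via \eqref{Local equivalence of norms}.
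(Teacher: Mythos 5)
Your factorization of the two sides of \eqref{1+Fredholm} and the chain of mapping properties you trace are essentially the same as the paper's; the difference is in how the two key steps are discharged. For $i_\pm^{2}\colon\dot{\mathscr{E}}_\pm\to\dot{\mathscr{E}}$ the paper simply cites \cite[Lem.~5.4]{GeoGerHA17}, whereas you re-derive it via $\mathscr{P}=\mathscr{P}_\pm+k_\pm^{2}$; this works in outline, but bounding $\langle k_\pm^{2}i_\pm^{2}u_0,i_\pm^{2}u_0\rangle$ silently needs the Hardy estimate $\|u_0\|\lesssim\|\mathscr{P}_\pm^{1/2}u_0\|$ for $\mathscr{P}_\pm$ (which \eqref{TE3} (c) guarantees), and passing the cut-off through the square root $\mathscr{P}_\pm^{1/2}$ is not a plain differential-operator commutator — that is precisely what \eqref{TE3} (d) (boundedness of $\mathscr{P}_\pm^{1/2}i_\pm\mathscr{P}_\pm^{-1/2}$) supplies. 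For the commutator term in $\hat{\mathcal{K}}_\pm$, the paper inserts $w^{-\delta}\cdot w^{\delta}$ between $[\hat{H}(s),i_\pm^{2}]$ and $(\hat{H}_\pm(s)-z)^{-1}w^{\delta}$ and invokes the pre-verified hypothesis \eqref{TE3} (e), whereas you argue from compact support of the cut-off derivatives together with the local equivalence \eqref{Local equivalence of norms}; your route is more self-contained, the paper's is terser because it leans on the abstract assumption apparatus already checked in Subsection \ref{Verification of the abstract assumptions (G)}. Two small slips worth fixing: the only nonzero entry of $[\hat{H}(s),i_\pm^{2}]$ is $[\mathscr{P},i_\pm^{2}]$ in the lower-left corner, so after applying it the first component of the output is identically $0$ and the point reduces to showing that the second component $[\mathscr{P},i_\pm^{2}]v_0$ lies in $\mathcal{H}$ — not a two-component ``$\mathcal{H}$ and $H^{1}_{\mathrm{loc}}$'' claim as you wrote; and the step $(\hat{H}(s)-z)^{-1}\colon\dot{\mathscr{E}}\to\dot{\mathscr{E}}$ requires $z\notin Z$, which Proposition \ref{Spectrum of the Hamiltonian H} provides and which should be recorded.
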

\begin{proof}
	For the left-hand side of \eqref{1+Fredholm}, we successively use Lemma \ref{Weights send E into itself and into E_pm}, the facts that $(\hat{H}_{\pm}(s)-z)^{-1}$ sends $\dot{\mathscr{E}}_{\pm}$ into $\mathscr{D}(\hat{H}_{\pm}(s))\subset\dot{\mathscr{E}}_{\pm}$ and $i_{\pm}$ sends $\dot{\mathscr{E}}_{\pm}$ into $\dot{\mathscr{E}}$ (\textit{cf.} \cite[Lem. 5.4]{GeoGerHA17}), and again Lemma \ref{Weights send E into itself and into E_pm}.

	We now deal with the right-hand side of \eqref{1+Fredholm}. By Lemma \ref{Weights send E into itself and into E_pm}, we only have to show that $w^{-\delta}[\hat{H}(s),i_{\pm}](\hat{H}_{\pm}(s)-z)^{-1}$ sends $\dot{\mathscr{E}}_{\pm}$ into $\dot{\mathscr{E}}$. Let $u\in\dot{\mathscr{E}}_{\pm}$ and write $v=(v_0,v_1):(\hat{H}_{\pm}(s)-z)^{-1}u\in\dot{\mathscr{E}}_{\pm}$. We have
	\begin{align*}
	w^{-\delta}[H(s),i_{\pm}](H_{\pm}(s)-z)^{-1}u&=w^{-\delta}\begin{pmatrix}
	0&0\\\left[\mathscr{P},i_{\pm}\right]&0
	\end{pmatrix}(v_0,v_1)=\begin{pmatrix}
	0\\w^{-\delta}\left[\mathscr{P},i_{\pm}\right]v_0
	\end{pmatrix}=\begin{pmatrix}
	0\\w^{-\delta}\left[\mathscr{P},i_{\pm}\right]w^{-\delta}w^{\delta}v_0
	\end{pmatrix}.
	\end{align*}
	Since $w^{\delta}\dot{\mathscr{E}}_{\pm}\subset\dot{\mathscr{E}}_{\pm}$, we can use \eqref{TE3} (e) to conclude that the second component is in $\mathcal{H}$, whence $w^{-\delta}[\hat{H}(s),i_{\pm}](\hat{H}_{\pm}(s)-z)^{-1}\dot{\mathscr{E}}_{\pm}\subset\dot{\mathscr{E}}$ (when $\pm=-$, we use that $\mathscr{P}^{1/2}\tilde{\mathscr{P}}_-^{-1/2}$ is bounded on $\mathcal{H}$).
\end{proof}
In the following, we will consider $s$ as a complex number lying in a small neighbourhood of 0.
\begin{lemma}
	\label{Fredholm}
	Let $0<\delta<\kappa$ and $R>R_0$. $\mathrm{Id}+\hat{\mathcal{K}}_\pm(s,z)$ is a holomorphic family of Fredholm operators acting on $\dot{\mathscr{E}}$ for $(s,z)\in D(0,\sigma)\times\big(\left]-R,R\right[+\mathrm{i}\left]-\varepsilon_0,\varepsilon_0\right[\big)$, with $\sigma>0$ sufficiently small and $\varepsilon_0>0$ as in Proposition \ref{Extension of asymptotic Hamiltonians}.
\end{lemma}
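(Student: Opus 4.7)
The plan is to reduce Fredholmness to compactness of $\hat{\mathcal{K}}_\pm(s,z)$: once compactness is shown, $\mathrm{Id}+\hat{\mathcal{K}}_\pm(s,z)$ is automatically Fredholm of index zero. Holomorphy in $(s,z)$ will follow from Proposition \ref{Extension of asymptotic Hamiltonians} once the structure of the commutator $[\hat{H}(s),i_\pm^2]$ is clarified.

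First I would exploit the fact that $V$ and $i_\pm$ are multiplication operators by functions of $r$ and hence commute, so that $[sV,i_\pm^2]=[s^2V^2,i_\pm^2]=0$. This forces
\[
[\hat{H}(s),i_\pm^2]=\begin{pmatrix}0 & 0\\ [\mathscr{P},i_\pm^2] & 0\end{pmatrix},
\]
which is \emph{independent of $s$}. Since $i_\pm$ is locally constant near both horizons, $[\mathscr{P},i_\pm^2]$ is a first-order differential operator in $x$ alone (the angular Laplacian commutes with $i_\pm^2$) whose coefficients are supported in a compact interval $K\subset\mathbb{R}$ on which $w$ is smooth and bounded away from zero.

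For holomorphy I would factor
\[
\hat{\mathcal{K}}_\pm(s,z)=\sum_{\pm}\bigl(w^{-\delta}[\hat{H}(s),i_\pm^2]w^{-\delta}\bigr)\cdot\bigl(w^{\delta}(\hat{H}_\pm(s)-z)^{-1}w^{\delta}\bigr).
\]
The right-hand factor is holomorphic in $(s,z)$ on $D(0,\sigma)\times\bigl(\left]-R,R\right[+\mathrm{i}\left]-\varepsilon_0,\varepsilon_0\right[\bigr)$ with values in $\mathcal{B}(\dot{\mathscr{E}}_\pm)$ by Proposition \ref{Extension of asymptotic Hamiltonians}, while the left-hand factor is a bounded, $s$-independent operator from $\dot{\mathscr{E}}_\pm$ into $\dot{\mathscr{E}}$ thanks to the compact support of its coefficients together with Lemma \ref{Weights send E into itself and into E_pm}. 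Composition preserves holomorphy.

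The main step, and the anticipated obstacle, is compactness. Given a bounded sequence $(u_n)\subset\dot{\mathscr{E}}$, I would set $v_n=(v_{n,0},v_{n,1}):=(\hat{H}_\pm(s)-z)^{-1}w^{\delta}u_n$, which is bounded in $\mathscr{D}(\hat{H}_\pm(s))$ for the graph norm by Lemma \ref{Weights send E into itself and into E_pm}. Together with the boundedness of $k_\pm$, this yields $\mathscr{P}_\pm v_{n,0}$ bounded in $\mathcal{H}$, and interior elliptic regularity of $\mathscr{P}_\pm$ then gives $v_{n,0}$ bounded in $H^2_{\mathrm{loc}}(\mathbb{R}\times\mathbb{S}^2)$. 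Picking $\chi\in\mathcal{C}^{\infty}_{\mathrm{c}}(\mathbb{R})$ equal to $1$ on $K$, $\chi v_{n,0}$ is bounded in $H^2$ with uniformly compact support in $x$, and Rellich--Kondrachov extracts a subsequence converging in $H^1$. Because $[\mathscr{P},i_\pm^2]$ is first-order with coefficients supported in $K$, $[\mathscr{P},i_\pm^2]v_{n,0}$ converges in $\mathcal{H}$ along the same subsequence. Multiplying by $w^{-\delta}$, which is bounded on $K$, and using that the first component of $\hat{\mathcal{K}}_\pm(s,z)u_n$ vanishes, the convergence takes place in the full $\dot{\mathscr{E}}$-norm.

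The delicate point is matching the classical Sobolev compactness to the non-trivial energy norm on $\dot{\mathscr{E}}$. This is resolved by the key observation that the output of $\hat{\mathcal{K}}_\pm(s,z)$ has identically vanishing first component, so its $\dot{\mathscr{E}}$-norm collapses to the $\mathcal{H}$-norm of the second component, which is precisely what Rellich--Kondrachov controls.
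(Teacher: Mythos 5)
Your overall scheme is the same as the paper's: reduce Fredholmness to compactness of $\hat{\mathcal{K}}_\pm(s,z)$ and get holomorphy from the factorization with the weighted resolvent. The commutator structure $[\hat{H}(s),i_\pm^2]$ is identified exactly as in the paper, and the boundedness of $w^{-\delta}[\mathscr{P},i_\pm^2]w^{-\delta}$ is adequately justified (the paper instead invokes (TE3)(e) via the end of the proof of Lemma \ref{Well-definedness on E}, but your compact-support argument amounts to the same thing).

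The genuine gap is in the compactness step. You set $v_n=(\hat{H}_\pm(s)-z)^{-1}w^\delta u_n$ and run an elliptic regularity/Rellich--Kondrachov argument in $\mathscr{D}(\hat{H}_\pm(s))$. But the unweighted operator $(\hat{H}_\pm(s)-z)^{-1}$ only exists as a bona fide resolvent for $z$ in the resolvent set of $\hat{H}_\pm(s)$, essentially $\Im z>0$. For $\Im z\le 0$, which the lemma explicitly covers (the domain is $\left]-R,R\right[+\mathrm{i}\left]-\varepsilon_0,\varepsilon_0\right[$), only the weighted object $w^\delta(\hat{H}_\pm(s)-z)^{-1}w^\delta$ has been meromorphically extended; you have no operator $(\hat{H}_\pm(s)-z)^{-1}$ to which to apply the graph-norm estimate. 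So your argument as written establishes compactness only on $\mathbb{C}^+$, not on the full box. The paper sidesteps this entirely: Lemma \ref{Meromorphic extension of asymptotic Hamiltonians} already asserts that the extension $w^\delta(\hat{H}_\pm(s)-z)^{-1}w^\delta$ takes values in compact operators on $\dot{\mathscr{E}}_\pm$ throughout the extension strip, and compactness of $\hat{\mathcal{K}}_\pm$ then follows from composing with the bounded commutator factor. Your route could be repaired --- once holomorphy over the full $(s,z)$-box is established (as you do), compactness on $\mathbb{C}^+$ propagates everywhere because the compacts form a norm-closed subspace and one can apply the identity theorem to the composition with the quotient map $\mathcal{L}(\dot{\mathscr{E}})\to\mathcal{L}(\dot{\mathscr{E}})/\mathcal{K}(\dot{\mathscr{E}})$ --- but you must state this; as written the proof does not cover $\Im z\le 0$.
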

\begin{proof}
	Write
	\begin{align*}
		\hat{\mathcal{K}}_\pm(s,z)&=\sum_{\pm}w^{-\delta}[\hat{H}(s),i_{\pm}^{2}]w^{-\delta}w^{\delta}(\hat{H}_{\pm}(s)-z)^{-1}w^{\delta}.
	\end{align*}
	By Lemma \ref{Meromorphic extension of asymptotic Hamiltonians}, $w^{\delta}(\hat{H}_{\pm}(s)-z)^{-1}w^{\delta}$ is compact on $\dot{\mathscr{E}}_{\pm}$ and Proposition \ref{Extension of asymptotic Hamiltonians} shows that the extension is holomorphic in $(s,z)$. Furthermore,
	\begin{align*}
		w^{-\delta}[\hat{H}(s),i_{\pm}^{2}]w^{-\delta}&=\begin{pmatrix}
		0&0\\w^{-\delta}\left[\mathscr{P},i_{\pm}^{2}\right]w^{-\delta}&0
		\end{pmatrix}
	\end{align*}
	is bounded on $\dot{\mathscr{E}}_{\pm}$ (as a consequence of \eqref{TE3} (e), see the end of the proof of Lemma \ref{Well-definedness on E}). Hence $\hat{\mathcal{K}}_\pm(s,z)$ is compact and thus $\mathrm{Id}+\hat{\mathcal{K}}_\pm(s,z)$ is Fredholm.
\end{proof}
We are now ready to construct the meromorphic extension of the weighted resolvent. For all $s_0>0$, define $R_0:=2Cs_0$. Proposition \ref{Spectrum of the Hamiltonian H} and the remark below then show that $Z(s)\subset D(0,R/2)$ for all $s\in\left]-s_0,s_0\right[$ and all $R>R_0$.
\begin{theorem}
	\label{No resonances in a strip near the real axis for s small}
	Let $0<\delta<\kappa$ and $s\in\left]-s_0,s_0\right[$.
	\begin{itemize}
		\item [1.]  For $s$ small enough, $w^{\delta}(\hat{H}(s)-z)^{-1}w^{\delta}$ has a meromorphic extension from $\mathbb{C}^+\setminus Z$ to $\big\{\omega\in\mathbb{C}\ \big\vert\ \Im\omega>-\delta'\big\}$ for all $0<\delta'<\delta$ with values in compact operators acting on $\dot{\mathscr{E}}$.
		\item [2.] For all $R>R_0$, there exists $0<s_1<s_0$ such that for all $s\in\left]-s_1,s_1\right[$, the extension of $w^{\delta}(\hat{H}(s)-z)^{-1}w^{\delta}$ is analytic in $z\in\left]-R,R\right[+\mathrm{i}\left]-\varepsilon_0,\varepsilon_0\right[$ with $\varepsilon_0>0$ as in Proposition \ref{Extension of asymptotic Hamiltonians}.
	\end{itemize}
	%
\end{theorem}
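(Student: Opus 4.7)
The proof hinges on inverting the Fredholm factor in the parametrix identity \eqref{1+Fredholm}. For $z \in \mathbb{C}^{+} \setminus Z(s)$, Lemma \ref{Well-definedness on E} guarantees that the identity
\begin{align*}
	w^{\delta}(\hat{H}(s)-z)^{-1}w^{\delta} = w^{\delta}Q(s,z)w^{\delta}\bigl(\mathrm{Id}+\hat{\mathcal{K}}_{\pm}(s,z)\bigr)^{-1}
\end{align*}
makes sense in $\dot{\mathscr{E}}$, provided the inverse on the right-hand side exists. Lemma \ref{Meromorphic extension of asymptotic Hamiltonians} already extends $w^{\delta}Q(s,z)w^{\delta}$, as a compact-operator-valued family, meromorphically to $\{\Im\omega > -\delta'\}$, so both statements of the theorem reduce to meromorphically continuing the Fredholm factor and locating its non-invertible points.

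For statement~1, Lemma \ref{Fredholm} combined with Lemma \ref{Meromorphic extension of asymptotic Hamiltonians} gives that, for $s$ small, $\mathrm{Id}+\hat{\mathcal{K}}_{\pm}(s,z)$ is a meromorphic family of identity-plus-compact operators on the strip $\{\Im\omega > -\delta'\}$, holomorphic away from the poles of the asymptotic resolvents. Invertibility at one point is automatic: on $\mathbb{C}^{+}\setminus Z(s)$ the existence of $(\hat{H}(s)-z)^{-1}$ together with the parametrix identity forces $(\mathrm{Id}+\hat{\mathcal{K}}_{\pm}(s,z))^{-1}$ to be defined. Analytic Fredholm theory then produces a meromorphic inverse on the whole strip (poles of the asymptotic resolvents are handled by factoring out their principal parts locally), and composition with $w^{\delta}Q(s,z)w^{\delta}$ delivers the announced meromorphic extension with compact values.

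For statement~2, the goal is to prove that the Fredholm factor is invertible everywhere on $]-R,R[+\mathrm{i}]-\varepsilon_{0},\varepsilon_{0}[$ as soon as $|s|$ is small enough. At $s=0$ the operator $\hat{H}(0)$ is block-antidiagonal with lower-left entry $\mathscr{P}$, so the poles of $w^{\delta}(\hat{H}(0)-z)^{-1}w^{\delta}$ in the strip are exactly those of $w^{\delta}(\mathscr{P}-z^{2})^{-1}w^{\delta}$; Lemma \ref{Resonance 0 for P-z^2} rules these out on $\mathbb{R}$, and after possibly shrinking $\varepsilon_{0}$, on the entire closed box. Equivalently, $\mathrm{Id}+\hat{\mathcal{K}}_{\pm}(0,z)$ is invertible for every $z$ in that closed box. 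A compactness-contradiction argument identical to the one at the end of the proof of Proposition \ref{Extension of asymptotic Hamiltonians} then yields the threshold $s_{1}$: otherwise one could extract sequences $s_{n} \to 0$ and $z_{n} \to z_{0}$ in the closed box with $\mathrm{Id}+\hat{\mathcal{K}}_{\pm}(s_{n},z_{n})$ non-invertible, contradicting that the invertibility set of a holomorphic Fredholm family is open.

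The main obstacle is not the Fredholm machinery itself but the bookkeeping across the three non-standard energy spaces $\dot{\mathscr{E}}$ and $\dot{\mathscr{E}}_{\pm}$: one must ensure that $w^{\delta}Q(s,z)w^{\delta}$ really is compact on $\dot{\mathscr{E}}$ (not only on each $\dot{\mathscr{E}}_{\pm}$) and that \eqref{1+Fredholm} together with the compactness of $\hat{\mathcal{K}}_{\pm}(s,z)$ lifts globally in $\dot{\mathscr{E}}$ through the cut-offs $i_{\pm}$. This is precisely the content of Lemmas \ref{Weights send E into itself and into E_pm}, \ref{Well-definedness on E} and \ref{Fredholm}; once they are in place the rest is a routine application of analytic Fredholm theory.
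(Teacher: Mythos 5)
Your overall strategy is the same as the paper's: start from the parametrix identity \eqref{1+Fredholm}, invert the factor $\mathrm{Id}+\hat{\mathcal{K}}_{\pm}(s,z)$ by analytic Fredholm theory, and use a compactness-contradiction argument in $(s,z)$ to push the threshold $s_1$ down. However, there is a genuine gap at the one step on which the whole Fredholm argument hinges: verifying that $\mathrm{Id}+\hat{\mathcal{K}}_{\pm}$ is invertible at some base point.

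For statement~1 you assert that ``on $\mathbb{C}^{+}\setminus Z(s)$ the existence of $(\hat{H}(s)-z)^{-1}$ together with the parametrix identity forces $(\mathrm{Id}+\hat{\mathcal{K}}_{\pm}(s,z))^{-1}$ to be defined.'' This does not follow: the relation $w^{\delta}Q(s,z)w^{\delta}=w^{\delta}(\hat{H}(s)-z)^{-1}w^{\delta}\bigl(\mathrm{Id}+\hat{\mathcal{K}}_{\pm}(s,z)\bigr)$ factors the \emph{parametrix}, not the resolvent; $Q(s,z)$ is not a priori invertible, so the existence of $(\hat{H}(s)-z)^{-1}$ places no constraint on the kernel of the Fredholm factor. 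Similarly, for statement~2 the ``Equivalently'' you use to pass from ``$w^{\delta}(\hat{H}(0)-z)^{-1}w^{\delta}$ has no poles in the box'' to ``$\mathrm{Id}+\hat{\mathcal{K}}_{\pm}(0,z)$ is invertible in the box'' is not an equivalence for the same reason. What actually makes both claims trivial is the identity $\hat{\mathcal{K}}_{\pm}(0,z)\equiv 0$, which holds because $\hat{H}_{\pm}(0)=\hat{H}(0)$ and $i_{-}^{2}+i_{+}^{2}=1$, so that $\sum_{\pm}[\hat{H}(0),i_{\pm}^{2}](\hat{H}_{\pm}(0)-z)^{-1}=[\hat{H}(0),1](\hat{H}(0)-z)^{-1}=0$. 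This one-line observation is the pivot of the paper's proof of both parts; without it, neither of your routes to invertibility at the base point goes through.

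Once you insert this identity, the rest of your argument (the two-parameter analytic Fredholm theory and the $s_{n}\to 0$ compactness argument) matches the paper and is correct.
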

\begin{proof}
	We first show Part 1. Let $s\in\mathbb{C}$ small enough and let $z\in\left]-R,R\right[+\mathrm{i}\left]-\delta',\delta'\right[$. Since $\hat{H}_{\pm}(0)=\hat{H}(0)$, we observe that $\hat{\mathcal{K}}_\pm(0,z)=0$ and $Q(0,z)=(\hat{H}(0)-z)^{-1}$. Hence the operator $\mathrm{Id}+\hat{\mathcal{K}}_\pm(0,z)=\mathrm{Id}$ is invertible for all $z\in\mathbb{C}$. Finally, Lemma \ref{Meromorphic extension of asymptotic Hamiltonians} shows that $w^{\delta}(\hat{H}_{\pm}(s)-z)^{-1}w^{\delta}$ is meromorphic in $z$. We can therefore use the meromorphic Fredholm theory to invert $\mathrm{Id}+\hat{\mathcal{K}}_\pm(s,z)$ on $\dot{\mathscr{E}}$. Using \eqref{1+Fredholm}, we have the representation formula
	\begin{align}
	\label{Representation formula 2}
		w^{\delta}(\hat{H}(s)-z)^{-1}w^{\delta}&=w^{\delta}Q(s,z)w^{\delta}\big(\mathrm{Id}+\hat{\mathcal{K}}_\pm(s,z)\big)^{-1}
	\end{align}
	which is valid for $z\in\left]-R,R\right[+\mathrm{i}\left]-\delta',\delta'\right[$. This shows that $w^{\delta}(\hat{H}(s)-z)^{-1}w^{\delta}$ has a meromorphic extension in this strip and Part 1 is settled.

	Let us show Part 2. of the theorem. %
	We pick this time $(s,z)\in D(0,\sigma)\times\big(\left]-R,R\right[+\mathrm{i}\left]-\varepsilon_0,\varepsilon_0\right[\big)$ with $\sigma, \varepsilon_0>0$. Lemma \ref{Fredholm} shows that, if $\sigma$ is very small, $\mathrm{Id}+\hat{\mathcal{K}}_\pm(s,z)$ is a holomorphic family of Fredholm operators acting on $\dot{\mathscr{E}}$. We can thus use the two-dimensional analytic Fredholm theory which implies that there is a meromorphic extension $D(0,\sigma)\times\big(\left]-R,R\right[+\mathrm{i}\left]-\varepsilon_0,\varepsilon_0\right[\big)\ni(s,z)\mapsto\big(\mathrm{Id}+\hat{\mathcal{K}}_\pm(s,z)\big)^{-1}$, and \eqref{Representation formula 2} is valid for $(s,z)\in D(0,\sigma)\times\big(\left]-R,R\right[+\mathrm{i}\left]-\varepsilon_0,\varepsilon_0\right[\big)$ with $\sigma$ small. This shows that the poles of $w^{\delta}(\hat{H}(s)-z)^{-1}w^{\delta}$ are the poles of $(\mathrm{Id}+\hat{\mathcal{K}}_\pm(s,z))^{-1}$ and $w^{\delta}Q(s,z)w^{\delta}$, the last ones being the poles of $w^\delta(\hat{H}_{\pm}(s)-z)^{-1}w^\delta$.

	The multidimensional analytic Fredholm theory also implies that there exists a (possibly empty) subvariety $S\subset D(0,\sigma)\times\big(\left]-R,R\right[+\mathrm{i}\left]-\varepsilon_0,\varepsilon_0\right[\big)$ such that $\mathrm{Id}+\hat{\mathcal{K}}(s,z)$ is invertible for $(s,z)\notin S$. We claim that we can take $\sigma>0$ small enough so that
	\begin{align*}
	  \Big(D(0,\sigma)\times\big(\left]-R,R\right[+\mathrm{i}\left]-\varepsilon_0,\varepsilon_0\right[\big)\Big)\cap S&=\emptyset.
	\end{align*}
	Otherwise, for every $n\in\mathbb{N}\setminus\{0\}$, there is a couple $(s_n,z_n)\in D(0,1/n)\times\big(\left]-R,R\right[+\mathrm{i}\left]-\varepsilon_0,\varepsilon_0\right[\big)$ such that $\mathrm{Id}+\hat{\mathcal{K}}(s_n,z_n)$ is not invertible. By compactness, we can assume that $(s_n,z_n)\to(0,z_0)$ as $n\to+\infty$ for some $z_0\in[-R,R]+\mathrm{i}[-\varepsilon_0,\varepsilon_0]$. But $\mathrm{Id}+\hat{\mathcal{K}}(0,z_0)=\mathrm{Id}$ is invertible for all $z\in\mathbb{C}$, so $\mathrm{Id}+\hat{\mathcal{K}}_\pm(s,z)$ must be invertible too for all $(s,z)$ in a small neighbourhood of $(0,z_0)$, a contradiction.
	
	We now assume $|s|<s_1$ where $s_1$ is so small that $\mathrm{Id}+\hat{\mathcal{K}}_\pm(s,z)$ is invertible on $\dot{\mathscr{E}}$ for $z\in \left]-R,R\right[+\mathrm{i}\left]-\varepsilon_0,\varepsilon_0\right[$. Using then the formula \eqref{Representation formula 2}, we conclude that the poles of $w^{\delta}(\hat{H}(s)-z)^{-1}w^{\delta}$ are precisely the poles of $w^{\delta}Q(s,z)w^{\delta}$, which are the poles of $w^{\delta}(\hat{H}_{\pm}(s)-z)^{-1}w^{\delta}$. We then use Proposition \ref{Extension of asymptotic Hamiltonians} to conclude that there is no pole for $z\in\left]-R,R\right[+\mathrm{i}\left]-\varepsilon_0,\varepsilon_0\right[$. This completes the proof. 
\end{proof}
As a first consequence, we deduce a holomorphy result for the resolvent.
\begin{corollary}
\label{Analyticity in C+}
	Let $\varepsilon_0>0$ as in Proposition \ref{Extension of asymptotic Hamiltonians}. Then for all $s\in\mathbb{R}$ such that $|s|<s_1$ with $Cs_1<\varepsilon_0$, the resolvent $(\hat{H}(s)-z)^{-1}$ is holomorphic in $z\in\mathbb{C}^+$. Furthermore, the spectrum of $\hat{H}(s)$ is contained in $\mathbb{R}$.
\end{corollary}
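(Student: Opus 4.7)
The plan is to combine Proposition \ref{Spectrum of the Hamiltonian H} with part 2 of Theorem \ref{No resonances in a strip near the real axis for s small} in order to rule out all non-real spectrum of $\hat{H}(s)$. Fix $R > \max(R_0, \varepsilon_0)$ and, invoking part 2 of Theorem \ref{No resonances in a strip near the real axis for s small}, shrink $s_1$ if necessary so that $Cs_1 < \varepsilon_0$ and so that for every $|s|<s_1$ the weighted resolvent $w^{\delta}(\hat{H}(s)-z)^{-1}w^{\delta}$ is holomorphic on the rectangle $\left]-R,R\right[+\mathrm{i}\left]-\varepsilon_0,\varepsilon_0\right[$. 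By Proposition \ref{Spectrum of the Hamiltonian H} and the bound $Z(s) \subset D(0, C|s|)$ recalled immediately after its statement,
\begin{align*}
\sigma(\hat{H}(s))\setminus\mathbb{R} \subset Z(s) \subset D(0,C|s|) \subset D(0,\varepsilon_0),
\end{align*}
so the entire non-real spectrum is contained in the region where the weighted resolvent is already known to be analytic.

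It remains to show that no element of $Z(s)$ can actually occur. Suppose, for contradiction, that $z_0 \in Z(s) \cap \mathbb{C}^+$; by Proposition \ref{Spectrum of the Hamiltonian H}, $z_0$ is an isolated eigenvalue of finite multiplicity, so the resolvent admits a Laurent expansion at $z_0$ whose leading coefficient is a non-zero finite-rank Riesz projection $P_0$ onto the generalized eigenspace. The analyticity of $w^{\delta}(\hat{H}(s)-z)^{-1}w^{\delta}$ at $z_0$ would force $w^{\delta} P_0 w^{\delta} = 0$. To reach a contradiction, pick a non-zero $\phi = (\phi_0, \phi_1) \in \mathrm{Ran}(P_0)$; the eigenvalue equation $(\hat{H}(s) - z_0)\phi = 0$ reads $\phi_1 = z_0 \phi_0$ and $(\mathscr{P} - (z_0 - sV)^2)\phi_0 = 0$, and because $\Im z_0 \neq 0$ the Jost-function analysis from the proof of Lemma \ref{Resonance 0 for P-z^2} yields exponential decay of $\phi_0$ at both horizons. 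It follows that $w^{-\delta}\phi \in \dot{\mathscr{E}}$, hence $\phi = w^{\delta}(w^{-\delta}\phi)$ lies in the range of $w^{\delta}$. Applying $P_0$ gives $P_0\bigl(w^{\delta}(w^{-\delta}\phi)\bigr) = \phi$, so $w^{\delta} P_0 w^{\delta}(w^{-\delta}\phi) = w^{\delta}\phi$, which is non-zero because $w > 0$ on $\left]r_-, r_+\right[$; this contradicts $w^{\delta} P_0 w^{\delta} = 0$.

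Thus $Z(s) \cap \mathbb{C}^+ = \emptyset$, and the relation $\overline{Z(s)} = Z(s)$ from Proposition \ref{Spectrum of the Hamiltonian H} yields $Z(s) \cap \mathbb{C}^- = \emptyset$ as well, so $Z(s) = \emptyset$. Hence $\sigma(\hat{H}(s)) \subset \mathbb{R}$, and the meromorphic extension of $(\hat{H}(s) - z)^{-1}$ to $\mathbb{C} \setminus \mathbb{R}$ provided by Proposition \ref{Spectrum of the Hamiltonian H} is in fact holomorphic, in particular on $\mathbb{C}^+$. The main obstacle in this plan is the transfer of analyticity from the weighted resolvent back to the unweighted one, which rests on two geometric facts: the weight $w(r) = \sqrt{(r-r_-)(r_+-r)}$ is strictly positive in the interior $\left]r_-,r_+\right[$, and eigenfunctions associated with putative non-real eigenvalues decay exponentially at the two horizons, so sandwiching by $w^{\delta}$ can neither kill those eigenfunctions nor eject them from $\dot{\mathscr{E}}$.
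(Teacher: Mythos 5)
Your overall strategy matches the paper's: combine Proposition \ref{Spectrum of the Hamiltonian H} (which locates the non-real spectrum in a finite set $Z(s)\subset D(0,C|s|)$) with part 2 of Theorem \ref{No resonances in a strip near the real axis for s small} (analyticity of the weighted resolvent in a rectangle about $0$), then show that the latter rules out poles in $\mathbb{C}^+$. However, the step you use to transfer analyticity from the weighted resolvent to the unweighted one has a genuine gap.

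You claim that since $\phi_0$ decays exponentially at both horizons (via Jost analysis applied to $(\mathscr{P}-(z_0-sV)^2)\phi_0=0$), one gets $w^{-\delta}\phi\in\dot{\mathscr{E}}$, so $\phi\in\mathrm{Ran}(w^\delta)$. But the Jost asymptotics give $|\phi_0(x)|\sim\mathrm{e}^{-(\Im z_0)|x|}$ as $x\to\pm\infty$, while $w^{-\delta}(x)\sim\mathrm{e}^{\delta|\kappa_\pm||x|}$; thus $w^{-\delta}\phi_0$ decays only if $\Im z_0>\delta|\kappa_\pm|$ at both ends. But $z_0\in Z(s)\subset D(0,C|s|)\subset D(0,\varepsilon_0)$, so $\Im z_0<\varepsilon_0$, and by construction $\varepsilon_0<\delta$. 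The inequality $\Im z_0>\delta\max\{|\kappa_\pm|\}$ is therefore not available in general (it would require $\max\{|\kappa_\pm|\}<\varepsilon_0/\delta<1$, which is a constraint on the black hole parameters you have no control over, and in any case $\Im z_0$ has no lower bound in terms of $\delta$). So $w^{-\delta}\phi$ need not lie in $\dot{\mathscr{E}}$, and the claimed contradiction does not follow.

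The paper avoids this issue entirely: since each $A_j$ in the Laurent expansion is finite rank (hence bounded), and $w^\delta$ is injective with dense range, the vanishing of $w^\delta A_j w^\delta$ already forces $A_j=0$ — one never needs to place any particular eigenfunction inside $\mathrm{Ran}(w^\delta)$. Replacing your Jost-decay argument by this injectivity/dense-range observation repairs the proof and is in fact shorter. A secondary (minor) point: you never justify that the Jost analysis of Lemma \ref{Resonance 0 for P-z^2}, which is carried out for $s=0$ and mode by mode in $\ell$, applies to the full operator $\mathscr{P}-(z_0-sV)^2$ with $s\neq0$; once the argument is replaced as above, this extension is no longer needed.
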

\begin{proof}
	We know by Theorem \ref{No resonances in a strip near the real axis for s small} that the weighted resolvent $w^{\delta}(\hat{H}(s)-z)^{-1}w^{\delta}$ is holomorphic\linebreak
	in $z\in\left(\left]-R,R\right[+\mathrm{i}\left]-\varepsilon_0,\varepsilon_0\right[\right)\cap\mathbb{C}^+\subset D(0,R/2)\subset\mathbb{C}^+$ if we assume $s<s_1$ by Part 2. of Theorem \ref{No resonances in a strip near the real axis for s small}.\linebreak
	By Proposition \ref{Spectrum of the Hamiltonian H}, $(\hat{H}(s)-z)^{-1}$ is holomorphic in $\mathbb{C}^+\setminus Z$. Assume then that $z_0\in\mathbb{C}^+\cap Z$ is a pole of order $m_0\in\mathbb{N}$: there exist some finite rank operators $A_{1},\ldots,A_{m_0}:\mathcal{H}\to\mathcal{H}$ such that
	\begin{align*}
		(\hat{H}(s)-z)^{-1}&=\sum_{j=1}^{m_0}\frac{A_{j}}{(z-z_0)^{j}}+\text{\,holomorphic term}&\forall z&\in\mathbb{C}^+\text{ near $z_0$}.
	\end{align*}
	Since $R>R_0$, $Z\cap\mathbb{C}^+\subset D(0,R/2)\cap\mathbb{C}^+$ and then
	\begin{align*}
		w^{\delta}(\hat{H}(s)-z)^{-1}w^{\delta}&=\sum_{j=1}^{m_0}\frac{w^{\delta}A_{j}w^{\delta}}{(z-z_0)^{j}}+\text{\,holomorphic term}
	\end{align*}
	is holomorphic in $z$ near $z_j$, so that $A_{1}=\ldots=A_{m_0}=0$ and $(\hat{H}(s)-z)^{-1}$ is holomorphic in $z\in\mathbb{C}^+$. By Proposition \ref{Spectrum of the Hamiltonian H}, this implies that the spectrum of $\hat{H}(s)$ in $Z\cap\mathbb{C}^{+}$ is empty; by symmetry, we deduce that $\bar{Z}\cap\mathbb{C}^{-}=\emptyset$ too.
\end{proof}
\begin{remark}
	Theorem \ref{No resonances in a strip near the real axis for s small} and Corollary \ref{Analyticity in C+} answer Bachelot's open question of the nature of the sets $\sigma_p$ (the eigenvalues in $\mathbb{C}^{+}$) and $\sigma_{ss}$ (the real resonances, also called hyperradiant modes) defined in \cite{Ba} by equations (2.35) and (2.36), when the charge product $s$ is sufficiently small: both are empty as he conjectured at the end of his paper.
\end{remark}
We finally deduce the existence of the cut-off inverse of the quadratic pencil and define resonances.
\begin{corollary}
	\label{Resonances def}
	Let $s\in\left]-s_0,s_0\right[$ small enough. The operator $\chi p(z,s)^{-1}\chi:L^2(\mathbb{R},\mathrm{d}x)\to H^2(\mathbb{R},\mathrm{d}x)$ defines for any $\chi\in\mathcal{C}^{\infty}_{\mathrm{c}}\left(\mathbb{R},\mathbb{R}\right)$ a meromorphic function of $z\in\{\omega\in\mathbb{C}\mid\Im\omega>-\kappa\}$ and analytic if $\Im z>-\varepsilon_0$ with $\varepsilon_0>0$ given by Proposition \ref{Extension of asymptotic Hamiltonians}.\\
	If $\chi$ is not identically 0, then the poles $z$ of this extension are exactly the poles of the cut-off resolvent $\chi(\hat{H}(s)-z)^{-1}\chi$ and are independent of the choice of $\chi$. We call them \textup{resonances} of $p$ and write $z\in\mathrm{Res}(p)$. Similarly, we define $\mathrm{Res}(p_{\ell})$ as the poles of $\chi p_\ell(z,s)^{-1}\chi$ for all $\ell\in\mathbb{N}$.
\end{corollary}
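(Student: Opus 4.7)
The idea is to read $\chi p(z,s)^{-1}\chi$ off the matrix formula \eqref{Resolvent of H}, reduce to the cut-off resolvent of $\hat{H}(s)$, and then invoke Theorem \ref{No resonances in a strip near the real axis for s small} together with Corollary \ref{Analyticity in C+}.

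\textbf{Step 1 (reduction to $\hat{H}(s)$).} By \eqref{Resolvent of H}, for $z\in\rho(\hat{K}(s))\cap(\mathbb{C}\setminus\mathbb{R})$ the operator $\chi p(z,s)^{-1}\chi$ coincides with the $(1,2)$-block of $\chi(\hat{K}(s)-z)^{-1}\chi$. Since $\chi$ and $V$ are multiplication operators they commute, so the componentwise action of $\chi$ on $2$-vectors commutes with $\Phi(sV)$. Using $\hat{H}(s)=\Phi(sV)\hat{K}(s)\Phi(sV)^{-1}$ gives
\begin{align*}
\chi(\hat{K}(s)-z)^{-1}\chi=\Phi(sV)^{-1}\chi(\hat{H}(s)-z)^{-1}\chi\,\Phi(sV),
\end{align*}
which reduces the problem to the study of $\chi(\hat{H}(s)-z)^{-1}\chi$.

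\textbf{Step 2 (cut-off versus weight).} Because $w(r)=\sqrt{(r-r_-)(r_+-r)}$ vanishes only at the horizons while $\chi\in\mathcal{C}^\infty_{\mathrm{c}}$ is supported away from them, $\chi w^{-\delta}$ is bounded for every $\delta>0$. Thus
\begin{align*}
\chi(\hat{H}(s)-z)^{-1}\chi=(\chi w^{-\delta})\cdot w^\delta(\hat{H}(s)-z)^{-1}w^\delta\cdot(w^{-\delta}\chi),
\end{align*}
and the meromorphic extension is read off directly from Theorem \ref{No resonances in a strip near the real axis for s small}. Part 1 applied with $\delta$ arbitrarily close to (and below) $\kappa$ and $\delta'<\delta$ yields the meromorphic extension of $\chi p(z,s)^{-1}\chi$ to $\{\Im z>-\delta'\}$ for every $\delta'<\kappa$, hence to $\{\Im z>-\kappa\}$ by uniqueness of analytic continuation. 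Part 2, combined with Corollary \ref{Analyticity in C+} for the upper half-plane, yields analyticity in $\{\Im z>-\varepsilon_0\}$ provided $s$ is small enough. The target regularity $L^2\to H^2$ follows because $\mathscr{D}(p(z,s))=H^2$, multiplication by $\chi\in\mathcal{C}^\infty_{\mathrm{c}}$ preserves $H^2$, and this mapping property propagates to the extension by analyticity away from poles.

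\textbf{Step 3 (identification of poles and independence of $\chi$).} Expanding the matrix in \eqref{Resolvent of H} and undoing the conjugation by $\Phi(sV)$, every block of $\chi(\hat{H}(s)-z)^{-1}\chi$ is of the form $A(z)\,\chi p(z,s)^{-1}\chi\,B(z)+\textrm{holomorphic}$, where $A(z),B(z)$ are polynomials in $z$ with coefficients involving the bounded multiplication operator $sV$. Hence the poles of $\chi p(z,s)^{-1}\chi$ and of $\chi(\hat{H}(s)-z)^{-1}\chi$ coincide. Independence of the pole set from $\chi\not\equiv 0$ follows by the standard argument: for $\chi_1,\chi_2\not\equiv 0$, pick $\chi_3\in\mathcal{C}^\infty_{\mathrm{c}}$ equal to $1$ on $\mathrm{Supp\,}\chi_1\cup\mathrm{Supp\,}\chi_2$; the identities $\chi_j p^{-1}\chi_j=\chi_j\chi_3\, p^{-1}\chi_3\chi_j$ show that the poles of $\chi_j p^{-1}\chi_j$ are contained in those of $\chi_3 p^{-1}\chi_3$, and the converse inclusion follows from unique continuation applied to the residues, whose finite-rank range at a pole $z_0$ consists of non-trivial solutions of the second order elliptic equation $p_\ell(z_0,s)u=0$ that cannot vanish on the open set $\mathrm{Supp\,}\chi_j$.

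\textbf{Main obstacle.} The delicate point is Step 3, specifically the independence of the pole set from the cut-off, which rests on a unique continuation argument for the Schr\"odinger-type operator $p_\ell(z_0,s)$. The remaining steps are essentially bookkeeping from the matrix resolvent formula and the conjugation by $\Phi(sV)$ that reduces everything to Theorem \ref{No resonances in a strip near the real axis for s small}.
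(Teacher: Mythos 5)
Your Steps 1--2 follow the same route as the paper: conjugate $\hat{K}(s)$ to $\hat{H}(s)$ via $\Phi(sV)$, absorb the cut-off into the weight using that $\chi w^{-\delta}\in\mathcal{C}^\infty_{\mathrm{c}}$ (the paper additionally notes that $w$, expressed in the Regge--Wheeler coordinate, decays exponentially), and read off meromorphy to $\{\Im z>-\kappa\}$ from Part 1 of Theorem \ref{No resonances in a strip near the real axis for s small} by sending $\delta'\to\delta\to\kappa$. That portion is fine.

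The gap is in the claim of analyticity on the \emph{whole} half-plane $\{\Im z>-\varepsilon_0\}$. Part 2 of Theorem \ref{No resonances in a strip near the real axis for s small} is a statement about a fixed compact box: for each $R>R_0$ there is $s_1(R)>0$ such that for $|s|<s_1(R)$ the weighted resolvent is holomorphic on $\left]-R,R\right[+\mathrm{i}\left]-\varepsilon_0,\varepsilon_0\right[$. As $R\to+\infty$ the smallness threshold $s_1(R)$ may degenerate, so Part 2 together with Corollary \ref{Analyticity in C+} does not rule out an accumulation of resonances towards the real axis at $|\Re z|\to\infty$ for a fixed $s\neq 0$. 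The paper explicitly invokes Theorem \ref{Microlocalization of high frequency resonances} at this point: the localization of high-frequency resonances near the pseudo-pole lattice $\Gamma$ (whose imaginary parts are bounded away from $0$) is precisely what excludes such an accumulation and upgrades the boxwise analyticity to analyticity on $\{\Im z>-\varepsilon_0\}$. Your proposal omits this ingredient, and without it the second assertion of the corollary is unproved.

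A secondary remark on Step 3: the paper gets $\chi$-independence for free by identifying, for every nonzero cut-off, the poles of $\chi(\hat{K}(s)-z)^{-1}\chi$ with those of the single $\chi$-free object $w^\delta(\hat{K}(s)-z)^{-1}w^\delta$. Your unique-continuation detour is in principle viable but is more machinery than the problem requires, and as written it elides the usual care needed to pass from finite-rank residues of a matrix-valued resolvent to genuine solutions of $p_\ell(z_0,s)u=0$ on which unique continuation can be applied. I would recommend replacing Step 3 by the paper's weight-comparison argument, and, more importantly, adding the appeal to Theorem \ref{Microlocalization of high frequency resonances} to close the high-frequency gap in Step 2.
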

\begin{proof}
	Let $R>0$ and let $z\in\mathbb{C}$ with $-R\leq\Re z\leq R$ and $\Im z >-\kappa$. The meromorphic extension $w^{\delta}(\hat{H}(s)-z)^{-1}w^{\delta}:\dot{\mathscr{E}}\to\dot{\mathscr{E}}$ (with $0<\delta<\kappa$) entails the meromorphic extension $w^{\delta}(\hat{K}(s)-z)^{-1}w^{\delta}:\dot{\mathcal{E}}\to\dot{\mathcal{E}}$ since $\hat{H}(s)$ and $\hat{K}(s)$ are equivalent on $\dot{\mathcal{E}}$ modulo the isomorphism $\Phi(sV)$ introduced in Subsection \ref{Notations}. Since $w(x)$ is exponentially decaying by \eqref{Exp convergence of r}, we can write for any cut-off $\chi\in\mathcal{C}^{\infty}_{\mathrm{c}}\left(\mathbb{R},\mathbb{R}\right)$
	\begin{align*}
	\chi(\hat{K}(s)-z)^{-1}\chi&=(\chi w^{-\delta})w^{\delta}(\hat{K}(s)-z)^{-1}w^{\delta}(w^{-\delta}\chi):L^2(\mathbb{R},\mathrm{d}x)\to H^2(\mathbb{R},\mathrm{d}x)
	\end{align*}
	using that $\chi w^{-\delta}\in\mathcal{C}^{\infty}_{\mathrm{c}}\left(\mathbb{R},\mathbb{R}\right)$. In particular, if $\chi$ is not identically 0, the poles of $\chi(\hat{K}(s)-z)^{-1}\chi$ and $w^{\delta}(\hat{K}(s)-z)^{-1}w^{\delta}$ coincide.
	
	By formula \eqref{Resolvent of H} and the discussion below, we see that we can define the operator $\chi p(z,s)^{-1}\chi:L^2(\mathbb{R},\mathrm{d}x)\to H^2(\mathbb{R},\mathrm{d}x)$ for any $\chi\in\mathcal{C}^{\infty}_{\mathrm{c}}\left(\mathbb{R},\mathbb{R}\right)$ as a meromorphic function of $z$, and its poles are precisely the poles of $\chi(\hat{K}(s)-z)^{-1}\chi$.
	
	To conclude the proof, it remains to prove the analyticity in the whole strip $\big\{z\in\mathbb{C}\,\,\big\vert\,\,\Im z>-\varepsilon_0\big\}$ (which excludes a possible accumulation of resonances to $\mathbb{R}$ at infinity): this follows from Theorem \ref{Microlocalization of high frequency resonances} below.
\end{proof}
%
%
%
%
%
%
%
%
%
\section{Resonance expansion for the charged Klein-Gordon equation}
\label{Resonance expansion for the charged Klein-Gordon equation}
We present in this section the main result of this chapter which is an extension of \cite[Theorem 1.3]{BoHa08}. By using the formula \eqref{Relation between the domains of the quadratic pencil and the corresponding Hamiltonian} and \eqref{Resolvent of H} as well as (3.21) in \cite{GeoGerHA17} and the local equivalence \eqref{Local equivalence of norms} of the norms $\|.\|_{_{\dot{\mathcal{E}}_{\ell}}}$ and $\|.\|_{_{\mathcal{E}_{\ell}}}$ if $z\in\mathbb{R}$, we can define for $\Im z>-\kappa$ the meromorphic extension of the cut-off resolvent $\hat{R}_{\chi,\ell}(z):=\chi(\hat{K}_\ell-z)^{-1}\chi$. For all resonance $z_0\in\mathrm{Res}(p_{\ell})$, denote by $m(z_0)\in\mathbb{N}$ its multiplicity and set
\begin{align*}
	\Pi^{\chi,\ell}_{j,k}:=\frac{1}{2\pi\mathrm{i}}\oint_{\partial\gamma}\frac{(-\mathrm{i})^{k}}{k!}\hat{R}_{\chi,\ell}(z)(z-z_0)^{k}\mathrm{d}z
\end{align*}
defined for all integer $k\geq-(m(z_{0})+1)$ with $\gamma$ a small positively oriented circle enclosing $z_{0}$ and no other resonance. We will denote by $\hat{R}_{\chi}(z)$ and $\Pi^{\chi}_{j,k}$ the cut-off resolvent of the full operator $\hat{K}$ and the corresponding generalized projector, respectively. Recall that $\mathrm{Res}(p)$ is introduced in Corollary \ref{Resonances def}.

We first introduce the set of pseudo-poles of $P$ whose points approximate high frequency resonances. The proof is given in Appendix \ref{Microlocalization of high frequency resonances section}.
\begin{theorem}
	\label{Microlocalization of high frequency resonances}
	There exist $K>0$ and $\theta>0$ such that, for any
	$C>0$, there exists an injective map $\tilde{b}:\Gamma\to\mathrm{Res}(p)$ with
	\begin{align*}
	\Gamma&=\frac{\sqrt{F(\mathfrak{r})}}{\mathfrak{r}}\left(\pm\mathbb{N}\setminus\{0\}\pm\frac{1}{2}\pm\frac{qQ}{\sqrt{F(\mathfrak{r})}}-\frac{\mathrm{i}}{2}\sqrt{\left|3-\frac{12M}{\mathfrak{r}}+\frac{10Q^2}{\mathfrak{r}^2}\right|}\left(\mathbb{N}+\frac{1}{2}\right)\right)
	\end{align*}
	the set of pseudo-poles, such that all the poles in
	\begin{align*}
	\Omega_{C}&=\big\{\lambda\in\mathbb{C}\,\,\big\vert\,\,|\lambda|>K,\Im\lambda>-\max\{C,\theta|\Re\lambda|\}\big\}
	\end{align*}
	are in the image of $\tilde{b}$. Furthermore, if $\mu\in\Gamma$ and $\tilde{b}(\mu)\in\Omega_{C}$, then
	\begin{align*}
	\lim_{|\mu|\to+\infty}(\tilde{b}(\mu)-\mu)&=0.
	\end{align*}
	If $\Re\mu=\frac{\sqrt{F(\mathfrak{r})}}{\mathfrak{r}}\left(\pm\ell\pm\frac{1}{2}\pm\frac{qQ}{\sqrt{F(\mathfrak{r})}}\right)$ for $\ell\in\mathbb{N}\setminus\{0\}$, then the corresponding pole	$\tilde{b}(\mu)$ has multiplicity $2\ell+1$.
\end{theorem}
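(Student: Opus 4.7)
The plan is to adapt the Sjöstrand--Zworski semiclassical analysis of resonances at a hyperbolic trapped set, in the concrete form developed by Bony--H\"afner \cite{BoHa08} for the wave equation on De Sitter--Schwarzschild. The trapped set is the photon sphere $r=\mathfrak{r}$ (the point $x=0$), where $W_0$ attains a unique non-degenerate maximum. Compared with \cite{BoHa08} the new ingredients are the mass term in $W_1$ and, crucially, the $z$-linear perturbation $2zsV(x)$ coming from the quadratic pencil $p_\ell(z,s)=P_\ell-(z-sV)^{2}$, which turns the spectral problem into a genuine quadratic eigenvalue problem rather than a Schr\"odinger one.

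Fix $\ell\in\mathbb{N}$, set $h:=(\ell+\tfrac12)^{-1}$ and $\mu:=hz$. Using $\ell(\ell+1)=h^{-2}-\tfrac14$, the pencil rewrites as
\begin{align*}
h^{2}p_\ell(z,s)=-h^{2}\partial_x^{2}+W_0(x)-\bigl(\mu-hsV(x)\bigr)^{2}+h^{2}\bigl(W_1(x)-\tfrac14 W_0(x)\bigr),
\end{align*}
a semiclassical operator whose principal symbol $\xi^{2}+W_0(x)-\mu^{2}$ has, for $\mu$ near $\mu_0:=\sqrt{W_0(0)}=\sqrt{F(\mathfrak{r})}/\mathfrak{r}$, a single hyperbolic trapped orbit at $(x,\xi)=(0,0)$. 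Differentiating $W_0=F(r)/r^{2}$ twice using $\mathrm{d}x/\mathrm{d}r=F^{-1}$ and the photon-sphere relation $3M\mathfrak{r}-2Q^{2}=\mathfrak{r}^{2}$ expresses $W_0''(0)$ in closed form and gives rise to the combination $3-12M/\mathfrak{r}+10Q^{2}/\mathfrak{r}^{2}$ appearing in the statement.

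Then, following \cite{BoHa08}, apply an analytic complex distortion outside a neighbourhood of $x=0$ (justified by Proposition \ref{Analytic extension of r}) to turn the resonance problem for $h^{2}p_\ell$ into a discrete eigenvalue problem for a non-self-adjoint operator $\mathcal{P}_\theta(h,\mu)$, and conjugate $\mathcal{P}_\theta$ modulo $O(h^{\infty})$ to a semiclassical Birkhoff normal form at the trapped set,
\begin{align*}
\mathcal{P}_\theta^{\mathrm{nf}}(h,\mu)=F\bigl(hA,\mu,s;h\bigr),\qquad A:=\tfrac12\bigl(x(hD_x)+(hD_x)x\bigr),
\end{align*}
with $F$ analytic and leading part given by the barrier-top model $W_0(0)+\tfrac12 W_0''(0)x^{2}-(\mu-hsV(0))^{2}+\xi^{2}$. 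The spectrum of $hA$ in the distorted space is $\{-\mathrm{i}h(k+\tfrac12):k\in\mathbb{N}\}$, so the Sjöstrand--Zworski quantization condition $F(hA,\mu,s;h)=0$, combined with the explicit value of $W_0''(0)$, solves to
\begin{align*}
\mu=hsV(\mathfrak{r})\pm\mu_0\mp\tfrac{\mathrm{i}\mu_0}{2}\sqrt{\Bigl|3-\tfrac{12M}{\mathfrak{r}}+\tfrac{10Q^{2}}{\mathfrak{r}^{2}}\Bigr|}\,h(k+\tfrac12)+O(h^{2}).
\end{align*}
Undoing $z=\mu/h=(\ell+\tfrac12)\mu$ and noting that $sV(\mathfrak{r})=qQ/\mathfrak{r}=\mu_0\cdot qQ/\sqrt{F(\mathfrak{r})}$ reproduces exactly the pseudo-pole set $\Gamma$: the real shift $\pm\tfrac12$ and $qQ/\sqrt{F(\mathfrak{r})}$ come respectively from the half-integer in $\ell+\tfrac12$ and from the electromagnetic contribution, while the imaginary prefactor is the one derived from $W_0''(0)$. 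Injectivity of $\tilde{b}$ and coverage of all resonances in $\Omega_C$ are the standard uniqueness and completeness statements for the normal form, and the multiplicity $2\ell+1$ is automatic since $p=\bigoplus_\ell p_\ell$ acts on $L^{2}(\mathbb{R},\mathrm{d}x)\otimes Y_\ell$ with $\dim Y_\ell=2\ell+1$.

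The main obstacle is precisely the quadratic-pencil coupling: $\mu$ is both the spectral parameter sought and a coefficient inside the symbol via $2\mu hsV$. This makes the Birkhoff reduction, its principal coefficient and the final quantization condition depend nontrivially on $\mu$ (and jointly on $(\mu,s)$), so one must linearise around $\mu_0$ and control the dependence throughout. A secondary point is to ensure that the complex-scaling and Fredholm estimates remain uniform on the whole region $\Omega_C$, so that every high-frequency resonance is hit exactly once by $\tilde b$. Apart from these two technical issues, the argument runs closely parallel to \cite{BoHa08}; the positive mass enters only through the subprincipal term $W_1$ and does not affect the leading-order pseudo-pole formula.
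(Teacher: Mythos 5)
Your proposal follows essentially the same route as the paper: rescale semiclassically with $h\sim\ell^{-1}$, observe that the quadratic-pencil symbol has an $h$-independent principal part $\xi^{2}+W_0(x)$ and an $h$-linear subprincipal part $2\sqrt{W_0(0)}\,sV(x)$ once the spectral parameter is confined to an $O(h)$-disc around the barrier top, and then invoke the Sjöstrand barrier-top quantization condition (the paper quotes \cite[Thm.\ 0.1]{Sj87} directly, as in Sá Barreto--Zworski, rather than re-deriving the Birkhoff normal form, but this is the same machinery), finishing by taking the square root $z=\sqrt{\lambda}/h$ and a Taylor expansion to obtain $\Gamma$. The observation you flag as the main obstacle --- that the pencil couples $\mu$ into the symbol --- is resolved in exactly the way the paper does it, by noting that only the constant value $\sqrt{\lambda}\approx\sqrt{W_0(0)}$ survives to leading and subleading order; so your argument matches the paper's.
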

We can now state our main result (the proof is given in Subsection \ref{Proof of Theorem {Theoreme principal}}):
\begin{theorem}[Decay of the local energy]
\label{Theoreme principal}
	Let $\chi\in\mathcal{C}^{\infty}_{\mathrm{c}}\left(\mathbb{R},\mathbb{R}\right)$.
	\begin{itemize}
	\item [(i)] Let $\nu>0$ such that $\nu\notin\Gamma$ ($\Gamma$ is the set of pseudo-poles as in Theorem \ref{Microlocalization of high frequency resonances}), $\nu<\kappa$ and $\mathrm{Res}(p)\cap\{\lambda\in\mathbb{C}\mid\Im\lambda=-\nu\}=\emptyset$. There exists $N>0$ such that, for all $u\in\dot{\mathcal{E}}$ with $\langle-\Delta_{\mathbb{S}^2}\rangle^{N}u\in\dot{\mathcal{E}}$ and $s$ small enough, we have
	\begin{align}
	\label{Expansion du propagateur}
		\chi\mathrm{e}^{-\mathrm{i}t\hat{K}}\chi u&=\sum_{\substack{z_j\in\mathrm{Res}(p)\\\Im z_j>-\nu}}\sum_{k=0}^{m(z_j)}\mathrm{e}^{-\mathrm{i}z_jt}t^k\Pi_{j,k}^\chi u+E(t)u
	\end{align}
	for $t>0$ sufficiently large, with
	\begin{align*}
		\|E(t)u\|_{_{\dot{\mathcal{E}}}}\lesssim\mathrm{e}^{-\nu t}\|\langle-\Delta_{\mathbb{S}^2}\rangle^{N}u\|_{_{\dot{\mathcal{E}}}}
	\end{align*}
	and the sum is absolutely convergent in the sense that
	\begin{align*}
		\sum_{\substack{z_j\in\mathrm{Res}(p)\\\Im z_j>-\nu}}\sum_{k=0}^{m(z_j)}\|\Pi_{j,k}^\chi\langle-\Delta_{\mathbb{S}^2}\rangle^{-N}\|_{_{\dot{\mathcal{E}}\to\dot{\mathcal{E}}}}<+\infty.
	\end{align*}
	\item [(ii)] There exists $\varepsilon>0$ such that, for any increasing positive function $g$ with $\lim_{x\to+\infty}g(x)=+\infty$ and $g(x)\leq x$ for $x\gg0$, for all $u\in\dot{\mathcal{E}}$ with $g(-\Delta_{\mathbb{S}^{2}})u\in\dot{\mathcal{E}}$ and $s$ small enough, we have
	\begin{align*}
		\|\chi\mathrm{e}^{-\mathrm{i}t\hat{K}}\chi u\|_{\dot{\mathcal{E}}}&\lesssim (g(\mathrm{e}^{\varepsilon t}))^{-1}\|g(-\Delta_{\mathbb{S}^{2}})u\|_{\dot{\mathcal{E}}}
	\end{align*}
	for $t>0$ sufficiently large.
	\end{itemize}
\end{theorem}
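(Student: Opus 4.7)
The strategy is the contour-deformation method pioneered by Bony--H\"afner \cite{BoHa08}, now implementable in the charged setting thanks to the meromorphic extension of the cut-off resolvent $\hat{R}_{\chi,\ell}(z)$ established in Corollary \ref{Resonances def}. Since by Corollary \ref{Analyticity in C+} the spectrum of $\hat{K}(s)$ lies on $\mathbb{R}$ for $s$ small, on a dense subset of $\dot{\mathcal{E}}$ one has the Dunford--Laplace representation
\begin{equation*}
	\chi\mathrm{e}^{-\mathrm{i}t\hat{K}}\chi u=\frac{1}{2\pi\mathrm{i}}\int_{\Im z=\nu_0}\mathrm{e}^{-\mathrm{i}zt}\chi(\hat{K}(s)-z)^{-1}\chi u\,\mathrm{d}z
\end{equation*}
for any $\nu_0>0$; I would decompose $u=\sum_{\ell\in\mathbb{N}}u_\ell$ along the spherical harmonic splitting of Subsection \ref{The charge Klein-Gordon operator} and work with $\hat{R}_{\chi,\ell}$ component by component, then reassemble.

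The core step is to push the horizontal contour from $\Im z=\nu_0$ down to $\Im z=-\nu$, which is legitimate by Theorem \ref{No resonances in a strip near the real axis for s small} since $\nu<\kappa$ lies inside the meromorphy strip. The residue theorem collects exactly the resonances $z_j\in\mathrm{Res}(p)\cap\{\Im\lambda>-\nu\}$, contributing for each $z_j$ the Laurent tail $\sum_{k=0}^{m(z_j)}\mathrm{e}^{-\mathrm{i}z_jt}t^k\Pi^\chi_{j,k}u$, which is the explicit sum in \eqref{Expansion du propagateur}. The two vertical segments at $\Re z=\pm R$ are sent to infinity using the high-frequency resolvent estimates of Section \ref{Estimates for the cut-off inverse of the quadratic pencil}; the hypotheses $\nu\notin\Gamma$ and $\mathrm{Res}(p)\cap\{\Im\lambda=-\nu\}=\emptyset$, combined with the pseudo-pole localization of Theorem \ref{Microlocalization of high frequency resonances}, guarantee that $\{\Im z=-\nu\}$ lies in a resonance-free corridor along which $\|\hat{R}_{\chi,\ell}(z)\|$ grows at most polynomially in $|\Re z|$ and $\ell$. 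What remains is the horizontal integral on $\Im z=-\nu$, which becomes the error $E(t)u$; factoring out $\mathrm{e}^{-\nu t}$ from the exponential, the integrand is controlled in $\dot{\mathcal{E}}$ provided one weights by a large enough power of $\langle-\Delta_{\mathbb{S}^2}\rangle$, the loss coming from trapping at the photon sphere $r=\mathfrak{r}$ (the same loss as in \cite{BoHa08}). Combined with the quadratic Weyl-type density of resonances supplied by Theorem \ref{Microlocalization of high frequency resonances}, this yields both the bound on $E(t)$ and the absolute convergence of the projector series.

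For Part (ii) I would interpolate between Part (i), applied to a spectral cut-off $u^{\mathrm{low}}=\mathds{1}_{[0,L]}(-\Delta_{\mathbb{S}^2})u$ at fixed $\nu$ and $N$, and a bound on the high-frequency tail via the spectral calculus $\|u^{\mathrm{high}}\|_{\dot{\mathcal{E}}}\leq g(L)^{-1}\|g(-\Delta_{\mathbb{S}^2})u\|_{\dot{\mathcal{E}}}$; choosing $L=L(t)$ so that the two contributions are of the same order produces $L\sim\mathrm{e}^{\varepsilon t}$ and the advertised decay. The principal difficulty throughout is the quantitative high-frequency resolvent bound along $\Im z=-\nu$ with controlled growth in $|\Re z|$ and in $\ell$: one must simultaneously cope with trapping, handled by the semiclassical normal-form analysis of Section \ref{Estimates for the cut-off inverse of the quadratic pencil}, and with superradiance, circumvented by the reduction to the asymptotic Hamiltonians $\hat{H}_\pm(s)$ and the Fredholm parametrix \eqref{Parametrix}. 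A secondary technical point is to justify the initial contour representation on the superradiant space $\dot{\mathcal{E}}$ rather than $\mathcal{E}$; this uses the local equivalence \eqref{Local equivalence of norms} and the fact that the cut-off $\chi$ localizes the problem to a region where both norms coincide up to constants.
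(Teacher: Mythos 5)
Your plan follows the paper's approach in outline: spherical harmonic decomposition, Laplace representation along $\{\Im z=\nu_0\}$, deformation of the contour into $\mathbb{C}^-$, residue theorem picking up $\mathrm{Res}(p)\cap\{\Im z>-\nu\}$, bounds from the four-zone analysis of Section \ref{Estimates for the cut-off inverse of the quadratic pencil}, and a low/high angular-frequency splitting for part (ii). However, you gloss over two steps that are the real technical content of Section \ref{Proof of Theorem {Theoreme principal}} and do not come for free from "following \cite{BoHa08}."

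First, the estimates of Theorem \ref{Main results in the four zones for the harmonic quadratic pencil} are $L^2\to L^2$ bounds on $\chi p_\ell(z,s)^{-1}\chi$, whereas the contour integral representing $\chi\mathrm{e}^{-\mathrm{i}t\hat{K}_\ell}\chi$ lives in the energy space $\dot{\mathcal{E}}_\ell$. Translating one into the other is precisely Proposition \ref{Estimee pour la resolvente tronquee de Lell dans Emodell}, which gives $\|\hat{R}_{\chi,\ell}(z)\|_{\dot{\mathcal{E}}_\ell\to\dot{\mathcal{E}}_\ell}\lesssim\langle z\rangle\|\tilde\chi p_\ell(z,s)^{-1}\tilde\chi\|_{L^2\to L^2}$ uniformly in $\ell$; it is proved by complex interpolation in the $P_\ell^\theta$-scale and uses the specific $2\times2$ block form \eqref{Resolvent of H}. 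This uniform $\ell$-free passage is the main adaptation beyond the non-superradiant case, and your sketch does not name it. Second, even after that, the raw cut-off resolvent on a horizontal line $\Im z=-\nu$ in the superradiant case is only $\mathcal{O}(1)$ as $|\Re z|\to\infty$ (zone IV gives $\|\chi p_\ell^{-1}\chi\|\lesssim\langle z\rangle^{-1}\mathrm{e}^{C|\Im z|}$, and the extra $\langle z\rangle$ from Proposition \ref{Estimee pour la resolvente tronquee de Lell dans Emodell} kills the decay), so the Laplace-inversion integral you wrote down is not absolutely convergent in $\dot{\mathcal{E}}_\ell$. The paper fixes this with \cite[Lem. 3.2]{BoHa08} (recalled as Lemma \ref{Lemma 3.2 in BoHa}): subtract off the pole terms $\sum_j B_j/(z-\mathrm{i}(\nu+1))^{j+1}$ to form $\tilde R_{\chi,\ell}$, which gains $\langle z\rangle^{-2}$ at the cost of landing in the weaker space $\dot{\mathcal{E}}_\ell^{-2}$, justify the contour deformation there, show the $B_j$-contributions vanish by the residue-free integrals \eqref{Null integral}--\eqref{Null integral 2}, and only then read off the answer in $\dot{\mathcal{E}}_\ell$. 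Without this device your "remaining horizontal integral" is undefined. (The particular contour in Figure \ref{Integration contour expansion}, which dips logarithmically in zone IV as $K\to\infty$, is a further implementation detail used to produce the factor $\ell^{C+1}$ from which $N=(C+1)/2$ is extracted.) The part (ii) argument by splitting $\ell\lessgtr\mathrm{e}^{\varepsilon' t}$ matches the paper.
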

\begin{remark}
	\begin{enumerate}
	\item Formula \eqref{Expansion du propagateur} provides a physical interpretation of resonances: they are the frequencies and dumping rates of charged Klein-Gordon field in presence of the charged black hole (see Chapter 4.35 in \cite{Ch92} for a discussion on the interpretation of resonances).
	\item Part (ii) of Theorem \ref{Theoreme principal} shows that a logarithmic derivative loss in the angular direction $(\ln\langle-\Delta_{\mathbb{S}^{2}}\rangle)^{\alpha}u\in\dot{\mathcal{E}}$ with $\alpha>1$ entails the integrability of the local energy:
	\begin{align*}
		\left\|\int_{0}^{+\infty}\chi\mathrm{e}^{-\mathrm{i}t(\hat{K}-z)}\chi\,u\,\mathrm{d}t\right\|_{_{\dot{\mathcal{E}}}}&\lesssim\|(\ln\langle-\Delta_{\mathbb{S}^{2}}\rangle)^\alpha u\|_{_{\dot{\mathcal{E}}}}.
	\end{align*}
	\item In the limits $9\Lambda M^2\to 1^-$ and $Q\to 0$, the expansion in part (i) of Theorem \ref{Theoreme principal} is not empty has (infinitely many) pseudo-poles of $\Gamma$ (introduced in Theorem \ref{Microlocalization of high frequency resonances}) lie in the strip $\big\{z\in\mathbb{C}\mid\Im(z)>-\kappa\big\}$. To see this, it suffices to consider the case $Q=0$. Then $\mathfrak{r}=3M$ and
	\begin{align*}
		\min\big\{|\Im\lambda|\mid\lambda\in\Gamma\big\}=\frac{\sqrt{F(\mathfrak{r})}}{4\mathfrak{r}}\sqrt{\left|3-\frac{12M}{\mathfrak{r}}\right|}=\frac{\sqrt{1-9\Lambda M^2}}{12\sqrt{3}M}.
	\end{align*}
	We show that
	\begin{align}
	\label{eq res exp}
	\frac{\sqrt{1-9\Lambda M^2}}{12\sqrt{3}M}&<|\kappa_\pm|.
	\end{align}
	Observe that for $Q=0$, we have
	\begin{align*}
	F'(r)&=\frac{2M}{r^2}-\frac{2\Lambda r}{3}=\frac{1-F(r)}{r}-\Lambda r
	\end{align*}
	so that
	\begin{align*}
	F'(r_\pm)&=\frac{1}{r_\pm}-\Lambda r_\pm.
	\end{align*}
	Thus \eqref{eq res exp} becomes
	\begin{align}
	\label{1}
	\frac{\sqrt{1-9\Lambda M^2}}{6\sqrt{3}M}&<\frac{|1-\Lambda r_\pm^2|}{r_\pm}.
	\end{align}
	Set $\alpha:=3\sqrt{\Lambda}M<1$. The footnote page 6 in \cite{SaZw97} shows that
	\begin{align}
	\label{roots}
	r_\pm&=\frac{1}{\sqrt{\Lambda}}\,\Im\!\left(\left(\mp\sqrt{1-\alpha^2}+\mathrm{i}\alpha\right)^{1/3}\right).
	\end{align}
	As $\mp\sqrt{1-\alpha^2}+\mathrm{i}\alpha$ has modulus one, we can write $r_\pm=\frac{\sin\theta_\pm}{\sqrt{\Lambda}}$ for some $\theta_\pm\in\left]0,\pi\right[$ (the roots are positive) and thus \eqref{1} reads
	\begin{align*}
	\frac{\sqrt{1-\alpha^2}}{6\sqrt{3}M}&<\frac{\cos^2\theta\pm}{\sin\theta_\pm}\sqrt{\Lambda}.
	\end{align*}
	We eventually show that
	\begin{align*}
	\frac{\sqrt{1-\alpha^2}}{2\sqrt{3}\alpha}&<\frac{\cos^2\theta\pm}{\sin\theta_\pm}.
	\end{align*}
	When $\alpha\to 1^-$, the left-hand side above goes to 0 whereas the right-hand side remains positive: this last assertion can be checked in \eqref{roots} as then
	\begin{align*}
	\Im\!\left(\left(\mp\sqrt{1-\alpha^2}+\mathrm{i}\alpha\right)^{1/3}\right)&=\Im\!\left(\mathrm{i}^{1/3}\right)=\Im\!\left(\mathrm{e}^{\mathrm{i}\pi/6}\right)=\frac{1}{2}\neq 0.
	\end{align*}
	%
	%
	%
	%
	%
	%
	%
	%
	\end{enumerate}
\end{remark}
%
%
%
%
%
%
%
%
%
%
\section{Estimates for the cut-off inverse of the quadratic pencil}
\label{Estimates for the cut-off inverse of the quadratic pencil}
In this section, we show some estimates on the cut-off inverse of the quadratic pencil. We can work with $\ell\in\mathbb{N}$ fixed but our estimates have to be uniform in $\ell$. Since $\big(\chi p(-\bar{z}+2sV,s)\chi\big)^*=\chi p(z,s)\chi$, we can restrict ourselves to consider $z\in\mathbb{C}$ with $\Re z>-2s_0\|V\|_{_{L^\infty}}$ for some fixed $s_0>0$ such that $0<|s|<s_0$. In the following, we are simply denoting by $L^2$ the space $L^2(\mathbb{R},\mathrm{d}x)$. For some real numbers $R,C_{0},C_{1}>0$ (determined by Theorem \ref{Main results in the four zones for the harmonic quadratic pencil} below), we define the
\begin{itemize}
	\item zone I as $\left[-R,R\right]+\mathrm{i}\left[-C_{0},C_{0}\right]$,
	\item zone II as $\left[R,\ell/R\right]+\mathrm{i}\left[-C_{0},C_{0}\right]$,
	\item zone III as $\left[\ell/R,R\ell \right]+\mathrm{i}\left[-C_{0},C_{0}\right]$,
	\item zone IV as $\left(\left[R\ell ,+\infty\right[+\mathrm{i}\left]-\infty,C_{0}\right]\right)\cap\left\{\lambda\in\mathbb{C}\mid \Im\lambda\geq-C_{0}-C_{1}\ln\langle\lambda\rangle\right\}\cap\Omega_{\kappa}$
\end{itemize}
with $\Omega_{\kappa}:=\big\{\omega\in\mathbb{C}\mid\Im\omega>-\kappa\big\}$ (recall that $\kappa:=\min\{\kappa_{-},|\kappa_{+}|\}$).

\begin{figure}[!h]
	\centering
	\captionsetup{justification=centering,margin=1.8cm}
	\begin{center}
		\begin{tikzpicture}[scale=1.0]
		\fill[gray!15](7.8,3.8)--(7.8,0)--plot[domain=-3.937:-3.2]({sqrt(exp(-(0.1+\x))-1-\x^2)},4+\x)--(5.59,3.8)--cycle;
		\filldraw[fill=gray!15,dashed](0.2,3.8)--(1.8,3.8)--(1.8,2.2)--(0.2,2.2)--cycle;
		\filldraw[fill=gray!15,dashed](1.8,3.8)--(3.2,3.8)--(3.2,2.2)--(1.8,2.2)--cycle;
		\filldraw[fill=gray!15,dashed](3.2,3.8)--(5.6,3.8)--(5.6,2.2)--(3.2,2.2)--cycle;
		\draw[dashed](5.6,2.2)--(5.6,0.81);
		\draw[dashed](5.6,3.8)--(7.8,3.8);
		\draw[dotted][domain=-3.2:-2.49]plot({sqrt(exp(-(0.1+\x))-1-\x^2)},4+\x);
		\draw[dashed][domain=-3.937:-3.2]plot({sqrt(exp(-(0.1+\x))-1-\x^2)},4+\x);
		\draw(3.75,0.5)node[]{$\Im\lambda=-C_{0}-C_{1}\ln\left\langle\lambda\right\rangle$};
		\draw[->][thick](1,0)--(1,4)node[above]{$\mathrm{i}\mathbb{R}$};
		\draw[->][thick](-0.5,3)--(8,3)node[right]{$\mathbb{R}$};
		\draw(1,3)node[anchor=north west]{$0$};
		
		\draw[thick](0.9,2.2)--(1.1,2.2);
		\draw(1,2.2)node[anchor=north east]{$-\mathrm{i}C_{0}$};
		\draw[thick](0.2,3.1)--(0.2,2.9);
		\draw(0.2,3)node[anchor=north east]{$-R$};
		\draw[thick](1.8,3.1)--(1.8,2.9);
		\draw(1.8,3)node[anchor=north west]{$R$};
		\draw[thick](3.2,3.1)--(3.2,2.9);
		\draw(3.2,3)node[anchor=north west]{$\ell/R$};
		\draw[thick](5.6,3.1)--(5.6,2.9);
		\draw(5.6,3)node[anchor=north west]{$R\ell$};
		
		\draw(0.5,3.5)--(-0.6,4)node[fill=white]{zone I};
		\draw(2.5,3.4)node{zone II};
		\draw(4.5,3.4)node{zone III};
		\draw(6.7,3.4)node{zone IV};
		\end{tikzpicture}
		\caption{\label{The different zones}The four zones.}
	\end{center}
\end{figure}
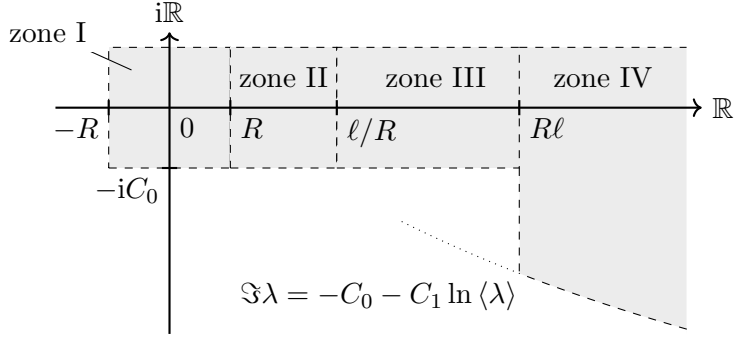
\noindent We quote here all the estimates that we are going to show in this Section in the following theorem (which is similar to \cite[Thm. 2.1]{BoHa08}):

\begin{theorem}
	\label{Main results in the four zones for the harmonic quadratic pencil}
	Let $\chi\in\mathcal{C}_{\mathrm{c}}^{\infty}\left(\mathbb{R},\mathbb{R}\right)$, $s\in\mathbb{R}$ and $\varepsilon,\,\Omega_{\kappa}$ as above. If $s$ is small enough, then the following estimates hold uniformly in $\ell\in\mathbb{N}$:
	\begin{enumerate}
		\item For all $R>0$, $C>0$ and $0<C_0<\varepsilon$, $\mathrm{Res}(p)\cap\left(\left[-R,R\right]+\mathrm{i}\left[-C_{0},C\right]\right)=\emptyset$ and the operator
		\begin{align}
		\chi p\left(z,s\right)^{-1}\chi:L^{2}\to L^{2}
		\end{align}
		exists and is bounded uniformly in $z\in\left[-R,R\right]+\mathrm{i}\left[-C_{0},C\right]$. Moreover, we have
		\begin{align}
		\label{Other bound in the zone I}
		\|\chi p_{\ell}\left(z,s\right)^{-1}\chi\|_{_{L^2\to L^2}}&\leq	\|\chi p\left(z,s\right)^{-1}\chi\|_{_{L^2\to L^2}}\lesssim\prod_{\substack{z_j\in\mathrm{Res}(p)\\|z_j|<2R}}\frac{1}{|z-z_j|}.
		\end{align}
		\item There exist $R>0$ and $0<C_ {0}<\varepsilon$ such that there is no resonance in $\left[R,\ell/R\right]+\mathrm{i}\left[-C_{0},C_{0}\right]$. Furthermore, for all $z\in\left[R,\ell/R\right]+\mathrm{i}\left[-C_{0},C_{0}\right]$, we have
		\begin{align}
		\|\chi p_\ell(z,s)^{-1}\chi\|_{_{L^{2}\to L^{2}}}&\lesssim\frac{1}{\langle z\rangle^{2}}.
		\end{align}
		\item Let $R>0$ and $0<C_0<\varepsilon$ be fixed and suppose that $\ell\gg0$. The number of resonances of $p_\ell$ in $[\ell/R,R\ell]+\mathrm{i}[-C_0,C_0]$ is bounded uniformly in $\ell$ and there exists $C>0$ such that, for all $z\in[\ell/R,R\ell]+\mathrm{i}[-C_0,C_0]$,
		\begin{align}
		\label{final bound in zone III 001}
		\|\chi p_\ell(z,s)^{-1}\chi\|_{_{L^2\to L^2}}&\lesssim\langle z\rangle^C\prod_{\substack{z_j\in\mathrm{Res}(p_\ell)\\|z-z_j|<1}}\frac{1}{|z-z_j|}.
		\end{align}
		Furthermore, there exists $\varepsilon>0$ such that there is no resonance in $[\ell/R,R\ell]+\mathrm{i}[-\varepsilon,0]$ and we have for all $z\in[\ell/R,R\ell]+\mathrm{i}[-\varepsilon,0]$
		\begin{align}
		\label{final bound in zone III 002}
		\|\chi p_\ell(z,s)^{-1}\chi\|_{_{L^2\to L^2}}&\lesssim\frac{\ln\langle z\rangle}{\langle z\rangle}\,\mathrm{e}^{|\Im z|\ln\langle z\rangle}.
		\end{align}
		\item Let $R\gg0$, $C_0>0$ and $C_1>0$. Set
		\begin{align*}
		\tilde{\Omega}_{\ell}&:=\big([R\ell ,+\infty[\,+\,\mathrm{i}\,]-\infty,C_0]\big)\cap\big\{\lambda\in\mathbb{C}\mid\Im\lambda\geq-C_0-C_1\ln\langle\lambda\rangle\big\}\cap\Omega_{\kappa}.
		\end{align*}
		There is no resonance in $\tilde{\Omega}_{\ell}$ and there exists $C>0$ such that for all $z$ in this set,
		\begin{align}
		\label{Estimate on the cut-off inverse of the quadratic pencil in the zone IV formula}
		\|\chi p_\ell(z,s)^{-1}\chi\|_{_{L^2\to L^2}}&\leq C\langle z\rangle^{-1}\mathrm{e}^{C|\Im z|}.
		\end{align}
	\end{enumerate}
\end{theorem}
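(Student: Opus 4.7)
The plan, following closely the strategy of Bony--H\"afner \cite{BoHa08}, is to treat the four zones by four different techniques, driven by the relative sizes of the semiclassical parameter $h := \langle \Re z \rangle^{-1}$ and the angular momentum $\ell$: in zone~II one has $h\ell \ll 1$ (the centrifugal potential is subleading), in zone~III $h\ell \sim 1$ (trapping at the photon sphere dominates), and in zone~IV $h\ell \gg 1$ (genuine ellipticity). Uniformity in $s$ will always be obtained by perturbation from the selfadjoint case $s=0$, using that $(z-sV)^2 = z^2 - 2szV + s^2V^2$ adds only lower-order terms to~$P_\ell$ for $|s|<s_0$.

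For zone~I, the absence of resonances in $[-R,R]+\mathrm{i}[-C_0,\varepsilon_0[$ and boundedness of $\chi p(z,s)^{-1}\chi$ on~$L^2$ uniformly on compacts is already contained in Part~2 of Theorem~\ref{No resonances in a strip near the real axis for s small}, in formula \eqref{Representation formula 2}, and in Corollary~\ref{Analyticity in C+} (possibly shrinking $C_0 < \varepsilon_0$ and $s$). To upgrade this to \eqref{Other bound in the zone I} I would multiply by $\prod_{|z_j|<2R}(z-z_j)$, producing a holomorphic operator-valued function on a slightly larger box, and apply a Phragm\'en-Lindel\"of / maximum modulus argument; the boundary estimate in $\mathbb{C}^+$ comes from the self-adjoint-type bound $\|(\hat H(s)-z)^{-1}\| \lesssim (\Im z)^{-1}$ away from $\mathbb{R}$ (Corollary~\ref{Analyticity in C+}), while the bottom and sides are controlled by the compactness of the meromorphic family. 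For zone~II, I would pass to the rescaled operator $h^2 p_\ell(z,s) = -(h\partial_x)^2 + h^2\ell(\ell+1)W_0 + h^2 W_1 - (hz - shV)^2$ with $h=|z|^{-1}$; since $h\ell \leq R^{-1}$, the potential term $h^2\ell(\ell+1)W_0 + h^2 W_1$ is $O(R^{-1})$ uniformly and a direct Neumann series inversion around the free resolvent of $-(h\partial_x)^2 - (hz)^2$ yields both the nonresonance statement and a bound of order $h^2 = O(\langle z\rangle^{-2})$.

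Zone~III is the hard one, and the main obstacle of the proof. The centrifugal potential $h^2\ell(\ell+1)W_0(x)$ reaches its nondegenerate maximum at $x=0$ (the photon sphere of Subsection~\ref{The Regge-Wheeler coordinate}), producing a single hyperbolic trapped trajectory for the semiclassical principal symbol at energy $\sim 1$. I would import the complex-scaling and Grushin-reduction machinery from \cite{BoHa08}: conjugation by an exponential/analytic weight rotates the contour in $x$ outside a neighbourhood of the trapping region, turning $p_\ell(z,s)$ into a Fredholm operator whose essential spectrum is pushed into $\mathbb{C}^-$; a Grushin problem localised at the trapped set then encodes invertibility in a finite-rank effective matrix $E_{-+}(z,h)$. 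The hyperbolicity of the flow at $x=0$ (with expansion rate controlled by $-\partial_x^2 W_0(0)>0$) gives, via Jensen's formula applied to $\det E_{-+}(z,h)$, both the uniform counting of resonances and the product estimate \eqref{final bound in zone III 001}; the refined estimate \eqref{final bound in zone III 002} just below the real axis comes from tracking the WKB phase through the trapped set, which is the origin of the logarithmic loss $\mathrm{e}^{|\Im z|\ln\langle z\rangle}$. The new difficulty compared to \cite{BoHa08} is that $p_\ell(z,s)$ is nonselfadjoint for $s\neq 0$: however, the trapped set, its expansion rate and the subprincipal symbol all depend analytically on $s$, so for $|s|<s_0$ small enough all the dynamical data remain $O(s)$-close to the $s=0$ situation and the Bony--H\"afner estimates persist with harmless additive error.

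For zone~IV, the operator $p_\ell(z,s)$ is genuinely elliptic since $h^2\ell(\ell+1)W_0 \gtrsim 1 \gg |hz|^2$ when $|z| \geq R\ell$ with $R$ large. A positive-commutator/integration-by-parts argument, dressed with the exponential weight $w^\delta$ used throughout Section~\ref{Meromorphic extension and resonances}, proves simultaneously absence of resonances in the logarithmic domain $\tilde\Omega_\ell$ and the bound \eqref{Estimate on the cut-off inverse of the quadratic pencil in the zone IV formula}, the factor $\mathrm{e}^{C|\Im z|}$ being the exponential cost of the contour deformation. Technically this is an application of the semiclassical limiting absorption principle proved in the paper's appendix to the generalised resolvent of the quadratic pencil. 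Thus the bulk of the technical work sits in Zone~III, where the hyperbolic trapping at the photon sphere is precisely the mechanism responsible for the single loss of angular regularity in the main Theorem~\ref{Theoreme principal}.
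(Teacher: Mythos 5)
Your overall four-zone plan matches the paper's, and your treatments of zones~I and~III are broadly in line with what is done (zone~I is essentially read off from Theorem~\ref{No resonances in a strip near the real axis for s small}, and zone~III does go through a Grushin reduction near the hyperbolic trapped orbit, adapted from Bony--Michel and Sj\"ostrand, plus a semiclassical maximum-principle argument --- Burq's Lemma, not WKB phase tracking --- to pass from \eqref{final bound in zone III 001} to \eqref{final bound in zone III 002}). However, your description of the semiclassical regimes in zones~II and~IV is inverted, and in zone~II this leads you to propose an argument that would not work.

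In zone~II one has $z\in[R,\ell/R]$ and $h=|z|^{-1}\in[R/\ell,1/R]$, hence $h\ell\in[R,\ell/R]$ is \emph{large}, not $\ll 1$. Consequently the centrifugal term $h^{2}\ell(\ell+1)W_{0}=\alpha^{2}W_{0}$ with $\alpha=h\sqrt{\ell(\ell+1)}\gtrsim R$ dominates the normalized energy $\lambda=h^{2}z^{2}\in[1/4,4]$: the energy sits \emph{below} the top of a large centrifugal barrier, and the relevant classical flow at energy $\lambda$ is nontrapping. A Neumann series around the free resolvent of $-(h\partial_{x})^{2}-(hz)^{2}$ cannot absorb a potential of size $\alpha^{2}\gg 1$; it simply diverges. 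The paper instead performs complex scaling on the $\alpha$-dependent contour $\Gamma_{\theta}$, uses the Zworski symbol estimate $|q_{h,\theta}(x,\xi)-\lambda|\gtrsim\theta(\langle\xi\rangle^{2}+\mathrm{e}^{\kappa_{\pm}x}\langle\alpha\rangle^{2})$ and inverts $Q_{h,\theta}-\lambda$ directly, with only the genuinely small remainder $R_{h,\theta}(\lambda)=\mathcal{O}(h)$ treated perturbatively. This complex-scaling step is the content of Subsection~\ref{Estimates in the zone II} and cannot be replaced by your Neumann argument.

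Symmetrically, in zone~IV the semiclassical parameters $\mu=\ell(\ell+1)h^{2}$ and $\nu=h^{2}$ are \emph{small}, so the potential $\mu W_{0}+\nu W_{1}$ is a small perturbation and $\lambda=h^{2}z^{2}\in[1,2]$ is \emph{not} dominated by it; your inequality $h^{2}\ell(\ell+1)W_{0}\gtrsim 1\gg|hz|^{2}$ is exactly backwards. There is no ellipticity; the regime is nontrapping with a small potential, which is why the paper uses the abstract semiclassical limiting absorption principle of Appendix~\ref{Limiting absorption principle for the quadratic pencil} (checking the Mourre estimate with $\mathcal{A}$ the localized generator of dilations) together with Martinez's resonance-free estimates to extend below $\mathbb{R}$ and a subharmonicity/maximum-principle argument to reach the logarithmic region $\tilde\Omega_{\ell}$. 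You do ultimately name the correct tool (the LAP), so this half of the confusion would likely self-correct once the scaling is fixed, but as written the motivation given for zone~IV would mislead the reader about what is being proved.
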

\begin{remark}
	High frequency resonances of the zone III (i.e. resonances whose real part are of order $\ell\gg0$) are localized in Theorem \ref{Microlocalization of high frequency resonances}.
\end{remark}
The announced estimate in the zone I is a direct application of results of Section \ref{Meromorphic extension and resonances} (see Theorem \ref{No resonances in a strip near the real axis for s small}). We thus show the estimates for the zones II, III and IV.
%
%
\subsection{Estimates in the zone II}
\label{Estimates in the zone II}
%
We prove part 2. of Theorem \ref{Main results in the four zones for the harmonic quadratic pencil} using the complex scaling introduced in \cite[§4]{Zworski_paper}. Observe that the zone II does not exist if $\ell=0$, so that we can assume that $\ell\geq1$. Let $z\in\left[R,\ell/R\right]+\mathrm{i}\left[-C_{0},C_{0}\right]$ and choose $N\in\left[R,\ell/R\right]$ such that $z\in[N,2N]+\mathrm{i}\left[-C_{0},C_{0}\right]$. We introduce the semiclassical parameter
\begin{align*}
h&:=N^{-1}
\end{align*}
and the new spectral parameter
\begin{align*}
\label{New spectral parameter in zone II}
\lambda&:=h^2z^2\in\left[1/4,4\right]+\mathrm{i}\left[-4C_0h,4C_0h\right].
\end{align*}
In this setting, we define the operator
\begin{equation}
\label{Definition of tilde p h 001}
\tilde{p}_{h}(\sqrt{\lambda},s):=h^{2}p_{\ell}\left(z,s\right)=\underbrace{-h^{2}\partial_{x}^{2}+\alpha^2W_{0}\left(x\right)}_{=:Q_{h}}-\lambda+\underbrace{h^{2}W_{1}\left(x\right)+2h\sqrt{\lambda} sV\left(x\right)-h^{2}s^{2}V\left(x\right)^{2}}_{=:R_{h}\left(\lambda\right)}
\end{equation}
where $\alpha:=h(\ell(\ell+1))^{1/2}\gg2\mathscr{A}>0$, $\mathscr{A}$ as in Proposition \ref{Analytic extension of r}.

We now use the ($\alpha$-dependent) contour $\Gamma_{\theta}:=\Gamma_{\theta}^-\cup\Gamma_{\theta}^+$ for $0<\theta<\pi/2$, with\footnote{The factor $1/\kappa_\pm$ in the second argument of $f^\pm_\theta$ comes from the fact that $\kappa_{\pm} x$ corresponds to Zworski's variable $r$.}
\begin{align*}
\Gamma_{\theta}^\pm&:=\left\{x+\mathrm{i}f_{\theta}^\pm(x,\ln(g_\infty^\pm)/\kappa_\pm)\mid x\in\mathbb{R}_\pm\right\}
\end{align*}
where (using estimate \eqref{Exp convergence of r} for $W_0$)
\begin{align*}
	g_\infty^\pm&:=\lim_{x\to\pm\infty}\mathrm{e}^{2\kappa_\pm x}W_0(x)
\end{align*}
and
\begin{align*}
	f^{\pm}_\theta(x,\beta)&:=\begin{cases}
	0&\text{if $|x|\leq\beta/2-C_1$}\\
	\theta(x-\beta/2)&\text{if $|x|\geq\beta/2+C_2$}
	\end{cases}
\end{align*}
with constants $C_1,C_2>0$ as in (4.4) in \cite[§4]{Zworski_paper} (see Figure \ref{The contour Gamma_theta1} for the behaviour of $\Gamma_{\theta}$). Define next $L^{2}(\Gamma_{\theta})$ and $H^{2}(\Gamma_{\theta})$ as the associated Lebesgue and Sobolev spaces. Using the analytic extension of $x\mapsto r(x)$ on the set $\Sigma:=\left\{\eta\in\mathbb{C}\mid\left|\Re\eta\right|>\mathscr{A}\right\}$, we extend $V$, $W_{0}$ and $W_{1}$ on $\Sigma$ (and still denote them $V$, $W_{0}$ and $W_{1}$). We then define the distorted operators
\begin{align*}
\tilde{p}_{h,\theta}(\sqrt{\lambda},s)&=\tilde{p}_{h}(\sqrt{\lambda},s)\upharpoonright_{\Gamma_{\theta}},&Q_{h,\theta}&:=Q_{h}\upharpoonright_{\Gamma_{\theta}},&R_{h,\theta}\left(\lambda\right)&=R_{h}\left(\lambda\right)\upharpoonright_{\Gamma_{\theta}}.
\end{align*}
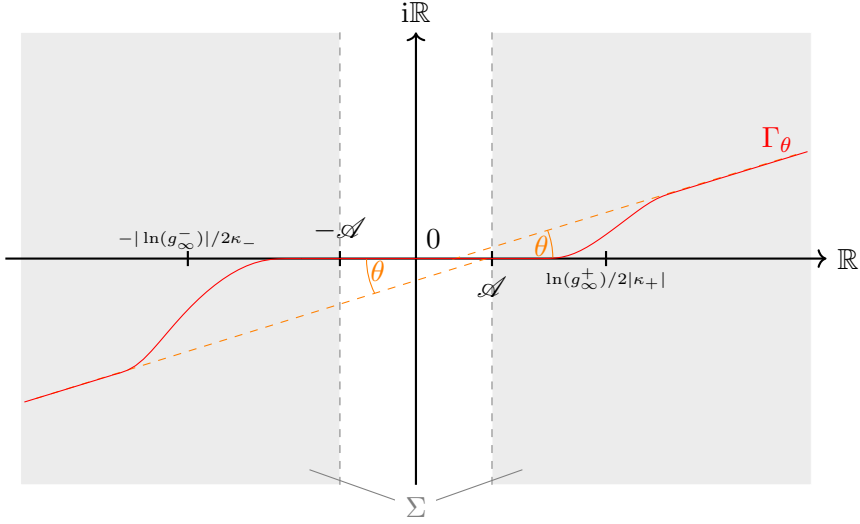
\begin{figure}[!h]
	\centering
	\captionsetup{justification=centering,margin=1.8cm}
	\begin{center}
		\begin{tikzpicture}[scale=1.0]
		\filldraw[fill=gray!15,draw=white](3,-3)--(3,3)--(-1.2,3)--(-1.2,-3)--cycle;
		\filldraw[fill=gray!15,draw=white](5,-3)--(5,3)--(9.2,3)--(9.2,-3)--cycle;
		\draw[gray,dashed](3,-3)--(3,3);
		\draw[gray,dashed](5,-3)--(5,3);
		\draw[->][thick](4,-3)--(4,3)node[above]{$\mathrm{i}\mathbb{R}$};
		\draw[->][thick](-1.4,0)--(9.4,0)node[right]{$\mathbb{R}$};
		\draw(4,0)node[anchor=south west]{$0$};
		%
		\draw[thick](3,-0.1)--(3,0.1)node[anchor=south]{$-\mathscr{A}$};
		\draw[thick](5,-0.1)--(5,0.1);
		\draw[](5,-0.1)node[anchor=north]{$\mathscr{A}$};
		\draw[gray](5.4,-2.8)--(4,-3.3);
		\draw[gray](2.6,-2.8)--(4,-3.3)node[fill=white]{$\Sigma$};
		\draw[orange,dashed](-1,-1.856)--(4.96103,0.01);
		\draw[orange,dashed](4.48,-0.01)--(9.0,1.38);
		\draw[red]plot[domain=7.5:9.15](\x,{0.3080*\x-1.4});
		\draw[red]plot[domain=0.0:-1.15](\x,{0.3080*\x-1.548});
		\draw[orange](3.35,-0.01)arc(0:26.625:-1);
		\draw[orange](3.5,-0.170)node{$\theta$};
		\draw[orange](5.80,0.01)arc(0:21.6:1);
		\draw[orange](5.65,0.170)node{$\theta$};
		\draw[red](2.5,0)--(5.5,0);
		\draw[red]plot[domain=5.5:7.5](\x,{0.15*(1.6130+exp(-1.2*(\x-6.0)^2))*(\x-5.5)*( 1+0.8065*(exp(1/(5.4-\x)+1/(7.6-\x)))/(exp(1/(5.4-\x)+1/(7.6-\x))+exp(-(1/(5.4-\x)+1/(7.6-\x))))-(exp(-(1/(5.4-\x)+1/(7.6-\x))))/(exp(1/(5.4-\x)+1/(7.6-\x))+exp(-(1/(5.4-\x)+1/(7.6-\x)))) )});
		\draw[red]plot[domain=0.0:2.5](\x,{-0.15*(1.6130+exp(-0.4*(\x-1.0)^2))*(2.5-\x)*( 1+0.8065*(exp(-(1/(-0.1-\x)+1/(2.6-\x))))/(exp((1/(-0.1-\x)+1/(2.6-\x)))+exp(-((1/(-0.1-\x)+1/(2.6-\x)))))-(exp((1/(-0.1-\x)+1/(2.6-\x))))/(exp((1/(-0.1-\x)+1/(2.6-\x)))+exp(-(1/(-0.1-\x)+1/(2.6-\x)))) )});
		\draw[red](8.75,1.575)node{$\Gamma_{\theta}$};
		\draw[thick](6.5,-0.1)--(6.5,0.1);
		\draw[thick](6.5,0.015)node[anchor=north]{\tiny{$\ln(g_\infty^+)/2|\kappa_{+}|$}};
		\draw[thick](1.0,-0.1)--(1.0,0.1);
		\draw[thick](1.0,-0.015)node[anchor=south]{\tiny{$-|\ln(g_\infty^-)|/2\kappa_{-}$}};
		\end{tikzpicture}
		\caption{\label{The contour Gamma_theta1}The contour $\Gamma_{\theta}$.}
	\end{center}
\end{figure}
If $q_{h,\theta}$ denotes the symbol of $Q_{h,\theta}$, then \cite[Lem. 4.3]{Zworski_paper} shows that there exists $0<c<1$ and $\theta_0>0$ such that
\begin{align*}
	\left|q_{h,\theta}\left(x,\xi\right)-\lambda\right|&\gtrsim\theta(\left\langle\xi\right\rangle^{2}+\mathrm{e}^{\kappa_{\pm}x}\langle\alpha\rangle^2)&\pm x\geq0
\end{align*}
provided that $\Re(\lambda)>0$, $\Im(\lambda)<c$ and $0<\theta<\theta_0$ (recall that $\pm\kappa_\pm<0$). For $h$ small enough (that is $R$ large enough), we can apply \cite[Prop. 4.1]{Zworski_paper} to get
\begin{align*}
\|({Q}_{h,\theta}-\lambda)^{-1}\|_{_{L^{2}(\Gamma_{\theta})\to H^{2}(\Gamma_{\theta})}}&=\mathcal{O}(\theta^{-1}).
\end{align*}

In order to invert the distorted quadratic pencil, we use a Neumann series argument by showing that $R_{h,\theta}(\lambda)=\mathcal{O}_{L^{2}(\Gamma_{\theta})}(h)$ (as a multiplication operator). In view of the form of $R_{h,\theta}$ in \eqref{Definition of tilde p h 001} and because the extension of $r$ is analytic, it is enough to bound $x\mapsto r(x+\mathrm{i}f^{\pm}_\theta(x))$ from below and above for $|x|>2\mathscr{A}$. By Lagrange inversion formula \eqref{Lagrange inversion series}, we can write
\begin{align*}
	|r(x+\mathrm{i}f^{\pm}_\theta(x))-r_\pm|&\leq\sum_{k=1}^{+\infty}c_k\frac{\mathrm{e}^{2k\kappa_{\pm}x}}{k!}
\end{align*}
for some coefficients $c_k>0$ (recall from Subsection \ref{The Regge-Wheeler coordinate} that $2\kappa_{\pm}=\frac{\Lambda}{3A_\pm r_\pm^2}$), and the series converges (because \eqref{Lagrange inversion series} converges uniformly when $|\Re x|>\mathscr{A}$). Since the sum is decreasing with respect to $x$, we deduce $r<+\infty$ on $\Gamma_{\theta}$. On the other hand,
\begin{align*}
	\left|r-r_{\pm}\right|&=C(r-r_{n})^{-\frac{A_{n}r_{n}^{2}}{A_{\pm}r_{\pm}^{2}}}(r-r_{c})^{-\frac{A_{c}r_{c}^{2}}{A_{\pm}r_{\pm}^{2}}}|r-r_{\mp}|^{-\frac{A_{\mp}r_{\mp}^{2}}{A_{\pm}r_{\pm}^{2}}}\mathrm{e}^{2\kappa_{\pm}x}
\end{align*}
with $C\in\mathbb{R}$. Since no terms on the right-hand side can blow up when restricted on $\Gamma_{\theta}$ and since the exponential goes to zero when $|x|\to+\infty$, it follows that $r\to r_\pm>0$ as $x\to\pm\infty$. We therefore conclude that the restriction of $r$ on $\Gamma_\theta\cap D(0,R_0)^\complement$ is bounded from below and above for $R_0\gg0$, giving $R_{h,\theta}(\lambda)=\mathcal{O}_{L^{2}(\Gamma_{\theta})}(h)$. Thus,
\begin{align*}
\tilde{p}_{h,\theta}(\sqrt{\lambda},s)^{-1}&=\big(1+(Q_{h,\theta}-\lambda)^{-1}R_{h,\theta}(\lambda)\big)^{-1}(Q_{h,\theta}-\lambda)^{-1}.
\end{align*}

We finally choose $\chi\in\mathcal{C}_{\mathrm{c}}^{\infty}\left(\mathbb{R},\mathbb{R}\right)$ and increase if necessary the value of the number $\mathscr{A}$ so that $\mathrm{Supp}\chi\subset\left[-\mathscr{A},\mathscr{A}\right]$. From \cite[Lem. 3.5]{SjZw91}, we have in the $L^{2}$ sense
\begin{align*}
\chi \tilde{p}_{h,\theta}(\sqrt{\lambda},s)^{-1}\chi&=\chi \tilde{p}_{h}(\sqrt{\lambda},s)^{-1}\chi
\end{align*}
whence
\begin{align*}
\|\chi p_{\ell}(z,s)^{-1}\chi\|_{_{L^{2}\to L^{2}}}&=h^{2}\|\chi \tilde{p}_{h}(\sqrt{\lambda},s)^{-1}\chi\|_{_{L^{2}\to L^{2}}}\lesssim\langle z\rangle^{-2}.
\end{align*}
%
%
%
\subsection{Estimates in the zone III}
\label{Estimates in the zone III}
%
We turn to the proof of part 3. of Theorem \ref{Main results in the four zones for the harmonic quadratic pencil}. We define the semiclassical parameter
\begin{align*}
h:=(\ell(\ell+1))^{-1/2}
\end{align*}
with again $\ell>0$ since the zone does not exist for $\ell=0$. For $z\in\left[\ell/R,R\ell \right]+\mathrm{i}\left[-C_{0},C_{0}\right]$, we define a new spectral parameter
\begin{align*}
\lambda:=h^2z^2\in\left[\frac{1}{3R^2},R^2\right]+\mathrm{i}[-\sqrt{2}\,C_0Rh,\sqrt{2}\,C_0Rh]\subset[a,b]+\mathrm{i}[-ch,ch]
\end{align*}
for some $0<a<b$ and $c>0$. Finally, we set
\begin{align*}
\tilde{P}_h&:=h^2P_\ell=-h^2\partial_x^2+W_0+h^2W_1,&\tilde{p}_h(\sqrt{\lambda},s)&:=\tilde{P}_h-(\sqrt{\lambda}-hsV)^2
\end{align*}
and write $\tilde{P}_\theta$ and $\tilde{p}_\theta$ for the corresponding distorted operators on the contour $\Gamma_{\theta}$ as we did in the paragraph \ref{Estimates in the zone II}. We are still using the subscript $L^2$ when we work with the distorted operators.

As $W_0$ admits a non-degenerate maximum at $x=0$ (see Section \ref{Functional framework}, Figure \ref{The potential W0}), $(x,\xi)=(0,0)$ is a trivial solution of the Hamilton equations associated to the principal symbol of $\tilde{P}_{h}$:
\begin{align*}
\begin{cases}
\dot{x}&=2\xi\\
\dot{\xi}&=-W_0'(x)
\end{cases}.
\end{align*}
Therefore the energy level $\mathfrak{E}_0:=W_0(0)$ is trapping. For this reason, the zone III is called the trapping zone.

We first show an adaptation of \cite[Lem. 6.5]{BoMi04} to our setting.

\begin{proposition}
	\label{Adaptation BoMi04 Theorem}
	For $\theta=Nh$ with $N>0$ large enough and $s\in\mathbb{R}$ sufficiently small, there exist $C\equiv C(N)>0$ and $\varepsilon>0$ such that, for all $E\in[\mathfrak{E}_0-\varepsilon,\mathfrak{E}_0+\varepsilon]$ and $|\lambda-E|\leq\varepsilon\theta/2$, it holds
	\begin{align*}
	\|(\tilde{P}_\theta-(\sqrt{\lambda}-hsV)^2)^{-1}\|_{_{L^2\to L^2}}&=\mathcal{O}(h^{-C})\prod_{\substack{\lambda_j\in\mathrm{Res}(\tilde{p})\\|\lambda-\lambda_j|<\varepsilon\theta}}\frac{h}{|\lambda-\lambda_j|}.
	\end{align*}
\end{proposition}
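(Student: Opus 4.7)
The plan is to reduce the statement to the non-perturbed analogue of Bony--Michel \cite{BoMi04}, Lem. 6.5, for a semiclassical Schrödinger operator at a non-degenerate barrier top, and then to verify that the additional subprincipal term coming from $-2\sqrt{\lambda}\,hsV+h^2s^2V^2$ does not alter the argument. Writing
\begin{align*}
\tilde{p}_h(\sqrt{\lambda},s)&=\tilde{P}_h-\lambda+hB_h(\lambda,s),
&B_h(\lambda,s)&:=2\sqrt{\lambda}\,sV-hs^2V^2,
\end{align*}
one observes that $B_h$ is uniformly bounded as a multiplication operator in $(h,s,\lambda)$ with $|s|\leq s_0$ and $\lambda$ in a fixed compact set, is analytic in $\lambda$, and extends to $\Gamma_\theta$ (using Proposition \ref{Analytic extension of r} and the bounds on $r\circ\Gamma_\theta$ proved in Subsection \ref{Estimates in the zone II} via the Lagrange inversion formula). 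After distortion, $hB_\theta$ is $\mathcal{O}(h)$ in $\mathcal{B}(L^2(\Gamma_\theta))$, hence strictly subprincipal, so the principal symbol $\xi^2+W_0(x)$ is unchanged and the non-degenerate maximum $\mathfrak{E}_0=W_0(0)$ persists as a barrier-top trapping energy.

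Next I would verify the three structural inputs that feed into the Bony--Michel scheme and check that each survives the subprincipal perturbation:
\begin{enumerate}
\item [(a)] An a priori polynomial bound $\|\tilde{p}_\theta(\sqrt{\lambda},s)^{-1}\|_{_{L^2\to L^2}}=\mathcal{O}(h^{-N_0})$ at a suitably chosen $\lambda_0$ with $|\lambda_0-E|=\varepsilon\theta$ that is far from resonances, obtained from an escape-function / Mourre-type construction on the distorted operator. This uses only the principal symbol outside a neighbourhood of $(0,0)$ and the complex-scaling ellipticity, both unchanged by $hB_\theta$, provided $\theta=Nh$ with $N$ large and $|s|$ sufficiently small.
\item [(b)] A rough exponential upper bound $\|\tilde{p}_\theta(\sqrt{\lambda},s)^{-1}\|_{_{L^2\to L^2}}\leq \exp(Ch^{-n})$ valid on the whole disc $D(E,\varepsilon\theta)$ away from resonances, which follows from standard semiclassical Fredholm determinant estimates since $B_\theta$ is subprincipal (see also \cite{SjZw91,Zworski_paper}).
\item [(c)] A Sjöstrand Weyl-law bound: the number of elements of $\mathrm{Res}(\tilde{p})\cap D(E,\varepsilon\theta)$ is bounded by a constant independent of $h$ (this again depends only on the principal symbol near $(0,0)$).
\end{enumerate}

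The core of the proof is the third step, which is the operator-valued maximum principle / Jensen inequality used in \cite{BoMi04}. Set
\begin{align*}
g(\lambda):=\prod_{\substack{\lambda_j\in\mathrm{Res}(\tilde{p})\\|\lambda_j-E|<\varepsilon\theta}}\frac{\lambda-\lambda_j}{\varepsilon\theta}.
\end{align*}
Then $g(\lambda)\,\tilde{p}_\theta(\sqrt{\lambda},s)^{-1}$ is holomorphic on $D(E,\varepsilon\theta)$ with values in $\mathcal{B}(L^2(\Gamma_\theta))$, and combining (a)--(c) via a three-circle / Carleman-type inequality (applied to $\log\|g\,\tilde{p}_\theta^{-1}\|$ or equivalently to $\det$ of a Grushin reduction à la Sjöstrand) yields a uniform bound $\|g(\lambda)\,\tilde{p}_\theta(\sqrt{\lambda},s)^{-1}\|=\mathcal{O}(h^{-C})$ on $D(E,\varepsilon\theta/2)$, which is the announced estimate after unfolding the factors $\varepsilon\theta=\mathcal{O}(h)$. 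A final application of the Sjöstrand--Zworski identity $\chi\tilde{p}_\theta^{-1}\chi=\chi\tilde{p}_h^{-1}\chi$ from \cite{SjZw91} transfers the bound to the undistorted cut-off inverse whenever one needs to go back to $\chi p_\ell(z,s)^{-1}\chi$.

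The main obstacle will be step (a): constructing an escape function and deriving the polynomial a priori bound on the complex contour $\Gamma_\theta$ in the presence of the complex-valued subprincipal term $2\sqrt{\lambda}\,hsV(x+\mathrm{i}f_\theta^\pm(x))$. One must check that the positivity obtained from the distortion of the principal symbol dominates the contribution of this perturbation uniformly in $\ell$, which fixes the interplay between the parameter $N$ in $\theta=Nh$ (to make the trapped-set contribution controllable) and the smallness of $s$ (to make $hB_\theta$ genuinely negligible against the positivity obtained from the scaling). Once (a) is secured, steps (b) and (c) are routine semiclassical estimates on the distorted operator, and the maximum principle argument of \cite{BoMi04} applies verbatim.
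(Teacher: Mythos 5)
Your starting observation — that $\tilde p_h(\sqrt\lambda,s)=\tilde P_h-\lambda+hB_h$ with $B_h=2\sqrt\lambda\,sV-hs^2V^2$ a bounded, $\lambda$-analytic, strictly subprincipal perturbation that extends to $\Gamma_\theta$ — is exactly the right reduction, and it is also the paper's starting point. Where the two arguments part ways is in how one then closes the Bony--Michel scheme. You propose to re-derive the three ingredients (a) polynomial a priori bound, (b) rough exponential upper bound, (c) resonance count, and then to conclude by a maximum-principle / Jensen argument on $g(\lambda)\,\tilde p_\theta^{-1}$. The paper instead does not rebuild (a)--(c): since for $s=0$ the Grushin problem \eqref{Grushin problem in zone III} is already well posed with the finite-rank operator $\tilde K$ satisfying $\|\tilde K\|=\mathcal O(1)$ and $\|(\tilde P_\theta-\mathrm{i}\theta\tilde K-\lambda)^{-1}\|=\mathcal O(\theta^{-1})$, it simply \emph{perturbs} this structure. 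Concretely, with $\theta=Nh$, the resolvent identity gives
\begin{align*}
\big\|(\tilde P_\theta-\mathrm{i}\theta\tilde K-(\sqrt\lambda-hsV)^2)^{-1}\big\|
\leq\mathcal O(\theta^{-1})+\mathcal O(h|s|)\,\big\|(\tilde P_\theta-\mathrm{i}\theta\tilde K-(\sqrt\lambda-hsV)^2)^{-1}\big\|\,\mathcal O(\theta^{-1}),
\end{align*}
and $\mathcal O(h|s|)\cdot\mathcal O(\theta^{-1})=\mathcal O(|s|/N)$ is small for $N$ fixed and $|s|$ small, so the $\mathcal O(\theta^{-1})$ a priori bound survives; the Grushin problem for $\tilde p_\theta(\sqrt\lambda,s)$ (built with the \emph{same} auxiliary spaces, i.e.\ the same rank $r$) stays well posed by openness of the Fredholm property, the bounds on $E,E_\pm,E_0$ carry over, and Sj\"ostrand's formula $\tilde p_\theta^{-1}=E-E_+E_0^{-1}E_-$ plus the maximum-principle step at the end of \cite[Lem.~6.5]{BoMi04} then close the argument verbatim.

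The advantage of the paper's route is not only that it is shorter: it also sidesteps the delicate point you flag in your step (a). An escape-function / complex-scaling argument does \emph{not} by itself produce a polynomial bound on the actual resolvent $\tilde p_\theta^{-1}$ on a circle ``far from resonances''; what it gives is the $\mathcal O(\theta^{-1})$ bound on the \emph{modified} operator $\tilde P_\theta-\mathrm{i}\theta\tilde K$, in which the resonances have been removed. The polynomial bound on the true resolvent away from resonances is an \emph{output} of the Grushin/Jensen machinery (via $\det E_0$), not an input one can obtain directly, and choosing a resonance-free circle already uses (c). So if you pursue your plan you should restate (a) as the bound on the modified operator (equivalently, on the Grushin entries $E,E_\pm$); with that correction your scheme amounts to re-proving \cite[Lem.~6.5]{BoMi04} for the perturbed operator, whereas the paper's perturbative shortcut inherits it.
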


\begin{proof}
	The announced estimate is known for the resolvent $(\tilde{P}_\theta-\lambda)^{-1}=\tilde{p}_\theta(\sqrt{\lambda},0)^{-1}$ corresponding to the case $s=0$. The argument can be found in \cite{TaZw98} which uses techniques developed in \cite{Sj96}, and the authors of \cite{BoMi04} adapted it for the one dimensional case of a non degenerate trapping energy level $\mathfrak{E}_0$. More precisely, for $\theta=Nh$ with $N\gg0$ large enough, one can construct a bounded operator $\tilde{K}\in\mathcal{L}\left(L^2,L^2\right)$ (see (6.15) in \cite{BoMi04}) satisfying the following properties:
	\begin{itemize}
		\item [$(i)$] $\|\tilde{K}\|_{_{L^2\to L^2}}=\mathcal{O}(1)$,
		\item [$(ii)$] $r:=\mathrm{rank}\,\tilde{K}\leq\mathcal{O}(\theta h^{-1}\ln(1/\theta))$,
		\item [$(iii)$] for $h$ small enough, there exists $\varepsilon>0$ such that, for all $E\in[\mathfrak{E}_0-\varepsilon,\mathfrak{E}_0+\varepsilon]$ and $\lambda\in[E-\varepsilon\theta,E+\varepsilon\theta]$, $$\|(\tilde{P}_\theta-\mathrm{i}\theta\tilde{K}-\lambda)^{-1}\|_{_{L^2\to\mathscr{D}}}\leq\mathcal{O}(\theta^{-1}),\qquad\qquad\mathscr{D}:=\mathscr{D}(\tilde{P}_\theta).$$
	\end{itemize}
	%
	%
	In \cite[Lem. 3.2]{SjZw91}, it is  shown that $\tilde{P}_\theta-\lambda$ is a Fredholm operator from its domain $\mathscr{D}$ to $L^2$, so we can construct a well-posed Grushin problem
	\begin{align}
	\label{Grushin problem in zone III}
		\mathcal{P}(\lambda)&:=\begin{pmatrix}
		\tilde{P}_\theta-\lambda&R_-\\
		R_+&0_{_{\mathbb{C}^r\to\mathbb{C}^r}}
		\end{pmatrix}:\mathscr{D}\oplus\mathbb{C}^r\to L^2\oplus\mathbb{C}^r
	\end{align}
	where $R_-$ and $R_+$ are constructed with $\tilde{P}_\theta-\mathrm{i}\theta\tilde{K}-\lambda$ (see \cite{Sj96}, page 401, below (6.12) for the construction).
	
	Now consider $s\neq0$. If $s$ is small enough, $(\tilde{P}_\theta-\mathrm{i}\theta\tilde{K}-(\sqrt{\lambda}-hsV)^2)^{-1}$ is invertible by pseudodifferential calculus\footnote{See the definition of the operator $K$ at the beginning of the proof of \cite[Lem. 6.5]{BoMi04}.} as for the case $s=0$. By the resolvent identity, one can show that
	\begin{align*}
		\|(\tilde{P}_\theta-\mathrm{i}\theta\tilde{K}-(\sqrt{\lambda}-hsV)^2)^{-1}\|_{_{L^2\to\mathscr{D}}}&\leq \mathcal{O}(\theta^{-1})+\mathcal{O}(h|s|)\|(\tilde{P}_\theta-\mathrm{i}\theta\tilde{K}-(\sqrt{\lambda}-hsV)^2)^{-1}\|_{_{L^2\to\mathscr{D}}}\mathcal{O}(\theta^{-1})
	\end{align*}
	since $|\lambda|\leq h(|\Re z|+|\Im z|)\leq \mathcal{O}(1)+\mathcal{O}(h)$. Hence for $s$ sufficiently small, we have
	\begin{align*}
	\|(\tilde{P}_\theta-\mathrm{i}\theta\tilde{K}-(\sqrt{\lambda}-hsV)^2)^{-1}\|_{_{L^2\to\mathscr{D}}}&\leq\mathcal{O}(\theta^{-1})&\lambda\in[E-\varepsilon\theta,E+\varepsilon\theta].
	\end{align*}
	%
	%
	%
	Because the quadratic pencil remains a Fredholm operator provided that $\|hsV\|_{_{L^\infty}}$ is sufficiently small\footnote{Recall that the set of Fredholm operators in $\mathcal{L}(\mathscr{D},L^2)$ is open for the norm topology.}, we can write a new well-posed Grushin problem 
	\begin{align*}
		\mathcal{P}(\lambda)&:=\begin{pmatrix}
		\tilde{p}_{\theta}(\sqrt{\lambda},s)&R_-\\
		R_+&0_{_{\mathbb{C}^r\to\mathbb{C}^r}}
		\end{pmatrix}:\mathscr{D}\oplus\mathbb{C}^r\to L^2\oplus\mathbb{C}^r
	\end{align*}
	where this time $R_+$ and $R_-$ are constructed with $\tilde{P}_\theta-\mathrm{i}\theta\tilde{K}-(\sqrt{\lambda}-hsV)^2$. 
	If we set
	\begin{align*}
		\mathcal{E}(\lambda):=\mathcal{P}(\lambda)^{-1}&=\begin{pmatrix}
		E(\lambda)&E_+(\lambda)\\
		E_-(\lambda)&E_0(\lambda)
		\end{pmatrix}:L^2\oplus\mathbb{C}^r\to\mathscr{D}\oplus\mathbb{C}^r,
	\end{align*}
	then the relations $\mathcal{E}(\lambda)\mathcal{P}(\lambda)=\mathcal{P}(\lambda)\mathcal{E}(\lambda)=\mathrm{Id}$ as well as the following estimate (which is a consequence of properties $(i)$ and $(iii)$ above)
	\begin{align*}
		\big\|(\tilde{P}_\theta-\mathrm{i}\theta\tilde{K}-(\sqrt{\lambda}-hsV)^2)^{-1}(\tilde{P}_\theta-(\sqrt{\lambda}-hsV)^2)\big\|_{_{L^{2}\to L^{2}}}=\mathcal{O}(1)
	\end{align*}
	imply as in \cite{BoMi04} that $\|E(\lambda)\|_{_{L^2\to\mathscr{D}}},\|E_-(\lambda)\|_{_{L^2\to\mathbb{C}^r}}=\mathcal{O}(\theta^{-1})$ and $\|E_+(\lambda)\|_{_{\mathbb{C}^r\to\mathscr{D}}},\|E_0(\lambda)\|_{_{\mathbb{C}^r\to\mathbb{C}^r}}=\mathcal{O}(1)$.
	Applying formula (8.11) in \cite{Sj96}, we obtain
	\begin{align*}
		(\tilde{P}_\theta-(\sqrt{\lambda}-hsV)^2)^{-1}&=E(\lambda)-E_+(\lambda)E_0(\lambda)^{-1}E_-(\lambda)
	\end{align*}
	which implies
	\begin{align*}
		\|(\tilde{P}_\theta-(\sqrt{\lambda}-hsV)^2)^{-1}\|_{_{L^2\to\mathscr{D}}}&=\mathcal{O}(\theta^{-1})\big(1+\|E_0(\lambda)^{-1}\|_{_{\mathbb{C}^r\to\mathbb{C}^r}}\big)
	\end{align*}
	as in \cite[Lem. 6.5]{BoMi04}, and we then follow the end of its proof to conclude.
\end{proof}
We can now follow the arguments below \cite[Lem. 2.2]{BoHa08}. The set of pseudo-poles (2.28) and the injective map (2.29) in this reference exist in our setting by Theorem \ref{Microlocalization of high frequency resonances} (but are quite different). This implies that there is no resonance in $\Omega(h):=[a/2,2b]+\mathrm{i}[-\varepsilon h,ch]$ provided that $h$ and $s$ are small enough. As a result, \eqref{final bound in zone III 001} holds true. As for the estimate \eqref{final bound in zone III 002}, we use Burq's Lemma:
\begin{lemma}[Lemma 2.3 in \cite{BoHa08}]
  Suppose that $f(\lambda,h)$ is a family of holomorphic functions defined for $0<h<1$ in a neighbourhood of $\Omega(h):=[a/2,2b]+\mathrm{i}[-ch,ch]$ with $0<a<b$ and $c>0$, such that
  \begin{align*}
    |f(\lambda,h)|&\lesssim\begin{cases}
    h^{-C'}&\text{in }\Omega(h)\\
    \displaystyle\frac{1}{|\Im\lambda|}&\text{in }\Omega(h)\cap\mathbb{C}^+
    \end{cases}
  \end{align*}
  for some $C'>0$. Then there exists $h_0$, $C>0$ such that, for any $0<h<h_0$ and any $\lambda\in[a,b]+\mathrm{i}[-ch,0]$,
  \begin{align*}
    |f(\lambda,h)|&\leq C\frac{|\ln h|}{h}\mathrm{e}^{C|\Im\lambda||\!\ln h|/h}.
  \end{align*}
\end{lemma}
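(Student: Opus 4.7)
The plan is to apply the Hadamard three-line theorem on a very flat rectangle, absorbing the contributions from the vertical sides using a harmonic-measure estimate. The crucial observation is that the second hypothesis $|f(\mu,h)|\lesssim 1/\Im \mu$ in $\Omega(h)\cap\mathbb{C}^{+}$ is available at every height $\eta>0$ (not only at $\eta=ch$), and that the sharp choice turns out to be $\eta:=h/|\ln h|$: this puts the top of the rectangle arbitrarily close to the real axis while keeping the associated bound only a logarithmic factor larger than $1/h$.

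First I would introduce the rectangle $R_{h}:=[a/2,2b]+\mathrm{i}[-ch,\eta]$, which lies in the domain where $f$ is holomorphic for $h$ small. On the top edge the second hypothesis gives $|f|\lesssim 1/\eta\simeq |\ln h|/h$, while on the bottom edge and on the two vertical sides only the weak bound $|f|\lesssim h^{-C'}$ is available. Applying the maximum principle in the subharmonic form $\log|f(\lambda)|\le\int_{\partial R_{h}}\log|f|\,\mathrm{d}\omega_{\lambda}$ would yield the desired interpolation provided the harmonic measure $\omega_{\lambda}$ of each vertical side of $R_h$ at $\lambda$ is negligible.

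The negligibility check goes as follows. For $\lambda\in[a,b]+\mathrm{i}[-ch,0]$, the horizontal distance to either vertical side of $R_{h}$ is bounded below by $\min(a/2,b)$, while the vertical width is $\eta+ch=O(h)$. Mapping $R_{h}$ conformally onto a standard horizontal strip of height $1$ rescales this distance by a factor of order $1/h$, so the classical exponential decay of harmonic measure of a distant boundary interval in a strip yields $\omega_{\lambda}(\text{vertical sides})=O(\mathrm{e}^{-c_{0}/h})$ for some $c_{0}>0$. Thus the vertical sides contribute at most $O(\mathrm{e}^{-c_{0}/h}\cdot C'|\ln h|)=O(1)$ to $\log|f(\lambda,h)|$.

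The remaining horizontal interpolation then reads
\begin{align*}
\log|f(\lambda,h)|\le \frac{\Im \lambda+ch}{\eta+ch}\log(1/\eta)+\frac{\eta-\Im \lambda}{\eta+ch}\,C'|\ln h|+O(1).
\end{align*}
Plugging in $\eta=h/|\ln h|$, so that $\eta+ch=ch(1+o(1))$ and $\log(1/\eta)=|\ln h|+\log|\ln h|$, a direct rearrangement yields
\begin{align*}
\log|f(\lambda,h)|\le \log\!\left(\frac{|\ln h|}{h}\right)+\frac{(C'-1)|\Im \lambda|\,|\ln h|}{ch}+O(1),
\end{align*}
which exponentiates to the claimed inequality with $C:=\max\{1,(C'-1)/c\}$ (up to the multiplicative constant absorbed into $O(1)$). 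The main obstacle is the treatment of the vertical sides: a naive maximum principle on $R_{h}$ gives no improvement over $h^{-C'}$, and it is only the flatness of $R_{h}$ (height $O(h)$ against horizontal distance $\Omega(1)$ to the point of interest) that makes the harmonic measure of the sides exponentially small in $1/h$ and hence the interpolation sharp.
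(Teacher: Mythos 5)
Your proof is correct, and it isolates the essential idea: the hypothesis $|f|\lesssim 1/|\Im\lambda|$ must be used at the level $\Im\lambda = \eta$ with $\eta\sim h/|\ln h|$ (not at $\Im\lambda = ch$, which would only yield the much weaker bound $h^{-(1+C')/2}$ after interpolation), and this choice of $\eta$ is what produces the $|\ln h|/h$ prefactor. The vertical interpolation on the thin rectangle $R_h=[a/2,2b]+\mathrm{i}[-ch,\eta]$ then gives exactly $\log|f|\lesssim\log(|\ln h|/h)+\frac{(C'-1)|\Im\lambda|\,|\ln h|}{ch}+O(1)$, and the algebra checks out.

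Where you diverge from the proof in Bony--H\"afner (and from the closely analogous argument this paper runs itself at the end of Subsection 5.3 for the zone IV bound) is in the treatment of the lateral sides of $R_h$. Bony--H\"afner, following Burq, multiply $f$ by an explicit auxiliary holomorphic function $F(\lambda)$ which is $\leq h^{C}$ in a neighbourhood of the vertical edges $\Re\lambda\in\{a/2,2b\}$, is $\geq 1$ on $[a,b]$, and stays bounded on the rest of $\Omega(h)$; the subharmonic function $\log|fF|+\varphi$ (with $\varphi$ linear in $\Im\lambda$ encoding the two-height interpolation) is then $\leq O(\ln(1/h))$ on all of $\partial R_h$, and the maximum principle applies directly. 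You instead leave $f$ alone, apply the subharmonic mean-value inequality $\log|f(\lambda)|\leq\int_{\partial R_h}\log|f|\,\mathrm{d}\omega_\lambda$, and kill the lateral contribution by the harmonic-measure estimate $\omega_\lambda(\text{vertical sides})=O(\mathrm{e}^{-c_0/h})$, valid because $R_h$ has $O(1)$ horizontal distance from $[a,b]$ to the sides against vertical width $O(h)$. Both routes are standard; the auxiliary-function construction is more self-contained (no appeal to harmonic-measure asymptotics, just the maximum principle), while your route is more automatic once one accepts the exponential smallness of harmonic measure in a thin rectangle. One small stylistic caveat: the phrase about ``mapping $R_h$ conformally onto a standard strip'' and ``rescaling distances by $1/h$'' is informal (the Riemann map of a rectangle onto a strip is not a dilation); the rigorous way to state what you want is comparison of $R_h$ with the infinite strip $\{-ch<\Im z<\eta\}$ and the explicit bound $\omega^{\mathrm{strip}}_{\lambda}\big(\{|\Re z-\Re\lambda|>d\}\big)\lesssim\mathrm{e}^{-\pi d/(\eta+ch)}$, but this is exactly what you are invoking, so the substance is fine.
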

We apply this result to the function $f(\lambda,h):=\|\chi(\tilde{P}_h-(\sqrt{\lambda}-hsV)^2)^{-1}\chi\|$, observing that for all $\lambda\in\Omega(h)\cap\mathbb{C}^+$ the resolvent identity gives
\begin{align*}
\|(\tilde{P}_h-(\sqrt{\lambda}-hsV)^2)^{-1}\|&\leq\frac{1}{|\Im \lambda|}+\|(\tilde{P}_h-(\sqrt{\lambda}-hsV)^2)^{-1}\|\frac{\mathcal{O}(h|s|)}{|\Im \lambda|}\lesssim\frac{1}{|\Im \lambda|}
\end{align*}
because $\|(\tilde{P}_h-(\sqrt{\lambda}-hsV)^2)^{-1}\|$ is uniformly bounded on this set for $h$ and $s$ small enough.
%
%
%
\subsection{Estimates in the zone IV}
\label{Estimates in the zone IV}
%
This last paragraph is devoted to the proof of part 4. of Theorem \ref{Main results in the four zones for the harmonic quadratic pencil}. 
For $z\in\left(\left[R\ell ,+\infty\right[+\mathrm{i}\left]-\infty,C_{0}\right]\right)\cap\left\{\lambda\in\mathbb{C}\mid \Im\lambda\geq-C_{0}-C_{1}\ln\langle\lambda\rangle\right\}$, there exists a number $N>R\ell>0$ such that $z\in\left[N,2N\right]+\mathrm{i}\left[-C\ln N,C_{0}\right]$. We introduce the semiclassical parameters
\begin{align*}
&h:=\frac{1}{N},&\mu:=\ell\left(\ell+1\right)h^{2},&&\nu:=h^{2}.
\end{align*}
Observe that these parameters are very small when $N\gg0$. Moreover, we can consider that $h\leq1$ even if $\ell=0$, simply by taking $R\geq1$ in the zone I if it was not the case ($R$ as in Theorem \ref{Main results in the four zones for the harmonic quadratic pencil}). We then define a new spectral parameter
\begin{align*}
\lambda&:=z^2h^2\in\left[1,2\right]+\mathrm{i}\left[Ch\ln h,C_{0}h\right]\subset[a,b]+\mathrm{i}\left[-ch\left|\ln h\right|,ch\right]
\end{align*}
where $0<a\leq1<2\leq b<+\infty$ and $\max\left\{C,C_{0}\right\}<c<+\infty$ (observe that $a$ and $b$ do not depend on $h$). Let $J:=[a,b]$ and set
\begin{align*}
J^{+}:=\{\eta\in\mathbb{C}^{+}\mid\Re(\eta)\in J\}.
\end{align*}
Define then
\begin{align*}
\tilde{P}_{h}&:=h^{2}P_{\ell}=-h^{2}\partial_{x}^{2}+\mu W_{0}+\nu W_{1}, &
\tilde{p}_{h}(\sqrt{\lambda},s)&:=h^2p_{\ell}\left(z,s\right)=\tilde{P}_{h}-(\sqrt{\lambda}-hsV)^{2}.
\end{align*}
\paragraph{Semiclassical limiting absorption principle for the quadratic pencil.} As in \cite{BoHa08}, we first get a control until the real line by using a semiclassical limiting absorption principle for the semiclassical quadratic pencil. The appendix \ref{Limiting absorption principle for the quadratic pencil} provides a proof, close to the idea developed by G\'{e}rard \cite{Ger08}, of such a result for a class of perturbed resolvents, so we only have to check if the required abstract assumptions are satisfied.

Introduce the generator of dilations $A:=-\mathrm{i}h\left(x\partial_{x}+\partial_{x}x\right)$ with domain $\mathscr{D}\left(A\right):=\left\{u\in L^{2}\mid Au\in L^{2}\right\}$. We then pick $\rho\in\mathcal{C}_{\mathrm{c}}^{\infty}\left(\mathbb{R},[0,1]\right)$ such that $\mathrm{Supp}\ \rho\subset[a/3,3b]$ and $\rho\equiv1$ on $I:=[a/2,2b]$, and we define $\mathcal{A}$ as the closure of the operator $\rho(P)A\rho(P)$. In this setting, $\rho(P)A\rho(P)$ is well-defined on $\mathscr{D}(A)$, $\mathcal{A}$ is self-adjoint and we have $P\in\mathcal{C}^{2}\left(\mathcal{A}\right)$ (\textit{cf.} \cite[§2.4]{BoHa08}) so that \eqref{Assumption P} holds. A direct computation shows that
\begin{align*}
	\mathrm{i}h^{-1}[P,A]&=4P-4\mu W_{0}-4\nu W_{1}-2\mu xW_{0}'-2\nu xW_{1}'
\end{align*}
so that, for $\mu$ and $\nu$ sufficiently small, we get the Mourre estimate \eqref{Assumption M} (uniform in $\mu$, $\nu$)
\begin{align*}
	\mathds{1}_{I}(P)[P,\mathrm{i}\mathcal{A}]\mathds{1}_{I}(P)&\geq ah\mathds{1}_{I}(P).
\end{align*}
Since $V\in\mathcal{B}(\mathscr{D}(P),L^2)$ it is clear that $V\in L^\infty_{\ell\mathrm{oc}}(\tilde{P}_{h})$. Moreover, assumption \eqref{Assumption H} is fulfilled for $f(z,B):=(\sqrt{z}-sB)^2$.

It remains to show that assumption \eqref{Assumption A} is satisfied for $B=hV$. Observe that this abstract assumption is particularly well adapted to semiclassical pseudodifferential calculus framework, especially the commutator estimate which provides the supplementary term $h$. In \cite{Ha01}, it is shown that $\mathcal{A}\in\Psi^{-\infty,1}$ ($\mathcal{A}$ is the operator $c_{\tilde{\chi}}$ in \cite{Ha01}, see above Lemma 3.3). We will use it to show the following result:
\begin{lemma}
	\label{AVA bounded}
	Let $\sigma\in[0,1]$. Then $V\in\mathcal{B}(\mathscr{D}(\langle\mathcal{A}\rangle^{\sigma}))$ and $[V,\chi(\tilde{P}_h)]\in h\mathcal{B}(\mathscr{D}(\langle\mathcal{A}\rangle^{\sigma}))$.
\end{lemma}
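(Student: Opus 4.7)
The argument combines two ingredients: the fact that $\mathcal{A} \in \Psi^{-\infty,1}$ established in \cite{Ha01}, and the observation that although $V(x) = r(x)^{-1} - r_+^{-1}$ only lies in $S^{0,0}$, all its derivatives decay exponentially at infinity. Indeed $\partial_x V = -F(r)/r^2$, and by \eqref{Exp convergence of r} one has $\partial_x^\alpha V = \mathcal{O}(\mathrm{e}^{-\kappa|x|})$ for every $\alpha \geq 1$, so $\partial_x^\alpha V$ lies in $S^{0,-N}$ for every $N \geq 0$.

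For the first claim, by complex interpolation it suffices to treat $\sigma = 0$ and $\sigma = 1$. The case $\sigma = 0$ is immediate since $V \in L^\infty$. For $\sigma = 1$, decompose $\mathcal{A} V = V\mathcal{A} + [\mathcal{A}, V]$; the first piece is bounded by $\|V\|_{L^\infty}\|\mathcal{A} u\|_{L^2}$. For the commutator, the semiclassical Moyal expansion yields the principal symbol $\frac{h}{\mathrm{i}}\{a, V\} = -\frac{h}{\mathrm{i}}(\partial_\xi a)(\partial_x V)$, where $a \in S^{-\infty,1}$ is the symbol of $\mathcal{A}$. The exponential decay of $\partial_x V$ absorbs the $x$-growth of $\partial_\xi a$, so $[\mathcal{A}, V] \in h\Psi^{-\infty,0}$ is bounded on $L^2$ with norm $\mathcal{O}(h)$. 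This gives $V \in \mathcal{B}(\mathscr{D}(\mathcal{A}))$.

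For the commutator with $\chi(\tilde{P}_h)$, we use the Helffer-Sj\"ostrand formula with an almost analytic extension $\tilde{\chi} \in \mathcal{C}^{\infty}_{\mathrm{c}}(\mathbb{C})$ of $\chi$:
\begin{align*}
[V, \chi(\tilde{P}_h)] = -\frac{1}{\pi}\int_{\mathbb{C}}\bar{\partial}\tilde{\chi}(z)\,(z-\tilde{P}_h)^{-1}[V,\tilde{P}_h](z-\tilde{P}_h)^{-1}\,\mathrm{d}L(z).
\end{align*}
A direct computation yields $[V, \tilde{P}_h] = h\bigl(2V'\cdot h\partial_x + hV''\bigr)$, with $V'$ and $V''$ decaying exponentially. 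The standard bound $|\bar{\partial}\tilde{\chi}(z)| = \mathcal{O}(|\Im z|^N)$ absorbs the resolvent blow-up near $\mathbb{R}$, giving $[V, \chi(\tilde{P}_h)] = \mathcal{O}(h)$ on $L^2$, which is the case $\sigma = 0$. For $\sigma = 1$, write $\mathcal{A}[V,\chi(\tilde{P}_h)] = [V,\chi(\tilde{P}_h)]\mathcal{A} + [\mathcal{A},[V,\chi(\tilde{P}_h)]]$ and expand the double commutator via Helffer-Sj\"ostrand once more; the extra commutators $[\mathcal{A},V]$ and $[\mathcal{A},\tilde{P}_h]$ both lie in $h\Psi^{-\infty,0}$ by the first part and pseudodifferential calculus (for the latter, the principal symbol $(\partial_\xi a)\,\partial_x(\mu W_0 + \nu W_1) + \xi\,\partial_x a$ is controlled by the exponential decay of $W_0, W_1$ and the $x$-growth of $a$). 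The total contribution remains $\mathcal{O}(h)$ on $L^2$, and complex interpolation settles $0 < \sigma < 1$.

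The main obstacle is the bookkeeping in the double commutator expansion: one must verify that every symbol produced stays in a class bounded on $L^2$ uniformly in $h$, and that the exponential decay of $\partial_x V$ continues to control each intermediate term arising from commutations with $\mathcal{A}$ and the resolvents of $\tilde{P}_h$.
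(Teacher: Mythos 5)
Your proof is correct and reaches the same conclusion, but organizes the argument somewhat differently from the paper. The paper works uniformly with the analytic family of sesquilinear forms $Q_{z}(\varphi,\psi)=\langle V\langle\mathcal{A}\rangle^{-2z}\varphi,\langle\mathcal{A}\rangle^{2z}\psi\rangle$ and applies the maximum principle on the strip $\Re z\in[0,1]$; the endpoint $\Re z=1$ is handled by showing $\langle\mathcal{A}\rangle^{2}V\langle\mathcal{A}\rangle^{-2}\in\Psi^{0,0}$ directly from the composition calculus $\Psi^{0,2}\circ\Psi^{0,0}\circ\Psi^{0,-2}$. You instead prove the $\sigma=1$ endpoint by commuting $\mathcal{A}$ through $V$ and estimating $[\mathcal{A},V]\in h\Psi^{-\infty,-\infty}$ via the Moyal bracket --- the same input (namely $\mathcal{A}\in\Psi^{-\infty,1}$ from \cite{Ha01} and exponential decay of $\partial_{x}V$), packaged as a commutator bound rather than a three-term composition --- and then invoke abstract interpolation on the scale $\mathscr{D}(\langle\mathcal{A}\rangle^{\sigma})$ rather than the explicit form $Q_z$. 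For the commutator $[V,\chi(\tilde{P}_h)]$, the paper stays inside the semiclassical symbolic calculus: it views $\chi(\tilde{P}_h)\in\Psi^{-\infty,0}$, reads off $[V,\chi(\tilde{P}_h)]\in h\Psi^{-\infty,-\infty}+h^{3}\Psi^{-\infty,0}$ from a single Moyal expansion, and reuses the same sesquilinear-form interpolation. Your Helffer--Sj\"{o}strand route gives the same $\mathcal{O}(h)$ bound but, as you acknowledge, forces tracking the double commutator $[\mathcal{A},[V,\chi(\tilde{P}_h)]]$ through several resolvent insertions, with the accumulated powers of $|\Im z|^{-1}$ to be absorbed by the almost-analytic extension; the paper's one-step symbolic computation sidesteps that bookkeeping. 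Both approaches ultimately rest on the same structural facts, that $V'$, $W_0'$, $W_1'$ decay exponentially so the linear $\langle x\rangle$-growth in the symbol of $\mathcal{A}$ is always paired with a rapidly decaying factor.
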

\begin{proof}
Let $\Omega:=[0,1]+\mathrm{i}\mathbb{R}$ and let $z\in\Omega$. On $\mathscr{D}(\langle\mathcal{A}\rangle^{2})\times\mathscr{D}(\langle\mathcal{A}\rangle^{2})$, we define the sesquilinear form
\begin{align*}
	Q_{z}(\varphi,\psi)&:=\langle V\langle\mathcal{A}\rangle^{-2z}\varphi,\langle\mathcal{A}\rangle^{2z}\psi\rangle\qquad\forall\varphi,\psi\in\mathscr{D}(\langle\mathcal{A}\rangle^{2}).
\end{align*}
By functional calculus, $Q_{z}$ is well-defined and analytic in $z\in\Omega$. When $z\in\{0\}+\mathrm{i}\mathbb{R}$, $|(1+\lambda^2)^{z/2}|=1$ for all $\lambda\in\mathbb{R}$ so that functional calculus first applied to $\langle\mathcal{A}\rangle^{2z}$ and then to $\langle\mathcal{A}\rangle^{-2z}$ gives
\begin{align*}
	|Q_{z}(\varphi,\psi)|&\leq|\langle V\langle\mathcal{A}\rangle^{-2z}\varphi,\psi\rangle|\\
	&=|\langle \langle\mathcal{A}\rangle^{-2z}\varphi,V\psi\rangle|\\
	&\leq|\langle\varphi,V\psi\rangle|\\
	&\leq\|V\|_{_{L^{\infty}}}\|\varphi\|_{_{L^{2}}}\|\psi\|_{_{L^{2}}}.
\end{align*}
When $z=1$, pseudodifferential calculus shows that $\langle\mathcal{A}\rangle^{2}V\langle\mathcal{A}\rangle^{-2}\in\Psi^{0,0}$, so that for all $z\in\{1\}+\mathrm{i}\mathbb{R}$ (using again functional calculus for $\langle\mathcal{A}\rangle^{\pm 2\mathrm{i}\Im z}$),
\begin{align*}
	|Q_{z}(\varphi,\psi)|&\leq|\langle \langle\mathcal{A}\rangle^{2}V\langle\mathcal{A}\rangle^{-2}\langle\mathcal{A}\rangle^{-2\mathrm{i}\Im z}\varphi,\psi\rangle|\\
	&\leq\|\langle\mathcal{A}\rangle^{2}V\langle\mathcal{A}\rangle^{-2}\|_{_{L^2\to L^2}}\|\langle\mathcal{A}\rangle^{-2\mathrm{i}\Im z}\varphi\|_{_{L^{2}}}\|\psi\|_{_{L^{2}}}\\
	&\leq\|\langle\mathcal{A}\rangle^{2}V\langle\mathcal{A}\rangle^{-2}\|_{_{L^2\to L^2}}\|\varphi\|_{_{L^{2}}}\|\psi\|_{_{L^{2}}}.
\end{align*}
By the maximum principle, there exists a constant $C>0$ such that $Q_{z}$ is bounded by $C$ for all $0\leq\Re z\leq1$. In particular, we can extend $Q_{\sigma/2}$ on $L^{2}\times\mathscr{D}(\langle\mathcal{A}\rangle^{2})$ as a bounded sesquilinear form and for $\sigma\in[0,1]$ and $\varphi\in L^2$, we have
\begin{align*}
	|Q_{\sigma/2}(\varphi,\psi)|&\leq C\|\varphi\|_{_{L^2}}\|\psi\|_{_{L^2}}.
\end{align*}
This means that the map $\mathscr{D}(\langle\mathcal{A}\rangle^{\sigma})\ni\psi\mapsto\langle V\langle\mathcal{A}\rangle^{-\sigma}\varphi,\langle\mathcal{A}\rangle^{\sigma}\psi\rangle$ is continuous. By definition of the adjoint operator and because $\langle\mathcal{A}\rangle^{\sigma}$ is self-adjoint, this implies that $V\langle\mathcal{A}\rangle^{-\sigma}\varphi\in\mathscr{D}(\langle\mathcal{A}\rangle^{\sigma})$ for all $\varphi\in L^2$.
	
Consider now the sesquilinear form
\begin{align*}
	\tilde{Q}_{z}(\varphi,\psi)&:=\langle[V,\chi(\tilde{P}_h)]\langle\mathcal{A}\rangle^{-2z}\varphi,\langle\mathcal{A}\rangle^{2z}\psi\rangle\qquad\forall\varphi,\psi\in\mathscr{D}(\langle\mathcal{A}\rangle^{2}).
\end{align*}
By semiclassical pseudodifferential calculus, we have (see \textit{e.g.} (4.4.19) in \cite{Zw})
\begin{align*}
	[V,\chi(\tilde{P}_h)]&=\frac{h}{\mathrm{i}}\{V(x),\chi(\xi^2+\mu W_0(x)+\nu W_1(x))\}^{\mathrm{w}}+h^3\Psi^{-\infty,0}\\
	&=h\Psi^{-\infty,-\infty}+h^3\Psi^{-\infty,0}
\end{align*}
because $V(x)\in\Psi^{0,0}$, $V'(x)\in\Psi^{0,-\infty}$ and $\chi(\tilde{P}_h)\in\Psi^{-\infty,0}$. Despite the fact that the error term above looks less regular than the main term, it is in fact more regular as it can be shown using expansion (4.4.15) in \cite{Zw} (but we will not need such a regularity). Now we can proceed as above with $Q_z$ and $V$ to conclude.
\end{proof}
Now that all assumptions in appendix \ref{Limiting absorption principle for the quadratic pencil} have been checked, we can use Theorem \ref{ASALP} as well as the fact that $\|\langle x\rangle^{-\sigma}\langle\mathcal{A}\rangle^{\sigma}\|\lesssim1$ for all $\sigma\leq1$\footnote{We show it using the sesquilinear form $(\varphi,\psi)\mapsto\langle\langle x\rangle^{-\sigma}\langle\mathcal{A}\rangle^{\sigma}\varphi,\psi\rangle$ first well-defined on $\mathscr{D}(\mathcal{A}^2)\times\mathscr{D}(\mathcal{A}^2)$ because $\langle x\rangle^{-2}\in\Psi^{0,-2}$, and then extended to $L^2\times L^2$ by maximum principle.}: for $\sigma\in\left]1/2,1\right]$ and $h$ small enough, we have uniformly in $\mu,\nu$
\begin{align*}
	\sup_{\lambda\in J^{+}}\left\|\langle x\rangle^{-\sigma}\tilde{p}_h(\sqrt{\lambda},s)^{-1}\langle x\rangle^{-\sigma}\right\|&\leq\left\|\langle x\rangle^{-\sigma}\langle\mathcal{A}\rangle^{\sigma}\right\|\left(\,\sup_{\lambda\in J^{+}}\left\|\langle\mathcal{A}\rangle^{-\sigma}\tilde{p}_h(\sqrt{\lambda},s)^{-1}\langle\mathcal{A}\rangle^{-\sigma}\right\|\right)\left\|\langle\mathcal{A}\rangle^{\sigma}\langle x\rangle^{-\sigma}\right\|\lesssim h^{-1}.
\end{align*}
\paragraph{Estimates below the real axis.} Next, we can use the work of Martinez \cite{Ma} to get a bound under the real line. Indeed, Section 4 of the last reference applies in our setting because $\tilde{p}_h(\sqrt{\lambda},s)$ is a differential operator (so that \cite[Prop. 3.1 \& Cor. 3.2]{Ma} apply) and because $(\lambda-hsV(x))^2\in[\lambda-\delta,\lambda+\delta]+\mathrm{i}[ch\ln h,0]$ for all $\lambda$ in the zone IV and all $x\in\mathbb{R}$ if $s$ is small enough (so that the estimate (4.6) in \cite{Ma} still holds). It follows that equation (4.13) holds with $\tilde{p}_h(\sqrt{\lambda},s)$ instead of $P_{\theta}-\rho$\footnote{We can in fact insert any pseudodifferential operator here provided that hypotheses in \cite[§2]{Ma} are verified.}. In our setting, this reads
\begin{align}
\label{Martinez bound}
	\|\chi\tilde{p}_h(\sqrt{\lambda},s)^{-1}\chi\|&\leq Ch^{-C}
\end{align}
for some $C>0$.

To get \eqref{Estimate on the cut-off inverse of the quadratic pencil in the zone IV formula}, we reproduce the argument at the end of the proof of \cite[Lem. 2.4]{BoHa08}. Choose $f$ holomorphic satisfying the following conditions:
\begin{align*}
	\begin{cases}
		|f|<1&\text{for $\lambda\in[a/2,2b]+\mathrm{i}[ch\ln h,0]$},\\
		|f|\geq 1&\text{for $\lambda\in[a,b]+\mathrm{i}[ch\ln h,0]$},\\
		|f|\leq h^C&\text{for $\lambda\in[a/2,2b]\setminus[2a/3,3b/2]+\mathrm{i}[ch\ln h,0]$}
	\end{cases}
\end{align*}
where $C>0$ is the constant in \eqref{Martinez bound}. Since $f$ is holomorphic, the function
\begin{align*}
	g(\lambda)&:=\ln\|\chi\tilde{p}(\sqrt{\lambda},s)^{-1}\chi\|_{_{L^2\to L^2}}+\ln|f(\lambda)|+\frac{C}{ch}\Im\lambda
\end{align*}
is subharmonic. We can check that $g(\lambda)\lesssim\ln(h^{-1})$ on the boundary of $[a/2,2b]+\mathrm{i}[ch\ln h,0]$. By the maximum principle, this estimate holds for all $\lambda\in[a/2,2b]+\mathrm{i}[ch\ln h,0]$, whence
\begin{align*}
	\|\chi\tilde{p}_h(\sqrt{\lambda},s)^{-1}\chi\|_{_{L^2\to L^2}}&\lesssim h^{-1}\mathrm{e}^{\frac{C}{ch}|\Im\lambda|}.
\end{align*}
The desired estimate \eqref{Estimate on the cut-off inverse of the quadratic pencil in the zone IV formula} then follows.
%
%
%
%
%
%
%
%
\section{Proof of Theorem \ref{Theoreme principal}}
\label{Proof of Theorem {Theoreme principal}}
We prove in this section Theorem \ref{Theoreme principal}. The resonance expansion \eqref{Expansion du propagateur} follows from the theory of resonances as presented in \cite[§3]{BoHa08}, and we can follow the proof of this paper. We only have to adapt \cite[Prop. 3.1]{BoHa08} to get an estimate for the resolvent $\hat{R}_{\chi,\ell}(z)$:
\begin{proposition}
	\label{Estimee pour la resolvente tronquee de Lell dans Emodell}
	Let $\ell\in\mathbb{N}$ and let $\chi\in\mathcal{C}^{\infty}_{\mathrm{c}}\left(\mathbb{R},\mathbb{R}\right)$. There exists $\tilde{\chi}\in\mathcal{C}^{\infty}_{\mathrm{c}}\left(\mathbb{R},\mathbb{R}\right)$ satisfying $\tilde{\chi}\chi=\chi$ such that for all $z\in\mathbb{C}\setminus\mathrm{Res}(p_\ell)$, the cut-off resolvent $\chi(\hat{K}_{\ell}-z)^{-1}\chi$ is a bounded operator on $\dot{\mathcal{E}}_{\ell}$ and satisfies uniformly in $\ell$
	\begin{align*}
		\|\hat{R}_{\chi,\ell}(z)\|_{_{\dot{\mathcal{E}}_{\ell}\to\dot{\mathcal{E}}_{\ell}}}\lesssim\langle z\rangle\|\tilde{\chi}p_\ell(z,s)^{-1}\tilde{\chi}\|_{_{L^{2}\to L^{2}}}.
	\end{align*}
\end{proposition}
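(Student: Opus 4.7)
The plan is to follow the strategy of \cite[Prop.~3.1]{BoHa08} adapted to the charged setting. Fix $\chi\in\mathcal{C}_{\mathrm{c}}^{\infty}(\mathbb{R},\mathbb{R})$ and pick $\tilde{\chi}\in\mathcal{C}_{\mathrm{c}}^{\infty}(\mathbb{R},\mathbb{R})$ equal to $1$ on a neighbourhood of $\mathrm{supp}\,\chi$, so that $\tilde{\chi}\chi=\chi$. For $u=(u_0,u_1)\in\dot{\mathcal{E}}_\ell$, formula \eqref{Resolvent of H} expresses $v=(v_0,v_1):=\chi\hat{R}_\ell(z,s)\chi u$ with $v_0=\chi p_\ell(z,s)^{-1}\tilde{\chi}g$ and source $g:=(z-sV)\chi u_0+\chi u_1=z\chi u_0+\chi(u_1-sVu_0)$. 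The Hardy-type estimate \eqref{ME1}(d), namely $\|\chi u_0\|_{L^2}\lesssim\|P_\ell^{1/2}u_0\|_{L^2}$, is uniform in $\ell$ and yields $\|g\|_{L^2}\lesssim\langle z\rangle\|u\|_{\dot{\mathcal{E}}_\ell}$.

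For the second component of the $\dot{\mathcal{E}}_\ell$-norm, apply $(\hat{K}_\ell-z)$ to $v$: since $sV$ commutes pointwise with $\chi$, the commutator $[\hat{K}_\ell,\chi]$ has vanishing first row and coincides with $[P_\ell,\chi]$ in its lower-left entry. Reading off the first component of $(\hat{K}_\ell-z)v=\chi^2u+[\hat{K}_\ell,\chi]\hat{R}_\ell(z,s)\chi u$ gives $(sV-z)v_0+v_1=\chi^2 u_0$, whence the compact identity $v_1-sVv_0=\chi^2 u_0+(z-2sV)v_0$ holds. Since $|z-2sV|\lesssim\langle z\rangle$ on $\mathrm{supp}\,\chi$ and $\|v_0\|_{L^2}\leq\|\tilde{\chi}p_\ell(z,s)^{-1}\tilde{\chi}\|_{L^2\to L^2}\|g\|_{L^2}$, this directly produces the factor $\langle z\rangle\|\tilde{\chi}p_\ell(z,s)^{-1}\tilde{\chi}\|_{L^2\to L^2}\|u\|_{\dot{\mathcal{E}}_\ell}$ for $\|v_1-sVv_0\|_{L^2}$.

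The main obstacle is controlling $\|P_\ell^{1/2}v_0\|_{L^2}$ uniformly in $\ell$, because $P_\ell$ contains the $\ell$-dependent angular potential $\ell(\ell+1)W_0$. The trick is never to apply $P_\ell^{1/2}$ directly: compute $\|P_\ell^{1/2}v_0\|^2_{L^2}=\langle P_\ell v_0,v_0\rangle$ via the operator identity
\begin{equation*}
P_\ell v_0=(z-sV)^2 v_0+\chi g+[P_\ell,\chi]f,\qquad f:=p_\ell(z,s)^{-1}\tilde{\chi}g,
\end{equation*}
which follows from $p_\ell(z,s)p_\ell(z,s)^{-1}=\mathrm{Id}$ and the fact that $(z-sV)^2$ commutes with $\chi$. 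The commutator $[P_\ell,\chi]=-\chi''-2\chi'\partial_x$ is first order, and a single integration by parts using $\partial_x|f|^2=2\Re(\bar f\partial_x f)$ produces the clean identity $\Re\langle[P_\ell,\chi]f,v_0\rangle=\int(\chi')^2|f|^2\,\mathrm{d}x$, free of any $\partial_x f$ or $P_\ell^{1/2}f$. Together with the pointwise bound $|z-sV|\lesssim\langle z\rangle$ on $\mathrm{supp}\,\chi$ and $\|\tilde{\chi}f\|_{L^2}\leq\|\tilde{\chi}p_\ell(z,s)^{-1}\tilde{\chi}\|_{L^2\to L^2}\|g\|_{L^2}$, this yields the required bound for $\|P_\ell^{1/2}v_0\|_{L^2}$ with all $\ell$-dependence absorbed into the cut-off pencil inverse. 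Summing with the estimate for $\|v_1-sVv_0\|_{L^2}$ completes the proof.
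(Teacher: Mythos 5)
Your decomposition via \eqref{Resolvent of H}, the identity $P_\ell v_0=(z-sV)^2v_0+\chi g+[P_\ell,\chi]f$, and the integration-by-parts identity $\Re\langle[P_\ell,\chi]f,v_0\rangle=\int(\chi')^2|f|^2$ are all correct and well observed. However, the power counting in $\langle z\rangle$ does not close, and this is a genuine gap rather than a bookkeeping slip.

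Concretely: your source $g=z\chi u_0+\chi(u_1-sVu_0)$ already satisfies only $\|g\|_{L^2}\lesssim\langle z\rangle\|u\|_{\dot{\mathcal{E}}_\ell}$ (the $z\chi u_0$ term carries a full factor of $z$), so $\|v_0\|_{L^2}\lesssim\langle z\rangle\|\tilde{\chi}p_\ell(z,s)^{-1}\tilde{\chi}\|\,\|u\|_{\dot{\mathcal{E}}_\ell}$. In the second component, the identity $v_1-sVv_0=\chi^2u_0+(z-2sV)v_0$ then gives
\begin{align*}
\|(z-2sV)v_0\|_{L^2}\lesssim\langle z\rangle\|v_0\|_{L^2}\lesssim\langle z\rangle^2\|\tilde{\chi}p_\ell(z,s)^{-1}\tilde{\chi}\|\,\|u\|_{\dot{\mathcal{E}}_\ell},
\end{align*}
which is one power of $\langle z\rangle$ too many. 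In the first component, $\Re\langle(z-sV)^2v_0,v_0\rangle\lesssim\langle z\rangle^2\|v_0\|^2_{L^2}\lesssim\langle z\rangle^4\|\tilde{\chi}p_\ell(z,s)^{-1}\tilde{\chi}\|^2\|u\|^2_{\dot{\mathcal{E}}_\ell}$, again one power too many after taking the square root; moreover the cross term $\Re\langle\chi g,v_0\rangle$ only produces $\|\tilde{\chi}p_\ell(z,s)^{-1}\tilde{\chi}\|\,\|g\|^2_{L^2}$, which after the square root yields the wrong power $\|\tilde{\chi}p_\ell(z,s)^{-1}\tilde{\chi}\|^{1/2}$ instead of $\|\tilde{\chi}p_\ell(z,s)^{-1}\tilde{\chi}\|$. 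The missing idea is the complex (Stein) interpolation used in the paper's proof: one introduces $\Lambda(\theta)=\langle z\rangle^{-2\theta}P_\ell^\theta\chi p_\ell(z,s)^{-1}\chi P_\ell^{-\theta}$ (and analogous families for the other two estimates) so that each factor of $P_\ell^{1/2}$ is systematically paired with a factor of $\langle z\rangle^{-1}$; this is precisely what trades the offending extra power of $\langle z\rangle$ against the regularity of the input and gets the scaling right on both components. A direct Cauchy–Schwarz estimate as in your proposal cannot reproduce this cancellation, so as written the argument only yields $\|\hat{R}_{\chi,\ell}(z)\|_{\dot{\mathcal{E}}_\ell\to\dot{\mathcal{E}}_\ell}\lesssim\langle z\rangle^2\|\tilde{\chi}p_\ell(z,s)^{-1}\tilde{\chi}\|_{L^2\to L^2}$ at best, which is insufficient for the integrability arguments in the proof of Theorem \ref{Theoreme principal}.
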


\begin{proof}
	Since the norms $\|.\|_{_{\mathcal{E}_{\ell}}}$ and $\|.\|_{_{\dot{\mathcal{E}}_{\ell}}}$ are locally equivalent thanks to the Hardy type estimate $\|\chi.\|_{_{L^2}}\lesssim\|P_\ell^{1/2}.\|_{_{L^2}}$ uniformly in $\ell$ (\textit{cf.} \cite[Lem. 9.5]{GeoGerHA17}), we can work on $(\mathcal{E}_{\ell},\|.\|_{_{\mathcal{E}_{\ell}}})$.
	For $\left(u_{0},u_{1}\right)\in\mathcal{E}_{\ell}$, we have
	\begin{align}
	\label{Application of the cut-off charge Klein-Gordon operator to elements of mathcal{E}}
		\hat{R}_{\chi,\ell}(z)\begin{pmatrix}
		u_0\\u_1
		\end{pmatrix}&=\begin{pmatrix}
		\chi p_\ell(z,s)^{-1}\chi((z-sV)u_0+u_1)\\
		\chi(1+(z-sV)p_\ell(z,s)^{-1}(z-sV))\chi u_0+(z-sV)\chi p_\ell(z,s)^{-1}\chi u_1
		\end{pmatrix}
	\end{align}
	and since it holds
	\begin{align*}
		\|(z-sV)\chi p_\ell(z,s)^{-1}\chi u_1\|_{_{L^2}}\leq(1+|s|\|V\|_{_{L^\infty}})\langle z\rangle\|\chi p_\ell(z,s)^{-1}\chi\|_{_{L^{2}\to L^{2}}}\|u_1\|_{_{L^2}},
	\end{align*}
	the $\mathcal{E}_{\ell}$-norm of \eqref{Application of the cut-off charge Klein-Gordon operator to elements of mathcal{E}} can be bounded if we show the following estimates:
	\begin{subequations}
	\begin{align}
	\label{Estimate C_a}
		\|P_\ell^{1/2}\chi p_\ell(z,s)^{-1}\chi(z-sV)u_0\|_{_{L^2}}&\leq C_\mathrm{a}\langle z\rangle\|\tilde{\chi}p_\ell(z,s)^{-1}\tilde{\chi}\|_{_{L^2\to L^2}}\|P_\ell^{1/2}u_0\|_{_{L^2}},
	\end{align}
	\begin{align}
	\label{Estimate C_b}
		\|P_\ell^{1/2}\chi p_\ell(z,s)^{-1}\chi u_1\|_{_{L^2}}&\leq C_\mathrm{b}\langle z\rangle\|\tilde{\chi}p_\ell(z,s)^{-1}\tilde{\chi}\|_{_{L^2\to L^2}}\|u_1\|_{_{L^2}},
	\end{align}
	\begin{align}
	\label{Estimate C_c}
		\|\chi(1+(z-sV)p_\ell(z,s)^{-1}(z-sV))\chi u_0\|_{_{L^2}}&\leq C_\mathrm{c}\langle z\rangle\|\tilde{\chi}p_\ell(z,s)^{-1}\tilde{\chi}\|_{_{L^2\to L^2}}\|P_\ell^{1/2}u_0\|_{_{L^2}}.
	\end{align}
	\end{subequations}
	We use complex interpolation.
	\paragraph{Estimate \eqref{Estimate C_a}.} Let us define
	\begin{align*}
		\Lambda_\mathrm{a}(\theta)&:=\langle z\rangle^{-2\theta}P_\ell^{\theta}\chi p_\ell(z,s)^{-1}\chi P_\ell^{-\theta}.
	\end{align*}
	By functional calculus, $\Lambda_\mathrm{a}$ is analytic from $[0,1]+\mathrm{i}\mathbb{R}$ to $\mathcal{L}(L^2,L^2)$ because $P_\ell>0$ and $\langle z\rangle>0$. We want to show that
	\begin{align*}
		\|\Lambda_\mathrm{a}(1/2)u\|_{_{L^2}}&\leq C_\mathrm{a}\|\tilde{\chi}p_\ell(z,s)^{-1}\tilde{\chi}\|_{_{L^2\to L^2}}\|u\|_{_{L^2}}&\forall u\in L^2
	\end{align*}
	for some $C_\mathrm{a}>0$. By the maximum principle, it is sufficient to bound $\Lambda_\mathrm{a}(\theta)$ for $\theta\in\{0,1\}+\mathrm{i}\mathbb{R}$, and since $P_\ell$ is self-adjoint, it is sufficient by functional calculus to restrict ourselves to $\Im\theta=0$. If $\theta=0$, there is nothing to do. Now for $\theta=1$, we put $u=(z-sV)u_0$ and try to show that
	\begin{align}
	\label{Estimate to show for C_a}
		\|P_\ell\chi p_\ell(z,s)^{-1}\chi u\|_{_{L^2\to L^2}}&\leq C_\mathrm{a}\langle z\rangle\|\tilde{\chi}p_\ell(z,s)^{-1}\tilde{\chi}\|_{_{L^2\to L^2}}\|P_\ell u\|_{_{L^2}}.
	\end{align}
	Write
	\begin{align}
	\label{Useful identity}
		P_\ell\chi p_\ell(z,s)^{-1}\chi&=\underbrace{[P_\ell,\chi] p_\ell(z,s)^{-1}\chi}_{=:A}+\underbrace{\chi P_\ell p_\ell(z,s)^{-1}\chi}_{=:B}.
	\end{align}
	We first deal with $A$. Pick $z_0\in\rho(\hat{K}_{\ell})\cap \mathbb{C}^+$ so that $p_{\ell}(z_0,s)^{-1}$ exists (\textit{cf.} \eqref{Relation between the domains of the quadratic pencil and the corresponding Hamiltonian}). Then
	\begin{align*}
		\big[P_\ell,\chi\big]p_\ell(z,s)^{-1}&=p_\ell(z_0,s)^{-1}\big[p_\ell(z_0,s),\big[P_\ell,\chi\big]\big]p_\ell(z,s)^{-1}+p_\ell(z_0,s)^{-1}\big[P_\ell,\chi\big]p_\ell(z_0,s)p_\ell(z,s)^{-1}
	\end{align*}
	with 
	\begin{align*}
		\big[P_\ell,\chi\big]&=-\chi\partial_x-\chi',\\
		\big[p_\ell(z_0,s),\big[P_\ell,\chi\big]\big]&=2\chi'\partial_x^2+(\chi'+\chi'')\partial_x+2z_0sV'\chi-2s^2VV'\chi.
	\end{align*}
	By pseudodifferential calculus, we get:
	\begin{align*}
		p_\ell(z_0,s)^{-1}&\in\Psi^{-2,0},&\big[P_\ell,\chi\big]&\in\Psi^{1,-\infty},&\big[p_\ell(z_0,s),\big[P_\ell,\chi\big]\big]&\in\Psi^{2,-\infty}.
	\end{align*}
	On the other hand, we have
	\begin{align}
	\label{Commutator P_ell and p_ell(z,s)}
		p_\ell(z_0,s)p_\ell(z,s)^{-1}&=\big(p_\ell(z,s)+(z^2-z_0^2)-2(z-z_0)sV)\big)p_\ell(z,s)^{-1}\nonumber\\
		&=1+p_\ell(z,s)^{-1}(z^2-z_0^2)-2(z-z_0)sVp_\ell(z,s)^{-1}\nonumber\\
		&=p_\ell(z,s)^{-1}(P_{\ell}-(z_0^2-2zsV+s^2V^2))-2(z-z_0)sVp_\ell(z,s)^{-1}.
	\end{align}
	Using the identity
	\begin{align*}
		\chi p_\ell(z,s)^{-1}P_{\ell}\chi&=\chi p_\ell(z,s)^{-1}\big[P_{\ell},\chi\big]+\chi p_\ell(z,s)^{-1}\chi P_{\ell}
	\end{align*}
	and the uniform bound in $\ell$
	\begin{align}
	\label{chi derivative}
		\|\chi'u'\|_{_{L^2}}&\lesssim\|\chi_1u\|_{_{L^2}}+\|\chi_2u''\|_{_{L^2}}&\chi_j\in\mathcal{C}^{\infty}_{\mathrm{c}}(\mathbb{R},\mathbb{R}),\ \mathrm{Supp\,} \chi_j=\mathrm{Supp\,}\chi,
	\end{align}
	we obtain from \eqref{Useful identity}
	\begin{align}
	\label{Estimate on A}
		\|Au\|_{_{L^2}}&\leq \tilde{C}_\mathrm{a}\langle z\rangle\|\tilde{\chi}p_\ell(z,s)^{-1}\tilde{\chi}\|_{_{L^2\to L^2}}\|u\|_{_{L^2}}
	\end{align}
	where the constant $\tilde{C}_\mathrm{a}$ only depends on $z_0,s,V,V',\chi,\chi',\chi'',\chi_1$ and $\chi_2$.
	
	We now turn to $B$. Using again \eqref{Commutator P_ell and p_ell(z,s)}, we see that
	\begin{align*}
		\|\chi P_\ell p_\ell(z,s)^{-1}\chi u\|_{_{L^2}}&\leq\|\chi p_\ell(z_0,s)p_\ell(z,s)^{-1}\chi u\|_{_{L^2}}\\
		&+\|\chi(z_0^2-2z_0sV+s^2V^2)p_\ell(z,s)^{-1}\chi u\|_{_{L^2}}\\
		&\leq\|\chi p_\ell(z,s)^{-1}(P_{\ell}-(z_0^2-2zsV+s^2V^2))\chi u\|_{_{L^2}}+2\big(|z|+|z_0|\big)|s|\|V\|_{_{L^\infty}}\|\chi p_\ell(z,s)^{-1}\chi u\|_{_{L^2}}\\
		&+\big(|z_0|^2+2|z_0||s|\|V\|_{_{L^\infty}}+s^2\|V\|_{_{L^\infty}}^2\big)\|\chi p_\ell(z,s)^{-1}\chi u\|_{_{L^2}}\\
		&\leq\|\chi p_\ell(z,s)^{-1}P_{\ell}\chi v\|_{_{L^2}}\\
		&+\underbrace{2\big\langle|z_0|+2|s|\|V\|_{_{L^\infty}}\big\rangle^2}_{=:\tilde{\tilde{C}}_\mathrm{a}}\langle z\rangle\|\chi p_\ell(z,s)^{-1}\chi\|_{_{L^2\to L^2}} \|u\|_{_{L^2}}.
	\end{align*}
	Commuting $P_{\ell}$ with $\chi$ and using \eqref{chi derivative}, we get \eqref{Estimate to show for C_a} with $C_\mathrm{a}=\max\big\{\tilde{C}_\mathrm{a},1+\tilde{\tilde{C}}_\mathrm{a}\big\}$.
	\paragraph{Estimate \eqref{Estimate C_b}.} Let us define
	\begin{align*}
		\Lambda_\mathrm{b}(\theta)&=\langle z\rangle^{-2\theta}P_\ell^{\theta}\chi p_\ell(z,s)^{-1}\chi&\theta\in[0,1]+\mathrm{i}\mathbb{R}.
	\end{align*}
	$\Lambda_\mathrm{b}$ is analytic from $[0,1]+\mathrm{i}\mathbb{R}$ to $\mathcal{L}(L^2,L^2)$. As the above estimate, it is sufficient to show a bound on $\Lambda_\mathrm{b}(1)$, the imaginary part of $\theta$ playing no role and the case $\theta=0$ being trivial. We get \eqref{Estimate C_b} if we show that
	\begin{align}
	\label{Estimate to show for C_b}
		\|P_\ell\chi p_\ell(z,s)^{-1}\chi u\|_{_{L^2}}&\leq C_\mathrm{b}\langle z\rangle^{2}\|\tilde{\chi}p_\ell(z,s)^{-1}\tilde{\chi}\|_{_{L^2}}\|u\|_{_{L^2}}\forall u\in L^2
	\end{align}
	for some $C_\mathrm{b}>0$. Using the identity \eqref{Useful identity} and the estimate \eqref{Estimate on A}, we obtain
	\begin{align*}
		\|P_\ell\chi p_\ell(z,s)^{-1}\chi u\|_{_{L^2\to L^2}}&\leq \tilde{C}_\mathrm{a}\langle z\rangle\|\tilde{\chi}p_\ell(z,s)^{-1}\tilde{\chi}\|_{_{L^2\to L^2}}\|u\|_{_{L^2}}+\|\chi p_\ell(z,s)^{-1}P_{\ell}\chi u\|_{_{L^2}}
	\end{align*}
	but this time we ask for the $L^2$ norm of $u$. Hence, we use that
	\begin{align*}
		p_\ell(z,s)^{-1}P_{\ell}&=1+p_\ell(z,s)^{-1}(z-sV)^2
	\end{align*}
	which yields \eqref{Estimate to show for C_b} with $C_\mathrm{b}=\max\big\{\tilde{C}_\mathrm{a},2\big\langle s\|V\|_{_{L^\infty}}\big\rangle^2\big\}$.
	\paragraph{Estimate \eqref{Estimate C_c}.} Let us define
	\begin{align*}
		\Lambda_\mathrm{c}(\theta)&:=\langle z\rangle^{2(\theta-1)}\chi(1+(z-(sV)^{2(1-\theta)})p_\ell(z,s)^{-1}(z-2^{2\theta-1}sV))\chi P_\ell^{-\theta}.
	\end{align*}
	Once again, $\Lambda_\mathrm{c}$ is analytic from $[0,1]+\mathrm{i}\mathbb{R}$ to $\mathcal{L}(L^2,L^2)$ and (dropping the imaginary part)
	\begin{align*}
		\|\Lambda_\mathrm{c}(0)\|_{_{L^2\to L^2}}&\leq(2+|s|\|V\|_{_{L^\infty}})^3\|\chi p_{\ell}(z,s)^{-1}\chi\|_{_{L^2\to L^2}}.
	\end{align*}
	We then get a bound on $\Lambda_{c}(1)$: we prove
	\begin{align*}
		\|\chi(1+zp_\ell(z,s)^{-1}(z-2sV))\chi u\|_{_{L^2}}&\leq C_\mathrm{c}\|\tilde{\chi}p_\ell(z,s)^{-1}\tilde{\chi}\|_{_{L^2\to L^2}}\|P_\ell u\|_{_{L^2}}&\forall u\in L^2.
	\end{align*}
	We have
	\begin{align*}
		\|\chi(1+p_\ell(z,s)^{-1}z(z-2sV))\chi u\|_{_{L^2}}&\leq\|\chi(1+p_\ell(z,s)^{-1}(z-sV)^2\chi)u\|_{_{L^2}}+\|\chi p_\ell(z,s)^{-1}s^2V^2\chi u\|_{_{L^2}}
	\end{align*}
	and
	\begin{align*}
		\chi(1+p_\ell(z,s)^{-1}(z-sV)^2)\chi&=\chi p_\ell(z,s)^{-1}P_{\ell}\chi.
	\end{align*}
	Commuting $P_{\ell}$ with $\chi$ and using \eqref{chi derivative} gives us
	\begin{align*}
		\|\chi p_\ell(z,s)^{-1}z(z-2sV)\chi v\|_{_{L^2}}&\leq C_\mathrm{c}\|\tilde{\chi}p_\ell(z,s)^{-1}\tilde{\chi}\|_{_{L^2\to L^2}}\|\langle P_\ell\rangle v\|_{_{L^2}}
	\end{align*}
	with $C_\mathrm{c}=\max\big\{(1+|s|\|V\|_{_{L^\infty}})^3,\|\chi\|_{_{L^\infty}},\|\chi_1\|_{_{L^\infty}}+\|\chi_2\|_{_{L^\infty}}\big\}$.
\end{proof}
The proof is now the same as in \cite[§3.2]{BoHa08}. For $\nu>0$ fixed and for $\ell\in\mathbb{N}$, we define $L_{\nu}^{2}(\mathbb{R},\dot{\mathcal{E}}_\ell)$ as the class of functions $t\mapsto v\left(t\right)$ with values in $\dot{\mathcal{E}}_\ell$ such that $t\mapsto\mathrm{e}^{-\nu t}v(t)\in L^{2}(\mathbb{R},\dot{\mathcal{E}}_\ell)$. For $u\in\dot{\mathcal{E}}_\ell$, the componentwise defined function
	\begin{align*}
	v\left(t\right)&=
	\begin{cases}
	\mathrm{e}^{-\mathrm{i}t\hat{K}_\ell}u&\text{for $t\geq0$}\\
	0&\text{for $t<0$}
	\end{cases}
	\end{align*}
	is in $L_{\nu}^{2}(\mathbb{R},\dot{\mathcal{E}}_\ell)$ if $\nu$ is sufficiently large and thus
	\begin{align}
	\label{Laplace}
		\tilde{v}(z)&=\int_{0}^{+\infty}\mathrm{e}^{\mathrm{i}zt}v(t)\mathrm{d}t
	\end{align}
	is well-defined as soon as $\Im z\geq\nu$. For all $t\geq0$, we have the inversion formula
	\begin{align}
	\label{Inversion Laplace}
		v(t)&=\frac{1}{2\pi}\int_{-\infty+\mathrm{i}\nu}^{+\infty+\mathrm{i}\nu}\mathrm{e}^{-\mathrm{i}zt}\tilde{v}(z)\mathrm{d}z
	\end{align}
	that is
	\begin{align*}
		\mathrm{e}^{-\mathrm{i}t\hat{K}_{\ell}}u=\frac{1}{2\pi\mathrm{i}}\int_{-\infty+\mathrm{i}\nu}^{+\infty+\mathrm{i}\nu}\mathrm{e}^{-\mathrm{i}zt}(\hat{K}_\ell-z)^{-1}u\mathrm{d}z
	\end{align*}
	in the $L^{2}_{\nu}(\mathbb{R},\dot{\mathcal{E}}_\ell)$ sense. We then use the following result:
	\begin{lemma}[Lemma 3.2 in \cite{BoHa08}]
	\label{Lemma 3.2 in BoHa}
		Let $N\in\mathbb{N}$, $\chi\in\mathcal{C}^\infty_{\mathrm{c}}(\mathbb{R},\mathbb{R})$ and define for all $j\in\mathbb{N}$ the spaces $\dot{\mathcal{E}_\ell}^{-j}:=(\hat{K}_{\ell}-\mathrm{i})^{j}\dot{\mathcal{E}}_{\ell}$. Then for all $k\in\{0,\ldots,N\}$, there exist bounded operators $B_j\in\mathcal{L}(\dot{\mathcal{E}}_{\ell}^{-k},\dot{\mathcal{E}}_{\ell}^{-k-j})$ and $B\in\mathcal{L}(\dot{\mathcal{E}}_{\ell}^{-k},\dot{\mathcal{E}}_{\ell}^{-k-N-1})$ such that
		\begin{align*}
			\hat{R}_{\chi,\ell}(z)&=\sum_{j=0}^{N}\frac{B_j}{(z-\mathrm{i}(\nu+1))^{j+1}}+\frac{B\hat{R}_{\tilde{\chi},\ell}(z)\chi}{(z-\mathrm{i}(\nu+1))^{N+1}}
		\end{align*}
		for some $\tilde{\chi}\in\mathcal{C}^\infty_{\mathrm{c}}(\mathbb{R},\mathbb{R})$ satisfying $\tilde{\chi}\chi=\chi$.
	\end{lemma}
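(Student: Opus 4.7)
Set $z_0 := \mathrm{i}(\nu+1)$. The plan is to combine an iterated resolvent identity around $z_0$ with a locality argument that swaps the free resolvent for the cut-off resolvent in the remainder, and then to verify the announced mapping properties by commutator bookkeeping on the scale $\dot{\mathcal{E}}_\ell^{-k}$.

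From $\hat{K}_\ell - z = (\hat{K}_\ell - z_0) - (z-z_0)$ and $(\hat{K}_\ell - z)\hat{R}_\ell(z) = \mathrm{Id}$, I would first derive
\begin{align*}
\hat{R}_\ell(z) = -\frac{1}{z-z_0} + \frac{(\hat{K}_\ell - z_0)\hat{R}_\ell(z)}{z-z_0},
\end{align*}
and then upgrade it by a straightforward induction on $N$ into
\begin{align*}
\hat{R}_\ell(z) = -\sum_{j=0}^{N}\frac{(\hat{K}_\ell - z_0)^j}{(z-z_0)^{j+1}} + \frac{(\hat{K}_\ell - z_0)^{N+1}\hat{R}_\ell(z)}{(z-z_0)^{N+1}}.
\end{align*}
Sandwiching by $\chi$ yields the announced form with $B_j := -\chi(\hat{K}_\ell - z_0)^j\chi$ and the remainder equal to $(z-z_0)^{-N-1}\chi(\hat{K}_\ell - z_0)^{N+1}\hat{R}_\ell(z)\chi$. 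To bring the latter into the form $B\,\hat{R}_{\tilde\chi,\ell}(z)\chi$, I would choose $\tilde\chi \in \mathcal{C}^\infty_{\mathrm c}(\mathbb{R},\mathbb{R})$ identically equal to $1$ on a neighborhood $U$ of $\mathrm{Supp}\,\chi$, so that $\tilde\chi\chi = \chi$. Since each entry of the matrix $\hat{K}_\ell$ is either a smooth multiplier or a differential operator, $\hat{K}_\ell$ is local; as $(1-\tilde\chi)f$ vanishes on $U$ for any $f$, so does $(\hat{K}_\ell - z_0)^{N+1}\bigl((1-\tilde\chi)f\bigr)$, and multiplying on the left by $\chi$ annihilates it. This gives the operator identity
\begin{align*}
\chi(\hat{K}_\ell - z_0)^{N+1}(1-\tilde\chi) = 0,\qquad \text{i.e.,}\qquad \chi(\hat{K}_\ell - z_0)^{N+1} = \chi(\hat{K}_\ell - z_0)^{N+1}\tilde\chi.
\end{align*}
Combined with $\hat{R}_{\tilde\chi,\ell}(z)\chi = \tilde\chi\hat{R}_\ell(z)\tilde\chi\chi = \tilde\chi\hat{R}_\ell(z)\chi$, this rewrites the remainder as $B\,\hat{R}_{\tilde\chi,\ell}(z)\chi$ with $B := \chi(\hat{K}_\ell - z_0)^{N+1}$, manifestly independent of $z$.

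To conclude, I would verify the mapping properties on the scale $\dot{\mathcal{E}}_\ell^{-k}$ by expanding $(\hat{K}_\ell - z_0)^j = \sum_{m=0}^{j}\binom{j}{m}(\mathrm{i}-z_0)^{j-m}(\hat{K}_\ell - \mathrm{i})^m$. For $u = (\hat{K}_\ell - \mathrm{i})^k v \in \dot{\mathcal{E}}_\ell^{-k}$, both $B_j u$ and $Bu$ become linear combinations of terms of the form $\chi(\hat{K}_\ell - \mathrm{i})^{m+k}\chi v$ (with the inner $\chi$ dropped in the case of $B$) with $m \leq j$, resp.\ $m \leq N+1$; commuting $\chi$ through the powers of $(\hat{K}_\ell - \mathrm{i})$—using that each $[\hat{K}_\ell,\chi]$ is a compactly supported first-order differential operator—places each such term in $\dot{\mathcal{E}}_\ell^{-(m+k)} \subset \dot{\mathcal{E}}_\ell^{-(k+j)}$, resp.\ $\dot{\mathcal{E}}_\ell^{-(k+N+1)}$, modulo the base case that multiplication by $\chi$ is continuous on $\dot{\mathcal{E}}_\ell$, which follows from the Hardy-type estimate $\|\chi v\|_{L^2} \lesssim \|P_\ell^{1/2}v\|_{L^2}$ together with the local equivalence \eqref{Local equivalence of norms}. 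The main technical obstacle I anticipate is precisely this last step: the commutator bookkeeping must be executed so that the norm bounds are uniform in $\ell$, and in particular the non-local object $B$, applied to the non-compactly supported vector $\hat{R}_\ell(z)\chi f$, must be shown to produce an image in $\dot{\mathcal{E}}_\ell^{-(k+N+1)}$ with the correct loss of exactly $N+1$ powers of $(\hat{K}_\ell - \mathrm{i})$.
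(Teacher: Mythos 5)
The paper states this lemma with a citation to \cite{BoHa08} and does not reproduce the proof, so the natural comparison is with the argument in that reference. Your proposal is essentially that argument, and it is correct: the iterated resolvent identity at $z_0 = \mathrm{i}(\nu+1)$, sandwiched by $\chi$, produces $B_j = -\chi(\hat K_\ell - z_0)^j\chi$ and the $z$-independent factor $B = \chi(\hat K_\ell - z_0)^{N+1}$, and locality of the first-order differential system $\hat K_\ell$ licenses the insertion of $\tilde\chi$ that turns $\hat R_\ell(z)\chi$ into $\hat R_{\tilde\chi,\ell}(z)\chi$. The mapping properties on the scale $\dot{\mathcal{E}}_\ell^{-k}$ follow exactly as you sketch, by rewriting $(\hat K_\ell - z_0)^{m}$ as a polynomial in $(\hat K_\ell - \mathrm{i})$ and using that multiplication by the compactly supported $\chi$ (resp.\ $\tilde\chi$) is bounded on $\dot{\mathcal{E}}_\ell$ (Hardy plus the local equivalence \eqref{Local equivalence of norms}) and hence, via the commutator bookkeeping you describe, on each $\dot{\mathcal{E}}_\ell^{-j}$. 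Two points you flag as obstacles are in fact benign: the lemma is stated at fixed $\ell$ and the $\dot{\mathcal{E}}_\ell^{-j}$-norms are themselves built from $\hat K_\ell - \mathrm{i}$, so no $\ell$-uniformity is being asserted here; and the nominal non-locality of $B = \chi(\hat K_\ell - z_0)^{N+1}$ is irrelevant because in the formula $B$ only ever acts on $\hat R_{\tilde\chi,\ell}(z)\chi f = \tilde\chi\hat R_\ell(z)\chi f$, which already carries the cutoff $\tilde\chi$; if one prefers a manifestly localized $B$, one can equally well take $B = \chi(\hat K_\ell - z_0)^{N+1}\tilde\chi$ after introducing an intermediate cutoff $\tilde{\tilde\chi}$ with $\tilde\chi\tilde{\tilde\chi}=\tilde\chi$ and replacing $\hat R_{\tilde\chi,\ell}$ by $\hat R_{\tilde{\tilde\chi},\ell}$.
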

	Now define
	\begin{align}
	\label{tildeR}
		\tilde{R}_{\chi,\ell}(z)&:=\hat{R}_{\chi,\ell}(z)-\sum_{0\leq j\leq 1}\frac{B_j}{(z-\mathrm{i}(\nu+1))^{j+1}}
	\end{align}
	with $B_j\in\mathcal{L}(\dot{\mathcal{E}}_{\ell},\dot{\mathcal{E}}_{\ell}^{-j})$ as in Lemma \ref{Lemma 3.2 in BoHa}; we thus have\footnote{The purpose of Lemma \ref{Lemma 3.2 in BoHa} is to provide us with integrability in $z$ at the prize of using the weaker spaces $\dot{\mathcal{E}}_{\ell}^{-2}$. The task then consists in showing that all the terms in $\dot{\mathcal{E}}_{\ell}^{-2}$ vanish after deformation of contours and the remaining terms are in $\dot{\mathcal{E}}_{\ell}$.}
	\begin{align}
	\label{Estimate resolvent modified}
		\|\tilde{R}_{\chi,\ell}(z)\|_{_{\mathcal{L}(\dot{\mathcal{E}}_{\ell},\dot{\mathcal{E}}_{\ell}^{-2})}}&\lesssim\langle z\rangle^{-2}\|\hat{R}_{\chi,\ell}(z)\|_{_{\mathcal{L}(\dot{\mathcal{E}}_{\ell},\dot{\mathcal{E}}_{\ell})}}.
	\end{align}
	We can show that
	\begin{align}
	\label{Null integral}
		\int_{-\infty+\mathrm{i}\nu}^{+\infty+\mathrm{i}\nu}\frac{B_j}{(z-\mathrm{i}(\nu+1))^{j+1}}\mathrm{d}z&=0
	\end{align}
	for all $j\in\mathbb{N}$ using a contour deformation (integrate first along the square with apexes $\pm C_0+\mathrm{i}\nu$, $\pm C_0-\mathrm{i}\mu$ and let $\mu\to-\infty$ then $C_0\to+\infty$). We thus obtain
	\begin{align}
	\label{Contour deformation}
		\chi\mathrm{e}^{-\mathrm{i}t\hat{K}_{\ell}}\chi u=\frac{1}{2\pi\mathrm{i}}\int_{-\infty+\mathrm{i}\nu}^{+\infty+\mathrm{i}\nu}\mathrm{e}^{-\mathrm{i}zt}\tilde{R}_{\chi,\ell}(z)u\mathrm{d}z
	\end{align}
	and the integral absolutely converges in $\mathcal{L}(\dot{\mathcal{E}}_\ell,\dot{\mathcal{E}}_\ell^{-2})$.
	\begin{figure}[!h]
		\centering
		\captionsetup{justification=centering,margin=1.8cm}
		\begin{center}
			\begin{tikzpicture}[scale=0.8]
			\filldraw[draw=black,dashed,fill=gray!15](-4,1)--(4,1)--(4,-2.552122)--plot[domain=4:1.75](\x,{-ln(1+(\x-1.75)^2)-0.75})--(-1.75,-0.75)--plot[domain=1.75:4](-\x,{-ln(1+(\x-1.75)^2)-0.75})--(-4,1);
			\draw[->][thick](0,-2)--(0,2)node[above]{$\mathrm{i}\mathbb{R}$};
			\draw[->][thick](-5,0)--(5,0)node[right]{$\mathbb{R}$};
			
			\draw[thick](-4,-0.1)--(-4,0.1)node[anchor=south east]{$-K$};
			\draw[thick](-1.75,-0.1)--(-1.75,0.1)node[anchor=south]{$-R\ell$};
			\draw[thick](4,-0.1)--(4,0.1)node[anchor=south west]{$K$};
			\draw[thick](1.75,-0.1)--(1.75,0.1)node[anchor=south]{$R\ell$};
			\draw[thick](-0.1,-0.75)--(0.1,-0.75)node[anchor=north east]{$-\mathrm{i}\mu$};
			\draw[thick](-0.1,1)--(0.1,1)node[anchor=south west]{$\mathrm{i}\nu$};
			
			\draw[thick](0.5,-0.75)node[anchor=south west]{$\Gamma_{1}$};
			\draw[thick](2.5,-1.0)node[anchor=west]{$\Gamma_{2}$};
			\draw[thick](4,-0.65)node[anchor=north west]{$\Gamma_{3}$};
			\draw[thick](-2.5,-1.0)node[anchor=east]{$\Gamma_{4}$};
			\draw[thick](-4,-0.65)node[anchor=north east]{$\Gamma_{5}$};

			\draw[->](-2,-2.5)--(-2.8,-1.55);
			\draw[->](2,-2.5)--(2.8,-1.55);
			\draw(0,-2.5)node[fill=white]{$\left\{\Im z=-\ln\left\langle\left|\Re z\right|-R\ell\right\rangle-\mu\right\}$};
			\end{tikzpicture}
			\caption{\label{Integration contour expansion}The contour used for the derivation of the resonance expansion.}
		\end{center}
	\end{figure}
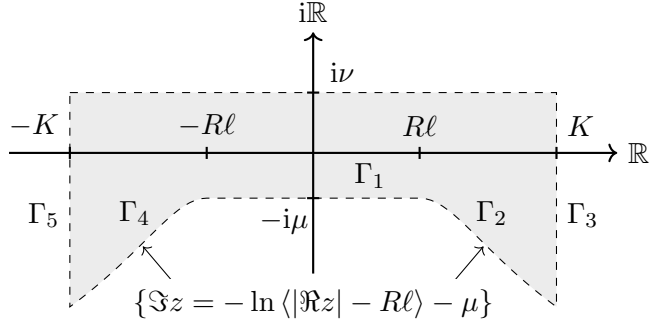
	We then integrate $\mathrm{e}^{-\mathrm{i}zt}\tilde{R}_{\chi,\ell}\,u$ over the (positively oriented) contour described in Figure \ref{Integration contour expansion} defined for $K,\mu>0$. Setting
	\begin{align*}
		I_{j}&:=\frac{1}{2\pi\mathrm{i}}\int_{\Gamma_{j}}\mathrm{e}^{-\mathrm{i}zt}\tilde{R}_{\chi,\ell}(z)u\mathrm{d}z,
	\end{align*}
	one obtains by the residue theorem:
	\begin{align}
	\label{Residue}
		\frac{1}{2\pi\mathrm{i}}\int_{-K+\mathrm{i}\nu}^{K+\mathrm{i}\nu}\mathrm{e}^{-\mathrm{i}zt}\tilde{R}_{\chi,\ell}(z)u\mathrm{d}z&=\sum_{\substack{z_j\in\mathrm{Res}(p_{\ell})\\\Im z_j>-\mu}}\sum_{k=0}^{m(z_j)}\mathrm{e}^{-\mathrm{i}z_jt}t^k\Pi_{j,k}^\chi u+\sum_{1\leq j\leq 5}I_{j}.
	\end{align}
	Using the estimate \eqref{Estimate on the cut-off inverse of the quadratic pencil in the zone IV formula} in the Zone IV as well as Proposition \ref{Estimee pour la resolvente tronquee de Lell dans Emodell} and \eqref{Estimate resolvent modified} above, we compute for $t$ large enough:
	\begin{align*}
		\|I_{3}\|_{_{\dot{\mathcal{E}}_{\ell}^{-2}}}&\lesssim\int_{K-\mathrm{i}\ln\langle K\rangle}^{K+\mathrm{i}\nu}\|\mathrm{e}^{-\mathrm{i}\lambda t}\tilde{R}_{\chi,\ell}(\lambda)u\|_{_{\dot{\mathcal{E}}_{\ell}^{-2}}}\mathrm{d}\lambda\\
		&\lesssim\|u\|_{_{\dot{\mathcal{E}}_{\ell}}}\int_{\ln\langle K\rangle}^{\nu}\langle z\rangle^{-2}\mathrm{e}^{\lambda t+C|\lambda|}\mathrm{d}\lambda\\
		&\lesssim\langle K\rangle^{-2}\frac{\mathrm{e}^{\nu t}}{t}\|u\|_{_{\dot{\mathcal{E}}_{\ell}}}.
	\end{align*}
	We let $K\to+\infty$: the integrals $I_{3}$ and $I_{5}$ then vanish. We still denote by $I_{2}$ and $I_{4}$ the integrals over $\Gamma_{2}$ and $\Gamma_{4}$ which now go to infinity. As for \eqref{Null integral}, we can show that
	\begin{align}
	\label{Null integral 2}
		\int_{\Gamma_{4}\cup\Gamma_{1}\cup\Gamma_{2}}\frac{B_j}{(z-\mathrm{i}(\nu+1))^{j+1}}\mathrm{d}z&=0.
	\end{align}
	Now, using \eqref{final bound in zone III 001} for $I_{1}$ and \eqref{Estimate on the cut-off inverse of the quadratic pencil in the zone IV formula} for $I_{2}$ and $I_{4}$, we get for $t$ large enough:
	\begin{align*}
		\|I_{1}\|_{_{\dot{\mathcal{E}}_{\ell}}}&\lesssim\int_{-R\ell}^{R\ell}\|\mathrm{e}^{-\mu t}\hat{R}_{\chi,\ell}(r-\mathrm{i}\mu)u\|_{_{\dot{\mathcal{E}}_{\ell}}}\mathrm{d}r\\
		&\lesssim\mathrm{e}^{-\mu t}\|u\|_{_{\dot{\mathcal{E}}_{\ell}}}\int_{-R\ell}^{R\ell}\langle r\rangle^{C}\mathrm{d}r\\
		&\lesssim\mathrm{e}^{-\mu t}\ell^{C+1}\|u\|_{_{\dot{\mathcal{E}}_{\ell}}},
	\end{align*}
	\begin{align*}
		\|I_{2}\|_{_{\dot{\mathcal{E}}_{\ell}}}&\lesssim\int_{0}^{+\infty}\left\|\mathrm{e}^{-\mathrm{i}\big(R\ell+\lambda-\mathrm{i}(\mu+\ln\langle\lambda\rangle)\big)t}\hat{R}_{\chi,\ell}\big(R\ell+\lambda-\mathrm{i}(\mu+\ln\langle\lambda\rangle)\big)u\right\|_{_{\dot{\mathcal{E}}_{\ell}}}\mathrm{d}\lambda\\
		&\lesssim\mathrm{e}^{-\mu t}\|u\|_{_{\dot{\mathcal{E}}_{\ell}}}\int_{0}^{+\infty}\mathrm{e}^{-\ln\langle\lambda\rangle t+C(\ln\langle\lambda\rangle+\mu)}\mathrm{d}\lambda\\
		&\lesssim\mathrm{e}^{-\mu t}\|u\|_{_{\dot{\mathcal{E}}_{\ell}}}.
	\end{align*}
	All these estimates hold in $\dot{\mathcal{E}}_{\ell}$, hence we have established part (i) of Theorem \ref{Theoreme principal} with $N=(C+1)/2$.
	
	Let us turn to part (ii). For $\mu<\varepsilon_{0}$ with $\varepsilon_{0}$ as in part 2. of Theorem \ref{No resonances in a strip near the real axis for s small}, we know that there is no resonance in formula \ref{Residue}. If $\ell<\mathrm{e}^{\varepsilon' t}$ for some $\varepsilon'>0$, then
	\begin{align*}
		\|I_{1}\|_{_{\dot{\mathcal{E}}_{\ell}}}&\lesssim\mathrm{e}^{((C+1)\varepsilon'-\mu)t}\|u\|_{_{\dot{\mathcal{E}}_{\ell}}}.
	\end{align*}
	Otherwise, if $\ell\geq\mathrm{e}^{\varepsilon' t}$, then the exponential decay of the local energy as well as the hypotheses on $g$ imply together:
	\begin{align*}
		\|\chi\mathrm{e}^{-\mathrm{i}t\hat{K}_{\ell}}\chi u\|_{_{\dot{\mathcal{E}}_{\ell}}}&\lesssim 1\lesssim\frac{g(\ell(\ell+1))}{g(\mathrm{e}^{2\varepsilon' t})}.
	\end{align*}
	It remains to take $\varepsilon'$ small enough and $\varepsilon:=\min\{2\varepsilon',\mu-(C+1)\varepsilon'\}$ to conclude the proof.
%
%
%
%
\pagebreak
%
%
%
%
\begin{appendices}
	\section{Analytic extension of the coordinate $r$}
	\label{Proof of analytic extension}
	In this appendix, we prove Proposition \ref{Analytic extension of r} which is analogous to \cite[Prop IV.2]{BaMo93}. Let $r\in\left]r_{-},r_{+}\right[$. By equation \eqref{Regge-Wheeler x(r)}, we have
	\begin{align*}
	\exp\left(-\frac{\Lambda}{3A_\pm r_\pm^2}x\right)&=\prod_{\alpha\in I}\left|\frac{r-r_\alpha}{\mathfrak{r}-r_\alpha}\right|^{\frac{A_\alpha r_\alpha^2}{A_\pm r_\pm^2}}.
	\end{align*}
	Call the left-hand side $z$ and the right-hand side $g_\pm(r)$. Observe that $g_\pm(r_\pm)=0$. Since $r\mapsto x(r)$ is increasing and analytic, we can apply the Lagrange's inversion theorem (see for example \cite[§2.2]{DeBr81} and references therein) to write
	\begin{align}
	\label{Lagrange inversion series}
	r&=r_\pm+\sum_{\ell=1}^{+\infty}\frac{z^\ell}{\ell!}\left[\frac{\mathrm{d}^{\ell-1}}{\mathrm{d}r^{\ell-1}}\left(\frac{r-r_\pm}{g_\pm(r)}\right)^{\ell}\,\right]_{r=r_\pm}.
	\end{align}
	Let us introduce Kronecker's symbol
	\begin{align*}
	\delta_{\alpha,\pm}:=\begin{cases}
	1&\text{if $\alpha=\pm$}\\
	0&\text{otherwise}
	\end{cases}
	\end{align*}
	and the notation
	\begin{align*}
	B_{\pm,\alpha}:=\frac{A_\alpha r_\alpha^2}{A_\pm r_\pm^2}-\delta_{\alpha,\pm}.
	\end{align*}
	Observe that $B_{-,-}=B_{+,+}=0$. We then have
	\begin{align*}
	\frac{\mathrm{d}^{\ell-1}}{\mathrm{d}r^{\ell-1}}\left(\frac{r-r_\pm}{g_\pm(r)}\right)^{\ell}&=\left(\prod_{\alpha\in I\setminus\{\pm\}}\left|\mathfrak{r}-r_\alpha\right|^{\ell B_{\pm,\alpha}}\right) \frac{\mathrm{d}^{\ell-1}}{\mathrm{d}r^{\ell-1}}\left(\prod_{\alpha\in I\setminus\{\pm\}}\left|r-r_\alpha\right|^{-\ell B_{\pm,\alpha}}\right).
	\end{align*}
	We now fix $\pm=+$ (the conclusion will not be changed if we choose $\pm=-$). Then
	\begin{align*}
	\frac{\mathrm{d}^{\ell-1}}{\mathrm{d}r^{\ell-1}}\left(\prod_{\alpha\in I\setminus\{+\}}\left(r-r_\alpha\right)^{-\ell B_{+,\alpha}}\right)=\sum_{0\leq k_2\leq k_1\leq\ell}C_{\ell,k_1,k_2}\left(\frac{\mathrm{d}^{\ell-k_1}}{\mathrm{d}r^{\ell-k_1}}(r-r_n)^{-\ell B_{+,n}}\right)\times\nonumber\\
	\times\left(\frac{\mathrm{d}^{k_1-k_2}}{\mathrm{d}r^{k_1-k_2}}(r-r_c)^{-\ell B_{+,c}}\right)\left(\frac{\mathrm{d}^{k_2}}{\mathrm{d}r^{k_2}}(r-r_-)^{-\ell B_{+,-}}\right)
	\end{align*}
	where
	\begin{align*}
	C_{\ell,k_1,k_2}&=\begin{pmatrix}
	\ell\\ k_1
	\end{pmatrix}\begin{pmatrix}
	k_1\\ k_2
	\end{pmatrix}.
	\end{align*}
	Direct computation shows that
	\begin{align*}
	\frac{\mathrm{d}^{p}}{\mathrm{d}r^{p}}(r-r_\alpha)^{-\ell B_{+,\alpha}}&=(-1)^{p}(\ell B_{+,\alpha})(\ell B_{+,\alpha}+1)\ldots(\ell B_{+,\alpha}+p-1)(r-r_\alpha)^{-\ell B_{+,\alpha}-p}.
	\end{align*}
	If we let
	\begin{align*}
	K&:=\prod_{\alpha\in I\setminus\{+\}}\left(\mathfrak{r}-r_\alpha\right)^{B_{+,\alpha}},&B_{+}:=\max_{\alpha\in I\setminus\{+\}}\{|B_{+,\alpha}|\},
	\end{align*}
	then it follows that
	\begin{align*}
	\frac{\mathrm{d}^{\ell-1}}{\mathrm{d}r^{\ell-1}}\left(\frac{r-r_+}{g_+(r)}\right)^{\ell}&=K^{\ell}\sum_{0\leq k_2\leq k_1\leq\ell}C_{\ell,k_1,k_2}(-1)^{\ell}\times\nonumber\\
	&\times(\ell B_{+,n})(\ell B_{+,n}+1)\ldots(\ell B_{+,n}+(\ell-k_1)-1)(r-r_n)^{-\ell B_{+,n}-(\ell-k_1)}\times\nonumber\\
	&\times(\ell B_{+,c})(\ell B_{+,c}+1)\ldots(\ell B_{+,c}+(k_1-k_2)-1)(r-r_c)^{-\ell B_{+,c}-(k_1-k_2)}\times\nonumber\\
	&\times(\ell B_{+,-})(\ell B_{+,-}+1)\ldots(\ell B_{+,-}+k_2-1)(r-r_\alpha)^{-\ell B_{+,-}-k_2}
	\end{align*}
	and thus
	\begin{align*}
	\left|\frac{\mathrm{d}^{\ell-1}}{\mathrm{d}r^{\ell-1}}\left(\frac{r-r_+}{g_+(r)}\right)^{\ell}\right|&\leq K^{\ell}\ell^{\ell}(B_{+}+1)^{\ell}\left(\prod_{\alpha\in I\setminus\{+\}}(r_+-r_\alpha)^{-B_{+,\alpha}}\right)^{\ell}\times\nonumber\\
	&\times\sum_{0\leq k_2\leq k_1\leq\ell}C_{\ell,k_1,k_2}(r_+-r_n)^{-(\ell-k_1)}(r_+-r_c)^{-(k_1-k_2)}(r_+-r_-)^{-k_2}\nonumber\\
	&=K^{\ell}\ell^{\ell}(B_{+}+1)^{\ell}\left(\prod_{\alpha\in I\setminus\{+\}}(r_+-r_\alpha)^{-B_{+,\alpha}}\right)^{\ell}\left(\sum_{\alpha\in I\setminus\{+\}}(r_+-r_\alpha)^{-1}\right)^{\ell}\\
	&=\left(K(B_{+}+1)\prod_{\alpha\in I\setminus\{+\}}(r_+-r_\alpha)^{-B_{+,\alpha}}\sum_{\alpha\in I\setminus\{+\}}(r_+-r_\alpha)^{-1}\right)^{\ell}\ell^{\ell}\\
	&=:\tilde{K}^{\ell}\ell^{\ell}.
	\end{align*}	
	Therefore, the convergence of the original series is absolute for $z\in\mathbb{C}$ if
	\begin{align*}
	\frac{(|z|\ell\tilde{K})^{\ell}}{\ell!}&<\ell^{-(1+\varepsilon)}
	\end{align*}
	for any $\varepsilon>0$. Using Stirling approximation $\ell!\sim\sqrt{2\pi}\ell^{\ell+1/2}$ for large values of $\ell$, we see that it is sufficient to have
	\begin{align*}
	\tilde{K}|z|&<\frac{\mathrm{e}^{-(1/2+\varepsilon)\ln\ell/\ell}}{\sqrt{2\pi}^{\ell}}<1.
	\end{align*}
	This condition is fulfilled if
	\begin{align*}
	\Re x&>\frac{3A_+ r_+^2}{\Lambda}\ln\tilde{K}.
	\end{align*}
%
%
%
%
\section{Localization of high frequency resonances}
\label{Microlocalization of high frequency resonances section}

We provide in this section an asymptotic approximation of resonances near the maximal energy $W_0(0)=\max_{x\in\mathbb{R}}\{W_0(x)\}$ as $h\to0$. This a generalization of the main Theorem in \cite{SaZw97} to the case $Q\neq0$. More precisely, we show that the resonances associated to the meromorphic extension of $p(z,s)^{-1}$ are close to the ones associated with the extension of $(P-z^2)^{-1}$, provided that $Q$ is sufficiently small. This is a direct consequence of the fact that the extra term $hsV$ in the semiclassical quadratic pencil is $\mathcal{O}(hs)$.

As in the paragraph \ref{Estimates in the zone III}, we set $h:=(\ell(\ell+1))^{-1/2}$ with $\ell>0$ and consider  $z\in\left[\ell/R,R\ell \right]+\mathrm{i}\left[-C_{0},C_{0}\right]$. We then define the spectral parameter $\lambda:=h^2z^2$ and also $\tilde{P}_h$ the semiclassical operator associated to $P_\ell$. Recall also that $\mathfrak{r}=\frac{3M}{2}\left(1+\sqrt{1-\frac{8Q^2}{9M^2}}\right)$ is the radius of the photon sphere and $W_0(0)=F(\mathfrak{r})/\mathfrak{r}^2$ with our definition of the Regge-Wheeler coordinate $x$ (see \eqref{Regge-Wheeler x(r)}).
\begin{theorem}
	\label{Microlocalization of high frequency resonances, microlocal version}
	Let
	\begin{align*}
		\Gamma_0(h)&:=\left\{W_0(0)+h\left(2\sqrt{W_0(0)}sV(0)+\mathrm{i}^{-1}\sqrt{W_0''(0)/2}\left(k+\frac{1}{2}\right)\right)\mid k\in\mathbb{N}\right\}.
	\end{align*}
	For all	$C_0>0$ such that $\partial D(W_0(0),C_0h)\cap\Gamma_0(h)=\emptyset$, there is a bijection $b\equiv b(h)$ from $\Gamma_0(h)$ onto the set of resonances of $\tilde{P}_h$ in $D(W_0(0),C_0h)$ (counted with their natural multiplicity) such that
	\begin{align*}
		b(h)(\mu)-\mu&=o_{h\to0}(h)&\mathrm{uniformly\ for\ }\mu\in\Gamma_0(h).
	\end{align*}
\end{theorem}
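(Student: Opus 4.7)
The plan is to reduce Theorem \ref{Microlocalization of high frequency resonances, microlocal version} to the case $s=0$ treated by Sá Barreto--Zworski \cite{SaZw97}, exploiting the fact that, in the semiclassical regime $h=(\ell(\ell+1))^{-1/2}$, the charge contribution enters only through an $O(h)$ perturbation of the barrier-top operator. Concretely, expanding the square in the definition of the quadratic pencil gives
\begin{equation*}
\tilde{p}_h(\sqrt{\lambda},s) \;=\; \tilde{P}_h + R_h(\lambda) - \lambda, \qquad R_h(\lambda) := 2h\sqrt{\lambda}\,sV(x) - h^2 s^2 V(x)^2,
\end{equation*}
so that $\lambda$ is a resonance of the pencil if and only if it is a ``non-linear eigenvalue'' of the $\lambda$-dependent family $\tilde{P}_h + R_h(\lambda)$, and $R_h(\lambda)$ is a bounded multiplication operator of size $O(h)$, analytic in $\lambda$ in a neighbourhood of $W_0(0)$.

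Next I would combine the complex scaling of Subsection \ref{Estimates in the zone III} (which applies to $R_h(\lambda)$ as well thanks to Proposition \ref{Analytic extension of r} and the analyticity of $V$) with the Grushin reduction of \cite{SaZw97}. The latter provides a well-posed Grushin problem for $\tilde{P}_{h,\theta}-\lambda$ on $D(W_0(0),C_0 h)$, whose effective Hamiltonian $E_{-+}(\lambda,h)$ is analytic in $\lambda$, has zeros (counted with multiplicity) in bijection with the resonances of $\tilde{P}_h$ in that disc, and admits the asymptotic zero set $\mu_k^{(0)} = W_0(0) + \frac{h}{\mathrm{i}}\sqrt{W_0''(0)/2}(k+1/2) + o(h)$. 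Incorporating the perturbation $R_{h,\theta}(\lambda)$ into the same Grushin problem, which stays well posed because $\|R_{h,\theta}(\lambda)\|=O(h)$ and the reference operator is invertible by $O(\theta^{-1})=O(h^{-1})$, yields a perturbed effective Hamiltonian $\tilde{E}_{-+}(\lambda,h) = E_{-+}(\lambda,h) + hF(\lambda,h)$ with $F$ uniformly bounded and analytic in $\lambda$. The leading behaviour of $F$ is obtained by computing $(R_{h,\theta}(\lambda)e_-\mid e_+)$ where $e_\pm$ are the Grushin weights built from the Gaussian quasimodes of the inverted harmonic oscillator at the barrier top: since these quasimodes concentrate microlocally at the fixed point $(x,\xi)=(0,0)$, the multiplication operator $R_h(\lambda)$ restricts, at leading order, to the scalar $R_h(\lambda)(0) = 2h\sqrt{\lambda}\,sV(0) + O(h^2) = 2h\sqrt{W_0(0)}\,sV(0) + o(h)$ for $\lambda \in D(W_0(0),C_0 h)$. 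Hence
\begin{equation*}
\tilde{E}_{-+}(\lambda,h) \;=\; E_{-+}\!\bigl(\lambda - 2h\sqrt{W_0(0)}\,sV(0),\,h\bigr) + o(h).
\end{equation*}

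Finally, I would conclude with a Rouché argument. The zeros of the right-hand side in $D(W_0(0),C_0 h)$ are precisely the translated points $\mu_k^{(0)} + 2h\sqrt{W_0(0)}\,sV(0)$, i.e.\ the elements of $\Gamma_0(h)$, each simple and at mutual distance comparable to $h$ (as long as $C_0$ avoids $\partial D$ as assumed). Since $\tilde{E}_{-+}$ differs from this translated $E_{-+}$ by $o(h)$, Rouché's theorem applied to small discs of radius $o(h)$ around each $\mu \in \Gamma_0(h)$ produces a bijection $b(h):\Gamma_0(h) \to \mathrm{Res}(\tilde{p}_h(\sqrt{\cdot},s)) \cap D(W_0(0),C_0 h)$ with $b(h)(\mu)-\mu = o(h)$ uniformly, which is the desired statement. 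The main obstacle is the microlocal concentration argument yielding the precise leading shift $2h\sqrt{W_0(0)}\,sV(0)$; it requires reexamining the BKW/Grushin construction of \cite{SaZw97} to control the off-diagonal correction in $F(\lambda,h)$ by $o(1)$, which follows from the fact that $V(x)-V(0) = O(x)$ vanishes on the trapped set and the Gaussian concentration of the resonant states has scale $O(\sqrt{h})$.
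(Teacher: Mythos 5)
Your proposal is correct in outline, but it takes a genuinely different and more laborious route than the paper's argument. The paper's proof is essentially two lines once one notices the right structure: writing out the symbol of $\tilde{p}_h(\sqrt{\lambda},s)$, one observes that for $\lambda\in D(W_0(0),C_0 h)$ the term $2h\sqrt{\lambda}\,sV(x)$ equals $2h\sqrt{W_0(0)}\,sV(x)+\mathcal{O}(h^2)$, so the \emph{entire} $\lambda$-dependence of the symbol is pushed into lower-order $\mathcal{O}(h^2)$ terms. The resulting operator is therefore, up to a negligible remainder, a genuine $\lambda$-independent semiclassical Schr\"{o}dinger-type operator with the same principal symbol $\xi^2+W_0(x)$ as the uncharged case and the shifted subprincipal symbol $p_{-1}(x,\xi)=2\sqrt{W_0(0)}\,sV(x)$; Sj\"{o}strand's Theorem 0.1 and its pseudopole formula (0.14) then apply as a black box, immediately producing $\Gamma_0(h)$ via $p_0(0,0)+h\bigl(p_{-1}(0,0)+\mathrm{i}^{-1}\sqrt{W_0''(0)/2}(k+1/2)\bigr)$. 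Your approach, by contrast, treats $R_h(\lambda)=2h\sqrt{\lambda}\,sV-h^2s^2V^2$ as a perturbation of the $s=0$ Grushin problem of S\'a Barreto--Zworski and recomputes the effective Hamiltonian $\tilde{E}_{-+}(\lambda,h)$, extracting the shift $2h\sqrt{W_0(0)}\,sV(0)$ from the concentration of the barrier-top quasimodes at $(x,\xi)=(0,0)$, then closing with Rouch\'e. This amounts to partially unpacking the black box: it buys you an explicit mechanism for the shift (microlocal concentration plus $V(x)-V(0)=\mathcal{O}(x)$ against Gaussian scale $\sqrt{h}$), but requires additional work you only sketch — in particular, verifying that $E_{-+}'(\lambda)$ has the right normalisation so that the $\mathcal{O}(h)$ correction can be interpreted as the claimed \emph{translation} $E_{-+}(\lambda-2h\sqrt{W_0(0)}sV(0),h)$ rather than merely an additive shift, and keeping uniform control of the $\mathcal{O}(h^{-1})$ scaled resolvent against the $\mathcal{O}(h)$ perturbation across the whole disc. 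Both routes identify the same subprincipal quantity; the paper's observation that the subprincipal symbol can be frozen at $\lambda=W_0(0)$ is the key simplification that renders the explicit Grushin manipulation unnecessary.
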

\begin{proof}
	This is a direct application of the results of Sá Barreto-Zworski \cite{SaZw97} which are based on the work of Sj\"{o}strand \cite{Sj87} (see Theorem 0.1 therein), the latter dealing with resonances generated by non-degenerate critical points when the trapping set is reduced to a single point (the difference for us is $W_0(0)\neq0$).
	
	We recall that in the zone III the symbol of the semiclassical quadratic pencil is the function $(x,\xi)\mapsto\xi^2+W_0(x)+h^2W_1(x)-(\sqrt{\lambda}-hsV(x))^2=:p(x,\xi)-\lambda$. We also recall the hypothesis in \cite{Sj87} for the case of a Schr\"{o}dinger operator of the form (0.1) in the reference:
	\begin{itemize}
	 \item The trapping set is reduced to the point $\{(0,0)\}$ ((0.3) in \cite{Sj87}),
	 \item $0$ is a non-degenerate critical point ((0.4) in \cite{Sj87}, which implies in the Schr\"{o}dinger case the more general assumptions (0.7) and (0.9) in the reference).
	\end{itemize}
	Although the symbol $p$ depends on $\lambda$, its principal part $p_0$ and subprincipal part $p_{-1}$ do not: indeed, for $\lambda\in D\left(W_0(0),C_0 h\right)$ with $C_0>0$, we can write when $h\ll1$
	\begin{align*}
	  p(x,\xi)&=\underbrace{\xi^2+W_0(x)}_{=p_0(x,\xi)}+h\underbrace{2\sqrt{W_0(0)}sV(x)}_{=p_{-1}(x,\xi)}+\text{ lower order terms in $h$}.
	\end{align*}
	This is enough to apply \cite[Thm. 0.1]{Sj87}: using formula (0.14) in the reference, we get the result for the set
	\begin{align*}
	    \left\{p_0(0,0)+h\left(p_{-1}(0,0)+\mathrm{i}^{-1}\sqrt{W_0''(0)/2}\left(k+\frac{1}{2}\right)\right)\mid k\in\mathbb{N}\right\}
	\end{align*}
	which is $\Gamma_0(h)$.
\end{proof}
Approximation of high frequency resonances $\Gamma_0(h)\ni z^2=\lambda/h^2$ is obtained as in \cite{SaZw97}, by taking the square root of any element of $\Gamma_0(h)$ and using Taylor expansion for $0<h\ll1$ (corresponding to $\ell\gg0$) as well as symmetry with respect to the imaginary axis (for the choice of the sign of the square root). In our setting, we obtain the set $\Gamma$ of Theorem \ref{Microlocalization of high frequency resonances}.
\begin{remark}
	\begin{enumerate}
		\item Let $\Gamma_{\mathrm{DSS}}$ be the set of pseudo-poles in the De Sitter-Schwarzschild case (see the Theorem at the end of \cite{SaZw97}). Then $\Gamma_{\mathrm{DSS}}$ is the limit of $\Gamma$ as $Q\to0$ in the sense of the sets, i.e. for all $z\in\Gamma$, there exists $z_0\in\Gamma_{\mathrm{DSS}}$ such that $z\to z_0$ as $Q\to0$.
		\item The pseudo-poles in the charged case are shifted with respect to the uncharged case. If the charges of the Klein-Gordon field and the black hole have the same sign (that is if $qQ>0$), then all the pseudo-poles go to infinity with a real part which never vanishes. However, if the charges have opposite sign ($qQ<0$), then all the pseudo-poles real part cancels precisely when $qQ=-(k+1/2)\sqrt{F(\mathfrak{r})}$, $k\in\mathbb{N}\setminus\{0\}$, before going to infinity. Notice that no pseudo-pole goes to $\mathbb{C}^+$ as $|s|\to+\infty$. 
		\item We can provide a physical interpretation of the set of pseudo-poles. First observe that $\sqrt{F(\mathfrak{r})}/\mathfrak{r}$ is nothing but the inverse of the impact parameter $b=|E/L|$ of trapped null geodesics with energy $E$ and angular momentum $L$. Theorem \ref{Microlocalization of high frequency resonances} shows that resonances near the real line in the zone III are $qQ$-dependent multiples of this quantity: they thus correspond to impact parameters of trapped photons with high energy and angular momentum. 
		\item Observe that in Newtonian mechanics, the electromagnetism and gravitation do not interact with chargeless and massless photons. As a consequence, photons are not deviated and only ones with impact parameter $|b|\leq r_-$ can ``fall`` in the black hole. Hence, high frequency resonances in zone III are expected to be multiple of $r_-^{-1}$. As $r_-\to0$, all resonances go to infinity: the trajectory are now classical straight lines as there is no obstacle anymore.
	\end{enumerate}
	\begin{figure}[!h]
		\centering
		\captionsetup{justification=centering,margin=1.8cm}
		\begin{center}
			\begin{tikzpicture}[scale=1]
			\draw(-3,0);
			\draw [cyan,directed,domain=45:30,variable=\t,smooth,samples=100]
			plot ({\t r}: {0.9455*exp((\t-30)/15)});
			\draw[cyan](1.35,2.185)--(1.1,2.39);
			\draw[gray,dashed](0.85,2.6)--(3.45,0.48);
			
			\draw[](0.8,1.25)node[anchor=east]{$b$};
						
			\filldraw[fill=black] (0,0) circle (18pt);
			\draw[thick,dotted,gray,<->](0,0)--(1.45,2.05);
			\draw[blue,thick] (0,0) circle (27pt);
			\draw[cyan,directed](5,1.5)--(12,1.5);
			\draw[cyan,directed](5,0.36)--(9,0.36);
			
			\draw[thick,dashed,gray](9,0.636)--(11.1,0.636);
			\draw[thick,dashed,gray](9,0)--(11.1,0);
			\draw[thick,dotted,gray,<->](10.95,0.05)--(10.95,0.586);
			
			\draw[](10.95,0.314)node[anchor=west]{$r_-$};
			
			\filldraw[fill=black] (9,0) circle (18pt);
			\end{tikzpicture}
			\caption{\label{Impact parameter}On the left: a relativistic trapped null geodesic. On the right: classical null geodesic trajectories.}
		\end{center}
	\end{figure}
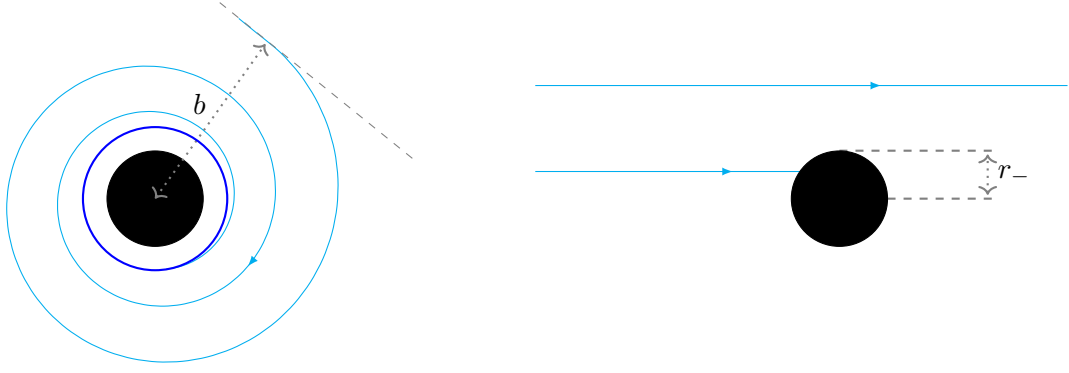
\end{remark}
%
%
%
%
%
%
\section{Abstract Semiclassical Limiting Absorption Principle for a class of Generalized Resolvents}
\label{Limiting absorption principle for the quadratic pencil}

We show in this section an abstract semiclassical limiting absorption principle for perturbed resolvents.
\paragraph{Abstract setting.}
Let $\big(\mathcal{H},\langle\cdot,\cdot\rangle\big)$ be a Hilbert space, $J:=[a,b]\subset\mathbb{R}$, $J_\mu^+:=\{\omega\in\mathbb{C}^+\mid\Re\omega\in J,\,\Im\omega<\mu\}$ for some $\mu>0$ fixed and $h_0>0$. The norm associated to $\langle\cdot,\cdot\rangle$ will be denoted by $\|\cdot\|$. We consider families of self-adjoint operators $P\equiv P(h)$ and $\mathcal{A}\equiv\mathcal{A}(h)$ acting on $\mathcal{H}$ for $0<h<h_0$. We set
\begin{align*}
	L^{\infty}_{\ell\mathrm{oc}}(P):=\big\{A:\mathcal{H}\to\mathcal{H}\text{ linear}\mid\forall\chi\in\mathcal{C}^{\infty}_{\mathrm{c}}(\mathbb{R},\mathbb{R}),\,\forall u\in\mathscr{D}(P),\,\|\chi(P)Au\|<+\infty\big\}
\end{align*}
and $\|.\|_{_{P}}$ will be the operator norm on $\mathcal{B}(\mathscr{D}(P),\mathcal{H})$. We also define the local version of the operator \nolinebreak$P$:
\begin{align*}
	P_{\tau}&:=\tau(P)P\qquad\qquad\forall\tau\in\mathcal{C}^\infty_{\mathrm{c}}(\mathbb{R},\mathbb{R}).
\end{align*}
Let then $f:\mathbb{C}\times L^{\infty}_{\ell\mathrm{oc}}(P)\to L^{\infty}_{\ell\mathrm{oc}}(P)$ satisfying the following continuity type relation near $0_{_{L^{\infty}_{\ell\mathrm{oc}}(P)}}$: there exist $\delta_{_{J,\mu}}:\mathbb{R}_+\to\mathbb{R}$ satisfying $\delta_{_{J,\mu}}(r)\to0$ as $r\to0$ and $\varepsilon_{_{J,\mu}}:L^{\infty}_{\ell\mathrm{oc}}(P)\to L^{\infty}_{\ell\mathrm{oc}}(P)$ such that, for all $(z,A)\in J^+_\mu\times L^{\infty}_{\ell\mathrm{oc}}(P)$ with $\|A\|_{_{P}}$ small,
\begin{align}
\label{Assumption H}
f(z,A)&=z+\delta_{_{J,\mu}}(\|A\|_{_{P}})\,\varepsilon_{_{J,\mu}}(A).\tag{C}
\end{align}
We make the following assumptions:
\begin{align}
\label{Assumption P}
&(P-f(z,hA))^{-1}\text{ exists for all $z\in J_\mu^+$ and $A\in L^{\infty}_{\ell\mathrm{oc}}(P)$ if $h\leq h_0$}\tag{I}\\
\label{Assumption I}
&P\in\mathcal{C}^2(\mathcal{A})\tag{P}\\
\label{Assumption M}
&\mathds{1}_{I}(P)[P,i\mathcal{A}]\mathds{1}_{I}(P)\geq ch\mathds{1}_{I}(P)\qquad\text{\!for some $c>0$ and $J\Subset I:=\left]\alpha,\beta\right[\subset\mathbb{R}$}\tag{M}\\
\label{Assumption A}
&\mathrm{ad}^k_{\chi(P)}(\varepsilon_{_{J,\mu}}(A))\in h^k\mathcal{B}(\mathscr{D}(\mathcal{A}))\qquad\text{for all $k\in\{0,1\}$, $\chi\in\mathcal{C}^{\infty}_{\mathrm{c}}(\mathbb{R},\mathbb{R})$ and $\|A\|_{_P}<c'$ for $c'>0$}.\tag{A}
\end{align}
Recall that $P\in\mathcal{C}^{2}\left(\mathcal{A}\right)$ means for all $z\in\mathbb{C}\setminus\sigma(P)$ that the map
\begin{align*}
\mathbb{R}\ni t\mapsto\mathrm{e}^{\mathrm{i}t\mathcal{A}}(P-z)^{-1}\mathrm{e}^{-\mathrm{i}t\mathcal{A}}
\end{align*}
is $\mathcal{C}^2$ for the strong topology of $L^2$. Recall also that for all linear operators $L_1,L_2$ acting on $\mathcal{H}$, $\mathrm{ad}^{0}_{L_1}(L_2):=L_2$ and $\mathrm{ad}^{k+1}_{L_1}(L_2):=[L_1,\mathrm{ad}^k_{L_1}(L_2)]$. Our goal is to show the following result:
\begin{theorem}
	\label{ASALP}
	Assume hypotheses \eqref{Assumption H}, \eqref{Assumption P}, \eqref{Assumption I}, \eqref{Assumption M} and \eqref{Assumption A}. Then for all $\sigma>1/2$,
	\begin{align}
	\label{ASALP equation}
	\sup_{z\in J^+_\mu}\|\langle\mathcal{A}\rangle^{-\sigma}(P-f(z,hB))^{-1}\langle\mathcal{A}\rangle^{-\sigma}\|&\lesssim h^{-1}.
	\end{align}
\end{theorem}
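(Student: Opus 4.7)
I would adapt the standard Mourre-type proof of the limiting absorption principle to the semiclassical regime and to the generalized spectral parameter $f(z,hB)$, following in spirit the differential-inequality scheme of G\'erard \cite{Ger08}. The key observation is that, by \eqref{Assumption H}, $f(z,hB)=z+\eta(z,h)$ with $\|\eta(z,h)\|_{_{P}}=o(1)$ as $h\to 0$ uniformly in $z\in J_\mu^+$, so that $f(z,hB)$ acts as a small perturbation of $z$ at leading order. Moreover \eqref{Assumption A} gives that $[\chi(P),\varepsilon_{_{J,\mu}}(hB)]\in h\,\mathcal{B}(\mathscr{D}(\mathcal{A}))$, which is exactly what is needed to absorb this perturbation without degrading the final $h^{-1}$ bound.

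First, I would localize to the relevant part of the spectrum of $P$: fix $\chi\in\mathcal{C}^\infty_{\mathrm{c}}(I,[0,1])$ with $\chi\equiv 1$ on a neighborhood of $J$ and split
\begin{align*}
(P-f(z,hB))^{-1}=\chi(P)(P-f(z,hB))^{-1}\chi(P)+\text{\,elliptic part},
\end{align*}
the elliptic part being $\mathcal{O}(1)$ in norm by spectral calculus together with the smallness of $\eta(z,h)$. It thus suffices to bound $\langle\mathcal{A}\rangle^{-\sigma}\chi(P)(P-f(z,hB))^{-1}\chi(P)\langle\mathcal{A}\rangle^{-\sigma}$.

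Second, I would introduce the regularized resolvent
\begin{align*}
R_\varepsilon(z):=\bigl(P-f(z,hB)-\mathrm{i}\varepsilon M_\chi\bigr)^{-1},\qquad M_\chi:=\chi(P)^2[\mathrm{i}P,\mathcal{A}]\chi(P)^2+(1-\chi(P)^2)^2,
\end{align*}
which is well-defined for $\varepsilon>0$ and $z\in J_\mu^+$: the Mourre estimate \eqref{Assumption M} gives $M_\chi\geq ch\chi(P)^4$ plus a nonnegative term, so that the imaginary part of $P-f(z,hB)-\mathrm{i}\varepsilon M_\chi$ has the right sign on the captured subspace once $\Im z>0$, and a standard argument using \eqref{Assumption P} and the assumption $\|hB\|_{_{P}}\ll 1$ gives strong convergence $R_\varepsilon(z)\to(P-f(z,hB))^{-1}$ as $\varepsilon\to 0$. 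It thus suffices to bound $g(\varepsilon):=\|\langle\mathcal{A}\rangle^{-\sigma}R_\varepsilon(z)\langle\mathcal{A}\rangle^{-\sigma}\|$ uniformly in $\varepsilon$.

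Third, I would derive a Gronwall-type differential inequality for $g$. Using $\partial_\varepsilon R_\varepsilon=\mathrm{i}R_\varepsilon M_\chi R_\varepsilon$ together with the commutator identity
\begin{align*}
R_\varepsilon\,\chi(P)^2[\mathrm{i}P,\mathcal{A}]\chi(P)^2R_\varepsilon=[R_\varepsilon,\mathrm{i}\mathcal{A}]+R_\varepsilon\bigl[\mathrm{i}\eta(z,h),\mathrm{i}\mathcal{A}\bigr]R_\varepsilon+\text{\,remainders involving }[\chi(P),\mathcal{A}],
\end{align*}
and invoking \eqref{Assumption I}, \eqref{Assumption M}, \eqref{Assumption A} together with the boundedness of $[\langle\mathcal{A}\rangle^{-\sigma},\mathcal{A}]$, one obtains an inequality of the form
\begin{align*}
|g'(\varepsilon)|\leq Ch^{-1}+Cg(\varepsilon)+C\varepsilon^{-1/2}g(\varepsilon)^{1/2},
\end{align*}
which integrates from $\varepsilon=1$ down to $\varepsilon=0$ to give $g(\varepsilon)\lesssim h^{-1}$, uniformly in $\varepsilon>0$ and $z\in J_\mu^+$; this is \eqref{ASALP equation}. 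The main technical obstacle will be controlling the commutator remainders generated by the perturbation $\eta(z,h)=f(z,hB)-z$ throughout this computation without losing an extra power of $h^{-1}$: this is precisely what motivates the exact form of \eqref{Assumption A}, which supplies a factor of $h$ for each commutator with $\chi(P)$ and so exactly compensates the $h$ appearing in the semiclassical Mourre estimate. Uniformity in $z\in J_\mu^+$ and justification of the limit $\varepsilon\to 0$ then follow by standard closure arguments.
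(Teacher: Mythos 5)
Your proposal is a legitimate but genuinely different route from the one taken in the paper. The paper's Appendix~C first reduces the statement to a local estimate near the captured energy window via the chain of implications (iii)$\Rightarrow$(ii)$\Rightarrow$(i) of Lemma~\ref{Preliminary estimates for the limiting absorption principle in the zone IB}, and then establishes (iii) by the ``energy estimate'' identity~\eqref{Energy estimate for limiting absorption principle} applied with the \emph{bounded} escape function $H=F(\mathcal{A})=-\int_\xi^{+\infty}g^2$, combined with the commutator positivity bound $\chi(P)[P_\tau,\mathrm{i}F(\mathcal{A})]\chi(P)\gtrsim h\chi(P)\langle\mathcal{A}\rangle^{-2\sigma}\chi(P)$ from~\cite{Ger08}. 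The whole argument is purely variational and never differentiates a regularized resolvent. You instead revert to the classical Mourre/Perry--Sigal--Simon scheme: introduce $R_\varepsilon=(P-f(z,hB)-\mathrm{i}\varepsilon M_\chi)^{-1}$, derive a quadratic estimate from the Mourre bound $M_\chi\geq ch\chi(P)^4$, and close a differential inequality in $\varepsilon$. Both strategies are known to work in the abstract; what the paper's choice buys is that the only commutators ever needed are those with $\chi(P)$ and with the \emph{bounded} operator $F(\mathcal{A})$ — this is precisely why hypothesis~\eqref{Assumption A} is stated solely in terms of $\mathrm{ad}^k_{\chi(P)}(\varepsilon_{J,\mu}(A))$. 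Your variant instead makes the unbounded commutator $[\,\cdot\,,\mathcal{A}]$ reappear via $[R_\varepsilon,\mathrm{i}\mathcal{A}]$ and $[\eta(z,h),\mathcal{A}]$, so you would have to go back to the Amrein--Boutet de Monvel--Georgescu--style bookkeeping ($\langle\mathcal{A}\rangle^{-\sigma}[R_\varepsilon,\mathcal{A}]\langle\mathcal{A}\rangle^{-\sigma}$, $\sigma>1/2$, interpolation) that Gérard's method is specifically designed to avoid.

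Two smaller points. First, you attribute the differential-inequality scheme to~\cite{Ger08}, but that reference is precisely the energy-estimate \emph{alternative} to the differential-inequality scheme (its title is ``A proof of the abstract limiting absorption principle by energy estimates''); the regularized-resolvent/Gronwall method is the original one of Mourre. Second, several steps are left as assertions that do not follow immediately from the stated hypotheses and would need justification: the commutator identity you display is not exact (the term $\mathrm{i}\varepsilon R_\varepsilon[M_\chi,\mathrm{i}\mathcal{A}]R_\varepsilon$ is missing, and the ``remainders'' come from $[\chi(P),f]$, $[\chi(P),M_\chi]$ rather than $[\chi(P),\mathcal{A}]$ directly); the $h$-scaling in $C\varepsilon^{-1/2}g^{1/2}$ should be checked, since the positivity estimate naturally produces $(\varepsilon h)^{-1/2}g^{1/2}$; the boundary value $g(1)\lesssim h^{-1}$, the bootstrap that closes the Gronwall argument, and the limit $\varepsilon\to0$ on the weighted spaces (which implicitly uses $R(z,hB)$ preserving $\mathscr{D}(\langle\mathcal{A}\rangle^\sigma)$, i.e., the paper's Lemma~\ref{Lemma 2.1 adapted}) all need to be written out. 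None of these seems fatal, but they are precisely the places where hypothesis~\eqref{Assumption A}, as stated, gives you control over $[\chi(P),\varepsilon_{J,\mu}(hB)]$ but \emph{not} directly over $[\varepsilon_{J,\mu}(hB),\mathcal{A}]$ — you only get that the latter is bounded $\mathscr{D}(\mathcal{A})\to\mathcal{H}$, not on $\mathcal{H}$ — so the absorption of the $[\eta,\mathcal{A}]$ terms is the genuine technical obstruction your route has to resolve and the paper's route does not.
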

In the sequel, we will write $R(z,hB):=(P-f(z,hB))^{-1}$ and call it the generalized resolvent (of $P$). Also, since $J$ and $\mu$ are now fixed, we will simply write  $J,\,\delta$ and $\varepsilon$ instead of $J_{\mu},\,\delta_{_{J,\mu}}$ and $\varepsilon_{_{J,\mu}}$.
\paragraph{Preliminary results.}
The purpose of this paragraph is to show preliminary results used to prove Theorem \ref{ASALP}. We first prove an adapted version of \cite[Lem. 2.1]{Ger08} to our situation.
\begin{lemma}
	\label{Lemma 2.1 adapted}
	Let $0\leq\sigma\leq1$, $z\in J^+$ and let $\chi\in\mathcal{C}^{\infty}_{\mathrm{c}}\left(\mathbb{R},\mathbb{R}\right)$. If $h$ is small enough, then $R(z,hB)$ and $\chi(P)$ are bounded on $\mathscr{D}(\langle\mathcal{A}\rangle^{\sigma})$.
\end{lemma}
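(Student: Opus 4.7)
The plan is to prove both statements by interpolating between the endpoints $\sigma=0$ and $\sigma=1$. The case $\sigma=0$ is immediate: $R(z,hB)$ is bounded on $\mathcal{H}$ by hypothesis \eqref{Assumption I}, and $\chi(P)$ is bounded by the functional calculus for self-adjoint operators.

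For $\sigma=1$, I show that both operators preserve $\mathscr{D}(\mathcal{A})$. For $\chi(P)$, this is a standard consequence of the $\mathcal{C}^{1}(\mathcal{A})$ property (implied by \eqref{Assumption P}) via the Helffer--Sj\"ostrand formula
\begin{align*}
\chi(P)&=-\frac{1}{\pi}\int_{\mathbb{C}}\bar{\partial}\tilde{\chi}(z)\,(P-z)^{-1}\,\mathrm{d}L(z)
\end{align*}
applied to an almost analytic, compactly supported extension $\tilde{\chi}$ of $\chi$: the commutator identity $[\mathcal{A},(P-z)^{-1}]=-(P-z)^{-1}[\mathcal{A},P](P-z)^{-1}$, together with the rapid decay of $\bar{\partial}\tilde{\chi}$ near the real axis, yields that $[\mathcal{A},\chi(P)]$ extends to a bounded operator on $\mathcal{H}$, and hence $\chi(P)\in\mathcal{B}(\mathscr{D}(\mathcal{A}))$.

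For the generalized resolvent, I set $T_h(z):=f(z,hB)-z$ so that $P-f(z,hB)=(P-z)-T_h(z)$. By \eqref{Assumption H}, $T_h(z)=\delta(\|hB\|_P)\,\varepsilon(hB)$, and \eqref{Assumption A} with $k=0$ gives $\varepsilon(hB)\in\mathcal{B}(\mathscr{D}(\mathcal{A}))$. Since $\delta(\|hB\|_P)\to 0$ as $h\to 0^{+}$, while $(P-z)^{-1}\in\mathcal{B}(\mathscr{D}(\mathcal{A}))$ for each fixed $z\in J^{+}_{\mu}$ (again by $\mathcal{C}^{1}(\mathcal{A})$), the factorisation
\begin{align*}
R(z,hB)&=\bigl(I-(P-z)^{-1}T_h(z)\bigr)^{-1}(P-z)^{-1}
\end{align*}
becomes meaningful on $\mathscr{D}(\mathcal{A})$ as soon as $h$ is small enough to make $\|(P-z)^{-1}T_h(z)\|_{\mathcal{B}(\mathscr{D}(\mathcal{A}))}<1$; the inverse is then given by a convergent Neumann series, and since $R(z,hB)$ already exists on $\mathcal{H}$ by \eqref{Assumption I}, this factorisation shows that $R(z,hB)\in\mathcal{B}(\mathscr{D}(\mathcal{A}))$.

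The intermediate cases $0<\sigma<1$ follow by complex interpolation, since $\mathscr{D}(\langle\mathcal{A}\rangle^{\sigma})$ identifies with the complex interpolation space of index $\sigma$ between $\mathcal{H}$ and $\mathscr{D}(\mathcal{A})$. The only delicate point is that the smallness condition on $h$ required for the Neumann argument may depend on $z$, because $\|(P-z)^{-1}\|_{\mathcal{B}(\mathscr{D}(\mathcal{A}))}$ blows up as $\Im z\to 0$; this is harmless for the present lemma, which requires only pointwise-in-$z$ boundedness. The corresponding \emph{uniform} estimate needed for Theorem \ref{ASALP} is a much stronger statement that will rely on the Mourre estimate \eqref{Assumption M} together with the full commutator information contained in \eqref{Assumption A} for $k=1$.
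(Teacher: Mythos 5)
Your proposal is correct, and it reaches the same conclusion as the paper but by a genuinely different route. The paper cites \cite[Lem.~2.1]{Ger08} directly to obtain boundedness of $\chi(P)$ and $(P-z)^{-1}$ on $\mathscr{D}(\langle\mathcal{A}\rangle^{\sigma})$ for all $\sigma\in[0,1]$ at once, then handles $R(z,hB)$ by conjugating with $\langle\mathcal{A}\rangle^{\sigma}$, writing the second resolvent identity
\begin{align*}
R(z,hB)-(P-z)^{-1}&=R(z,hB)\big(z-f(z,hB)\big)(P-z)^{-1},
\end{align*}
and absorbing the term $\delta(h\|B\|_{_{P}})\|\langle\mathcal{A}\rangle^{\sigma}\varepsilon(hB)\langle\mathcal{A}\rangle^{-\sigma}\|\cdot\|\langle\mathcal{A}\rangle^{\sigma}R(z,hB)\langle\mathcal{A}\rangle^{-\sigma}\|$ into the left-hand side once $h$ is small. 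Your proof instead establishes the two endpoint cases $\sigma=0$ and $\sigma=1$ and interpolates. For $\sigma=1$ you re-derive the $\chi(P)$ statement from the $\mathcal{C}^{1}(\mathcal{A})$ property via Helffer--Sj\"ostrand rather than citing G\'erard, and you replace the absorption step by an explicit Neumann series for $\big(I-(P-z)^{-1}T_h(z)\big)^{-1}$ in $\mathcal{B}(\mathscr{D}(\mathcal{A}))$. Both roads work and use exactly the same inputs — \eqref{Assumption P} for the $\mathcal{C}^{1}(\mathcal{A})$ regularity, \eqref{Assumption H} and \eqref{Assumption A} with $k=0$ for the smallness of the perturbation, and \eqref{Assumption I} for the existence of $R(z,hB)$ on $\mathcal{H}$. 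Your Neumann-series version is slightly tidier in one respect: the paper's absorption inequality $X\leq C_1+\text{(small)}\cdot X$ tacitly presumes $X<\infty$, a point that would normally be fixed by regularising the weights; the Neumann series sidesteps that issue by constructing the bounded inverse directly. On the other hand the paper's direct conjugation argument is shorter, since it never needs the interpolation identity $[\mathcal{H},\mathscr{D}(\mathcal{A})]_{\sigma}=\mathscr{D}(\langle\mathcal{A}\rangle^{\sigma})$. Your closing remark — that the admissible smallness of $h$ may depend on $z$ here, and that the uniform-in-$z$ bound is the genuinely hard content deferred to Theorem~\ref{ASALP} via \eqref{Assumption M} and \eqref{Assumption A} with $k=1$ — is exactly the right reading of the division of labour between this lemma and the theorem.
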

\begin{proof}
	The result is true for $(P-z)^{-1}$ and $\chi(P)$ by \cite[Lem. 2.1]{Ger08}. Let us show that $R(z,hB)\mathscr{D}(\langle\mathcal{A}\rangle^\sigma)\subset\mathscr{D}(\langle\mathcal{A}\rangle^\sigma)$:
	\begin{align*}
	\|\langle\mathcal{A}\rangle^{\sigma}R(z,hB)\langle\mathcal{A}\rangle^{-\sigma}\|&\leq\|\langle\mathcal{A}\rangle^{\sigma}(P-z)^{-1}\langle\mathcal{A}\rangle^{-\sigma}\|+\|\langle\mathcal{A}\rangle^{\sigma}(R(z,hB)-(P-z)^{-1})\langle\mathcal{A}\rangle^{-\sigma}\|\\
	&\lesssim 1+\|\langle\mathcal{A}\rangle^{\sigma}R(z,hB)(z-f(z,hB))(P-z)^{-1}\langle\mathcal{A}\rangle^{-\sigma}\|
	\end{align*}
	and (using that $\varepsilon(hB)\in\mathcal{B}(\mathscr{D}(\mathcal{A}))$ by Assumption \eqref{Assumption A} for $k=0$)
	\begin{align*}
	&\|\langle\mathcal{A}\rangle^{\sigma}R(z,hB)(z-f(z,hB))(P-z)^{-1}\langle\mathcal{A}\rangle^{-\sigma}\|\\
	&\leq\|\langle\mathcal{A}\rangle^{\sigma}R(z,hB)\langle\mathcal{A}\rangle^{-\sigma}\|\|\langle\mathcal{A}\rangle^{\sigma}(z-f(z,hB))\langle\mathcal{A}\rangle^{-\sigma}\|\|\langle\mathcal{A}\rangle^{\sigma}(P-z)^{-1}\langle\mathcal{A}\rangle^{-\sigma}\|\\
	&\lesssim\delta(h\|B\|_{_{P}})\|\langle\mathcal{A}\rangle^{\sigma}\varepsilon(hB)\langle\mathcal{A}\rangle^{-\sigma}\|\|\langle\mathcal{A}\rangle^{\sigma}R(z,hB)\langle\mathcal{A}\rangle^{-\sigma}\|.
	\end{align*}
	We then use the uniformity in assumption \eqref{Assumption A} for $k=0$ to write for $h$ very small
	\begin{align*}
		\delta(h\|B\|_{_{P}})\|\langle\mathcal{A}\rangle^{\sigma}\varepsilon(hB)\langle\mathcal{A}\rangle^{-\sigma}\|\|\langle\mathcal{A}\rangle^{\sigma}R(z,hB)\langle\mathcal{A}\rangle^{-\sigma}\|&<\frac{1}{2}\|\langle\mathcal{A}\rangle^{\sigma}R(z,hB)\langle\mathcal{A}\rangle^{-\sigma}\|.
	\end{align*}
	The proof is complete.
\end{proof}
\begin{corollary}
	\label{Corollary Lemma 2.1 adapted}
	Let $0\leq\sigma\leq1$, $z\in J^+$ and $\tau,\chi\in\mathcal{C}^{\infty}_{\mathrm{c}}\left(\mathbb{R},[0,1]\right)$ such that $\chi\equiv1$ on $I$ and $\tau\chi=\chi$. If $h$ is small enough, then $(P_\tau-f(z,hB))\chi(P)$, $(P_\tau-f(z,hB))\chi(P)(P+\mathrm{i})^{-1}$ and $(P-f(z,hB))(P+\mathrm{i})^{-1}$ preserve $\mathscr{D}(\langle\mathcal{A}\rangle^{\sigma})$.
\end{corollary}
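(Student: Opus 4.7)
The strategy is to decompose each of the three operators into sums and compositions of building blocks whose action on $\mathscr{D}(\langle\mathcal{A}\rangle^{\sigma})$ is already understood: (i) smooth, compactly supported functions of $P$, which preserve $\mathscr{D}(\langle\mathcal{A}\rangle^{\sigma})$ by Lemma \ref{Lemma 2.1 adapted}; (ii) the resolvent $(P+\mathrm{i})^{-1}=R(-\mathrm{i},0)$, which preserves $\mathscr{D}(\langle\mathcal{A}\rangle^{\sigma})$ by the same lemma applied with $B=0$ at $z=-\mathrm{i}$; (iii) the multiplier $\varepsilon(hB)$, whose action on $\mathscr{D}(\mathcal{A})$ is controlled by Assumption \eqref{Assumption A} with $k=0$, and whose commutator with $\chi(P)$ is of order $h$ in $\mathcal{B}(\mathscr{D}(\mathcal{A}))$ by Assumption \eqref{Assumption A} with $k=1$.

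For the first operator, I would exploit $\tau\chi=\chi$ to rewrite $P_{\tau}\chi(P)=\tau(P)P\chi(P)=P\chi(P)=g(P)$ with $g(x):=x\chi(x)\in\mathcal{C}^{\infty}_{\mathrm{c}}(\mathbb{R},\mathbb{R})$. Then, using \eqref{Assumption H},
\[
(P_{\tau}-f(z,hB))\chi(P)=g(P)-z\chi(P)-\delta(h\|B\|_{P})\,\varepsilon(hB)\chi(P).
\]
The first two summands are of type (i). For the third I would commute, writing $\varepsilon(hB)\chi(P)=\chi(P)\varepsilon(hB)-[\chi(P),\varepsilon(hB)]$; Assumption \eqref{Assumption A} (for $k=0$ and $k=1$) provides boundedness of both pieces on $\mathscr{D}(\mathcal{A})$, and combined with $\mathcal{H}$-boundedness (inherent in $f(z,hB)$ being realised as a perturbation of the scalar $z$ so that \eqref{Assumption P} makes sense) and Lemma \ref{Lemma 2.1 adapted} for $\chi(P)$, Stein's complex interpolation delivers the required boundedness on $\mathscr{D}(\langle\mathcal{A}\rangle^{\sigma})$ for all $0\leq\sigma\leq 1$.

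The second operator is then the first post-composed with the type-(ii) factor $(P+\mathrm{i})^{-1}$. For the third I would expand
\[
(P-f(z,hB))(P+\mathrm{i})^{-1}=\mathrm{Id}-(\mathrm{i}+z)(P+\mathrm{i})^{-1}-\delta(h\|B\|_{P})\,\varepsilon(hB)(P+\mathrm{i})^{-1}.
\]
The first two summands are handled directly by the building blocks above. For the last, the subtle point is that Assumption \eqref{Assumption A} is formulated with $\chi(P)$ rather than with $(P+\mathrm{i})^{-1}$; I would therefore split $(P+\mathrm{i})^{-1}=\chi_{1}(P)(P+\mathrm{i})^{-1}+(1-\chi_{1}(P))(P+\mathrm{i})^{-1}$ for a cutoff $\chi_{1}\in\mathcal{C}^{\infty}_{\mathrm{c}}(\mathbb{R},\mathbb{R})$ equal to $1$ on a sufficiently large interval, reducing the localised piece to the commutator argument of the previous paragraph, and representing $(1-\chi_{1}(x))(x+\mathrm{i})^{-1}$ through a Helffer--Sjöstrand formula (or an approximation by compactly supported symbols) to transfer the $\chi(P)$-commutator bound to the non-compact piece.

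The main obstacle is this last step: propagating the $\chi(P)$-commutator structure of Assumption \eqref{Assumption A} to the unbounded symbol $(P+\mathrm{i})^{-1}$, and then concluding cleanly via interpolation for $\sigma\in[0,1]$. All other manipulations are routine functional calculus together with the structural identity \eqref{Assumption H}.
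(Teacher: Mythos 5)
Your plan for the first two operators is close to the paper's, but you introduce a detour that is both unnecessary and, in the third case, not actually justified by the stated assumptions.

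The key observation you miss is that no commutator between $\varepsilon(hB)$ and any function of $P$ is needed at all. The paper handles the term $(z-f(z,hB))\chi(P)$ simply by inserting $\langle\mathcal{A}\rangle^{-\sigma}\langle\mathcal{A}\rangle^{\sigma}$ between the two factors:
\[
\langle\mathcal{A}\rangle^{\sigma}(z-f(z,hB))\chi(P)\langle\mathcal{A}\rangle^{-\sigma}
=\bigl(\langle\mathcal{A}\rangle^{\sigma}(z-f(z,hB))\langle\mathcal{A}\rangle^{-\sigma}\bigr)\bigl(\langle\mathcal{A}\rangle^{\sigma}\chi(P)\langle\mathcal{A}\rangle^{-\sigma}\bigr),
\]
and each parenthesis is bounded separately: the first from Assumption \eqref{Assumption A} with $k=0$ together with interpolation, the second from Lemma \ref{Lemma 2.1 adapted}. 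Exactly the same trick, with $(P+\mathrm{i})^{-1}$ in place of $\chi(P)$, settles the third operator since \cite[Lem.\ 2.1]{Ger08} already gives $\langle\mathcal{A}\rangle^{\sigma}(P+\mathrm{i})^{-1}\langle\mathcal{A}\rangle^{-\sigma}\in\mathcal{B}(\mathcal{H})$. Thus $k=1$ of Assumption \eqref{Assumption A} is never used in this corollary, and the ``subtle point'' you single out — transferring the $\chi(P)$-commutator structure of \eqref{Assumption A} to the noncompactly supported symbol $(x+\mathrm{i})^{-1}$ — is a non-issue.

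Worse, the detour you propose does not close as stated. Assumption \eqref{Assumption A} controls $\mathrm{ad}_{\chi(P)}(\varepsilon(hB))$ only for $\chi\in\mathcal{C}^{\infty}_{\mathrm{c}}$; it gives no bound on $[\varepsilon(hB),P]$ or on $[\varepsilon(hB),(P-w)^{-1}]$. A Helffer--Sj\"ostrand representation of $(1-\chi_1(x))(x+\mathrm{i})^{-1}$ would require precisely such control on $[\varepsilon(hB),P]$ (or uniform commutator bounds for a sequence of compactly supported approximants), neither of which follows from the stated hypotheses. So if one took the commutator route, the argument would genuinely break at the step you flag. The remedy is not to fill that gap but to avoid commuting altogether, as above.

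Your treatment of the first operator, writing $P_\tau\chi(P)=g(P)$ with $g(x)=x\chi(x)\in\mathcal{C}^\infty_{\mathrm{c}}$, is fine and matches the paper, though there too the commutator expansion $\varepsilon(hB)\chi(P)=\chi(P)\varepsilon(hB)-[\chi(P),\varepsilon(hB)]$ is superfluous: the weight-insertion argument is shorter and uses only $k=0$.
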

\begin{proof}
	We have
	\begin{align*}
	\langle\mathcal{A}\rangle^{\sigma}(P_\tau-f(z,hB))\chi(P)\langle\mathcal{A}\rangle^{-\sigma}&=\langle\mathcal{A}\rangle^{\sigma}(P_\tau-z)\chi(P)\langle\mathcal{A}\rangle^{-\sigma}\\
	&+\langle\mathcal{A}\rangle^{\sigma}(z-f(z,hB))\langle\mathcal{A}\rangle^{-\sigma}\langle\mathcal{A}\rangle^{\sigma}\chi(P)\langle\mathcal{A}\rangle^{-\sigma}
	\end{align*}
	which is bounded by assumption \eqref{Assumption A} for $k=0$, Lemma \ref{Lemma 2.1 adapted} and the fact that $P_\tau\chi(P)=\varphi(P)$ with $\varphi\in\mathcal{C}^{\infty}_{\mathrm{c}}(\mathbb{R},\mathbb{R})$ by functional calculus. Next, \cite[Lem. 2.1]{Ger08} implies that $(P+\mathrm{i})^{-1}$ preserves $\mathscr{D}(\mathcal{A})$, so we can write
	\begin{align*}
	\langle\mathcal{A}\rangle^{\sigma}(P_\tau-f(z,hB))\chi(P)(P+\mathrm{i})^{-1}\langle\mathcal{A}\rangle^{-\sigma}&=\langle\mathcal{A}\rangle^{\sigma}(P_\tau-f(z,hB))\chi(P)\langle\mathcal{A}\rangle^{-\sigma}\langle\mathcal{A}\rangle^{\sigma}(P+\mathrm{i})^{-1}\langle\mathcal{A}\rangle^{-\sigma}
	\end{align*}
	which is clearly bounded thanks to the above computation. Finally,
	\begin{align*}
	\langle\mathcal{A}\rangle^{\sigma}(P-f(z,hB))(P+\mathrm{i})^{-1}\langle\mathcal{A}\rangle^{-\sigma}&=\langle\mathcal{A}\rangle^{\sigma}(P+\mathrm{i}-\mathrm{i}-z+z-f(z,hB))(P+\mathrm{i})^{-1}\langle\mathcal{A}\rangle^{-\sigma}\\
	&=\mathrm{Id}-(\mathrm{i}+z)\langle\mathcal{A}\rangle^{\sigma}(P+\mathrm{i})^{-1}\langle\mathcal{A}\rangle^{-\sigma}\\
	&+\langle\mathcal{A}\rangle^{\sigma}(z-f(z,hB))\langle\mathcal{A}\rangle^{-\sigma}\langle\mathcal{A}\rangle^{\sigma}(P+\mathrm{i})^{-1}\langle\mathcal{A}\rangle^{-\sigma}
	\end{align*}
	and we again use \cite[Lem. 2.1]{Ger08} and assumption \eqref{Assumption A} for $k=0$.
\end{proof}
The next result is an adaptation of \cite[Lem. 3.1]{Ger08} to our setting.
\begin{lemma}
	\label{Preliminary estimates for the limiting absorption principle in the zone IB}
	Let $0<\sigma\leq1$ and let $\tau,\chi\in\mathcal{C}^{\infty}_{\mathrm{c}}\left(\mathbb{R},[0,1]\right)$ such that $\chi\equiv1$ on $I$ and $\tau\chi=\chi$. Consider the following three statements:\\[1.5mm]
	
	\noindent(i) \ \ \ \ \ \ \ \ \ \ \ \ \ \ \ \ \ \ \ \ \ \ \ \ \ \ \ \ \ \ \ \ \ \ $\displaystyle\sup_{z\in J^{+}}\|\langle\mathcal{A}\rangle^{-\sigma}R(z,hB)\langle\mathcal{A}\rangle^{-\sigma}\|\lesssim h^{-1};$\\[1.5mm]
	
	\noindent(ii) For all $z\in J^{+}$ and all $u\in(P+\mathrm{i})^{-1}\mathscr{D}(\langle\mathcal{A}\rangle^{\sigma})$,
	\begin{align*}
	\|\langle\mathcal{A}\rangle^{-\sigma}u\|&\lesssim h^{-1}\|(P-f(z,hB))u\|+h^{-1}\|\langle\mathcal{A}\rangle^{\sigma}(P-f(z,hB))\chi(P)u\|;
	\end{align*}
	\noindent(iii) For all $z\in J^{+}$ and all $u\in\mathscr{D}(\langle\mathcal{A}\rangle^{\sigma})$,
	\begin{align*}
	\|\langle\mathcal{A}\rangle^{-\sigma}\chi(P)u\|&\lesssim h^{-1}\|\langle\mathcal{A}\rangle^{\sigma}(P_\tau-f(z,hB))\chi(P)u\|.
	\end{align*}
	If $h$ is sufficiently small, then (iii) implies (ii) and (ii) implies (i).
\end{lemma}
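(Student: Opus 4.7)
The plan is to follow and adapt the structure of \cite[Lem. 3.1]{Ger08}. The two implications go in opposite directions: (iii) $\Rightarrow$ (ii) is a partition-of-unity-in-the-spectrum argument that upgrades the a priori bound from $\chi(P)u$ to a general $u$, while (ii) $\Rightarrow$ (i) recovers the resolvent estimate by applying (ii) to $u = R(z,hB)\langle\mathcal{A}\rangle^{-\sigma}\varphi$. In both cases, the pattern is to replace the scalar spectral parameter by $f(z,hB)$ and to treat the resulting lower-order term $\delta(h\|B\|_P)\varepsilon(hB)$ from \eqref{Assumption H} either by absorption (using $\delta(r) \to 0$) or through the commutator bound \eqref{Assumption A}.

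For (iii) $\Rightarrow$ (ii), I decompose $u = \chi(P)u + \psi(P)u$ with $\psi := 1 - \chi$. Since $\tau \equiv 1$ on $\mathrm{Supp\,}\chi$, functional calculus gives $P_\tau\chi(P) = P\chi(P)$, so (iii) directly yields $\|\langle\mathcal{A}\rangle^{-\sigma}\chi(P)u\| \lesssim h^{-1}\|\langle\mathcal{A}\rangle^\sigma(P-f(z,hB))\chi(P)u\|$. For the complementary piece, $\psi$ vanishes on a neighbourhood of $\Re z \in J$, so $g_z(\lambda) := \psi(\lambda)(\lambda-z)^{-1}$ is a bounded smooth function with bounds uniform in $z \in J^+$, and by \cite[Lem. 2.1]{Ger08} combined with Lemma \ref{Lemma 2.1 adapted} the operator $g_z(P)$ preserves $\mathscr{D}(\langle\mathcal{A}\rangle^\sigma)$ uniformly. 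Using \eqref{Assumption H} to write $\psi(P)u = g_z(P)(P-f(z,hB))u + \delta(h\|B\|_P)g_z(P)\varepsilon(hB)u$, the first summand contributes $O(1)\|(P-f(z,hB))u\|$ and the second $O(\delta(h\|B\|_P))\|\langle\mathcal{A}\rangle^{-\sigma}u\|$, thanks to \eqref{Assumption A} at $k=0$ together with the interpolation argument used in Lemma \ref{AVA bounded}. Since $\delta(h\|B\|_P) \to 0$, the small term is absorbed into the left-hand side for $h$ small enough, giving (ii).

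For (ii) $\Rightarrow$ (i), I apply (ii) to $u := R(z,hB)\langle\mathcal{A}\rangle^{-\sigma}\varphi$ with $\varphi \in \mathcal{H}$. Membership $u \in (P+\mathrm{i})^{-1}\mathscr{D}(\langle\mathcal{A}\rangle^\sigma)$ follows from expanding $(P+\mathrm{i})u = \langle\mathcal{A}\rangle^{-\sigma}\varphi + (f(z,hB)+\mathrm{i})R(z,hB)\langle\mathcal{A}\rangle^{-\sigma}\varphi$ and invoking Lemma \ref{Lemma 2.1 adapted} together with \eqref{Assumption A}. The first term on the right of (ii) is simply $\|(P-f(z,hB))u\| = \|\langle\mathcal{A}\rangle^{-\sigma}\varphi\| \leq \|\varphi\|$. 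For the second, the identity
\[
(P-f(z,hB))\chi(P) = \chi(P)(P-f(z,hB)) + \delta(h\|B\|_P)[\chi(P),\varepsilon(hB)]
\]
gives $(P-f(z,hB))\chi(P)u = \chi(P)\langle\mathcal{A}\rangle^{-\sigma}\varphi + \delta(h\|B\|_P)[\chi(P),\varepsilon(hB)]u$, where the commutator is $O(h)$ on $\mathscr{D}(\langle\mathcal{A}\rangle^\sigma)$ by \eqref{Assumption A} with $k=1$, while $\|\langle\mathcal{A}\rangle^\sigma u\|$ is $O(\|\varphi\|)$ by Lemma \ref{Lemma 2.1 adapted}. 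Inserting these bounds in (ii) yields $\|\langle\mathcal{A}\rangle^{-\sigma}R(z,hB)\langle\mathcal{A}\rangle^{-\sigma}\varphi\| \lesssim h^{-1}\|\varphi\|$, which is (i) after extending from a dense subset.

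The main obstacle is the bookkeeping of domains together with the interpolation needed to use \eqref{Assumption A}. One must verify that $R(z,hB)$, $\chi(P)$, $g_z(P)$, $(P+\mathrm{i})^{-1}$ and $\varepsilon(hB)$ all preserve $\mathscr{D}(\langle\mathcal{A}\rangle^\sigma)$ uniformly in $z \in J^+$ and $h$ small, and in particular that the commutator bound in \eqref{Assumption A}, stated on $\mathscr{D}(\mathcal{A})$, extends to the intermediate scale $\mathscr{D}(\langle\mathcal{A}\rangle^\sigma)$ for $\sigma \in [0,1]$, along the lines of the complex-interpolation argument already used in the proof of Lemma \ref{AVA bounded}.
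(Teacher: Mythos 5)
Your (ii) $\Rightarrow$ (i) argument coincides with the paper's: apply (ii) to $v = R(z,hB)\langle\mathcal{A}\rangle^{-\sigma}\varphi$, use the resolvent identity to check $v \in (P+\mathrm{i})^{-1}\mathscr{D}(\langle\mathcal{A}\rangle^\sigma)$, observe $[P-f(z,hB),\chi(P)]=[z-f(z,hB),\chi(P)]=\delta(h\|B\|_P)[\chi(P),\varepsilon(hB)]$, and close with \eqref{Assumption A} at $k=1$ together with Lemma \ref{Lemma 2.1 adapted}. No issues there.

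For (iii) $\Rightarrow$ (ii) you take a genuinely different route, and it has a gap. The paper does \emph{not} use any $z$-dependent function $g_z$; after splitting $u = \chi(P)u + \tilde\chi(P)u$ with $\tilde\chi = 1-\chi$, it further writes $\tilde\chi = \psi_- + \psi_+$ with $\psi_\pm$ supported on $\left]-\infty,\alpha\right]$ and $\left[\beta,+\infty\right[$, and establishes, via a direct positivity estimate
\[
\Re\big\langle\psi_\pm(P)^2\big(f(z,hB)-P\big)v,v\big\rangle \geq c_\pm\|\psi_\pm(P)v\|^2,\qquad c_\pm>0,
\]
that $\|\psi_\pm(P)R(z,hB)v\|\lesssim\|\psi_\pm(P)v\|$. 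Combined with the identity $\tilde\chi(P)u=\tilde\chi(P)R(z,hB)(P-f(z,hB))u$ and the contractivity of $\langle\mathcal{A}\rangle^{-\sigma}$, this gives the clean bound $\|\langle\mathcal{A}\rangle^{-\sigma}\tilde\chi(P)u\|\lesssim\|(P-f(z,hB))u\|$ with no absorption at all and, crucially, without ever applying $\varepsilon(hB)$ to $u$ under a $\langle\mathcal{A}\rangle^{-\sigma}$ weight.

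Your variant writes $\psi(P)u = g_z(P)(P-z)u = g_z(P)(P-f(z,hB))u + \delta(h\|B\|_P)\,g_z(P)\varepsilon(hB)u$ and then needs to absorb $\delta(h\|B\|_P)\|\langle\mathcal{A}\rangle^{-\sigma}g_z(P)\varepsilon(hB)u\|$ by $\|\langle\mathcal{A}\rangle^{-\sigma}u\|$. That requires boundedness of $\langle\mathcal{A}\rangle^{-\sigma}g_z(P)\varepsilon(hB)\langle\mathcal{A}\rangle^{\sigma}$, i.e.\ a statement on the \emph{dual} weighted scale $\mathscr{D}(\langle\mathcal{A}\rangle^{-\sigma})$, which is equivalent to $\varepsilon(hB)^*\,\overline{g_z}(P)\in\mathcal{B}(\mathscr{D}(\langle\mathcal{A}\rangle^\sigma))$. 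Assumption \eqref{Assumption A} at $k=0$ only gives $\varepsilon(hB)\in\mathcal{B}(\mathscr{D}(\mathcal{A}))$, not its adjoint; and your closing checklist, which only lists preservation of $\mathscr{D}(\langle\mathcal{A}\rangle^\sigma)$, misses exactly this adjoint/dual issue. In addition, $g_z$ is smooth but not compactly supported, so $g_z(P)\varepsilon(hB)$ is not directly handled by $\varepsilon(hB)\in L^\infty_{\ell\mathrm{oc}}(P)$, nor by Lemma \ref{Lemma 2.1 adapted} (stated for $\chi\in\mathcal{C}^\infty_{\mathrm{c}}$, $(P+\mathrm{i})^{-1}$ and $R(z,hB)$); one would need a separate Helffer--Sj\"ostrand type argument and another interpolation. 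In the concrete application $\varepsilon(hB)$ is essentially the bounded, self-adjoint multiplication $V$, so all of this can be filled in, but at the abstract level of Theorem \ref{ASALP} your argument as written is not complete. The paper's $\psi_\pm$ positivity argument is both shorter and avoids the adjoint and non-compact-support issues, and is the correct path to (iii) $\Rightarrow$ (ii).
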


\begin{proof}
	First of all, observe that $(i)$ makes sense by Lemma \eqref{Lemma 2.1 adapted}, and $(ii)$, $(iii)$ make sense by Corollary \ref{Corollary Lemma 2.1 adapted} and because $P\chi(P)=P_{\tau}\chi(P)$.
	\begin{itemize}
		\item We show that $(ii)$ implies $(i)$. Let $u\in \mathcal{H}$ and let $v:=R(z,hB)\langle A\rangle^{-\sigma}u$. Then
		\begin{align*}
		w&:=u-\langle\mathcal{A}\rangle^{\sigma}(f(z,hB)-\mathrm{i})R(z,hB)\langle\mathcal{A}\rangle^{-\sigma}u\in\mathcal{H}.
		\end{align*}
		This makes sense if $h$ is small enough because $R(z,hB)$ preserves $\mathscr{D}(\langle\mathcal{A}\rangle^\sigma)$ by Lemma \ref{Lemma 2.1 adapted} and because
		\begin{align*}
		\langle\mathcal{A}\rangle^{\sigma}(f(z,hB)-\mathrm{i})\langle\mathcal{A}\rangle^{-\sigma}&=\langle\mathcal{A}\rangle^{\sigma}(f(z,hB)-z)\langle\mathcal{A}\rangle^{-\sigma}+(z-\mathrm{i})
		\end{align*}
		is bounded by assumption \eqref{Assumption A} for $k=0$. Next, using the resolvent identity $(P+\mathrm{i})^{-1}-R(z,hB)=(P+\mathrm{i})^{-1}(f(z,hB)-\mathrm{i})R(z,hB)$, we see that
		\begin{align*}
		(P+\mathrm{i})^{-1}\langle A\rangle^{-\sigma}w&=\big((P+\mathrm{i})^{-1}-(P+\mathrm{i})^{-1}(f(z,hB)-\mathrm{i})\,R(z,hB)\big)\langle\mathcal{A}\rangle^{-\sigma}u\\
		&=R(z,hB)\langle\mathcal{A}\rangle^{-\sigma}u\\
		&=v
		\end{align*}
		so that $v\in(P+\mathrm{i})^{-1}\mathscr{D}(\langle\mathcal{A}\rangle^{\sigma})$. Hence, applying (ii) to $v$ yields
		\begin{align*}
		\|\langle\mathcal{A}\rangle^{-\sigma}R(z,hB)\langle\mathcal{A}\rangle^{-\sigma}u\|&=\|\langle\mathcal{A}\rangle^{-\sigma}v\|\nonumber\\
		&\lesssim h^{-1}\|\langle\mathcal{A}\rangle^{-\sigma}u\|+h^{-1}\|\langle\mathcal{A}\rangle^{\sigma}(P-f(z,hB))\chi(P)R(z,hB)\langle\mathcal{A}\rangle^{-\sigma}u\|\nonumber\\
		&\lesssim h^{-1}\|\langle\mathcal{A}\rangle^{-\sigma}u\|+h^{-1}\|\langle\mathcal{A}\rangle^{\sigma}[P-f(z,hB),\chi(P)]R(z,hB)\langle\mathcal{A}\rangle^{-\sigma}u\|\nonumber\\
		&+h^{-1}\|\langle\mathcal{A}\rangle^{\sigma}\chi(P)\langle\mathcal{A}\rangle^{-\sigma}u\|.
		\end{align*}
		By assumption \eqref{Assumption A} for $k=1$ and Lemma \ref{Lemma 2.1 adapted}, we have
		\begin{align*}
		&\|\langle\mathcal{A}\rangle^{\sigma}[P-f(z,hB),\chi(P)]R(z,hB)\langle\mathcal{A}\rangle^{-\sigma}u\|\\
		&=\|\langle\mathcal{A}\rangle^{\sigma}[z-f(z,hB),\chi(P)]R(z,hB)\langle\mathcal{A}\rangle^{-\sigma}u\|\\
		&\leq\delta(h\|B\|_{_{P}})\|\langle\mathcal{A}\rangle^{\sigma}[\varepsilon(hB),\chi(P)]\langle\mathcal{A}\rangle^{-\sigma}\|\|\langle\mathcal{A}\rangle^{\sigma}R(z,hB)\langle\mathcal{A}\rangle^{-\sigma}u\|\nonumber\\
		&\lesssim h\delta(h\|B\|_{_{P}}).
		\end{align*}
		Therefore, $(i)$ follows from $(ii)$ if $h$ is small enough.
		\item We show that $(iii)$ implies $(ii)$. Let $\tilde{\chi}:=1-\chi$ and let $u\in(P+\mathrm{i})^{-1}\mathscr{D}\left(\langle\mathcal{A}\rangle^{\sigma}\right)$. We write
		\begin{align}
		\label{Control for (iii) implies (ii) in the limiting absorption principle in zone IB 001}
		\|\langle\mathcal{A}\rangle^{-\sigma}u\|&\leq\|\langle\mathcal{A}\rangle^{-\sigma}\chi(P)u\|+\|\langle\mathcal{A}\rangle^{-\sigma}\tilde{\chi}(P)u\|
		\end{align}
		and $(iii)$ implies that
		\begin{align*}
		\|\langle\mathcal{A}\rangle^{-\sigma}\chi(P)u\|&\lesssim h^{-1}\|\langle\mathcal{A}\rangle^{\sigma}(P-f(z,hB))\chi(P)u\|
		\end{align*}
		because $\tau\equiv1$ on $\mathrm{Supp\,}\chi$. In order to control the term involving $\tilde{\chi}(P)$ in \eqref{Control for (iii) implies (ii) in the limiting absorption principle in zone IB 001}, we write $\tilde{\chi}=\psi_{-}+\psi_{+}$ with $\psi_{\pm}\in\mathcal{C}^{\infty}\left(\mathbb{R},[0,1]\right)$ such that $\mathrm{Supp}\ \psi_{-}\subset\left]-\infty,\alpha\right]$ and $\mathrm{Supp}\ \psi_{+}\subset\left[\beta,+\infty\right[$. We also pick $\rho\in\mathcal{C}^\infty_{\mathrm{c}}(\mathbb{R},\mathbb{R})$ such that $\rho\psi_-=\psi_-$. Since $B\in L^{\infty}_{\ell\mathrm{oc}}(P)$, we have for any $v\in\mathscr{D}(P)$
		\begin{align}
		&\Re\big\langle\psi_{-}(P)^{2}(f(z,hB)-P)v,v\big\rangle\nonumber\\
		&=\Re\big\langle\psi_{-}(P)^{2}zv,v\big\rangle+\Re\big\langle\psi_{-}(P)^{2}\delta(h\|B\|_{_{P}})\varepsilon(hB)v,v\big\rangle-\Re\big\langle\psi_{-}(P)^{2}Pv,v\big\rangle\nonumber\\
		&\geq a\|\psi_{-}(P)v\|^2-\delta(h\|B\|_{_{P}})\|\rho(P)\varepsilon(hB)\|_{_{P}}\|\psi_{-}(P)v\|^2-\alpha\|\psi_{-}(P)^2v\|^2\nonumber\\
		&\geq c_-\|\psi_{-}(P)v\|^{2}
		\end{align}
		where $c_->0$ if $h$ is sufficiently small. Using Cauchy-Schwarz inequality, we get $\|\psi_{-}(P)(P-f(z,hB))v\|\geq c_-\|\psi_{-}(P)v\|$ and thus $\|\psi_{-}(P)R(z,hB)v\|\lesssim\|\psi_{-}(P)v\|$. Similarly, one can show $\|\psi_{+}(P)R(z,hB)v\|\lesssim\|\psi_{+}(P)v\|$. These inequalities and $\tilde{\chi}^{2}=(\psi_{-}+\psi_{+})^{2}=\psi_{-}^{2}+\psi_{+}^{2}$ then imply
		\begin{align*}
		\|\tilde{\chi}(P)R(z,hB)v\|\lesssim\|\tilde{\chi}(P)v\|
		\end{align*}
		which in turn implies for $u\in\mathscr{D}(P)$
		\begin{align*}
		\|\langle\mathcal{A}\rangle^{-\sigma}\tilde{\chi}(P)u\|&\lesssim\|\tilde{\chi}(P)u\|\\
		&=\|\tilde{\chi}(P)R(z,hB)(P-f(z,hB))u\|\\
		&\lesssim\|\tilde{\chi}(P)(P-f(z,hB))u\|\\
		&\lesssim\|(P-f(z,hB))u\|.
		\end{align*}
	\end{itemize}
\end{proof}
\begin{proof}[Proof of Theorem \ref{ASALP}.] We show that the regularity \eqref{Assumption I} and the Mourre estimate \eqref{Assumption M} are enough to establish \eqref{ASALP equation}. As pointed out at the beginning of \cite{Ger08}, the key point is the following energy estimate\nolinebreak: for any self-adjoint operators $H$ acting on $\mathcal{H}$, $u\in\mathscr{D}\left(H\right)$, $\tau\in\mathcal{C}^{\infty}_{\mathrm{c}}\left(\mathbb{R},[0,1]\right)$ and $P_{\tau}:=\tau(P)P$, we have
\begin{align}
\label{Energy estimate for limiting absorption principle}
2\Im\big\langle Hu,(P_{\tau}-f(z,hB))u\big\rangle&=\big\langle u,[P_{\tau},\mathrm{i}H]u\big\rangle-2\Im\big\langle u,f(z,hB)Hu\big\rangle
\end{align}
where the commutator must be understood as a quadratic form on $\mathscr{D}(H)$.

We follow the proof of \cite[Thm. 1]{Ger08}. Let $\tau,\chi\in\mathcal{C}^{\infty}_{\mathrm{c}}\left(\mathbb{R},[0,1]\right)$ such that $\chi\equiv1$ on $I$ and $\tau\chi=\chi$ and let
\begin{align*}
F\left(\xi\right)&:=-\int_{\xi}^{+\infty}g(\zeta)^{2}\mathrm{d}\zeta
\end{align*}
with $g\in\mathcal{C}^{\infty}\left(\mathbb{R},[0,1]\right)$ satisfying $g\left(\xi\right)=0$ for $\xi\geq2$ and $g\left(\xi\right)=1$ for $\xi\leq1$. By Lemma \ref{Preliminary estimates for the limiting absorption principle in the zone IB}, it is sufficient to prove the following estimate: for any $z\in J^+$ and $u\in\mathscr{D}(\langle\mathcal{A}\rangle^{\sigma})$,
\begin{align*}
\|\langle\mathcal{A}\rangle^{-\sigma}\chi(P)u\|&\lesssim h^{-1}\|\langle\mathcal{A}\rangle^{\sigma}(P_\tau-f(z,hB))\chi(P)u\|.
\end{align*}
As $P\in\mathcal{C}^{2}\left(\mathcal{A}\right)$, $P$ and $\mathcal{A}$ are self-adjoint and satisfy the Mourre estimate \eqref{Assumption M} on $I$, we can apply the estimate $(3.30)$ in the proof of \cite[Thm. 1]{Ger08}:
\begin{align}
\label{Estimate of Christian for the limiting absorption principle in the zone IB}
\chi(P)[P_\tau,\mathrm{i}F(\mathcal{A})]\chi(P)\gtrsim h\chi(P)\langle\mathcal{A}\rangle^{-2\sigma}\chi(P).
\end{align}
Now we apply the identity \eqref{Energy estimate for limiting absorption principle} with $H=F(\mathcal{A})$: for all $u\in\mathscr{D}(\mathcal{A})$,
\begin{align*}
2\Im\big\langle F(\mathcal{A})u,(P_{\tau}-f(z,hB))u\big\rangle&=\big\langle u,[P_{\tau},\mathrm{i}F(\mathcal{A})]u\big\rangle+2\Im\big\langle f(z,hB)u,F(\mathcal{A})u\big\rangle.
\end{align*}
Since $F<0$ is bounded and $\Im z>0$, we can write for all $h$ sufficiently small
\begin{align*}
&2\Im\big\langle F(\mathcal{A})u,(P_\tau-f(z,hB))u\big\rangle\\
&=\big\langle u,[P_{\tau},\mathrm{i}F(\mathcal{A})]u\big\rangle-2(\Im z)\big\langle u,F(\mathcal{A})u\big\rangle-2\delta(h\|B\|_{_{P}})\Im\big\langle u,\varepsilon(hB)F(\mathcal{A})u\big\rangle\\
&>\big\langle u,[P_\tau,\mathrm{i}F(\mathcal{A})]u\big\rangle-2\delta(h\|B\|_{_{P}})\|\varepsilon(hB)u\|\|F(\mathcal{A})u\|
\end{align*}
where we used that that $\varepsilon(hB)\in\mathcal{B}(\mathscr{D}(\mathcal{A}))$ by Assumption \eqref{Assumption A}. It thus follows
\begin{align}
\label{Last preliminary inequality for the limiting absorption principle in the zone IB}
2\Im\big\langle F(\mathcal{A})u,(P_\tau-f(z,hB))u\big\rangle&\geq\big\langle u,[P_\tau,\mathrm{i}F(\mathcal{A})]u\big\rangle.
\end{align}
Plugging the estimate \eqref{Estimate of Christian for the limiting absorption principle in the zone IB} into inequality \eqref{Last preliminary inequality for the limiting absorption principle in the zone IB} and putting $\chi(P)u$ instead of $u$ yield
\begin{align*}
\|\langle\mathcal{A}\rangle^{-\sigma}\chi(P)u\|^{2}&=\langle u,\chi(P)\langle\mathcal{A}\rangle^{-2\sigma}\chi(P)u\rangle\nonumber\\
&\lesssim h^{-1}\big\langle u,\chi(P)[P_\tau,\mathrm{i}F(\mathcal{A})]\chi(P)u\big\rangle\nonumber\\
&\leq h^{-1}\big|\big\langle F(\mathcal{A})\chi(P)u,(P_\tau-f(z,hB))\chi(P)u\big\rangle\big|.
\end{align*}
Using again the boundedness of $F$, we get
\begin{align*}
\|\langle\mathcal{A}\rangle^{-\sigma}\chi(P)u\|^{2}&\lesssim h^{-1}\|\langle\mathcal{A}\rangle^{-\sigma}\chi(P)u\|\|\langle\mathcal{A}\rangle^{\sigma}(P_\tau-f(z,hB))\chi(P)u\|
\end{align*}
which establishes the point (iii) and thus the point (i) in Lemma \ref{Preliminary estimates for the limiting absorption principle in the zone IB}.
\end{proof}
\end{appendices}
%
%
%
%
%
%
%
%
%
%
%
%
%
%
%
%
%
%
%
%
\small
\end{document}